\newcommand{\df}{\mathrm{d}}
\DeclareMathOperator{\Sym}{Sym}
\DeclareMathOperator{\Hom}{Hom}
\DeclareMathOperator{\Real}{Re}
\DeclareMathOperator{\Int}{Int}
\DeclareMathOperator{\relint}{relint}
\DeclareMathOperator{\Cl}{Cl}
\DeclareMathOperator{\Spec}{Spec}
\DeclareMathOperator{\Conv}{Conv}
\DeclareMathOperator{\pos}{pos}
\DeclareMathOperator{\Log}{Log}
\DeclareMathOperator{\rk}{rk}
\newtheorem{thm}{Theorem}[section]
\newtheorem{lem}[thm]{Lemma}
\newtheorem{col}[thm]{Corollary}
\newtheorem{con}[thm]{Conjecture}
\newtheorem{prop}[thm]{Proposition}
\theoremstyle{definition} 
\newtheorem{defin}[thm]{Definition}
\newtheorem{exam}[thm]{Example}
\theoremstyle{remark}
\newtheorem{remark}[thm]{Remark}
\begin{document}
\date{\today} \title{Toric geometry and regularization of Feynman integrals} \author{Konrad
  Schultka\footnote{schultka@math.hu-berlin.de}}

\maketitle

\begin{abstract}
  We study multivariate Mellin transforms of Laurent polynomials by considering special toric
  compactifications which make their singular structure apparent. This gives a precise description
  of their convergence domain, refining results of Nilsson, Passare, Berkesch and Forsg\aa rd.
  We also reformulate the geometric sector decomposition approach of Kaneko and Ueda in terms of these
  compactifications.

  Specializing to the case of Feynman integrals in the parametric representation, we construct
  multiple such compactifications given by certain systems of subgraphs. As particular cases, we recover
  the sector decompositions of Hepp, Speer and Smirnov, as well as the iterated blow-up
  constructions of Brown and Bloch-Esnault-Kreimer. A fundamental role is played by the Newton
  polytope of the product of the Symanzik polynomials, which we show to be a generalized
  permutahedron for generic kinematics. As an application, we review two approaches to dimensional
  regularization of Feynman integrals, based on sector decomposition and on analytic continuation.
\end{abstract}

\section{Introduction}

The quantum theories of elementary particles are tested by comparing the results of scattering
experiments to the theoretical predictions of the corresponding quantum field theories. The latter
are organized by a sum over Feynman graphs, which encode the possible (virtual) particle histories.
These predictions then amount to the calculation of integrals, which (for $D$-dimensional euclidean
scalar theories) take the form
\begin{equation*}
  I_{G}(q,m) := \int_{H_{1}(G,\mathbb{R}^{D})}\prod_{e\in E_{G}}\frac{1}{(k_{e}+q_{e})^{2}+m^{2}_{e}}\df k,
\end{equation*}
where the integration is over a collection of internal momenta $k\in \mathbb{R}^{D}$, one for each
independent cycle of the graph. These integrals are in general ill-defined. UV divergences present
themselves, when some of the internal momenta become large and the integrand does not fall off
sufficiently rapidly. If some of the edges are massless ($m_{e}=0$), then the integral can also
develop IR-divergences for small momenta, due to non-integrable singularities in the integrand.

Since the work of Weinberg (\cite{Weinberg_1960}) it is known that, at least for generic values of
the external momenta, these singularities are completely described by a simple power-counting
argument for each subgraph $\gamma\subset G$. To make these singularities more apparent, it is often
convenient to work in a parametric representation, which expresses $I_{G}$ as the projective
integral

\begin{equation*}
  I_{G}(q,m) = \Gamma(\omega_{G})\int_{P^{E_{G}}(\mathbb{R}^{+})}\left( \frac{\psi_{G}}{\Phi_{G}} \right)^{\omega_{G}}\frac{\Omega_{P^{E_{G}}}}{\psi_{G}^{D/2}}.
\end{equation*}
Here $P^{E_{G}}$ is the projective space over the set of edges $E_{G}$ and
$\omega_{G}=|E_{G}|-\frac{D}{2}\rk H_{1}(G)$. The polynomials $\psi_{G}$ and $\Phi_{G}$ are the
Symanzik polynomials of the graphs. In this representation, the overall divergence of the graph has
been absorbed in the factor $\Gamma(\omega_{G})$. The UV and IR divergences due to subgraphs
$\gamma\subset G$ now appear because the vanishing loci $V(\psi_{G})$ and $V(\Phi_{G})$ of
$\psi_{G}$ and $\Phi_{G}$ can intersect the boundary of the integration domain
$P^{E_{G}}(\mathbb{R}^{+})\cong \Delta^{E_{G}}$ in the coordinate linear space corresponding to
$\gamma$.

For some applications, this representation is still somewhat inconvenient. Motivated by arithmetic
questions, Bloch, Esnault and Kreimer (\cite{Bloch_2006}) and later Brown (\cite{Brown_2017})
constructed new projective varieties by iteratively blowing-up certain coordinate linear spaces, such
that the strict transforms of $V(\psi_{G})$ and $V(\Phi_{G})$ did not meet the strict transform of
the integration domain. The possible singularities then simply manifest as poles along the
exceptional divisors.

For a completely different purpose, Binoth and Heinrich (\cite{Binoth_2000}) developed an iterative
strategy to decompose the integration domain $P^{E_{G}}(\mathbb{R}^{+})$ into cubical sectors, such
that, after an appropriate coordinate change, the integral over each sector takes the form
\begin{equation}\label{eq:1}
  I_{s} = \int_{[0,1]^{|E_{G}|-1}}x^{\lambda}F(x,q,m)\df x,
\end{equation}
where $F(x,q,m)$ is a rational function which is regular on $[0,1]^{|E_{G}|-1}$. This allowed them
to completely automate the calculation of $I_{G}$ in dimensional regularization. Kaneko and Ueda
(\cite{Kaneko_2010}) later introduced a different, noniterative sector decomposition strategy, which
is based on polyhedral geometry.

The goal of this paper is to show that both construction can be unified in the framework of toric
geometry. A complex variety is toric if it has an action of a complex torus with a dense orbit.
Intuitively, we can think of a toric variety as constructed from a torus by coherently adding tori
of lower dimension at infinity. Normal toric varieties are constructed in terms of a collection of
rational polyhedral cones $\Sigma$, called a fan. The corresponding toric variety $X_{\Sigma}$ is constructed by gluing affine pieces
$U_{\sigma},U_{\sigma'}$ corresponding to cones $\sigma_{1},\sigma_{2}\in\Sigma$ along the open,
dense subvariety $U_{\tau}$ given by the intersection $\tau=\sigma_{1}\cap \sigma_{2}$. This gives a
fruitful interplay between polyhedral and toric geometry. In particular, we can attach a toric
variety $X_{P}$ to every lattice polytope $P\subset \mathbb{Z}^{n}$. Each toric variety $X_{\Sigma}$
also has a natural real, semi-algebraic subset $X_{\Sigma}(\mathbb{R}^{+})\subset
X_{\Sigma}(\mathbb{R})$ called its real-positive locus, which generalizes the simplex
$P^{n}(\mathbb{R}^{+})\cong \Delta^{n}$.

It will be instructive to work in greater generality. We consider general Mellin transforms of
the form
\begin{equation*}
  \mathcal{M}(f_{i},s,c)=\int_{T_{N}(\mathbb{R^{+}})}t^{s}\prod_{i=1}^{k} f_{i}(t)^{-c_{i}}
  \frac{\df t}{t},
\end{equation*}
where $(s,c)\in \mathbb{C}^{n}\times \mathbb{C}^{k}$ are analytic parameters and $f_{i}$ are general
Laurent polynomials on the complex torus $T_{N}\cong (\mathbb{C}^{*})^{n}$, with a suitable
conditions on its coefficients such that the powers $f_{i}^{-c_{i}}$ are well-defined. We prove
that, there is always a smooth complete toric variety $X_{\Sigma}$, such that the closure $V(f_{i})$
does not meet the real-positive locus $X_{\Sigma}(\mathbb{R}^{+})$. We will show that the later is
satisfied if and only if $\Sigma$ refines the normal fans of the Newton polytope of $f=f_{1}\cdots
f_{k}$. In the local coordinates corresponding to a maximal cone $\sigma\in\Sigma$, the integrand of
$\mathcal{M}$ can then be factorized as in Equation \eqref{eq:1}. This allows us to give a precise
characterization of the convergence domain $\Lambda(f_{i})\subset \mathbb{C}^{n}\times
\mathbb{C}^{k}$ of $\mathcal{M}$, generalizing results of Nilsson, Passare, Berkesch and Forsg\aa rd
(\cite{Nilsson_2011},\cite{Berkesch_2014}). For each cone, we can consider the cube
$I_{\sigma}=[0,1]^{n}\subset [0,\infty)^{n}\cong U_{\sigma}(\mathbb{R}^{+})$. These cubes cover
$X_{\Sigma}(\mathbb{R}^{+})$, intersect in a set of measure zero, and in local coordinates, the
integrand has the sector form \eqref{eq:1}. Hence each such toric variety $X_{\Sigma}$ gives a
sector decomposition strategy. In fact, this is just a reformulation of the strategy given by Kaneko
and Ueda.

Coming back to Feynman integrals., it is then natural to promote the integral $I_{G}$ to the Mellin
transform
\begin{equation*}
  I_{G}(\lambda,D,q,m) = \Gamma(\omega_{G})\int_{P^{E_{G}}(\mathbb{R}^{+})}\prod_{i\in E_{G}}\frac{\alpha_{i}^{\lambda_{i}-1}}{\Gamma(\lambda_{i})}
  \left( \frac{\psi_{G}}{\Phi_{G}} \right)^{\omega_{G}}\frac{\Omega_{P^{E_{G}}}}{\psi_{G}^{D/2}}.
\end{equation*}
This corresponds to analytic regularization, i.e. raising each propagator
$(k_{i}^{2}+q_{i}^{2}+m^{2}_{i})^{-1}$ to a power $\lambda_{i}\in \mathbb{C}$.
We will compute the Newton polytope $P_{G}=P(\psi_{G}\Phi_{G})$ for a Feynman graph with generic
euclidean kinematics. In this case, a facet presentation can be easily deduced from the
factorization formulas given by Brown (\cite{Brown_2017}).
As a corollary, we obtain that the
convergence of the Feynman integral is determined by power-counting. It is actually sufficient
to check convergence only for those subgraphs $\gamma\subset G$, such that both $\gamma$ and its quotient
$G/\gamma$ do not contain detachable subgraphs without kinematics. Following Smirnov
(\cite{Smirnov_2012}), we call such graphs s-irreducible.

The polytope $P_{G}$ turns out to be a generalized permutahedron in the sense of (\cite{Postnikov_2009},\cite{aguiar17:hopf}). To such
a polytope, we can canonically attach a smooth refinement of its normal fan, using a combinatorial
version of the wonderful model construction introduced by Feichtner and Kozlov
(\cite{Feichtner_2004}).
For Feynman graphs, this is an iterated blow-up of projective space along coordinate linear spaces
of s-irreducible subgraphs $\gamma\subsetneq G$. The corresponding sectors are the ones constructed
by Smirnov (\cite{Smirnov_2009},[\cite{Smirnov_2012}]). A further refinement then gives the motic
blowup of Brown (\cite{Brown_2017}), which specializes for massive graphs to the original
Bloch-Esnault-Kreimer construction. The toric structure of the later variety has already been used
by Bloch and Kreimer (\cite{Bloch2008}) to interpret renormalization in terms of mixed Hodge
structures. (\cite{Bloch_2006}). We will also reinterpret Speer's sectors \cite{Speer:1975dc} as
given by a certain smooth toric variety, which is not a blowup of projective space in general.

In the last section, we will use our results to give a rigorous construction of dimensional
regularization. We review two approaches, one based on sector decomposition (\cite{Smirnov_1983},
\cite{HEINRICH_2008}, \cite{Bogner_2008}), and the other based on analytic continuation
(\cite{Panzer:2015ida}, \cite{von_Manteuffel_2015}). The results of this section are well-known in
the physics literature, but we included a detailed exposition here since we found it difficult to
find rigorous proofs. Note that this generalizes the construction given by Etingof
(\cite{DEFJKMMRW99}) for massive graphs.

\paragraph{Acknowledgements.}
I thank Dirk Kreimer for encouragement and support. I also greatly benefitted from discussions with Christian Bogner, Henry Ki\ss{}ler
and Marko Berghoff.

\section{Toric varieties}
For the convenience of the reader, we will give a very brief review of the theory of toric
varieties. Our presentation is based on \cite{CLS}, which provides a very comprehensive introduction
to toric geometry.

\paragraph{Cones and fans.}
\label{sec:TV:fans}

Let $N$ be a lattice, i.e. a free abelian group of finite rank $n=\rk N$. Its dual lattice is
$M=\Hom_{\mathbb{Z}}(N,\mathbb{Z})$. The algebraic torus associated to $N$ is
$T_{N}=N\otimes_{\mathbb{Z}}\mathbb{C}^{*}\cong G_{m}^{n}(\mathbb{C})$. Elements $m\in M$ define
characters $t^{m}\in \Hom_{\mathbb{Z}}(T_{N},\mathbb{C}^{*})$. Under this identification, the
coordinate ring of $T_{N}$ is the ring of Laurent polynomials
\begin{equation*}
  \mathcal{O}(T_{N})=\mathbb{C}[M].
\end{equation*}
Similarly, elements $u\in N$ define one-parameter subgroups
\begin{equation*}
  \mathbb{C}^{*}\rightarrow T_{N},\quad \lambda\mapsto u \otimes \lambda.
\end{equation*}

\begin{defin}
  A complex variety $X$ is \emph{toric} if it has an action of a torus $T_{N}$ with a dense torus
  orbit.
\end{defin}

We will only consider \emph{normal} toric varieties. They can be completely described by certain
polyhedral data. The relevant definitions of polyhedral geometry are collected in the appendix.

Let $\sigma\subset N_{\mathbb{R}}:= N \otimes_{\mathbb{Z}} \mathbb{R}$ be a strongly convex
polyhedral cone. $\sigma$ is a \emph{rational} if there are lattice elements $u_{1},\ldots,u_{s}\in
N$, such that
\begin{equation*}
  \sigma = \pos(u_{1},\ldots,u_{s}).
\end{equation*}
Its dual cone
\begin{equation*}
  \sigma^{\vee} := \{m\in M_{\mathbb{R}}\ \rvert\ \langle m, u \rangle\ge 0 \text{ for all } u\in\sigma\}
  \subseteq M_{\mathbb{R}}
\end{equation*}
is again a rational polyhedral cone and the set $S_{\sigma}=M\cap \sigma^{\vee}$ is a finitely
generated semigroup. The toric variety associated to $\sigma$ is
$U_{\sigma}=\Spec(\mathbb{C}[S_{\sigma}])$. The torus action is given in terms of the coordinate
rings by
\begin{equation*}
  \Delta:\mathbb{C}[S_{\sigma}]\rightarrow \mathbb{C}[M]\otimes \mathbb{C}[S_{\sigma}], \quad t^{m}\mapsto
  t^{m}\otimes t^{m}.
\end{equation*}
The inclusion $T_{N}\hookrightarrow X_{\sigma}$ of the dense torus orbit is dually given by the map
$\mathbb{C}[S_{\sigma}]\hookrightarrow \mathbb{C}[M]$.
% Normal \emph{affine} toric varieties are all given by the above construction:

We can construct all normal toric varieties by gluing such affine varieties along open subsets, as
long as the corresponding cones intersect nicely.

\begin{defin}
  A \emph{fan} $\Sigma$ in $N$ is a collection of rational, strongly convex polyhedral cones such
  that:
  \begin{enumerate}
  \item If $\sigma\in\Sigma$ and $\tau\subseteq \sigma$ is a face of $\Sigma$ then $\tau\in\Sigma$.
  \item Two cones $\sigma_{1},\sigma_{2}\in\Sigma$ are either disjoint or intersect in a common face
    $\tau\in\Sigma$.
  \end{enumerate}
  The set of cones of dimension $k$ is denoted by $\Sigma(k)$. Let
  $|\Sigma|=\bigcup_{\sigma\in\Sigma}\sigma$ be the support of $\Sigma$.

  We call $\Sigma$ a \emph{generalized fan} if it consists of rational polyhedral cones which are
  not necessarily strongly convex.
\end{defin}

If $\sigma_{1},\sigma_{2}$ are two rational, strongly convex cones intersecting in the common face
$\tau=\sigma_{1}\cap \sigma_{2}$, then the dual inclusions $\sigma_{1}^{\vee}\subset
\tau^{\vee}\supset \sigma_{2}^{\vee}$ define the inclusions
\begin{equation*}
  \mathbb{C}[S_{\sigma_{1}}] \hookrightarrow \mathbb{C}[S_{\tau}] \hookleftarrow \mathbb{C}[S_{\sigma_{2}}].
\end{equation*}
One can show that $\mathbb{C}[S_{\tau}]$ is a common localization of $\mathbb{C}[S_{\sigma_{i}}]$,
such that $U_{\tau}\subset U_{\sigma_{i}}$. Gluing $U_{\sigma_{1}}$ and $U_{\sigma_{2}}$ along the
dense open subset $U_{\tau}$ gives a new toric variety. This can be done coherently for all cones in
a fan and we obtain the following:

\begin{prop}[{\cite[Thm. 3.1.5]{CLS}}]
  If $\Sigma$ is a fan in $N_{\mathbb{R}}$ then the $U_{\sigma}$ for $\sigma\in\Sigma$ glue together
  to give a normal toric variety $X_{\Sigma}$ and every normal toric varieties is of this form up to
  isomorphism.
\end{prop}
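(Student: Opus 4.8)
The plan is to prove the two assertions separately. For the first, I would construct $X_{\Sigma}$ by the standard scheme-theoretic gluing lemma applied to the affine charts $\{U_{\sigma}\}_{\sigma\in\Sigma}$, and then check that the resulting scheme is reduced, irreducible, separated, normal, and carries an action of $T_{N}$ with a dense orbit. For the converse, I would use the equivariant local structure of a normal toric variety to reduce to the affine dictionary already recorded above (affine normal toric varieties $\leftrightarrow$ strongly convex rational polyhedral cones) and check that the cones obtained from an affine cover fit together into a fan. Two convex-geometric lemmas about cones carry all the genuine content: a \emph{face lemma} and a \emph{separation lemma}.

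For the forward direction I would first establish the face lemma: if $\tau$ is a face of $\sigma$, there is an $m\in S_{\sigma}$ with $\tau=\sigma\cap m^{\perp}$, and then $S_{\tau}=S_{\sigma}+\mathbb{Z}_{\ge 0}(-m)$, so $\mathbb{C}[S_{\tau}]=\mathbb{C}[S_{\sigma}]_{t^{m}}$ is a localization at a single element; hence $U_{\tau}\hookrightarrow U_{\sigma}$ is the inclusion of a principal open subset. Given $\sigma_{1},\sigma_{2}\in\Sigma$ with common face $\tau=\sigma_{1}\cap\sigma_{2}$, applying this to each $\sigma_{i}$ realizes $U_{\tau}$ as an open subscheme of both $U_{\sigma_{1}}$ and $U_{\sigma_{2}}$; I take these identifications, with the identity as transition isomorphism, as the gluing data. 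The cocycle condition on triple overlaps is automatic: $\sigma_{1}\cap\sigma_{2}\cap\sigma_{3}$ is a face of each $\sigma_{i}$ and of each pairwise intersection, so every chart in sight is a standard localization of the others. The gluing lemma produces a $\mathbb{C}$-scheme of finite type $X_{\Sigma}$; it is integral because every $U_{\sigma}$ contains the common dense open torus $T_{N}=U_{\{0\}}$, whose fraction field $\mathbb{C}(M)$ is then the function field of $X_{\Sigma}$.

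The only genuinely non-formal input for the forward direction is \emph{separatedness}, which I would deduce from the separation lemma: if $\tau=\sigma_{1}\cap\sigma_{2}$ is a common face, then $\tau^{\vee}=\sigma_{1}^{\vee}+\sigma_{2}^{\vee}$, hence $S_{\tau}=S_{\sigma_{1}}+S_{\sigma_{2}}$; this says precisely that the natural map $\mathbb{C}[S_{\sigma_{1}}]\otimes_{\mathbb{C}}\mathbb{C}[S_{\sigma_{2}}]\twoheadrightarrow\mathbb{C}[S_{\tau}]$ is surjective, i.e. the diagonal $U_{\tau}\to U_{\sigma_{1}}\times U_{\sigma_{2}}$ is a closed immersion, and since such charts cover the overlaps, $X_{\Sigma}$ is separated. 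Normality is then cheap: each semigroup $S_{\sigma}=\sigma^{\vee}\cap M$ is saturated, so $\mathbb{C}[S_{\sigma}]$ is integrally closed in $\mathbb{C}(M)$, hence each $U_{\sigma}$ and therefore $X_{\Sigma}$ is normal. Finally the comultiplications on the $\mathbb{C}[S_{\sigma}]$ are compatible with the localizations defining the gluings, so they assemble into an action of $T_{N}$ on $X_{\Sigma}$ with dense orbit $T_{N}$; thus $X_{\Sigma}$ is a normal toric variety.

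For the converse, let $X$ be a normal toric variety with torus $T_{N}$. By Sumihiro's theorem on the existence of equivariant affine open covers of a normal variety with torus action, $X$ is covered by $T_{N}$-invariant affine opens $U_{i}=\Spec\mathbb{C}[S_{i}]$, where each $S_{i}\subset M$ is a finitely generated subsemigroup that is saturated (normality of $U_{i}$) and generates $M$ as a group (the full torus, not a quotient, acts with dense orbit). Hence $\sigma_{i}:=\Cl(\pos(S_{i}))^{\vee}$ is a strongly convex rational polyhedral cone and $U_{i}=U_{\sigma_{i}}$. Reading the face and separation lemmas backwards, the open immersions $U_{\sigma_{i}}\cap U_{\sigma_{j}}\hookrightarrow U_{\sigma_{i}}$ inside $X$ force $\sigma_{i}\cap\sigma_{j}$ to be a common face of $\sigma_{i}$ and $\sigma_{j}$; therefore $\{\sigma_{i}\}$ together with all their faces forms a fan $\Sigma$, and the chart identifications glue to an isomorphism $X\cong X_{\Sigma}$. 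The main obstacles are exactly the two cone lemmas — above all the separation lemma, whose proof amounts to exhibiting a lattice functional vanishing precisely on the common face $\tau$ and of the correct sign on each of $\sigma_{1},\sigma_{2}$ — together with Sumihiro's theorem in the converse direction, which is the one ingredient that is not elementary (it rests on linearizing line bundles on a normal $T_{N}$-variety); everything else is bookkeeping with the gluing lemma and the affine semigroup–cone dictionary.
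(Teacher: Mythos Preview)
The paper does not give its own proof of this proposition; it merely cites \cite[Thm.~3.1.5]{CLS} and moves on. Your proposal is essentially the standard argument found in that reference: the face lemma identifying $U_{\tau}\hookrightarrow U_{\sigma}$ with a principal open, the separation lemma $S_{\tau}=S_{\sigma_{1}}+S_{\sigma_{2}}$ for separatedness, saturation of $S_{\sigma}$ for normality, and Sumihiro's theorem for the converse direction. So there is nothing to compare against beyond noting that you have reconstructed the cited proof correctly.

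One small remark on the converse: you write that the open immersions $U_{\sigma_{i}}\cap U_{\sigma_{j}}\hookrightarrow U_{\sigma_{i}}$ ``force $\sigma_{i}\cap\sigma_{j}$ to be a common face.'' This is the right idea but slightly compressed: what one actually shows is that the $T_{N}$-invariant intersection $U_{\sigma_{i}}\cap U_{\sigma_{j}}$ is itself affine (by separatedness of $X$) and normal, hence equals $U_{\tau}$ for some cone $\tau$, and then the separation lemma run backwards identifies $\tau$ with a common face. Also, to get a genuine fan one should first refine the Sumihiro cover so that each $U_{\sigma_{i}}$ is a \emph{minimal} $T_{N}$-invariant affine open containing a given orbit; otherwise redundant cones can appear. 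These are minor points and do not affect the overall correctness of your outline.
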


Properties of $X_{\Sigma}$ are reflected in properties of the fan:
\begin{prop}\label{prop:properties-fan-variety}
  \begin{enumerate}
  \item $X_{\Sigma}$ is complete if $|\Sigma|:=\bigcup_{\sigma\in\Sigma}\sigma=N_{\mathbb{R}}$.
  \item $X_{\Sigma}$ is smooth if and only if every cone $\sigma\in\Sigma$ can be generated by part
    of a $\mathbb{Z}$-basis of $N$.
    In this case the fan $\Sigma$ and its cones $\sigma\in\Sigma$ are called smooth.
  \end{enumerate}
\end{prop}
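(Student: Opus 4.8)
The plan is to reduce both assertions to local statements about the affine charts $U_{\sigma}=\Spec\mathbb{C}[S_{\sigma}]$, $S_{\sigma}=M\cap\sigma^{\vee}$, and then to argue with the semigroups $S_{\sigma}$ and their minimal generators. For (1) I will use the valuative criterion of properness over $\Spec\mathbb{C}$; for (2) I will use that smoothness is local on the cover $\{U_{\sigma}\}_{\sigma\in\Sigma}$.

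\emph{Completeness.} Since $\Sigma$ is a fan, $X_{\Sigma}$ is separated (this is part of the gluing construction), so it suffices to verify the existence part of the valuative criterion. Let $R$ be a discrete valuation ring with fraction field $K$ and valuation $\nu\colon K^{*}\to\mathbb{Z}$, and let $\beta\colon\Spec K\to X_{\Sigma}$ be a morphism. By a standard reduction (see \cite{CLS}) — using that each orbit closure is again a complete toric variety once $|\Sigma|=N_{\mathbb{R}}$, so one may induct on $\dim X_{\Sigma}$ — we may assume $\beta$ factors through the dense torus $T_{N}$, hence corresponds to a group homomorphism $\gamma\colon M\to K^{*}$. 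Then $u_{\gamma}:=\nu\circ\gamma$ lies in $\Hom_{\mathbb{Z}}(M,\mathbb{Z})=N$, and by hypothesis $u_{\gamma}\in\sigma$ for some $\sigma\in\Sigma$. For every $m\in S_{\sigma}$ we get $\nu(\gamma(m))=\langle m,u_{\gamma}\rangle\ge 0$, so $\gamma$ carries $\mathbb{C}[S_{\sigma}]$ into $R$; this ring map is exactly a morphism $\Spec R\to U_{\sigma}\hookrightarrow X_{\Sigma}$ restricting to $\beta$ on $\Spec K$. Hence $X_{\Sigma}$ is proper, i.e. complete.

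\emph{Smoothness.} If $\sigma=\pos(e_{1},\dots,e_{k})$ with $e_{1},\dots,e_{k}$ part of a $\mathbb{Z}$-basis $e_{1},\dots,e_{n}$ of $N$ and $e_{1}^{*},\dots,e_{n}^{*}$ the dual basis, then $\sigma^{\vee}=\pos(e_{1}^{*},\dots,e_{k}^{*},\pm e_{k+1}^{*},\dots,\pm e_{n}^{*})$, so $S_{\sigma}\cong\mathbb{N}^{k}\times\mathbb{Z}^{n-k}$ is free and $U_{\sigma}\cong\mathbb{C}^{k}\times(\mathbb{C}^{*})^{n-k}$ is smooth; hence $X_{\Sigma}$ is smooth whenever every cone is of this type. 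Conversely suppose every $U_{\sigma}$ is smooth. Splitting off the sublattice $N\cap(\sigma-\sigma)$ yields $U_{\sigma}\cong U_{\sigma'}\times(\mathbb{C}^{*})^{n-\dim\sigma}$ with $\sigma'$ full-dimensional in its span, so we may assume $\dim\sigma=n$. Then $U_{\sigma}$ has a unique torus-fixed point whose maximal ideal $\mathfrak{m}$ is generated by all $\chi^{m}$, $m\in S_{\sigma}\setminus\{0\}$, and $\mathfrak{m}/\mathfrak{m}^{2}$ has basis the $\chi^{m}$ with $m$ a minimal generator of $S_{\sigma}$; the primitive ray generators of $\sigma^{\vee}$ are among these, so $\dim\mathfrak{m}/\mathfrak{m}^{2}$ is at least the number of facets of $\sigma$, which for a full-dimensional strongly convex cone is $\ge n$. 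Regularity forces $\dim\mathfrak{m}/\mathfrak{m}^{2}=n$, so $\sigma$ (hence $\sigma^{\vee}$) is simplicial and $S_{\sigma}$ is generated by exactly the $n$ primitive ray generators $\rho_{1},\dots,\rho_{n}$ of $\sigma^{\vee}$. A nonzero lattice point of $M$ in the half-open parallelepiped $\{\sum t_{i}\rho_{i}:0\le t_{i}<1\}$ would lie in $S_{\sigma}$ but not in $\sum\mathbb{N}\rho_{i}$; since there is none, $\rho_{1},\dots,\rho_{n}$ form a $\mathbb{Z}$-basis of $M$, and dualizing, the ray generators of $\sigma$ form a $\mathbb{Z}$-basis of $N$.

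The only non-formal input — and hence the main obstacle — is the converse direction of (2): identifying the spanning set of the Zariski cotangent space $\mathfrak{m}/\mathfrak{m}^{2}$ at the torus-fixed point with the minimal generators of the affine semigroup $S_{\sigma}$, and then extracting the lattice-basis statement from the parallelepiped argument (equivalently, from matching the lattice index $[N:\sum\mathbb{Z}\rho_{i}^{*}]$ with the local multiplicity, which must be $1$). These are standard facts about affine toric varieties (\cite{CLS}); everything else is bookkeeping with cones, their duals, and the semigroups $S_{\sigma}$.
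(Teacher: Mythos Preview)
Your argument is correct. The paper itself does not prove this proposition at all: its ``proof'' is just the two citations \cite[Thm.~3.4.1]{CLS} and \cite[Thm.~3.1.19]{CLS}. What you have written is essentially the standard proof found there --- the valuative criterion with the lattice point $u_{\gamma}=\nu\circ\gamma\in N$ for (1), and the identification of $\mathfrak{m}/\mathfrak{m}^{2}$ at the torus-fixed point with the Hilbert basis of $S_{\sigma}$ for (2) --- so there is nothing to compare; you have simply supplied the details the paper chose to outsource.
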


\begin{proof}
  1. is \cite[Thm. 3.4.1]{CLS} and 2. is \cite[Thm. 3.1.19 ]{CLS}
\end{proof}

\begin{defin}
  Let $X_{\Sigma_{1}},X_{\Sigma_{2}}$ be toric varieties with fans $\Sigma_{i}\subseteq
  (N_{i})_{\mathbb{R}}$. A morphism $\varphi:X_{\Sigma_{1}}\rightarrow X_{\Sigma_{2}}$ is \emph{toric}
  if $\varphi|_{T_{N_{1}}}$ induces a group morphism $T_{N_{1}}\rightarrow T_{N_{2}}$.
\end{defin}

Being toric automatically implies that $\varphi$ is $T_{N_{i}}$-equivariant. We can identify $N_{i}$
with the one-parameter subgroups of $T_{N_{i}}$ and since $\varphi$ is a group morphism, we get a
homomorphism
\begin{equation*}
  \overline\varphi:N_{1}\rightarrow N_{2}
\end{equation*}
of lattices. We say that such a morphism is compatible with the fans $\Sigma_{i}$ if for every
$\sigma_{1}\in\Sigma_{1}$ there is $\sigma_{2}\in\Sigma_{2}$ with
$\overline\varphi(\sigma_{1})\subseteq \sigma_{2}$.

\begin{prop}[{\cite[Thm 3.3.4]{CLS}}]
  If $\varphi:X_{\Sigma_{1}}\rightarrow X_{\Sigma_{2}}$ is toric, then the induced map
  $\varphi:N_{1}\rightarrow N_{2}$ is compatible with the fans $\Sigma_{1},\Sigma_{2}$.

  Conversely, every morphism $\overline\varphi:N_{1}\rightarrow N_{2}$ compatible with the fans
  uniquely determines a toric morphism $\varphi:X_{\Sigma_{1}}\rightarrow X_{\Sigma_{2}}$ which extends
  \begin{equation*}
    \overline\varphi \otimes 1:N_{1} \otimes \mathbb{C}^{*}=
    T_{N_{1}}\rightarrow N_{2} \otimes \mathbb{C}^{*}=T_{N_{2}}.
  \end{equation*}
\end{prop}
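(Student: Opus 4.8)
I will prove the two implications separately; both rest on the density of $T_{N}$ in $X_{\Sigma}$, on separatedness of toric varieties, and on the description of torus-invariant affine opens.

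\textbf{Toric morphism $\Rightarrow$ compatibility.} Assume $\varphi$ is toric, so $\varphi|_{T_{N_{1}}}=\overline{\varphi}\otimes 1$ and, by equivariance, $\varphi^{-1}$ of a torus-invariant open set is torus-invariant. The plan is: (i) show $\varphi$ maps each affine chart $U_{\sigma_{1}}$, $\sigma_{1}\in\Sigma_{1}$, into a single chart $U_{\sigma_{2}}$, $\sigma_{2}\in\Sigma_{2}$; (ii) read off $\overline{\varphi}(\sigma_{1})\subseteq\sigma_{2}$ from the induced map on coordinate rings. For (i) I would first record the elementary lemma that the \emph{proper} torus-invariant open subsets of an affine toric variety $U_{\sigma}$ never cover $U_{\sigma}$: any torus-invariant ideal of $\mathbb{C}[S_{\sigma}]$ is generated by characters $t^{m}$, none of which is a unit unless $m\in\sigma^{\perp}$, so the common zero locus $Z_{\sigma}$ of the non-invertible characters is a nonempty torus-invariant closed set contained in every nonempty invariant closed set; hence no proper invariant open set meets $Z_{\sigma}$. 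Since $\{\varphi^{-1}(U_{\sigma_{2}})\cap U_{\sigma_{1}}\}_{\sigma_{2}\in\Sigma_{2}}$ is a cover of $U_{\sigma_{1}}$ by invariant opens, one member must be all of $U_{\sigma_{1}}$, which gives (i). For (ii), the comorphism $\mathbb{C}[S_{\sigma_{2}}]\to\mathbb{C}[S_{\sigma_{1}}]$ is the restriction of $\mathbb{C}[M_{2}]\to\mathbb{C}[M_{1}]$, $t^{m}\mapsto t^{\overline{\varphi}^{*}(m)}$ (the comorphism of $\overline{\varphi}\otimes1$), so $\overline{\varphi}^{*}(S_{\sigma_{2}})\subseteq S_{\sigma_{1}}$; as $S_{\sigma_{i}}$ generates the cone $\sigma_{i}^{\vee}$, this says $\overline{\varphi}^{*}(\sigma_{2}^{\vee})\subseteq\sigma_{1}^{\vee}$, and dualizing gives $\overline{\varphi}(\sigma_{1})\subseteq(\sigma_{2}^{\vee})^{\vee}=\sigma_{2}$, i.e. compatibility. (An alternative to (i) uses limits of one-parameter subgroups: $\lim_{t\to 0}\lambda_{u}(t)$, where $\lambda_{u}$ is the one-parameter subgroup of $u$, exists in $U_{\sigma}$ exactly when $u\in\sigma$, and morphisms commute with such limits, forcing $\overline{\varphi}(u)$ into a cone of $\Sigma_{2}$; but confining everything to one cone still needs the invariant-open lemma above or the orbit--cone correspondence.)

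\textbf{Compatibility $\Rightarrow$ unique toric morphism.} Given $\overline{\varphi}\colon N_{1}\to N_{2}$ compatible with the fans, for each $\sigma_{1}\in\Sigma_{1}$ pick $\sigma_{2}\in\Sigma_{2}$ with $\overline{\varphi}(\sigma_{1})\subseteq\sigma_{2}$; dualizing gives $\overline{\varphi}^{*}(\sigma_{2}^{\vee})\subseteq\sigma_{1}^{\vee}$, hence a semigroup map $S_{\sigma_{2}}\to S_{\sigma_{1}}$, hence a $\mathbb{C}$-algebra map $\mathbb{C}[S_{\sigma_{2}}]\to\mathbb{C}[S_{\sigma_{1}}]$ and a morphism $\varphi_{\sigma_{1}}\colon U_{\sigma_{1}}\to U_{\sigma_{2}}\hookrightarrow X_{\Sigma_{2}}$ restricting to $\overline{\varphi}\otimes1$ on $T_{N_{1}}$. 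Because $X_{\Sigma_{2}}$ is separated and $T_{N_{1}}$ is dense in each $U_{\sigma_{1}}$, any two morphisms $U_{\sigma_{1}}\to X_{\Sigma_{2}}$ extending $\overline{\varphi}\otimes1$ coincide; this makes $\varphi_{\sigma_{1}}$ independent of the choice of $\sigma_{2}$ and forces agreement on the overlaps $U_{\sigma_{1}}\cap U_{\sigma_{1}'}=U_{\sigma_{1}\cap\sigma_{1}'}$, so the $\varphi_{\sigma_{1}}$ glue to a morphism $\varphi\colon X_{\Sigma_{1}}\to X_{\Sigma_{2}}$. By construction $\varphi|_{T_{N_{1}}}=\overline{\varphi}\otimes1$ is a group homomorphism, so $\varphi$ is toric, it extends $\overline{\varphi}\otimes1$, and the lattice map it induces on one-parameter subgroups is $\overline{\varphi}$ again. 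Uniqueness is then immediate: any toric morphism extending $\overline{\varphi}\otimes1$ agrees with $\varphi$ on the dense torus $T_{N_{1}}$, hence everywhere since $X_{\Sigma_{2}}$ is separated.

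\textbf{Main obstacle.} The only step with genuine content is (i) in the forward direction --- forcing $\varphi(U_{\sigma_{1}})$ into a \emph{single} chart rather than merely obtaining $\overline{\varphi}(\sigma_{1})\subseteq|\Sigma_{2}|$; everything else is formal manipulation of dual cones, semigroup rings, separatedness, and density of the torus. Accordingly I would isolate the ``no proper invariant open cover of $U_{\sigma}$'' lemma (equivalently, the existence of the distinguished orbit $Z_{\sigma}$) at the outset, since it is precisely what makes this step work.
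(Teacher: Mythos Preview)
The paper does not supply its own proof of this proposition; it is stated as a citation of \cite[Thm.~3.3.4]{CLS} and left without argument. Your proof is correct and is essentially the standard argument one finds in that reference: the converse direction is the routine dual-cone/semigroup-ring construction plus gluing via separatedness and density of the torus, and the forward direction hinges on the one nontrivial point you correctly isolate, namely that $U_{\sigma_{1}}$ admits no cover by proper $T_{N_{1}}$-invariant open subsets (equivalently, that the closed orbit $O(\sigma_{1})$---your $Z_{\sigma_{1}}$---lies in every nonempty invariant closed set, hence in no proper invariant open). Your alternative sketch via limits of one-parameter subgroups is also the other standard route.

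One small remark on presentation: in step~(i) you might make explicit that the $T_{N_{1}}$-invariant open subsets of $U_{\sigma_{1}}$ are exactly the unions of $U_{\tau}$ for faces $\tau\preceq\sigma_{1}$, so that ``no proper invariant open cover'' reduces immediately to the statement that $O(\sigma_{1})$ is not contained in any $U_{\tau}$ with $\tau\prec\sigma_{1}$; this is a consequence of the orbit--cone correspondence already recorded in the paper and saves you the direct argument with non-invertible characters.
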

\begin{remark}\label{rem:proper-toric-morphism}
  One can show that $\varphi:X_{\Sigma_{1}}\rightarrow X_{\Sigma_{2}}$ is proper if and only if
  $\varphi^{-1}(|\Sigma_{2}|)=|\Sigma_{1}|$. See \cite[Thm 3.4.11]{CLS}.
\end{remark}

\begin{exam}
  Suppose $\varphi$ is the identity and $\Sigma_{1}$ is a refinement of $\Sigma_{2}$, i.e.
  $|\Sigma_{1}| = |\Sigma_{2}|$ and every cone $\sigma_{1}\in \Sigma_{1}$ is contained in some cone
  $\sigma_{2}\in\Sigma_{2}$. Then the corresponding map $X_{\Sigma_{1}}\rightarrow X_{\Sigma_{2}}$
  is proper and birational.
\end{exam}

For some applications, it is useful to work with the following weakened version of smoothness.
\begin{defin}
  A strongly convex polyhedral cone $\sigma$ is called \emph{simplicial} if its generators are
  linearly independent. A toric variety $X_{\Sigma}$ is \emph{simplicial} if its fan $\Sigma$
  consists of simplicial cones.
\end{defin}

We will see below that a simplicial variety has only abelian quotient singularities. For many
purposes, this is as good as smoothness.

\paragraph{The orbit-cone correspondence.}
\label{sec:TV:Orbit-Cone}

A cone $\sigma\in\Sigma$ defines a distinguished point $\gamma_{\sigma}\in U_{\sigma}\subseteq
X_{\Sigma}$: $\gamma_{\sigma}$ is given by the semigroup morphism:
\begin{equation*}
  m\in S_{\sigma}\mapsto
  \begin{cases}
    1 ,\quad m\in \sigma^{\bot}\cap M\\
    0 ,\quad \text{otherwise}
  \end{cases}
\end{equation*}
This is a fixed point for the $T_{N}$ action if and only if $\dim\sigma=\dim N_{\mathbb{R}}$.
% They serve as attractors of one parameter subgroups:
% \begin{prop}
%   Let $u\in N_{\mathbb{R}}$ and
%   \begin{equation*}
%     \lambda_{u}:\mathbb{C}^{*}\rightarrow T_{N},\quad t\mapsto u \otimes t\in T_{N}
%   \end{equation*}
%   the associated one-parameter subgroup. Then
%   \begin{equation*}
%     u\in\sigma \iff \lim_{t\rightarrow 0}\lambda_{u}(t) \text{ exists in } U_{\sigma}.
%   \end{equation*}
%   If $u$ is in the relative interior of $\sigma$, then $\lim_{t\rightarrow
%   0}\lambda_{u}(t)=\gamma_{\sigma}$.
% \end{prop}
% To setup the correspondence we need to fix some notation. Let
% \begin{equation*}
%   O(\sigma)=T_{N}\cdot \gamma_{\sigma}
% \end{equation*}
% be the orbit of $\gamma_{\sigma}$. Also let $N_{\sigma}$ be the lattice spanned by $N\cap \sigma$
% and $N(\sigma)=N/N_{\sigma}$. Then there is a perfect pairing
% \begin{equation*}
%   \Hom_{\mathbb{Z}}(\sigma^{\bot}\cap M,N(\sigma))\rightarrow \mathbb{Z},
% \end{equation*}
% inducing an isomorphism
% \begin{equation*}
%   \Hom_{\mathbb{Z}}(\sigma^{\bot}\cap M,\mathbb{C}^{*})\cong T_{N(\sigma)}.
% \end{equation*}
% The orbit $O(\sigma)$ can be identified with
% \begin{align*}
%     O(\sigma)&= \{\gamma:S_{\sigma}\rightarrow \mathbb{C}| \gamma(m)\neq 0 \iff m\in \sigma^{\bot}\cap M\}\\
%              &\cong\Hom_{\mathbb{Z}}(\sigma^{\bot}\cap M,\mathbb{C}^{*})\cong T_{N(\sigma)}
%   \end{align*}

\begin{thm}[Orbit-Cone Correspondence]\label{sec:orbit-cone-corr}
  There is a bijective correspondence
  \begin{align*}
    \{\sigma\in\Sigma \} &\longleftrightarrow \{T_{N}\text{-orbits}\subseteq \Sigma\} \\
    \sigma &\longleftrightarrow O(\sigma):= T_{N}\cdot \gamma_{\sigma}
  \end{align*}
  having the following properties:
  \begin{enumerate}
  \item $\dim \sigma + \dim O(\sigma)=\dim N_{\mathbb{R}}$
  \item The affine open set $U_{\sigma}$ decomposes into orbits as
    \begin{equation*}
      U_{\sigma}=\bigcup_{\tau\preceq\sigma}O(\tau)
    \end{equation*}
  \item $\tau\preceq\sigma$ if and only if $O(\sigma)\subseteq \overline{O(\tau)}$, and
    \begin{equation*}
      V(\tau):=\overline{O(\tau)} = \bigcup_{\tau\preceq\sigma}O(\sigma)
    \end{equation*}
    % \item The orbit closure $V(\tau):=\overline{O(\tau)}$ is a toric variety associated to the fan
    %   \begin{equation*}
    %     Star(\tau):=\{\sigma/N_{\tau,\mathbb{R}}|\tau\preceq\sigma\}\subseteq
    %     N(\tau)_{\mathbb{R}}.
    %   \end{equation*}
  \end{enumerate}
\end{thm}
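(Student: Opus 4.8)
The plan is to work affine-locally first and then glue. For a single strongly convex rational cone $\sigma$, the affine toric variety $U_\sigma = \Spec\mathbb{C}[S_\sigma]$ has its $\mathbb{C}$-points in bijection with semigroup homomorphisms $S_\sigma \to (\mathbb{C},\cdot)$. A point $\gamma$ of $U_\sigma$ is fixed by the torus action precisely when the corresponding homomorphism takes values in $\{0,1\}$, and such homomorphisms are in bijection with the faces $\tau \preceq \sigma$ via the recipe $\gamma_\tau(m) = 1$ if $m \in \tau^\perp \cap M$ and $0$ otherwise; one must check this is well-defined (uses that $\tau^\perp \cap \sigma^\vee$ is a face of $\sigma^\vee$, hence $S_\sigma \setminus (\tau^\perp \cap M)$ is an ideal of the semigroup) and that every $\{0,1\}$-valued homomorphism arises this way (the preimage of $1$ is a face of $\sigma^\vee$, which by cone duality is $\tau^\perp \cap \sigma^\vee$ for a unique face $\tau \preceq \sigma$). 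This sets up the correspondence $\tau \leftrightarrow \gamma_\tau \leftrightarrow O(\tau) := T_N \cdot \gamma_\tau$ for faces of a fixed $\sigma$.

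Next I would compute the orbits explicitly. The stabilizer of $\gamma_\tau$ is the subtorus $T_{N(\tau)}$ where $N(\tau) = N \cap (\mathbb{R}\tau)$ is the sublattice spanned by $\tau$, so $\dim O(\tau) = n - \dim\tau$, giving item 1. The orbit $O(\tau)$ is isomorphic to $T_{N/N(\tau)}$, and identifying $U_\sigma$-points with semigroup maps one sees that a point lies in $O(\tau)$ iff its set of nonvanishing characters is exactly $\tau^\perp \cap S_\sigma$; since every semigroup map $S_\sigma \to \mathbb{C}$ has nonvanishing locus equal to $\tau^\perp \cap S_\sigma$ for exactly one face $\tau \preceq \sigma$ (namely the face cut out by that locus), this yields the disjoint decomposition $U_\sigma = \bigsqcup_{\tau \preceq \sigma} O(\tau)$, which is item 2. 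For item 3, the inclusion $O(\sigma) \subseteq \overline{O(\tau)}$ when $\tau \preceq \sigma$ can be seen by a one-parameter limit: pick $u$ in the relative interior of $\sigma$ and note $\lim_{s\to 0} s^u \cdot x$ lands on $O(\sigma)$ for $x \in O(\tau)$, using that $\langle m, u\rangle > 0$ exactly for $m \notin \sigma^\perp$. Conversely, if $O(\sigma) \subseteq \overline{O(\tau)}$ then $\gamma_\sigma \in \overline{O(\tau)} \subseteq U_\tau$-completion inside $U_\sigma$, forcing $\tau \preceq \sigma$ by the orbit decomposition of the smaller affine chart. The closure formula $V(\tau) = \bigsqcup_{\tau \preceq \sigma} O(\sigma)$ then follows since $\overline{O(\tau)}$ is a union of orbits (being torus-invariant and closed), and it is itself the toric variety of the quotient fan $\mathrm{Star}(\tau)$ in $N/N(\tau)$.

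Finally I would globalize: since $X_\Sigma$ is covered by the $U_\sigma$ and the orbit decompositions agree on overlaps $U_{\sigma_1} \cap U_{\sigma_2} = U_{\sigma_1 \cap \sigma_2}$ (an orbit in the intersection corresponds to a common face), the orbits of $X_\Sigma$ are exactly the $O(\sigma)$ for $\sigma \in \Sigma$, with no repetitions, and properties 1--3 are local so they transfer immediately. The main obstacle I anticipate is the clean bookkeeping in item 3: proving both directions of ``$\tau \preceq \sigma \iff O(\sigma) \subseteq \overline{O(\tau)}$'' requires carefully tracking which affine chart contains which orbit and limit point, and the one-parameter-subgroup limit argument needs the precise statement that $\lim_{s\to 0} \lambda^u(s)$ exists in $U_\sigma$ iff $u \in \sigma$ and then lands in the orbit of the smallest face of $\sigma$ containing $u$ — this limit-of-one-parameter-subgroups lemma is the real engine and deserves a careful separate verification. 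Everything else is semigroup bookkeeping that follows routinely from the duality between faces of $\sigma$ and faces of $\sigma^\vee$.
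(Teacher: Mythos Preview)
Your proof plan is correct and is essentially the standard argument. Note, however, that the paper does not give its own proof of this statement at all: it simply cites \cite[Theorem 3.2.6 and Prop. 3.2.7]{CLS}. What you have outlined is precisely the proof given in that reference (the semigroup-homomorphism description of points, the identification of $O(\tau)$ with $T_{N/N(\tau)}$ via the vanishing pattern of characters, and the one-parameter-subgroup limit lemma for the closure relations), so there is nothing to compare --- you have reconstructed the cited proof rather than diverged from it.
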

\begin{proof}
  \cite[Theorem 3.2.6 and Prop. 3.2.7]{CLS}
\end{proof}

\paragraph{Divisors and the homogeneous coordinate ring.}
\label{sec:TV:divisor-coordinate-ring}
Let $X_{\Sigma}$ be a toric variety associated to the fan $\Sigma$. A one-dimensional cone
$\rho\in\Sigma(1)$ gives a torus-invariant divisor $D_{\rho}=V(\rho)$ under the orbit-cone
correspondence and every torus-invariant divisor is a sum of these. Denoting the latter group by
$Div_{T}(X_{\Sigma})$, we have an identification
\begin{equation*}
  \mathbb{Z}^{\Sigma(1)}\cong Div_{T}(X_{\Sigma}).
\end{equation*}
Since $\rho$ is a one-dimensional rational cone, there is a unique smallest lattice generatore of
$\rho$, i.e. an element $u_{\rho}\in N$ such that $\rho=\mathbb{R}^{+}u_{\rho}$.

\begin{prop}[{\cite[Thm 4.1.3]{CLS}}]\label{prop:divis-toric-vari}
  There is an exact sequence
  \begin{center}
    \begin{tikzcd}
      M \arrow{r}{} & \mathbb{Z}^{\Sigma(1)} \arrow{r}{} & \Cl(X_{\Sigma{}}) \arrow{r} & 0
    \end{tikzcd}
  \end{center}
  where $\Cl(X_{\Sigma})$ denotes the class group. The first morphism maps $m\in M$ to the divisor
  \begin{equation*}
    div(t^{m}) = \sum_{\rho\in\Sigma(1)}\langle m, u_{\rho} \rangle D_{\rho}
  \end{equation*}
  of the rational function $t^{m}$. The second is the natural quotient map
  $\mathbb{Z}^{\Sigma(1)}\cong Div_{T}(X_{\Sigma})\rightarrow \Cl(X_{\Sigma})$. If $X_{\Sigma}$ has
  no torus factors, i.e. it is not of the form $X_{\Sigma}\cong X_{\Sigma'}\times T^{k}$, then there
  is a short exact sequence
  \begin{center}
    \begin{tikzcd}
      0 \arrow{r} & M \arrow{r}{} & \mathbb{Z}^{\Sigma(1)} \arrow{r}{} & \Cl(X) \arrow{r} & 0.
    \end{tikzcd}
  \end{center}
\end{prop}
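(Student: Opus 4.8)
The plan is to compute $\Cl(X_\Sigma)$ by restricting divisor classes to the dense torus $T_N\subset X_\Sigma$ and then identifying the kernel of the restriction with the classes of principal torus-invariant divisors. First I would observe, using the orbit-cone correspondence (Theorem~\ref{sec:orbit-cone-corr}), that $T_N=O(\{0\})$ is exactly the complement in $X_\Sigma$ of $\bigcup_{\rho\in\Sigma(1)}D_\rho$: every orbit closure $V(\tau)$ with $\tau\neq\{0\}$ lies in some $D_\rho$, since every nonzero cone of $\Sigma$ has a ray as a face. The excision exact sequence for class groups of a normal variety then gives an exact sequence $\mathbb{Z}^{\Sigma(1)}\cong\bigoplus_\rho\mathbb{Z}D_\rho\to\Cl(X_\Sigma)\to\Cl(T_N)\to 0$, and $\Cl(T_N)=0$ because $T_N\cong(\mathbb{C}^*)^n$ is a localization of affine space, whose class group is trivial. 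This already yields the surjectivity of $\mathbb{Z}^{\Sigma(1)}\to\Cl(X_\Sigma)$ and exactness on the right, together with the description of the second map as the quotient map.

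Next I would identify the kernel. A class $[\sum_\rho a_\rho D_\rho]$ vanishes if and only if the torus-invariant divisor $D=\sum_\rho a_\rho D_\rho$ is principal, say $D=\operatorname{div}(f)$ with $f\in\mathbb{C}(X_\Sigma)^\times=\mathbb{C}(T_N)^\times$. I claim such an $f$ must be $c\,t^m$ for some $c\in\mathbb{C}^\times$ and $m\in M$. Indeed, $D$ is $T_N$-invariant, so for each $t\in T_N$ the rational function $(t\cdot f)/f$ has trivial divisor on $X_\Sigma$, hence is a global unit, and its restriction to $T_N$ lies in $\mathbb{C}[M]^\times=\mathbb{C}^\times\cdot\{t^m:m\in M\}$. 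A short cocycle argument in the variable $t$ then shows that $f$ is a semi-invariant, $t\cdot f=\chi(t)\,f$ with $\chi\in\Hom(T_N,\mathbb{C}^\times)=M$, say $\chi=t^m$; consequently $f/t^m$ is $T_N$-invariant, hence constant. Computing $\operatorname{div}(t^m)$ locally — extending $u_\rho$ to a $\mathbb{Z}$-basis of $N$ identifies $U_\rho\cong\mathbb{C}\times(\mathbb{C}^*)^{n-1}$ in which $D_\rho=\{x_1=0\}$ and the monomial $t^m$ has order of vanishing $\langle m,u_\rho\rangle$ along $D_\rho$ — gives $\operatorname{div}(t^m)=\sum_\rho\langle m,u_\rho\rangle D_\rho$. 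Hence the kernel of $\mathbb{Z}^{\Sigma(1)}\to\Cl(X_\Sigma)$ is exactly the image of the map $M\to\mathbb{Z}^{\Sigma(1)}$, $m\mapsto(\langle m,u_\rho\rangle)_\rho$, which establishes the three-term exact sequence with the stated formula for the first map.

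Finally, for the short exact sequence I would show that $M\to\mathbb{Z}^{\Sigma(1)}$ is injective precisely when $X_\Sigma$ has no torus factor. Its kernel is $\{m\in M:\langle m,u_\rho\rangle=0\text{ for all }\rho\in\Sigma(1)\}$, which is nonzero if and only if the ray generators $u_\rho$ fail to span $N_\mathbb{R}$. If they span only a proper subspace, choose a splitting $N=N'\oplus N''$ with $N''$ of positive rank and $\{u_\rho\}\subset N'_\mathbb{R}$; then every cone of $\Sigma$ lies in $N'_\mathbb{R}$, so $\Sigma$ is pulled back from a fan $\Sigma'$ in $N'_\mathbb{R}$ and $X_\Sigma\cong X_{\Sigma'}\times T_{N''}$ has a torus factor. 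Conversely, a decomposition $X_\Sigma\cong X_{\Sigma'}\times T^k$ with $k\ge 1$ forces all ray generators into the corresponding proper subspace. Thus, under the no-torus-factor hypothesis, $M\hookrightarrow\mathbb{Z}^{\Sigma(1)}$ and the four-term short exact sequence follows. I expect the middle step to be the main obstacle: carefully upgrading "$(t\cdot f)/f$ is a Laurent monomial for every $t$" to "$f$ is a semi-invariant" via descent, combined with the local valuation computation $\operatorname{ord}_{D_\rho}(t^m)=\langle m,u_\rho\rangle$; the remaining arguments are bookkeeping with the orbit-cone correspondence and the excision sequence for class groups.
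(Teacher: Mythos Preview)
The paper does not give its own proof of this proposition; it simply cites \cite[Thm.~4.1.3]{CLS} and moves on. Your argument is correct and is essentially the standard proof found there: excision for class groups along $T_N\subset X_\Sigma$, vanishing of $\Cl(T_N)$, and identification of the kernel with the principal divisors of characters.

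One simplification worth noting: the cocycle/semi-invariant step you flag as the main obstacle is unnecessary. If $D=\sum_\rho a_\rho D_\rho=\operatorname{div}(f)$, then $\operatorname{div}(f)$ is supported on $X_\Sigma\setminus T_N$, so $f$ has neither zeros nor poles on $T_N$ and hence lies in $\mathcal{O}(T_N)^\times=\mathbb{C}[M]^\times$. The units of the Laurent polynomial ring $\mathbb{C}[M]$ are precisely the elements $c\,t^m$ with $c\in\mathbb{C}^\times$ and $m\in M$, which gives the conclusion directly without invoking torus-invariance of $D$ or any descent argument.
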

The global sections of torus-invariant divisors are described by polyhedra as follows (\cite[Prop.
4.3.3]{CLS}): If $D=\sum_{\rho}a_{\rho}D_{\rho}$ is a torus-invariant divisor on $X_{\Sigma}$, then
\begin{equation*}
  \Gamma(X_{\Sigma},\mathcal{O}_{X_{\Sigma}}(D)) = \bigoplus_{m\in P_{D}\cap M}\mathbb{C}\cdot t^{m},
\end{equation*}
where
\begin{equation*}
  P_{D} := \{m\in M_{\mathbb{R}}\ \rvert\ \langle m, u_{\rho} \rangle\ge -a_{\rho}\}.
\end{equation*}
  
Now suppose $X_{\Sigma}$ is a simplicial toric variety without torus factors. We want to describe
$X_{\Sigma}$ by a graded ring $S_{\Sigma}$, generalizing the homogeneous coordinate description of
projective space. Tensoring the exact sequence of the class group with
$\Hom_{\mathbb{Z}}(-,\mathbb{C}^{*})$ gives the exact sequence
\begin{center}
  \begin{tikzcd}
    1 \arrow{r} & G \arrow{r}{} & (\mathbb{C}^{*})^{\Sigma(1)}\arrow{r} & T_N \arrow{r} & 1
  \end{tikzcd}
\end{center}
where $G=\Hom_{\mathbb{Z}}(\Cl(X),\mathbb{C}^{*})$ is the character group of $\Cl(X)$. This is a
reductive group isomorphic to the product of a torus and a finite group. We can describe $G$
concretely as
\begin{equation*}
  G = \{(t_{\rho})\in
  (\mathbb{C}^{*})^{\Sigma(1)}\ \rvert\ \prod_{\rho}t_{\rho}^{\langle m,
    u_{\rho} \rangle}=1 \text{ for all } m\in M\}.
\end{equation*}

\begin{defin}
  The \emph{homogeneous coordinate ring} of $X_{\Sigma}$ is
  \begin{equation*}
    S_{\Sigma}=\mathbb{C}[x_{\rho}\ \rvert\ \rho\in\Sigma(1)]=\mathcal{O}(\mathbb{C}^{\Sigma(1)}).
  \end{equation*}
\end{defin}

The ring $S_{\Sigma}$ is graded by $\Cl(X)$:
\begin{equation*}
  \deg(x_{\rho}) = [D_{\rho}]
\end{equation*}
This gives an action of $G$ by duality, which is just the restriction of the natural scaling action
of $(\mathbb{C}^{*})^{\Sigma(1)}$. The corresponding eigenspaces are the graded components of
$S_{\Sigma}$:
\begin{equation*}
  S_{\Sigma} = \bigoplus_{\beta\in \Cl(X)}S_{\beta}.
\end{equation*}
We want to describe $X_{\Sigma}$ as a suitable quotient of
$\Spec(S_{\Sigma})=\mathbb{C}^{\Sigma(1)}$ by $G_{}$. For this to work, we first have to throw out
some badly behaved $G$-orbits.
\begin{defin}
  For $\sigma\in\Sigma$ let $x^{\hat\sigma}=\prod_{\rho\notin \sigma(1)}x_{\rho}\in S_{\Sigma}$. The
  \emph{irrelevant ideal} is
  \begin{equation*}
    B_{\Sigma}=\langle x^{\hat\sigma}\ \rvert\  \sigma\in\Sigma_{max}\rangle.
  \end{equation*}
  The corresponding zero set $Z_{\Sigma}=V(B_{\Sigma})$ is the \emph{irrelevant locus}.
\end{defin}

The variety $\mathbb{C}^{\Sigma(1)}\backslash Z_{\Sigma}$ is again toric. Its fan can be described
as follows: For $\sigma\in\Sigma$ let
\begin{equation*}
  \tilde\sigma = \pos(e_{\rho}\ \rvert\ \rho\in\sigma)\subseteq \mathbb{R}^{\Sigma(1)}.
\end{equation*}
The collection of all $\tilde\sigma$ constitute the fan $\tilde\Sigma$ of
$\mathbb{C}^{\Sigma(1)}\backslash Z$. The lattice morphism
\begin{align*}
  \overline \pi:\mathbb{Z}^{\Sigma(1)}\rightarrow N,\quad e_{\rho}\mapsto u_{\rho}
\end{align*}
is obviously compatible with the fans. Hence we get a toric morphism
\begin{equation*}
  \pi:\mathbb{C}^{\Sigma(1)}\backslash Z_{\Sigma}\rightarrow X_{\Sigma}.
\end{equation*}

\begin{thm}[{\cite[Thm. 5.1.11]{CLS}}]\label{thm:geometric-quotient}
  Let $X_{\Sigma}$ be a simplicial toric variety without torus factors. The map $\pi$ describes
  $X_{\Sigma}$ as the geometric quotient
  \begin{equation*}
    X_{\Sigma}=\mathbb{C}^{\Sigma(1)}\backslash Z_{\Sigma}//G
  \end{equation*}
\end{thm}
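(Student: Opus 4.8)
The plan is to follow Cox's quotient construction: reduce the statement to affine charts, compute the rings of invariants there, and glue the local quotients.

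\textbf{Step 1: compatible affine covers.} For each maximal cone $\sigma\in\Sigma_{max}$ put $W_\sigma:=\{x\in\mathbb{C}^{\Sigma(1)}\mid x^{\hat\sigma}\neq 0\}$. By the definition of $B_\Sigma$ these sets cover $\mathbb{C}^{\Sigma(1)}\setminus Z_\Sigma$, and since each $x^{\hat\sigma}$ is homogeneous for the $\Cl(X)$-grading they are $G$-stable. Computing the dual of $\tilde\sigma=\pos(e_\rho\mid\rho\in\sigma(1))$ one finds $W_\sigma=U_{\tilde\sigma}\cong\mathbb{C}^{\sigma(1)}\times(\mathbb{C}^*)^{\Sigma(1)\setminus\sigma(1)}$. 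On the target, $\overline\pi(\tilde\sigma)=\pos(u_\rho\mid\rho\in\sigma(1))=\sigma$ — here $\sigma$ is generated by its rays because it is simplicial, and the rays span $N_{\mathbb{R}}$ because $X_\Sigma$ has no torus factors — and, exactly as for preimages of affine charts under any toric morphism of compatible fans, one checks $\pi^{-1}(U_\sigma)=W_\sigma$ and $\pi(W_\sigma)=U_\sigma$. So it suffices to prove that each $\pi|_{W_\sigma}\colon W_\sigma\to U_\sigma$ is a geometric quotient for $G$ and then to glue these.

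\textbf{Step 2: the affine invariant ring.} The coordinate ring $\mathbb{C}[W_\sigma]=\mathbb{C}[x_\rho:\rho\in\sigma(1)][x_\rho^{\pm1}:\rho\notin\sigma(1)]$ inherits the $\Cl(X)$-grading, and the $G$-action is dual to that grading, so $\mathbb{C}[W_\sigma]^G$ is precisely its degree-zero part: the span of the Laurent monomials $\prod_\rho x_\rho^{a_\rho}$ with $a_\rho\ge 0$ for $\rho\in\sigma(1)$ and $\sum_\rho a_\rho[D_\rho]=0$ in $\Cl(X)$. Because $X_\Sigma$ has no torus factors, Proposition~\ref{prop:divis-toric-vari} provides the injection $M\hookrightarrow\mathbb{Z}^{\Sigma(1)}$, $m\mapsto(\langle m,u_\rho\rangle)_\rho$, with cokernel $\Cl(X)$; hence such exponent vectors $(a_\rho)$ are exactly the images of those $m\in M$ with $\langle m,u_\rho\rangle\ge 0$ for all $\rho\in\sigma(1)$, i.e. of $m\in S_\sigma$. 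Therefore $\mathbb{C}[W_\sigma]^G=\mathbb{C}[S_\sigma]=\mathcal{O}(U_\sigma)$, and since $G$ is reductive, $W_\sigma\to\Spec\mathbb{C}[W_\sigma]^G=U_\sigma$ is a good quotient, which one identifies with $\pi|_{W_\sigma}$.

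\textbf{Step 3: upgrading to a geometric quotient, and gluing.} The remaining point — which I expect to be the main obstacle — is that this good quotient is in fact \emph{geometric}, equivalently that every $G$-orbit meeting $\mathbb{C}^{\Sigma(1)}\setminus Z_\Sigma$ is closed there. This is exactly where simpliciality enters: when $\sigma$ is simplicial the $u_\rho$ ($\rho\in\sigma(1)$) are linearly independent, so after passing to a finite-index sublattice of $N$ one splits $W_\sigma$ compatibly and sees that the connected component of $G$ acts with closed orbits and finite stabilizers; concretely one reduces to the smooth case, where $G$ acts freely and $W_\sigma\to U_\sigma$ is a locally trivial principal $G$-bundle, and then corrects by a finite abelian group (which accounts for the quotient singularities of $U_\sigma$). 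A good quotient all of whose orbits are closed is geometric, so $\pi|_{W_\sigma}$ is a geometric quotient. Finally, geometric quotients by a fixed reductive group glue along a $G$-stable affine open cover: over $U_\sigma\cap U_{\sigma'}=U_{\sigma\cap\sigma'}$ the two local quotient maps restrict to the same morphism (both equal $\pi$), so the $\pi|_{W_\sigma}$ patch to a global geometric quotient $\mathbb{C}^{\Sigma(1)}\setminus Z_\Sigma\to X_\Sigma$, which is the assertion.
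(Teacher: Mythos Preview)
The paper does not prove this theorem at all: it is stated with a bare citation to \cite[Thm.~5.1.11]{CLS}, and the material that follows (the description of $\pi^{-1}(U_\sigma)$, homogenization, and the finite-quotient proposition for $\mathbb{C}^{\sigma(1)}\to U_\sigma$) is presented as consequences and unpacking, not as a proof. So there is nothing in the paper to compare your argument against.

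That said, your outline is the standard one from Cox's original paper and from \cite{CLS}, and Steps~1--2 are correct and match the computations the paper does immediately after the theorem. The only place where your write-up is genuinely incomplete is Step~3: the sentence ``after passing to a finite-index sublattice of $N$ one splits $W_\sigma$ compatibly and sees that the connected component of $G$ acts with closed orbits and finite stabilizers'' is a gesture rather than an argument. What actually happens (and what \cite{CLS} does) is that simpliciality lets you factor the map $\mathbb{Z}^{\Sigma(1)}\to N$ over $W_\sigma$ as a product of a torus quotient $(\mathbb{C}^*)^{\Sigma(1)\setminus\sigma(1)}\to(\mathbb{C}^*)^{\Sigma(1)\setminus\sigma(1)}/H$ (closed orbits because it is a group quotient) and a finite-group quotient $\mathbb{C}^{\sigma(1)}\to\mathbb{C}^{\sigma(1)}/G_\sigma$ with $G_\sigma=\Hom(\Cl(U_\sigma),\mathbb{C}^*)$ finite; both pieces are geometric, hence so is their product. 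Your ``reduce to the smooth case and correct by a finite abelian group'' is the right intuition but would need this factorization made explicit to count as a proof.
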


\begin{remark}
  We refer to \cite[Section 5.0]{CLS} for background on geometric invariant theory and geometric
  quotients. Let us point out some consequences of this result:
  \begin{enumerate}
  \item The $G$-orbits on $\mathbb{C}^{\Sigma(1)}\backslash Z_{\Sigma}$ are closed and the closed
    points of $X_{\Sigma}$ is the orbit space.
  \item For an affine open subset $U=\Spec(R)\subseteq X_{\Sigma}$ we have $\pi^{-1}(U)=\Spec(\tilde
    S)$, where $\tilde S$ is a localization of $S_{\Sigma}$ with an induced $G$-action.
    That $\pi$ is a geometric quotient implies that $R=\tilde S^{G}$, i.e. $R$ is the subring of
    $G$-invariants.
  \end{enumerate}
\end{remark}

Let us specialize the above remark to an affine open $U_{\sigma}\subseteq X_{\Sigma}$ given by a
cone $\sigma\in\Sigma$. For the inverse image we have
\begin{align*}
  \pi^{-1}(U_{\sigma}) = U_{\tilde\sigma}=\Spec(\mathbb{C}[\tilde \sigma^{\vee}\cap M])
\end{align*}
where $\tilde\sigma=\pos(e_{\rho} \ \rvert\ \rho\in\sigma(1)).$ The coordinate ring is then
\begin{equation*}
  \mathbb{C}[\tilde \sigma^{\vee}\cap M]
  =\mathbb{C}\left[\prod_{\rho}x_{\rho}^{a_{\rho}}\ \rvert\ a_{\rho}\ge 0 \text{ for } \rho\in\sigma(1)\right]
  := S_{x^{\hat\sigma}},
\end{equation*}
i.e. we invert every variable $x_{\rho}$ for $\rho\notin \sigma(1)$. Hence we get
\begin{equation*}
  \pi^{-1}(U_{\sigma})=\Spec(S_{x^{\hat\sigma}}).
\end{equation*}
The map on coordinate rings is given by homogenization:
\begin{align*}
  \pi^{*}:\mathbb{C}[\sigma^{\vee}\cap M]&\longrightarrow S_{x^{\hat\sigma}}\\
  \pi^{*}(t^{m})&=\prod_{\rho}x^{\langle m, u_{\rho} \rangle}_{\rho}
\end{align*}
Its image is the space of $G$-invariants $S^{G}_{x^{\hat\sigma}}$. This gives the isomorphism
\begin{equation*}
  U_{\sigma}=\Spec( \mathbb{C}[\sigma^{\vee}\cap M])\cong \Spec(S^{G}_{x^{\hat\sigma}}).
\end{equation*}
For top-dimensional cones $\sigma\in\Sigma(\dim N)$, we can describe the above isomorphism by
dehomogenization, i.e. setting some of the variables $x_{\rho}$ to 1. Let
\begin{align*}
  \varphi_{\sigma}&:\mathbb{C}^{\sigma(1)}\rightarrow \mathbb{C}^{\Sigma(1)} \\
  \varphi_{\sigma}(a)_{\rho} &=
                            \begin{cases}
                              a_{\rho}, &\quad \rho\in\sigma(1) \\
                              1, &\quad \rho\notin\sigma(1).
                            \end{cases}
\end{align*}
The diagram
\begin{center}\begin{tikzcd}
    \mathbb{C}^{\sigma(1)}\arrow[hook]{r}{\varphi_\rho}\arrow{d}{\pi_{\sigma{}}} & \mathbb{C}^{\Sigma(1)}\backslash Z(\Sigma{}) \arrow{d}{\pi_{\Sigma{}}} \\
    U_\sigma{} \arrow[hook]{r} & X_\Sigma{}
  \end{tikzcd}\end{center}
commutes and the left vertical map is an isomorphism if $\sigma$ is smooth. For simplicial cones, we
still have the following:
\begin{prop}
  The map $\pi_{\sigma}:\mathbb{C}^{\sigma(1)}\rightarrow U_{\sigma}$ is the geometric quotient of
  $\mathbb{C}^{\sigma(1)}$ by the finite group
  $G_{\sigma}=\Hom_{\mathbb{Z}}(\Cl(U_{\sigma}),\mathbb{C}^{*})$.
\end{prop}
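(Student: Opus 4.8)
The plan is to reduce the statement to Theorem~\ref{thm:geometric-quotient} applied to the toric variety $U_{\sigma}$ itself, viewed as a (simplicial, torus-factor-free) toric variety with its own fan. First I would observe that the affine toric variety $U_{\sigma}=\Spec(\mathbb{C}[\sigma^{\vee}\cap M])$ is given by the fan $\Sigma_{\sigma}=\{\tau \mid \tau \preceq \sigma\}$ consisting of all faces of $\sigma$. Since $\sigma$ is top-dimensional ($\dim\sigma=\dim N_{\mathbb{R}}$), it has $n$ rays, so $\Sigma_{\sigma}(1)=\sigma(1)$, and the homogeneous coordinate ring of $U_{\sigma}$ is exactly $S_{\Sigma_{\sigma}}=\mathbb{C}[x_{\rho}\mid \rho\in\sigma(1)]=\mathcal{O}(\mathbb{C}^{\sigma(1)})$. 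The class group of $U_{\sigma}$ fits into the exact sequence of Proposition~\ref{prop:divis-toric-vari}: $M\to\mathbb{Z}^{\sigma(1)}\to\Cl(U_{\sigma})\to 0$, where the first map sends $m\mapsto(\langle m,u_{\rho}\rangle)_{\rho}$. Because $\sigma$ is simplicial and top-dimensional, its rays $u_{\rho}$ are $n$ linearly independent vectors in $N$, so this first map is injective with finite cokernel; hence $\Cl(U_{\sigma})$ is a finite abelian group, and $U_{\sigma}$ has no torus factors (a torus factor would force a free quotient). Dually, $G_{\sigma}=\Hom_{\mathbb{Z}}(\Cl(U_{\sigma}),\mathbb{C}^{*})$ is a finite group.

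Next I would identify the irrelevant locus for $U_{\sigma}$. The only maximal cone of $\Sigma_{\sigma}$ is $\sigma$ itself, and $x^{\hat\sigma}=\prod_{\rho\notin\sigma(1)}x_{\rho}$ is the empty product $=1$, so $B_{\Sigma_{\sigma}}=\langle 1\rangle=S_{\Sigma_{\sigma}}$ and the irrelevant locus is $Z_{\Sigma_{\sigma}}=V(1)=\varnothing$. Therefore Theorem~\ref{thm:geometric-quotient} applied to $U_{\sigma}$ yields
\begin{equation*}
  U_{\sigma}=\bigl(\mathbb{C}^{\sigma(1)}\setminus\varnothing\bigr)//G_{\sigma}=\mathbb{C}^{\sigma(1)}//G_{\sigma}
\end{equation*}
as a geometric quotient, and the quotient map is precisely $\pi_{\sigma}$: the map $\overline{\pi}_{\sigma}:\mathbb{Z}^{\sigma(1)}\to N$, $e_{\rho}\mapsto u_{\rho}$, is the lattice morphism $\overline{\pi}$ from the general construction specialized to the fan of $U_{\sigma}$, inducing the toric morphism $\mathbb{C}^{\sigma(1)}\to U_{\sigma}$, which on coordinate rings is the homogenization $t^{m}\mapsto\prod_{\rho}x_{\rho}^{\langle m,u_{\rho}\rangle}$ already written above. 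Finally I would check that this $\pi_{\sigma}$ agrees with the map denoted $\pi_{\sigma}$ in the commuting square, i.e.\ that $\pi_{\Sigma}\circ\varphi_{\sigma}=\iota_{\sigma}\circ\pi_{\sigma}$ where $\iota_{\sigma}:U_{\sigma}\hookrightarrow X_{\Sigma}$; this is immediate by comparing the pullback maps on the relevant localizations of coordinate rings, since dehomogenizing (setting $x_{\rho}=1$ for $\rho\notin\sigma(1)$) inverts precisely the recipe for $\pi^{*}$ restricted to $S^{G}_{x^{\hat\sigma}}$.

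The main obstacle, and the only point requiring genuine care, is verifying that the hypotheses of Theorem~\ref{thm:geometric-quotient} are met by the auxiliary variety $U_{\sigma}$ — specifically that $U_{\sigma}$ is simplicial (clear, since $\sigma$ is a simplicial cone and all its faces are simplicial) and has no torus factors. The latter is where one must use that $\sigma$ is full-dimensional: if $\sigma$ had dimension $<n$ then $U_{\sigma}\cong U_{\sigma'}\times T^{k}$ with $k=n-\dim\sigma$, the map $M\to\mathbb{Z}^{\sigma(1)}$ would have infinite cokernel, and $G_{\sigma}$ would be positive-dimensional rather than finite — so the full-dimensionality hypothesis implicit in the proposition's setup (inherited from the preceding paragraph, which restricts to $\sigma\in\Sigma(\dim N)$) is exactly what makes $G_{\sigma}$ finite. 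Once this is in place, the rest is a routine unwinding of the definitions of the irrelevant ideal and the quotient map in the degenerate case where $\sigma$ is the unique maximal cone.
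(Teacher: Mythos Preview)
Your proof is correct and follows essentially the same approach as the paper: apply Theorem~\ref{thm:geometric-quotient} to $U_{\sigma}$ itself, observe that the irrelevant locus is empty and that $\Cl(U_{\sigma})=\mathbb{Z}^{\sigma(1)}/M$ is finite torsion because the ray generators span $N_{\mathbb{R}}$. You supply more detail than the paper (checking the no-torus-factor hypothesis and the compatibility with the commuting square), but the argument is the same.
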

\begin{proof}
  This is a special case of Theorem \ref{thm:geometric-quotient}. We have $Z_{\sigma}=\emptyset$ and
  the class group is the quotient
  \begin{equation*}
    \Cl(U_{\sigma}) = \mathbb{Z}^{\sigma(1)}/ M.
  \end{equation*}
  This group is torsion and thus finite, since the ray generators of $\sigma(1)$ furnish an
  isomorphism $\mathbb{R}^{\sigma(1)}\cong N_{\mathbb{R}}$. The character group $G_{\sigma}$ must
  then also be finite.
\end{proof}
\begin{col}
  A simplicial toric variety without torus factors $X_{\Sigma}$ has a natural orbifold structure
  with abelian local groups.
\end{col}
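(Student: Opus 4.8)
The plan is to build an orbifold atlas on $X_\Sigma$ out of the uniformizing charts furnished by the preceding Proposition. Fix a maximal cone $\sigma\in\Sigma$; we may assume $\dim\sigma=n:=\dim N$, a maximal cone of smaller dimension being handled by the same argument applied to the smooth uniformizing chart $\mathbb{C}^{\sigma(1)}\times(\mathbb{C}^*)^{n-\dim\sigma}$. By the Proposition, $\pi_\sigma\colon\mathbb{C}^{\sigma(1)}\to U_\sigma$ realizes $U_\sigma$ as the geometric quotient of the smooth affine variety $\mathbb{C}^{\sigma(1)}$ by the finite group $G_\sigma=\Hom_{\mathbb Z}(\Cl(U_\sigma),\mathbb{C}^*)$; since $\Cl(U_\sigma)=\mathbb{Z}^{\sigma(1)}/M$ is a finite \emph{abelian} group, its character group $G_\sigma$ is finite and abelian, so $(\mathbb{C}^{\sigma(1)},G_\sigma,\pi_\sigma)$ is a genuine orbifold chart. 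Because every cone of $\Sigma$ is a face of some maximal cone, part (2) of the Orbit--Cone Correspondence (Theorem \ref{sec:orbit-cone-corr}) gives $X_\Sigma=\bigcup_{\sigma\ \mathrm{maximal}}U_\sigma$, so these charts cover $X_\Sigma$.

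It remains to check that the charts are mutually compatible, and for this I would use the global presentation $\pi\colon\mathbb{C}^{\Sigma(1)}\setminus Z_\Sigma\to X_\Sigma$ as a geometric quotient by $G$ of Theorem \ref{thm:geometric-quotient}. For maximal cones $\sigma,\sigma'$ one has $U_\sigma\cap U_{\sigma'}=U_\tau$ with $\tau=\sigma\cap\sigma'$, and the maps $\varphi_\sigma,\varphi_{\sigma'}$ exhibit $\mathbb{C}^{\sigma(1)}$ and $\mathbb{C}^{\sigma'(1)}$ as $G$-slices of $\mathbb{C}^{\Sigma(1)}\setminus Z_\Sigma$ fitting into the commuting square above. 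Over $\pi^{-1}(U_\tau)$ the two slices are carried into one another by a $G$-valued regular function (precisely as the affine charts of a projective space are glued by rescaling), and, passing to $G_\sigma$- resp.\ $G_{\sigma'}$-invariants, this descends to open embeddings of the subcharts lying over $U_\tau$ into $\mathbb{C}^{\sigma(1)}/G_\sigma$ and $\mathbb{C}^{\sigma'(1)}/G_{\sigma'}$ which intertwine the abelian local groups. Verifying that these transition maps satisfy the cocycle condition and are equivariant for the local-group inclusions is the only genuine work; it is a routine unwinding of that commuting square, and I expect this bookkeeping, rather than any conceptual difficulty, to be the main obstacle.

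Alternatively one can sidestep the atlas construction altogether: for simplicial $\Sigma$ the geometric quotient $\pi$ has finite stabilizers, and any such stabilizer, being a subgroup of the diagonal torus $(\mathbb{C}^*)^{\Sigma(1)}$, is automatically abelian; hence $[(\mathbb{C}^{\Sigma(1)}\setminus Z_\Sigma)/G]$ is a smooth Deligne--Mumford stack with abelian automorphism groups and coarse moduli space $X_\Sigma$, which is exactly the data of the asserted orbifold structure.
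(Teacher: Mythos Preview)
Your argument is correct and follows exactly the line the paper has in mind: the preceding Proposition exhibits each $U_\sigma$ (for $\sigma$ top-dimensional simplicial) as a quotient $\mathbb{C}^{\sigma(1)}/G_\sigma$ with $G_\sigma$ the character group of the finite abelian group $\Cl(U_\sigma)$, and these furnish the orbifold charts with abelian local groups. The paper in fact gives no proof at all for this corollary, treating it as immediate from the Proposition; you have supplied the details it leaves implicit, including the gluing of charts via the global presentation of Theorem~\ref{thm:geometric-quotient} and the alternative stack-theoretic formulation, both of which are sound but more than the paper demands.
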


There is a general correspondence between graded $S_{\Sigma}$-modules and quasi-coherent sheaves on
$X_{\Sigma}$. We want to explain this in the case of the canonical sheaf, which is defined as
follows: Let $j:X^{sm}_{\Sigma}\rightarrow X_{\Sigma}$ be the inclusion of the smooth part of
$X_{\Sigma}$. The canonical sheaf is then
\begin{equation*}
  \omega_{\Sigma} = j_{*}\Omega^{n}_{X^{sm}_{\Sigma}},
\end{equation*}
where $\Omega^{n}_{X_{\Sigma}^{sm}}$ is the sheaf of holomorphic $n$-forms on $X^{sm}_{\Sigma}$. In
other words, a section of $\omega_{\Sigma}$ over an open $U\subseteq X_{\Sigma}$ is given by a
holomorphic $n$-form on $U^{sm}=U \cap X^{sm}$.

Let $(e_{1},\ldots,e_{n})$ be a basis of $M$ and $I=\{\rho_{1},\ldots,\rho_{n}\}\subseteq \Sigma(1)$
an $n$-element subset. Let $u_{I}=\det(\langle e_{i}, u_{\rho_{j}} \rangle_{_{ij}})$ and set
\begin{equation*}
  \Omega_{\Sigma} = \sum_{I}u_{I}\left( \prod_{\rho\notin I}x_{\rho} \right)
  dx_{\rho_{1}}\wedge\ldots\wedge dx_{\rho_{n}}.
\end{equation*}
This is an element of the $S_{\Sigma}$-module
\begin{equation*}
  \bigwedge^{n}\Omega^{1}_{S_{\Sigma}} \cong \Gamma(\mathbb{C}^{\Sigma(1)},\Omega^{n}_{\mathbb{C}^{\Sigma(1)}}),
\end{equation*}
where $\Omega^{1}_{S_{\Sigma}}$ is the module of K\"ahler differentials over $S_{\Sigma}$. Note that
$\Omega_{X_{\Sigma}}$ is independent of the above choices up to sign (i.e. up to the choice of an
orientation of $M$ or $N$). The group $G$ acts on $\bigwedge \Omega^{1}_{\Sigma_{\Sigma}}$ by
pullback. The action of $t^{m}\in G\subseteq (\mathbb{C}^{*})^{\Sigma(1)}$ is given by
\begin{equation*}
  t^{m}\cdot \Omega_{\Sigma} = t^{\langle m, \sum_{\rho}u_{\rho}\rangle}\Omega_{\Sigma}.
\end{equation*}
Hence $\Omega_{\Sigma}$ has degree
\begin{equation*}
  \beta = \left[\sum_{\rho}D_{\rho}\right]\in \Cl(X_{\Sigma}).
\end{equation*}
If $F,H\in S_{\Sigma}$ are polynomials, such that $\deg F-\deg H=-\beta$, then the meromorphic
$n$-form
\begin{equation*}
  \frac{F(x)}{H(x)}\Omega_{\Sigma}
\end{equation*}
is $G$-invariant and descends to a global meromorphic section of $\omega_{\Sigma}$. Conversely, let
$f,h\in \mathcal{O}(T_{N})$ be Laurent polynomials and consider the section
\begin{equation*}
  \alpha = \frac{f(t)}{h(t)}\frac{dt_{1}}{t_{1}}\wedge \ldots \wedge \frac{dt_{n}}{t_{n}} \in \Gamma(T_{N},\omega_{T_{N}}),
\end{equation*}
where $t_{1},\ldots,t_{n}$ are generators of $\mathcal{O}(T_{N})$. Pulling back along the quotient
map $\pi: \mathbb{C}^{\Sigma(1)} \backslash Z_{\Sigma}\rightarrow X_{\Sigma}$ gives a meromorphic
form $\pi^{*}\alpha$ on $\mathbb{C}^{\Sigma(1)}$, which we can describe as follows.
\begin{prop}\label{prop:differential-form}
  Let $F=\pi^{*}f$ and $H=\pi^{*}h\prod_{\rho}x_{\rho}$. Then the pullback $\pi^{*}\alpha$ is given
  by
  \begin{equation*}
    \pi^{*}\alpha = \frac{F(x)}{H(x)}\Omega_{\Sigma}.
  \end{equation*}
\end{prop}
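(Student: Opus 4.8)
The plan is to reduce to the case of the torus-invariant form and then use multiplicativity of pullback. Since $\pi$ is dominant, $\pi^{*}$ on rational $n$-forms is induced by the inclusion of function fields $\mathbb{C}(X_{\Sigma})\hookrightarrow\mathbb{C}(\mathbb{C}^{\Sigma(1)})$, and the right-hand side $\tfrac{F}{H}\Omega_{\Sigma}$ is already a well-defined rational $n$-form on all of $\mathbb{C}^{\Sigma(1)}$. As $\mathbb{C}^{\Sigma(1)}\setminus Z_{\Sigma}$ and the big torus $(\mathbb{C}^{*})^{\Sigma(1)}$ are dense, it therefore suffices to verify the claimed identity on $(\mathbb{C}^{*})^{\Sigma(1)}$, where both sides are honest holomorphic forms. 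On the dense tori $\pi$ is the homomorphism induced by $\overline\pi\colon e_{\rho}\mapsto u_{\rho}$, so on functions $\pi^{*}t^{m}=\prod_{\rho}x_{\rho}^{\langle m,u_{\rho}\rangle}$, as recorded above.

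First I would compute $\pi^{*}\beta$ for the invariant form $\beta=\tfrac{dt_{1}}{t_{1}}\wedge\cdots\wedge\tfrac{dt_{n}}{t_{n}}$, taking $t_{i}=t^{e_{i}}$ for the basis $e_{1},\ldots,e_{n}$ of $M$ used to define $\Omega_{\Sigma}$ (a different choice of generators only changes things by a sign, the ambiguity already present in $\Omega_{\Sigma}$). Writing $\tfrac{dt_{i}}{t_{i}}=d\log t_{i}$ and using $\pi^{*}d=d\pi^{*}$ gives
\[
  \pi^{*}\!\left(\frac{dt_{i}}{t_{i}}\right)=d\log\!\left(\prod_{\rho}x_{\rho}^{\langle e_{i},u_{\rho}\rangle}\right)=\sum_{\rho\in\Sigma(1)}\langle e_{i},u_{\rho}\rangle\,\frac{dx_{\rho}}{x_{\rho}}.
\]
Wedging these $n$ one-forms and expanding by multilinearity and antisymmetry, only the $n$-element subsets $I=\{\rho_{1},\ldots,\rho_{n}\}\subseteq\Sigma(1)$ survive, each contributing the coefficient $\det(\langle e_{i},u_{\rho_{j}}\rangle)_{ij}=u_{I}$, so
\[
  \pi^{*}\beta=\sum_{I}u_{I}\,\frac{dx_{\rho_{1}}}{x_{\rho_{1}}}\wedge\cdots\wedge\frac{dx_{\rho_{n}}}{x_{\rho_{n}}}.
\]
Since $\tfrac{dx_{\rho_{1}}}{x_{\rho_{1}}}\wedge\cdots\wedge\tfrac{dx_{\rho_{n}}}{x_{\rho_{n}}}=\bigl(\prod_{\rho}x_{\rho}\bigr)^{-1}\bigl(\prod_{\rho\notin I}x_{\rho}\bigr)\,dx_{\rho_{1}}\wedge\cdots\wedge dx_{\rho_{n}}$, comparing with the definition of $\Omega_{\Sigma}$ yields $\pi^{*}\beta=\bigl(\prod_{\rho}x_{\rho}\bigr)^{-1}\Omega_{\Sigma}$.

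For general $f,h$, since $\pi^{*}$ is a ring homomorphism on rational functions and $\pi^{*}(g\,\omega)=\pi^{*}(g)\,\pi^{*}(\omega)$, I conclude
\[
  \pi^{*}\alpha=\pi^{*}\!\left(\frac{f}{h}\right)\pi^{*}\beta=\frac{\pi^{*}f}{\pi^{*}h}\cdot\frac{\Omega_{\Sigma}}{\prod_{\rho}x_{\rho}}=\frac{F}{H}\,\Omega_{\Sigma}
\]
with $F=\pi^{*}f$ and $H=\pi^{*}h\prod_{\rho}x_{\rho}$, as claimed. The only delicate point, and the step I would be most careful with, is the sign bookkeeping in the wedge expansion: one must check that the coefficient produced for a subset $I$ is exactly $u_{I}$ for the chosen ordering of its rays and the chosen ordered basis of $M$, and that the identity $\prod_{\rho\in I}x_{\rho}^{-1}=\bigl(\prod_{\rho}x_{\rho}\bigr)^{-1}\prod_{\rho\notin I}x_{\rho}$ is used uniformly in $I$; once conventions are fixed this is purely formal.
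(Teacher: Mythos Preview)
Your proof is correct and follows essentially the same approach as the paper: both compute $\pi^{*}(dt_{i}/t_{i})=\sum_{\rho}\langle e_{i},u_{\rho}\rangle\,dx_{\rho}/x_{\rho}$ via logarithmic differentiation, wedge these to obtain $\pi^{*}\beta=(\prod_{\rho}x_{\rho})^{-1}\Omega_{\Sigma}$, and then multiply by $\pi^{*}(f/h)$. You supply a bit more justification (the reduction to the dense torus and the explicit expansion of the wedge), but the argument is the same.
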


\begin{proof}
  From $\pi^{*}(t^{m})=\prod_{\rho}x_{\rho}^{\langle m, u_{\rho} \rangle}$ it follows that
  \begin{equation*}
    \pi^{*}\left( \frac{dt_{1}}{t_{1}} \right) = \sum_{\rho}\langle e_{i}, u_{\rho} \rangle \frac{dx_{\rho}}{x_{\rho}}.
  \end{equation*}
  Taking the wedge product and multiplying by $\prod_{\rho}x_{\rho}$ gives
  \begin{equation*}
    \left( \prod_{\rho}x_{\rho} \right)\pi^{*}\left( \frac{dt_{1}}{t_{1}}\wedge \ldots \wedge \frac{dt_{n}}{t_{n}} \right) = \Omega_{\Sigma}
  \end{equation*}
  and the above formula follows.
\end{proof}

Over the maximal cone $\sigma=\pos(u_{i}\ \rvert\ i\in I_{0})$, we formally set $x_{\rho}=1$ for
$\rho\notin\sigma(1)$ and get the section
\begin{equation*}
  \alpha\big\rvert_{U_{\sigma}} = u_{\sigma}\frac{\tilde f(x)}{\tilde g(x)}\df x_{i_{1}}\wedge \ldots \wedge \df x_{i_{n}}\in \Gamma(U_{\sigma},\omega_{\Sigma}),
\end{equation*}
where $u_{\sigma}=u_{I_{0}}$ and $\tilde f(x)$ and $\tilde h(x)$ are the dehomogenizations of $F$
and $H$, i.e. the restriction of $F$ and $H$ to the set $\{x_{\rho}=1 \ \rvert\
\rho\notin\sigma(1)\}$.

\paragraph{Lattice polytopes.}
\label{sec:TV:Polytopes}

Now let $P \subseteq M_{\mathbb{R}}$ be a full-dimensional lattice polytope, i.e. it is the convex
hull of finitely many lattice points. Denote by $P(k)$ the set of $k$-dimensional faces. $P$ can be
described by the facet presentation
\begin{equation*}
  P = \{m\in M| \langle m,u_{F}\rangle \ge -a_{F}, F\in P(n-1)\},
\end{equation*}
where $a_{F}\in \mathbb{Z}$ and $u_{F}$ is the minimal lattice generator of the cone $\rho_{F}$
consisting of inward pointing normal vectors of $F$.

We can construct the \emph{normal fan} of $P$ as follows: Let $v\in P$ be a vertex and $C_{v}$ the
cone generated by $P\cap M -v$. Its dual cone $\sigma_{v}=C_{v}^{\vee}$ is again rational and
strongly convex. In terms of the facet presentation above, we have
\begin{equation*}
  \sigma_{v} = \pos(u_{F}\ \rvert\  F\in P(n-1), v\in F).
\end{equation*}
More generally, we can define for any face $Q\in P(k)$ the cone
\begin{equation*}
  \sigma_{Q} = \pos(u_{F}\ \rvert\  F\in P(n-1), Q\subseteq F).
\end{equation*}

\begin{prop}\label{prop:normal-fan-polytope}
  The cones $\sigma_{Q}$ constitute a complete fan $\Sigma_{P}$ in $N_{\mathbb{R}}$ and define a
  complete toric variety $X_{P}:=X_{\Sigma_{P}}$. A vector $u\in N_{\mathbb{R}}$ defines the face
  $F_{u}P=Q$ if and only if $u\in \relint(\sigma_{Q})$. This defines an inclusion reversing
  bijection between $\Sigma_{P}$ and the set of faces of $P$.
\end{prop}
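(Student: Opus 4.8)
The plan is to organize the proof around the minimizing-face map: for $u \in N_{\mathbb{R}}$ let $F_{u}P := \{m \in P \mid \langle m, u\rangle = \min_{m' \in P}\langle m', u\rangle\}$ be the face of $P$ on which $\langle -, u\rangle$ is minimal, so that (consistently with the $u_{F}$ being inward normals) $F_{u_{F}}P = F$. The central claim, from which everything else follows, is the normal-cone identity $\sigma_{Q} = N(Q) := \{u \in N_{\mathbb{R}} \mid Q \subseteq F_{u}P\}$, together with $\relint \sigma_{Q} = \{u \mid F_{u}P = Q\}$.

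First I would prove this identity. The inclusion $\sigma_{Q} \subseteq N(Q)$ is a one-line computation: if $u = \sum_{F \supseteq Q} c_{F} u_{F}$ with $c_{F} \ge 0$, then for $m \in Q$ and $m' \in P$ one has $\langle m' - m, u\rangle = \sum_{F} c_{F}(\langle m', u_{F}\rangle + a_{F}) \ge 0$, so $Q \subseteq F_{u}P$. For the reverse inclusion, fix $m_{0} \in \relint Q$ and use the standard facts that (i) $Q$ is the intersection of the facets containing it, so a facet inequality $\langle m, u_{F}\rangle \ge -a_{F}$ is tight at $m_{0}$ exactly for $F \supseteq Q$, and hence (ii) the cone of feasible directions $\pos(P - m_{0})$ equals $\{d \in M_{\mathbb{R}} \mid \langle d, u_{F}\rangle \ge 0 \text{ for all } F \supseteq Q\} = \sigma_{Q}^{\vee}$; then $Q \subseteq F_{u}P$ forces $\langle m' - m_{0}, u\rangle \ge 0$ for all $m' \in P$, i.e.\ $u$ pairs nonnegatively with all of $\sigma_{Q}^{\vee}$, so $u \in (\sigma_{Q}^{\vee})^{\vee} = \sigma_{Q}$ by biduality of rational polyhedral cones. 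The relint statement then follows formally, since $u \in \relint \sigma_{Q}$ iff $u \in N(Q)$ but $u \notin N(Q')$ for every proper face $Q' \subsetneq Q$, which is precisely $F_{u}P = Q$. The full-dimensionality hypothesis enters exactly here, to guarantee that $\sigma_{Q}$ is strongly convex: $\sigma_{Q}^{\vee} = \pos(P - m_{0})$ is $n$-dimensional because $\dim P = n$, hence $\sigma_{Q} \cap (-\sigma_{Q}) = \{0\}$; rationality of $\sigma_{Q}$ is clear since the $u_{F}$ are lattice vectors.

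Granting the identity, the rest is largely formal. Completeness of $\Sigma_{P}$ is immediate, as every $u$ lies in $\sigma_{F_{u}P} = N(F_{u}P)$, so $|\Sigma_{P}| = N_{\mathbb{R}}$ and $X_{P}$ is complete by Proposition~\ref{prop:properties-fan-variety}. The relint description gives the disjoint decompositions $N_{\mathbb{R}} = \bigsqcup_{Q} \relint \sigma_{Q}$ and $\sigma_{Q} = \bigsqcup_{Q' \supseteq Q} \relint \sigma_{Q'}$, which at once yield that $Q \mapsto \sigma_{Q}$ is an inclusion-reversing bijection onto $\Sigma_{P}$ and that any face of $\sigma_{Q}$ must be some $\sigma_{Q'}$ with $Q \subseteq Q'$. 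It remains to check two fan axioms: that a supporting hyperplane of $\sigma_{Q}$ genuinely cuts out such a $\sigma_{Q'}$, and that $\sigma_{Q_{1}} \cap \sigma_{Q_{2}} = \sigma_{Q_{1} \vee Q_{2}}$, where $Q_{1} \vee Q_{2}$ is the smallest face of $P$ containing both. The latter is easy: ``$\supseteq$'' is clear from $\sigma_{Q'} \subseteq \sigma_{Q}$ whenever $Q \subseteq Q'$, and ``$\subseteq$'' holds because any $u$ in the intersection has $F_{u}P \supseteq Q_{1}, Q_{2}$, hence $F_{u}P \supseteq Q_{1} \vee Q_{2}$, so $u \in N(Q_{1} \vee Q_{2})$; together with the face description this exhibits the intersection as a common face of both.

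The step I expect to be the real obstacle is the face structure of a single cone $\sigma_{Q}$ --- that intersecting $\sigma_{Q}$ with a supporting hyperplane (i.e.\ forcing a subset of the generators $u_{F}$ to pair to zero) produces exactly the normal cone $\sigma_{Q'}$ of a face $Q' \supseteq Q$. The clean route is to reduce to a vertex $v \le Q$: then $\sigma_{v} = C_{v}^{\vee}$ with $C_{v} = \pos(P - v)$, and the face lattice of $\sigma_{v}$ is the opposite of that of $C_{v}$, which is order-isomorphic to the interval $[v, P]$ in the face poset of $P$ via $F \mapsto \pos(F - v)$; tracing the duality shows the face of $\sigma_{v}$ dual to $\pos(F - v)$ is $\sigma_{F}$, and $\sigma_{Q}$ sits inside $\sigma_{v}$ as such a face, with its own faces being the $\sigma_{Q'}$, $Q \subseteq Q'$. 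Everything else reduces either to the facet-inequality computation above or to a formal consequence of the normal-cone identity.
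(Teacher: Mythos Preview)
Your proof is correct and follows the standard route (normal-cone identity $\sigma_{Q}=N(Q)$ via the feasible-direction cone and biduality, then reading off completeness, the bijection, and the fan axioms). The paper, however, does not actually prove this statement at all: its ``proof'' is a one-line citation to \cite[Prop.~2.3.7 and Prop.~3.1.6]{CLS}. So there is nothing to compare --- you have supplied a genuine argument where the paper only points to the literature, and your argument is essentially the one found in the cited reference.
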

\begin{proof}
  See \cite[Prop. 2.3.7 and Prop. 3.1.6]{CLS}.
\end{proof}
If $P$ is not full-dimensional, then the same construction will give a generalized fan.

\begin{remark}
  The Orbit-Cone correspondence takes the following form: The cones $\sigma\in \Sigma_{P}$
  correspond to faces $Q\subseteq P$, hence every $Q$ gives a torus orbit $O(Q)\subseteq \Sigma_{P}$
  and its closure $V(Q)$. The latter is a closed toric subvariety and hence again given by a
  complete fan, which we can describe as follows: By translating $Q$ by one of its vertices we can
  assume that $0\in Q$. Let then $M_{Q}$ be the linear span of $Q$ such that $Q\subseteq M_{Q}$
  becomes a full-dimensional lattice polytope which has a normal fan $\Sigma_{Q}$. Then we have
  $V(Q)\cong X_{Q}$. See \cite[Prop 3.2.9]{CLS}.

  Identifying rays $\rho\in\Sigma_{P}(1)$ with facets $F$ of $P$, we get the divisor
  $D_{P}=\sum_{F}a_{F}D_{F}$ canonically attached to $X_{P}$. One can show that $D_{P}$ is ample and
  there is a bijective correspondence
  \begin{equation*}
    P \longleftrightarrow (X_{\Sigma},D)
  \end{equation*}
  between full dimensional lattice polytopes $P \subseteq M_{\mathbb{R}}$ and complete toric
  varieties $X_{\Sigma}$ with fan $\Sigma\subseteq N_{\mathbb{R}}$ together with a distinguished
  torus-invariant ample divisor $D$. See \cite[Thm. 6.2.1]{CLS}.
\end{remark}

The following proposition will be needed later.
\begin{prop}\label{prop:maximal-cones}
  Suppose $P\subset M$ is a (not necessarily full-dimensional) lattice polytope. An $n$-dimensional
  cone $\sigma=\pos(u_{1},\ldots,u_{n})$ is contained in a maximal cone $\tilde
  \sigma\in\Sigma_{P}(n)$ if and only if there is a unique vertex $m_{\sigma}\in P(f)$, such that
  \begin{equation*}
    \langle m_{\sigma}, u_{i} \rangle = \min_{m\in P}\langle m, u_{i} \rangle
  \end{equation*}
  for all $i=1,\ldots,n$.
\end{prop}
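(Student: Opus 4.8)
The plan is to reduce the statement to the face-selection dictionary of normal fans. Recall from Proposition~\ref{prop:normal-fan-polytope} that a vector $u\in N_{\mathbb{R}}$ selects the face $F_uP$ of $P$ on which $m\mapsto\langle m,u\rangle$ is minimal, and that $u\in\relint(\sigma_Q)$ precisely when $F_uP=Q$. The first step is to upgrade this to the statement that, for any face $Q$ and any $u$, one has $u\in\sigma_Q$ if and only if $Q\subseteq F_uP$: write $\sigma_Q$ as the union of the relative interiors of its faces, use the inclusion-reversing bijection to identify those faces with the cones $\sigma_{Q'}$ for faces $Q'\supseteq Q$ of $P$, and conclude. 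Alongside this I would record two elementary facts: since $\sigma=\pos(u_1,\dots,u_n)$ is $n$-dimensional its generators are linearly independent, so $\sigma$ is a full-dimensional simplicial cone; and the maximal cones of $\Sigma_P$ are exactly the cones $\sigma_v$ attached to vertices $v$ of $P$, each of which is $n$-dimensional (it contains the $(n-\dim P)$-dimensional annihilator of the affine hull of $P$), so that $\Sigma_P(n)=\{\sigma_v : v\in P(0)\}$.

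For the ``if'' direction, suppose some vertex $v=:m_\sigma$ satisfies $\langle v,u_i\rangle=\min_{m\in P}\langle m,u_i\rangle$ for all $i$. Then $v\in F_{u_i}P$ for each $i$, so by the first step $u_i\in\sigma_v$, hence $\sigma=\pos(u_1,\dots,u_n)\subseteq\sigma_v$ with $\sigma_v\in\Sigma_P(n)$. (Only the existence of such a vertex is used here; its uniqueness is free from the other direction.)

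For the ``only if'' direction, suppose $\sigma\subseteq\tilde\sigma=\sigma_v$ for a vertex $v$. Each generator $u_i$ lies in $\sigma_v$, so $v\in F_{u_i}P$ and $v$ realizes the minimum of $\langle\cdot,u_i\rangle$ over $P$; this supplies the vertex $m_\sigma$, which we may take to be $v$. For uniqueness set $u=u_1+\dots+u_n$. On the one hand $F_uP=\bigcap_i F_{u_i}P$, because $\langle m,u\rangle=\sum_i\langle m,u_i\rangle\ge\sum_i\min_{m\in P}\langle m,u_i\rangle$ with equality exactly when $m$ lies in every $F_{u_i}P$. On the other hand $u$ lies in the interior of $\sigma$, being a positive combination of all generators of the full-dimensional simplicial cone $\sigma$; and since $\sigma\subseteq\sigma_v$ are both $n$-dimensional we have $\Int\sigma\subseteq\Int\sigma_v=\relint(\sigma_v)$, whence $F_uP=\{v\}$ by the defining property of the normal fan. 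Combining the two descriptions, $\bigcap_i F_{u_i}P=\{v\}$, so any vertex $w$ with $\langle w,u_i\rangle=\min_{m\in P}\langle m,u_i\rangle$ for all $i$ lies in $\bigcap_i F_{u_i}P$ and hence equals $v$.

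The hard part is precisely the interior argument in the uniqueness step: one must know that $u_1+\dots+u_n$ lands in the \emph{relative interior} of the maximal cone $\sigma_v$, not merely in $\sigma_v$, for otherwise $F_uP$ would be a positive-dimensional face and the conclusion would collapse. This is exactly where the hypothesis $\dim\sigma=n$ enters: a full-dimensional subcone of the $n$-dimensional cone $\sigma_v$ cannot lie inside any proper face of $\sigma_v$, so its interior is contained in $\relint(\sigma_v)$. Everything else is a routine unwinding of the normal-fan dictionary.
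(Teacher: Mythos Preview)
Your argument is correct and follows essentially the same normal-fan dictionary as the paper's proof: both hinge on the fact that interior points of the full-dimensional cone $\sigma$ land in $\relint(\sigma_v)$ and hence select the vertex $v$. Your packaging is slightly cleaner---you isolate the closure characterization $u\in\sigma_Q\Leftrightarrow Q\subseteq F_uP$ and the intersection identity $F_{u_1+\cdots+u_n}P=\bigcap_i F_{u_i}P$, whereas the paper obtains the same conclusions by a ``varying $\lambda_i$'' limiting argument and a direct duality computation---but the underlying mechanism is the same.
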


\begin{proof}
  Let $\tilde \sigma\in\Sigma_{P}(n)$, correspond to the vertex $m_{0}\in P(0)$. The
  cone $\sigma=\pos(u_{1},\ldots, u_{n})$ is contained in $\tilde \sigma$ if and only if every
  weight vector $w=\sum_{i=1}^{n}\lambda_{i}u_{i}\in \Int(\sigma)$ defines the face
  $F_{w}P=\{m_{0}\}$. This means that
  \begin{equation*}
    \langle m_{0}, w \rangle < \langle m, w \rangle, \text{ for all } m\in P \backslash\{m_{0}\}.
  \end{equation*}
  Varying $\lambda_{i}$, it is easy to see that this is only possible if
  \begin{equation*}
    \langle m_{0}, u_{i} \rangle = \min_{m\in A}\langle m, u_{i} \rangle,
  \end{equation*}
  for all $i=1,\ldots,n$. Conversely, suppose $m_{0}$ minimizes $\langle m, u_{i} \rangle$ for
  all $i$ and thus for all $w\in\sigma$. Suppose there is another $\tilde m\in P$, such that
  $\langle \tilde m, w \rangle$ is minimal. Then $\langle\tilde m-m_{0}, w \rangle=0$ and $\langle
  \tilde m-m_{0}, v\rangle\ge 0$ for all other $v\in\sigma$. This means that $w$ lies in a face of $\sigma$
  and thus $m\notin\Int(\sigma)$.
\end{proof}

For two lattice polytopes $P_{1},P_{2}\subset M_{\mathbb{R}}$, let
\begin{equation*}
  Q = P_{1} + P_{2} = \{m_{1}+m_{2} \ \rvert\ m_{1}\in P_{1}, m_{2}\in P_{2}\}
\end{equation*}
be their Minkowski sum. This is clearly a lattice polytope again.

\begin{prop}[{\cite[Prop. 7.12]{Ziegler_1995}}]\label{prop:common-refinement}
  The normal fan $\Sigma_{Q}$ of $Q$ is the coarsest common refinement of the normal fans
  $\Sigma_{P_{1}},\Sigma_{P_{2}}$.
\end{prop}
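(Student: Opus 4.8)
The plan is to identify the cones of $\Sigma_Q$ directly, using the face-selection description of the normal fan from Proposition~\ref{prop:normal-fan-polytope} together with the elementary fact that minimizing a linear functional over a Minkowski sum splits as a sum of minima. Throughout, write $F_uP$ for the face of a polytope $P\subseteq M_{\mathbb R}$ on which $m\mapsto\langle m,u\rangle$ attains its minimum, so that by Proposition~\ref{prop:normal-fan-polytope} we have $u\in\relint\sigma_R$ iff $F_uP=R$, and hence $u\in\sigma_R$ iff $R\subseteq F_uP$. I will take for granted the standard fact that the coarsest common refinement of two fans with the same support is the ``intersection fan'' $\{\sigma_1\cap\sigma_2\mid\sigma_i\in\Sigma_{P_i}\}$; since (with the convention for non-full-dimensional polytopes mentioned above) the normal fans of $P_1$, $P_2$ and $Q$ all have support $N_{\mathbb R}$, it suffices to show that the set of cones of $\Sigma_Q$ coincides with this intersection fan.

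The only substantive point is the following: for every $u\in N_{\mathbb R}$ one has $F_uQ=F_uP_1+F_uP_2$, because $\langle m_1+m_2,u\rangle=\langle m_1,u\rangle+\langle m_2,u\rangle$ is minimal over $P_1\times P_2$ exactly when each summand is minimal over the corresponding $P_i$; in particular $\min_{m\in Q}\langle m,u\rangle=\min_{P_1}\langle\cdot,u\rangle+\min_{P_2}\langle\cdot,u\rangle$.

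Granting this, fix a face $R$ of $Q$, pick $u_0\in\relint\sigma_R$ and set $R_i:=F_{u_0}P_i$, so that $R=R_1+R_2$. I claim $\sigma_R=\sigma_{R_1}\cap\sigma_{R_2}$ as closed cones. The inclusion $\supseteq$ is immediate: if $R_i\subseteq F_vP_i$ for $i=1,2$, then $R=R_1+R_2\subseteq F_vP_1+F_vP_2=F_vQ$. For $\subseteq$, suppose $R\subseteq F_vQ=F_vP_1+F_vP_2$; then for any $r_1\in R_1$ and any $r_2\in R_2$ we have $r_1+r_2\in R\subseteq F_vQ$, so $\langle r_1+r_2,v\rangle=\min_Q\langle\cdot,v\rangle=\min_{P_1}\langle\cdot,v\rangle+\min_{P_2}\langle\cdot,v\rangle$, and since $\langle r_i,v\rangle\ge\min_{P_i}\langle\cdot,v\rangle$ separately, both inequalities must be equalities; hence $r_1\in F_vP_1$, so $R_1\subseteq F_vP_1$, and likewise $R_2\subseteq F_vP_2$. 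This shows every cone of $\Sigma_Q$ lies in the intersection fan. Conversely, given cones $\sigma_i\in\Sigma_{P_i}$, replace each by the unique minimal cone of $\Sigma_{P_i}$ containing $\sigma_1\cap\sigma_2$; a supporting-hyperplane argument then shows that a generic $v\in\relint(\sigma_1\cap\sigma_2)$ has $\sigma_i$ as its minimal $\Sigma_{P_i}$-cone, i.e.\ $\sigma_{F_vP_i}=\sigma_i$, and applying the claim above to $R=F_vQ$ gives $\sigma_1\cap\sigma_2=\sigma_R\in\Sigma_Q$. Hence the two sets of cones agree, and $\Sigma_Q$ is the intersection fan.

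The main obstacle is not conceptual --- the Minkowski-sum identity $F_uQ=F_uP_1+F_uP_2$ is a one-liner --- but combinatorial bookkeeping: one must verify that $\sigma_R=\sigma_{R_1}\cap\sigma_{R_2}$ holds for the \emph{closed} cones rather than merely their relative interiors, and that every pairwise intersection really arises as some $\sigma_R$, which is where the minimal-cone/supporting-hyperplane step is needed. One should also keep in mind that $P_1,P_2$ need not be full-dimensional, so that $\Sigma_{P_1},\Sigma_{P_2}$ are a priori only generalized fans; but the argument above is insensitive to this. Alternatively, one could phrase the whole thing as: $v$ and $v'$ lie in the relative interior of the same cone of $\Sigma_Q$ iff $F_vQ=F_{v'}Q$, which by the identity happens iff $F_vP_i=F_{v'}P_i$ for $i=1,2$, i.e.\ iff the minimal cones of $v$ and $v'$ agree in both $\Sigma_{P_1}$ and $\Sigma_{P_2}$ --- and that is precisely the defining property of the common refinement.
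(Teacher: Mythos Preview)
The paper does not give its own proof of this proposition; it simply cites \cite[Prop.~7.12]{Ziegler_1995}. Your argument is correct and is essentially the standard proof found there: the key input is the identity $F_uQ=F_uP_1+F_uP_2$, from which the normal-cone identity $\sigma_R=\sigma_{R_1}\cap\sigma_{R_2}$ follows by the minimum-splitting argument you give. The converse direction (that every intersection $\sigma_1\cap\sigma_2$ arises as some $\sigma_R$) is handled correctly once you pass to the minimal cones containing $\sigma_1\cap\sigma_2$; the supporting-hyperplane step you allude to is exactly the observation that if $\sigma_i$ is the minimal cone of $\Sigma_{P_i}$ containing $C=\sigma_1\cap\sigma_2$, then $\relint C\subseteq\relint\sigma_i$, which you could spell out in one more line. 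Your closing alternative formulation --- two vectors lie in the same open cone of $\Sigma_Q$ iff they select the same face of each $P_i$ --- is in fact the cleanest way to package the whole argument and would let you drop most of the bookkeeping.
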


\paragraph{Real-positive locus.}
In the sequel, we want to integrate holomorphic forms over a compact $n$-cycle, which is naturally
associated to every complete toric variety. Let $\sigma\in\Sigma$ be a cone in the fan defining
$X_{\Sigma}$. The complex points of the corresponding affine toric variety are given by
\begin{equation*}
  U_{\sigma}(\mathbb{C})=\Hom(\sigma^{\vee}\cap M,\mathbb{C}).
\end{equation*}
Restricting the image to $\mathbb{R}^{+}$ gives the locus $U_{\sigma}(\mathbb{R}^{+})$. These glue
together to give the real positive locus $X_{\Sigma}(\mathbb{R}^{+})$. A toric morphism
$X_{\Sigma}\rightarrow X_{\tilde{\Sigma}}$ induces a map $X_{\Sigma}(\mathbb{R}^{+})\rightarrow
X_{\tilde{\Sigma}}(\mathbb{R}^{+})$.

% \begin{prop}
%   \begin{enumerate}
%   \item If $X_{\Sigma}$ has no torus factors then the geometric quotient map
%     \begin{equation*}
%       \pi:\mathbb{C}^{\Sigma(1)}\backslash Z_{\Sigma}\rightarrow X_{\Sigma}
%     \end{equation*}
%     induces a surjection on the positive locus.
%   \item Let $S_{N}\subseteq T_{N}$ be the compact real torus. The retraction
%     $X_{\Sigma}(\mathbb{C})\rightarrow X_{\Sigma}(\mathbb{R}^{+})$ gives a homeomorphism
%     $X_{\Sigma}(\mathbb{C})/S_{N}\cong X_{\Sigma}(\mathbb{R}^{+})$.
%   \end{enumerate}
% \end{prop}
% \begin{proof}
%   \cite[Prop. 12.2.1 and Prop 12.2.3]{CLS}.
% \end{proof}

\begin{exam}
  Suppose $X_{\Sigma}$ is a projective toric variety associated to the polytope $P$, such that the
  divisor $D_{P}$ is very ample. Its sections
  \begin{equation*}
    t^{m_{i}}\in\Gamma(X_{\Sigma},\mathcal{O}_{X_{P}}(D_{P})),\quad m_{i}\in P\cap M
  \end{equation*}
  furnish a projective embedding
  \begin{equation*}
    X_{\Sigma}\rightarrow \mathbb{P}^{s},\quad x\mapsto [t^{m_{0}}(x):\ldots:t^{m_{s}}(x)].
  \end{equation*}
  The (algebraic) moment is defined as
  \begin{align*}
    f:X_{\Sigma}&\rightarrow M_{\mathbb{R}}\\
    f(x) &= \frac{\sum_{m\in P\cap M}|t^{m}(x)|m}{\sum_{m\in P\cap M}|t^{m}(x)|}.
  \end{align*}
  By \cite[Thm. 12.2.5]{CLS}, this induces a homeomorphism
  \begin{equation*}
    f:X_{\Sigma}(\mathbb{R}^{+})\tilde{\longrightarrow} P,
  \end{equation*}
  which identifies a facet $Q\subseteq P$ with $V(Q)\cap X_{\Sigma}(\mathbb{R}^{+})$.
\end{exam}

\paragraph{Star subdivision of fans.}
\label{sec:TV:blowup}
There is a standard construction to refine a given fan $\Sigma$. Let $\nu\in \Sigma\cap N$ be a
primitive element, i.e. such that $\nu$ is the lattice generator of $\pos(\nu)$. For
$\sigma\in\Sigma$ with $\nu\in\sigma$ let
\begin{equation*}
  \Sigma_{\sigma}(\nu) = \{\pos(\tau,\nu)\ \rvert\ \{\nu\}\cup\tau\subseteq \sigma, \nu\notin \tau\}.
\end{equation*}
The \emph{star subdivision} of $\Sigma$ with respect to $\nu$ is the fan
\begin{equation*}
  \Sigma^{*}(\nu) = \{\sigma\in\Sigma \ \rvert\ \nu \notin\sigma\}\cup\bigcup_{\nu\in\sigma}\Sigma_{\sigma}(\tau).
\end{equation*}
The identity map $N\rightarrow N$ is compatible with the fans $(\Sigma^{*}(\nu),\Sigma)$ and induces
a toric morphism
\begin{equation*}
  \pi:X_{\Sigma^{*}(\nu)}\rightarrow X_{\Sigma},
\end{equation*}
which is proper and birational.

We are interested in two special cases of this construction. First let $X_{\Sigma}$ be a smooth
toric variety associated to the fan $\Sigma$. From Prop. \ref{prop:properties-fan-variety} we know
that every cone $\sigma\in \Sigma$ is smooth, i.e. can be generated by part of a $\mathbb{Z}$-basis
of $N$. For a cone $\tau\in\Sigma$, the closure $V(\tau)=\overline O(\tau)$ is a smooth toric
subvariety. Let $\Sigma_{\tau}$ be the star subdivision of $\Sigma$ with respect to the vector
\begin{equation*}
  \nu_{\tau} = \sum_{\rho\in\tau(1)}u_{\rho}
\end{equation*}

\begin{prop}[{\cite[Prop. 1.26]{Oda88}}]
  The map $\pi:X_{\Sigma_{\tau}}\rightarrow X_{\Sigma}$ is the blow-up of $X_{\Sigma}$ with center
  $V(\tau)$.
\end{prop}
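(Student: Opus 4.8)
The plan is to verify directly that the star subdivision $\Sigma_\tau$ of $\Sigma$ at $\nu_\tau=\sum_{\rho\in\tau(1)}u_\rho$ matches the fan-theoretic description of a blow-up with smooth center $V(\tau)$. The key point is that everything is local over the affine charts of $X_\Sigma$, so it suffices to treat a single smooth cone $\sigma\in\Sigma$ containing $\tau$ as a face, say $\sigma=\pos(u_1,\dots,u_k,u_{k+1},\dots,u_d)$ with $\tau=\pos(u_1,\dots,u_k)$ and $(u_1,\dots,u_d)$ part of a $\mathbb Z$-basis of $N$; by smoothness we may in fact take $d=\rk N$ after splitting off the torus factor. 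Under the orbit–cone correspondence, $V(\tau)\cap U_\sigma$ is the coordinate subvariety cut out by $x_1=\dots=x_k=0$ in $U_\sigma\cong\mathbb C^d$ (using the coordinates dual to $u_1,\dots,u_d$). The blow-up of $\mathbb C^d$ along this coordinate subspace has a standard toric description, and I would identify it cone-by-cone with $\Sigma_\tau|_\sigma$.

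First I would recall the concrete formula: $\Sigma_\tau$ replaces each cone $\sigma'\ni\nu_\tau$ (equivalently $\sigma'\supseteq\tau$) by the cones $\pos(\eta,\nu_\tau)$ where $\eta$ ranges over faces of $\sigma'$ not containing all of $\tau$, i.e.\ $\eta=\pos(\{u_j: j\in J\}\cup\{u_i: i\in I\})$ with $I\subsetneq\{1,\dots,k\}$ and $J\subseteq\{k+1,\dots,d\}$ arbitrary. Since $\nu_\tau=u_1+\dots+u_k$ and the $u_i$ are part of a lattice basis, a direct check shows each such $\pos(\eta,\nu_\tau)$ is again generated by part of a $\mathbb Z$-basis of $N$ — one replaces, say, $u_{i_0}$ for $i_0\notin I$ by $\nu_\tau$ — so $\Sigma_\tau$ is smooth and $\pi$ is a morphism of smooth toric varieties, proper and birational by the remark following the star-subdivision construction. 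Next I would write down the standard toric fan for $\mathrm{Bl}_{V(\tau)}\mathbb C^d$: its rays are $e_1,\dots,e_d$ together with $e_0:=e_1+\dots+e_k$ (the exceptional ray), and its maximal cones are, for each $i_0\in\{1,\dots,k\}$, the cone spanned by $\{e_0\}\cup\{e_i:1\le i\le k,\ i\neq i_0\}\cup\{e_{k+1},\dots,e_d\}$. One sees immediately that this is exactly the list of maximal cones of $\Sigma_\tau$ restricted to $\sigma$, under the lattice identification $e_i\mapsto u_i$, $e_0\mapsto\nu_\tau$; the lower-dimensional cones then match because both fans are closed under taking faces and are determined by their maximal cones.

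I would then argue that these local identifications are compatible: if $\sigma_1,\sigma_2\in\Sigma$ both contain $\tau$ and meet along $\sigma_1\cap\sigma_2$, the star subdivisions agree on the overlap (the vector $\nu_\tau$ and the subdivision rule depend only on $\tau$, not on the ambient cone), and on cones $\sigma'\not\supseteq\tau$ nothing is subdivided, which matches the fact that $\mathrm{Bl}_{V(\tau)}$ is an isomorphism away from $V(\tau)$. Hence the locally defined isomorphisms $X_{\Sigma_\tau}|_{U_\sigma}\cong\mathrm{Bl}_{V(\tau)\cap U_\sigma}U_\sigma$ glue to a global isomorphism $X_{\Sigma_\tau}\cong\mathrm{Bl}_{V(\tau)}X_\Sigma$ commuting with the projections to $X_\Sigma$. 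To pin down that $\pi$ really is \emph{the} blow-up and not merely a proper birational modification with the same source, I would invoke the universal property: $\pi^{-1}(V(\tau))$ is the exceptional divisor $D_{\nu_\tau}$, which one checks from the orbit–cone correspondence is an effective Cartier divisor whose ideal sheaf is $\pi^{-1}\mathcal I_{V(\tau)}\cdot\mathcal O$, this being a purely local toric computation in the charts above; any dominant morphism making the pullback ideal invertible factors uniquely through $\pi$ by the combinatorial characterization of toric morphisms refining $\Sigma$.

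The main obstacle I anticipate is bookkeeping rather than conceptual: getting the combinatorics of "faces of $\sigma'$ not containing $\tau$" to line up precisely, on the nose, with the maximal cones of the standard blow-up fan, and making sure the $\mathbb Z$-basis assertion is checked (so that $\Sigma_\tau$ is smooth, which is needed for the exceptional locus to be a Cartier divisor and for the universal property argument to go through cleanly). One could also sidestep part of this by simply citing that the star subdivision at the barycenter-type vector $\nu_\tau$ of a smooth cone is the textbook model of a blow-up along a smooth invariant center, but I would still want to exhibit the one-line chart computation to make the identification of the center with $V(\tau)$ transparent.
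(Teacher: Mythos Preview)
The paper does not actually supply a proof of this proposition; it is stated with a bare citation to \cite[Prop.~1.26]{Oda88} and no argument is given. So there is no ``paper's own proof'' to compare against beyond the reference itself.

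Your proposal is correct and is essentially the standard textbook argument (the one found in Oda, and in Cox--Little--Schenck, Proposition~3.3.15). The reduction to an affine chart $U_\sigma\cong\mathbb C^d$ with $V(\tau)\cap U_\sigma$ a coordinate subspace, the explicit enumeration of the maximal cones of the star subdivision as $\pos(\nu_\tau,u_1,\dots,\widehat{u_{i_0}},\dots,u_k,u_{k+1},\dots,u_d)$ for $i_0\in\{1,\dots,k\}$, and the matching with the well-known fan of $\mathrm{Bl}_{\{x_1=\dots=x_k=0\}}\mathbb C^d$ are all on the mark. The smoothness check (replacing $u_{i_0}$ by $\nu_\tau$ still gives a lattice basis) and the gluing over intersections $\sigma_1\cap\sigma_2$ are routine, as you say. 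The final appeal to the universal property via invertibility of $\pi^{-1}\mathcal I_{V(\tau)}\cdot\mathcal O_{X_{\Sigma_\tau}}$ is a clean way to conclude, though once you have the chart-by-chart isomorphism with $\mathrm{Bl}_{V(\tau)\cap U_\sigma}U_\sigma$ the gluing already gives the global identification directly, so that last paragraph is more of a sanity check than a necessary step.
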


Now let $\Sigma$ be a not necessarily simplicial fan and $\nu_{\rho}$ be the ray generator of a cone
$\rho$. A non-simplicial cone $\sigma\in\Sigma$ containing $\rho$ gets subdivided in
$\Sigma^{*}(\nu_{\rho})$. Iterating this construction gives the following:

\begin{thm}[{\cite[Prop. 11.1.7]{CLS}}]\label{thm:simplicial-refinement}
  Let $\Sigma$ be non-simplicial fan. Then there is a fan $\Sigma'$, obtained from $\Sigma$ by a
  series of star subdivisions in rays $\rho_{1},\ldots,\rho_{s}\in\Sigma(1)$, such that $\Sigma'$ is
  simplicial and $\Sigma'(1)=\Sigma(1)$.
\end{thm}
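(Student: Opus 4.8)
The plan is to build $\Sigma'$ by a greedy sequence of star subdivisions, each performed at the minimal generator $u_\rho$ of a ray $\rho$ that already belongs to the current fan (hence, inductively, to $\Sigma(1)$), and to force termination with a two‑coordinate numerical invariant. The observation that makes the strategy work concerns a single star subdivision $\Sigma^*(u_\rho)$: by the definition recalled above its cones are the cones $\sigma\in\Sigma$ with $u_\rho\notin\sigma$, together with the cones $\pos(\tau,u_\rho)$ where $\tau\preceq\sigma$ is a face of some $\sigma\in\Sigma$ with $u_\rho\in\sigma$ and $u_\rho\notin\tau$. I would first record that for such a face $\tau=\sigma\cap\operatorname{span}\tau$ — a face is cut out by a supporting hyperplane containing $\operatorname{span}\tau$ — so that $u_\rho\in\sigma\setminus\tau$ forces $u_\rho\notin\operatorname{span}\tau$. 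Consequently $\dim\pos(\tau,u_\rho)=\dim\tau+1$, the rays of $\pos(\tau,u_\rho)$ are exactly those of $\tau$ together with $\rho$, and $\pos(\tau,u_\rho)$ is simplicial if and only if $\tau$ is. In particular $\Sigma^*(u_\rho)(1)=\Sigma(1)$: star subdivisions at existing rays never introduce a new ray, which already yields $\Sigma'(1)=\Sigma(1)$ once the procedure is shown to stop.

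For termination I would let $N_k(\Sigma)$ be the number of $k$-dimensional non-simplicial cones of $\Sigma$, let $d(\Sigma)=\min\{k:N_k(\Sigma)>0\}$ (so $\Sigma$ is simplicial exactly when no such $k$ exists), and use $\Phi(\Sigma)=\bigl(n-d(\Sigma),\,N_{d(\Sigma)}(\Sigma)\bigr)\in\mathbb{Z}_{\ge0}^2$ with the lexicographic order. The key reduction step: if $\Sigma$ is not simplicial, pick a non-simplicial cone $\sigma$ of dimension $d:=d(\Sigma)$ — by minimality of $d$, all proper faces of $\sigma$ are simplicial — choose any $\rho\in\sigma(1)$, and pass to $\Sigma^*=\Sigma^*(u_\rho)$. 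Using the description above, the non-simplicial cones of $\Sigma^*$ are precisely (i) the non-simplicial cones of $\Sigma$ not containing $\rho$, and (ii) the cones $\pos(\tau,u_\rho)$ with $\tau$ a non-simplicial face of a cone of $\Sigma$ containing $\rho$ and $u_\rho\notin\tau$. Each cone in (ii) has dimension $\dim\tau+1\ge d+1$, because a non-simplicial $\tau$ satisfies $\dim\tau\ge d$; and $\sigma$ itself, being a $d$-dimensional non-simplicial cone through $\rho$, is removed — it cannot be of type (ii), for that would make $\tau$ non-simplicial of dimension $d-1$. Hence $\Sigma^*$ has no non-simplicial cone of dimension $<d$ and strictly fewer of dimension $d$, so $\Phi(\Sigma^*)<_{\mathrm{lex}}\Phi(\Sigma)$ (either $N_d$ drops with $d$ fixed, or $d$ strictly increases).

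Iterating this step gives a strictly decreasing sequence in the well-ordered set $(\mathbb{Z}_{\ge0}^2,<_{\mathrm{lex}})$, which is finite; the terminal fan $\Sigma'$ has no non-simplicial cones, i.e.\ is simplicial, is obtained from $\Sigma$ by star subdivisions in rays of $\Sigma(1)$, and has the same set of rays by the first paragraph. The step I expect to be the main obstacle is the combinatorial bookkeeping in the second paragraph — specifically, showing that the subdivision creates \emph{no} new non-simplicial cone in the critical dimension $d$ and that the chosen $\sigma$ really disappears; both hinge on the identity $\dim\pos(\tau,u_\rho)=\dim\tau+1$ and on the minimality of $d$, so the argument is short once the analysis of $\Sigma^*(u_\rho)$ in the first paragraph is secured. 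It remains only to invoke the standard fact (see \cite{CLS}) that $\Sigma^*(\nu)$ is again a fan, which is used implicitly throughout.
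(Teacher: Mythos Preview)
Your proof is correct and follows essentially the same approach as the paper, which does not actually prove this statement but only offers the one-sentence sketch preceding the theorem (that star subdividing at the generator of an existing ray breaks up any non-simplicial cone containing it, and one iterates) together with the citation to \cite[Prop.~11.1.7]{CLS}. Your lexicographic invariant $\Phi(\Sigma)=(n-d(\Sigma),N_{d(\Sigma)}(\Sigma))$ is a clean way to make the termination of that sketch rigorous.
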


\paragraph{Toric wonderful models.}
\label{sec:TV:iterblowup}
In this section, we want to describe certain compactifications of the torus $T^{n-1}$ given by
iteratively blowing up coordinate subspaces in the projective compactification
$T^{n-1}\hookrightarrow P^{n-1}$. These are special cases of the wonderful model compactifications
of \cite{De_Concini_1995}.

Let $E$ be a finite set with $n$ elements and $P^{E_{G}}$ be the projective space of dimension $n-1$,
where we label the homogeneous coordinates by elements of $E$. Let
\begin{equation*}
  N_{E}=\mathbb{Z}^{E}/\mathbb{Z}\left(\sum_{i\in E}e^{i}\right)\cong \mathbb{Z}^{n-1}
\end{equation*}
and
\begin{equation*}
  M_{E} = \left\{ m\in \mathbb{Z}^{E}| \sum_{i\in E}m_{i}=0 \right\} 
\end{equation*}
the dual lattice. The fan $\Sigma_{E}$ of $P^{E_{G}}$ is given by the cones
\begin{equation*}
  \tau_{I} := \pos([e^{i}]\ \rvert\ i\in I)
\end{equation*}
for all $I\subsetneq E$. Every such proper subset $I\subsetneq E$ then gives the linear subspace
\begin{equation*}
  L_{I} = \{[\alpha_{j}]\ \rvert\ \alpha_{i}=0 \text{ for } i\in I\}\cong P^{I^{c}},
\end{equation*}
which is the orbit closure associated to the cone $\tau_{I}$.

Consider a set of subsets $B\subseteq 2^{E}$ satisfying the following conditions.
\begin{enumerate}
\item $E\notin B$
\item $\{i\}\notin B$ for all $i\in E$.
\item $I_{1},I_{2}\in B, I_{1}\cap I_{2}\neq \emptyset \Rightarrow I_{1}\cup I_{2}\in B$.
\end{enumerate}
The iterated Blow-up
\begin{equation*}
  \pi_{B}:P^{B}\rightarrow P^{E_{G}}
\end{equation*}
is defined by inductively blowing up the elements of
\begin{equation*}
  \mathcal{L}_{B} = \{L_{I}\ \rvert\ I\in B\},
\end{equation*}
in order of increasing dimension. More precisely, let $B=\{I_{1}<\ldots<I_{m}\}$ be linearly ordered
such that $j\ge k$ implies that $I_{j}\subseteq I_{k}$. We then define the sequence of blow-ups by
$P_{0}=P^{E_{G}}$ and $P_{k}=Bl_{\tilde L_{I_{k}}}P_{k-1}$, where $\tilde L_{k}$ is the strict transform
of $L_{k}$ in $P_{k-1}$. The results of the last section show that this is again a smooth projective
toric variety. Let $\Sigma_{k}$ be the fan of $P_{k}$. Then we have that $\Sigma_{k} =
St_{\tau_{k}}\Sigma_{k-1}$ is the star subdivision with respect to the cone
\begin{equation*}
  \tau_{k} = \pos\{[e^{i}]\ \rvert\ i\in I_{k}\}.
\end{equation*}
Blowing up in order of increasing dimension ensures that this is well-defined, i.e. that $\tau_{k}$
is indeed a cone of $\Sigma_{k-1}$. The strict transform of $L_{I_{k}}$ in $P_{k-1}$ is just the
orbit closure $V(\tau_{k})$ of $\tau_{k}\in\Sigma_{k-1}$. Let $P^{B}=P_{m}$ be the last blowup and
$\Sigma_{B}=\Sigma_{m}$ its fan.

To fully describe the fan, we will use the combinatorial approach to wonderful models developed in
\cite{Feichtner_2004}. Note that any fan $(\Sigma,\preceq)$ with its face relation is a meet
semi-lattice, i.e. every collection $\sigma_{1},\ldots,\sigma_{k}\in \Sigma$ has the greatest
lower bound
\begin{equation*}
  \bigwedge_{i} \sigma_{i} = \bigcap_{i} \sigma_{i}\in\Sigma.
\end{equation*}
The minimal element of $\Sigma$ is the trivial cone $\{0\}$. For $\Sigma=\Sigma_{E}$, there is an
obvious poset isomorphism
\begin{equation*}
  (\Sigma_{E},\preceq) \cong (2^{E}\backslash\{E\},\subseteq),
\end{equation*}
identifying a subset $I\subsetneq E$ with the cone $\tau_{I}$.

Now let $(\mathcal{L},\preceq)$ be any finite meet-semilattice with least element $\hat 0$. For a
subset $\mathcal{G}\subseteq\mathcal{L}$, and $X\in\mathcal{L}$, let
\begin{equation*}
  \mathcal{G}^{\preceq X} := \{G\in\mathcal{G}|G\preceq X\}
\end{equation*}
and
\begin{equation*}
  [\hat 0,X] = \{Y\in\mathcal{L}|0\preceq Y\preceq X\}.
\end{equation*}

\begin{defin}
  Let $(\mathcal{L},\preceq)$ be a finite meet-semilattice. A subset
  $\mathcal{G}\subseteq\mathcal{L}\backslash\{\hat 0\}$ is called a \emph{building set}, if the
  following holds for all $X\in\mathcal{L}$: Let
  \begin{equation*}
    \max \mathcal{G}^{\preceq X}=\{G_{1},\ldots,G_{k}\}
  \end{equation*}
  be the maximal elements of $\mathcal{G}^{\preceq X}$. Then there is an order isomorphism
  \begin{equation*} [\hat 0, X] \cong \prod_{i=1}^{k}[\hat 0, G_{i}].
  \end{equation*}
\end{defin}

\begin{exam}\label{exam:building-set}
  Suppose $\mathcal{L}=2^{E}\backslash \{E\}$ with the subset relation. Then a subset
  $\mathcal{G}\subseteq \mathcal{L}\backslash\emptyset$ is a building set if for all $I\subsetneq E$, the
  maximal elements
  \begin{equation*}
    \max \mathcal{G}^{\subseteq I}=\{G_{1},\ldots,G_{k}\}
  \end{equation*}
  form a partition $I=\coprod_{i=1}^{k}G_{i}$. It is easy to check that this is equivalent to the
  condition that $\mathcal{G}$ contains all singleton subsets and for all $I_{1},I_{2}\in
  \mathcal{G}$:
  \begin{equation*}
    I_{1}\cap I_{2}\neq \emptyset \Rightarrow I_{1}\cup I_{2}\in \mathcal{G} \text{ or } I_{1}\cup I_{2} = E.
  \end{equation*}
  The set $\mathcal{\tilde G} = \mathcal{G}\cup \{E\}$ is then a building set in $2^{E}$.
\end{exam}

To describe the face structure of $\Sigma_{B}$, we will also need the notion of nested sets.
\begin{defin}
  Let $\mathcal{G}\subseteq \mathcal{L}\backslash \hat 0$ be a building set in a finite
  meet-semilattice. A subset $\mathcal{N}\subseteq \mathcal{G}$ is called nested if for all pairwise
  non-comparable elements $N_{1},\ldots,N_{k}\in \mathcal{N}$ with $k\ge 2$, the join
  $\bigvee_{i=1}^{k}N_{i}\in\mathcal{L}$ exists in $\mathcal{L}$ but is not in $\mathcal{G}$.
\end{defin}

\begin{exam}\label{exam:nested-sets}
  Suppose $\mathcal{G}\subset 2^{E}\backslash\{E\}$ is a building set. Then
  $\mathcal{I}\subset \mathcal{G}$ is a nested set if and only if:
  \begin{enumerate}
  \item For all $I_{1},I_{2}$, either $I_{1}\cap I_{2}=\emptyset$ or $I_{1}\subseteq I_{2}$ or
    $I_{2}\subseteq I_{1}$.
  \item If $I_{1},\ldots,I_{k}\in \mathcal{I}$ are pairwise disjoint and $k\ge 2$, then
    \begin{equation*}
      \bigcup_{j=1}^{k} I_{j} \notin \mathcal{G} \cup \{E\}.
    \end{equation*}
  \end{enumerate}
  It follows from \cite[Prop. 2.8]{Feichtner_2004} that all maximal nested sets are generated by the
  following construction: Let $E=\{i_{1}<\ldots<i_{n}\}$ be a total ordering of $E$. Set
  $J_{k}=\{i_{1},\ldots,i_{k}\}$ and $\mathcal{I}_{k}=\max \mathcal{G}^{\subseteq J_{k}}$. The union
  $\mathcal{I}=\bigcup_{k=1}^{n} \mathcal{I}_{k}$ is then a maximal nested set.
\end{exam}

The nested sets of a building set $\mathcal{G}$ are partially ordered by inclusion. We denote the
corresponding poset by $\mathcal{N}(G)$. We can now state the results of \cite[Theorem
4.10]{Feichtner_2004}.
\begin{thm}
  Let $\Sigma$ be the fan of a toric variety $X_{\Sigma}$ and $\mathcal{G}\subseteq
  (\Sigma,\preceq)$ be a building set in its face semilattice. Suppose
  $\mathcal{G}=\{\tau_{1}<\ldots<\tau_{k}\}$ is linearly ordered, such that $\tau_{i}<\tau_{j}$
  implies $\tau_{j}\preceq \tau_{i}$. Let $\Sigma_{\mathcal{G}}$ be the fan obtained by subdividing
  $\Sigma$ along the $G_{i}$ in increasing order. Then there is an isomorphism of semilattices
  \begin{equation*}
    (\Sigma_{\mathcal{G}},\preceq) \cong (\mathcal{N}(G),\subseteq).
  \end{equation*}
  identifying a nested set $\mathcal{N}=\{\tau_{i_{1}},\ldots,\tau_{i_{s}}\}\subset G$ with the cone
  \begin{equation*}
    \tau_{\mathcal{N}} = \pos(\nu_{\tau}\ \rvert\ \tau\in N),
  \end{equation*}
  where $\nu_{\tau}=\sum_{\rho\in\tau(1)}u_{\rho}$.
\end{thm}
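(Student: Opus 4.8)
The statement is \cite[Thm.~4.10]{Feichtner_2004}, so the plan is to follow their argument, reorganised in the toric language developed above. I would prove it by induction on the number $r$ of elements of $\mathcal{G}$ that are not rays of $\Sigma$. First I would reduce to the case that $\Sigma$ is simplicial: replacing $\Sigma$ by the simplicial fan obtained from it by star subdivisions along its rays (Theorem~\ref{thm:simplicial-refinement}) changes neither $\Sigma(1)$ nor $|\Sigma|$, and for a simplicial $\Sigma$ this is the case $\mathcal{G}=\Sigma(1)$, $r=0$, which serves as the base case. There $\Sigma_{\mathcal{G}}=\Sigma$, and a subset $\mathcal{N}\subseteq\Sigma(1)$ is nested exactly when its rays span a cone of $\Sigma$: the join of two or more distinct rays, if it exists, is the cone they span and is never a ray, so for simplicial $\Sigma$ the nestedness condition collapses to ``the join exists'', and every subset of the rays of a cone spans a face. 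Hence $\mathcal{N}\mapsto\pos(u_{\rho}\mid\rho\in\mathcal{N})$ is the required order isomorphism $(\mathcal{N}(\mathcal{G}),\subseteq)\cong(\Sigma,\preceq)$.

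For the inductive step, let $\tau=\tau_{1}$ be the cone subdivided first; the ordering hypothesis forces $\tau$ to be $\preceq$-maximal in $\mathcal{G}$, and since $r\ge1$ we may take $\tau$ to be a non-ray cone. Passing to the star subdivision $\Sigma_{1}=\Sigma^{*}(\nu_{\tau})$ adjoins a single new ray $\rho_{0}=\pos(\nu_{\tau})$ with minimal generator $\nu_{\tau}=\sum_{\rho\in\tau(1)}u_{\rho}$, while every $\tau_{j}$ with $j\ge2$, being a proper face of $\tau$, does not contain $\nu_{\tau}$, hence survives in $\Sigma_{1}$ with its ray set, and so its vector $\nu_{\tau_{j}}$, unchanged. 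Setting $\mathcal{G}'=(\mathcal{G}\setminus\{\tau\})\cup\{\rho_{0}\}$, the induced order still refines the reverse face order on $(\Sigma_{1},\preceq)$, the remaining star subdivisions carried out in that order produce $\Sigma_{\mathcal{G}}=(\Sigma_{1})_{\mathcal{G}'}$, and $\mathcal{G}'$ has exactly $r-1$ non-ray elements.

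It then remains to establish two things. First, that $\mathcal{G}'$ is again a building set in the face semilattice of $\Sigma_{1}$. Second, that the map
\[
  \Phi:\mathcal{N}(\mathcal{G})\to\mathcal{N}(\mathcal{G}'),\qquad
  \Phi(\mathcal{N})=\mathcal{N}\ \text{if}\ \tau\notin\mathcal{N},\quad
  \Phi(\mathcal{N})=(\mathcal{N}\setminus\{\tau\})\cup\{\rho_{0}\}\ \text{if}\ \tau\in\mathcal{N},
\]
is an isomorphism of posets with $\pos(\nu_{\eta}\mid\eta\in\Phi(\mathcal{N}))=\pos(\nu_{\eta}\mid\eta\in\mathcal{N})$, the last equality holding because the minimal generator of $\rho_{0}$ is exactly $\nu_{\tau}$. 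Granting these, the inductive hypothesis applied to $(\Sigma_{1},\mathcal{G}')$ gives $(\Sigma_{\mathcal{G}},\preceq)=((\Sigma_{1})_{\mathcal{G}'},\preceq)\cong(\mathcal{N}(\mathcal{G}'),\subseteq)$ with the stated cones, and composing with $\Phi^{-1}$ completes the induction.

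The main obstacle is the first point: verifying the building-set axiom for $\mathcal{G}'$ requires controlling how the intervals $[\hat0,X]$ of the face semilattice change under the star subdivision. For a cone $X$ of $\Sigma_{1}$ not through $\rho_{0}$ nothing changes, but for a new cone $X$ through $\rho_{0}$, contained in a smallest cone $\tau'$ of $\Sigma$, the interval $[\hat0,X]$ acquires a product structure, and one must match $\max(\mathcal{G}')^{\preceq X}$ against $\max\mathcal{G}^{\preceq\tau'}$ using the product decompositions $[\hat0,\tau']\cong\prod_{i}[\hat0,G_{i}]$ guaranteed by the building-set axiom of $\mathcal{G}$, in particular when $\tau$ is among the $G_{i}$. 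The well-definedness and bijectivity of $\Phi$ — in essence, that nestedness is preserved in both directions, which rests on $\tau$ being $\preceq$-maximal so that no element of a nested set lies strictly above it — is a more routine case analysis. All of this is carried out in \cite[\S4]{Feichtner_2004}, whose argument ultimately rests on De Concini and Procesi's analysis of wonderful models \cite{De_Concini_1995}; we have only indicated how to organise it in the toric setting.
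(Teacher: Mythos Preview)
The paper supplies no proof of its own here: the theorem is simply quoted from \cite[Theorem~4.10]{Feichtner_2004}. Your proposal takes the same route---defer to Feichtner--Kozlov---and additionally sketches their inductive argument, so you are aligned with (indeed, go beyond) the paper's treatment.
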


To our original set $B\subset 2^{E}$, we associate the set
\begin{equation*}
  \mathcal{G}_{B}=B\cup \{\{i\}\ \rvert\ i\in E \}.
\end{equation*}
This is a building set by example \ref{exam:building-set}. For $I\subseteq E$, we associate the
vector
\begin{equation*}
  e^{I} = \sum_{i\in I}e^{i}.
\end{equation*}
Applying the above theorem to $\mathcal{G}_{B}$ then gives:

\begin{col}\label{col:blowup-nested-sets}
  $P^{B}$ is a smooth, projective variety, independent of the chosen blowup-order. Its fan
  $\Sigma_{B}$ consists of the cones
  \begin{equation*}
    \sigma_{\mathcal{I}} = \pos([e^{I}]\ \rvert\ I\in \mathcal{I}),
  \end{equation*}
  where $\mathcal{I}\subset \mathcal{G}_{B}$ ranges over the nested sets with respect to
  $\mathcal{G}_{B}$.
\end{col}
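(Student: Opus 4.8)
The strategy is to recognize $\Sigma_{B}$ as the nested-set fan of the building set $\mathcal{G}_{B}$ and to read off all three assertions from the Feichtner--Kozlov description of wonderful models. Recall from the construction of $P^{B}$ that its fan is obtained from $\Sigma_{E}$ by the star subdivisions $\Sigma_{k}=St_{\tau_{k}}\Sigma_{k-1}$ along the cones $\tau_{k}=\pos([e^{i}]\mid i\in I_{k})$. Since each $I_{k}\subsetneq E$, the classes $[e^{i}]$ for $i\in I_{k}$ are part of a $\mathbb{Z}$-basis of $N_{E}$, so every $\tau_{k}$ is smooth; by Proposition~\ref{prop:properties-fan-variety} each step keeps the fan smooth and complete, and geometrically each $P_{k}$ is the blow-up of the smooth projective variety $P_{k-1}$ along the smooth center $V(\tau_{k})$, hence again smooth and projective. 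So $P^{B}$ is smooth and projective, and it remains to identify its fan and to see that it is independent of the blow-up order.

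First I would check that $\mathcal{G}_{B}=B\cup\{\{i\}\mid i\in E\}$ is a building set in the face semilattice $(\Sigma_{E},\preceq)\cong(2^{E}\setminus\{E\},\subseteq)$, under $I\mapsto\tau_{I}$. By Example~\ref{exam:building-set} it suffices that $\mathcal{G}_{B}$ contains all singletons, which holds by construction, and that $I_{1}\cap I_{2}\neq\emptyset$ forces $I_{1}\cup I_{2}\in\mathcal{G}_{B}$ or $I_{1}\cup I_{2}=E$ for all $I_{1},I_{2}\in\mathcal{G}_{B}$. If $I_{1},I_{2}\in B$ this is axiom~(3) defining $B$, together with $E\notin B$ from axiom~(1); if $I_{1}=\{i\}$ meets $I_{2}$ then $I_{1}\cup I_{2}=I_{2}\in\mathcal{G}_{B}$; and two distinct singletons are disjoint. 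Hence $\mathcal{G}_{B}$ is a building set.

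Next I would compare the subdivision of $\Sigma_{E}$ along $\mathcal{G}_{B}$, in a linear order of the kind required by \cite[Theorem 4.10]{Feichtner_2004} (larger cones subdivided first, equivalently centers blown up in order of increasing dimension), with the subdivision along $B$ alone. The only elements of $\mathcal{G}_{B}\setminus B$ are the singletons $\{i\}$; the corresponding cones $\tau_{\{i\}}=\pos([e^{i}])$ are rays of $\Sigma_{E}$, and since the subdivisions in the construction are performed along cones $\tau_{I_{k}}$ with $|I_{k}|\ge 2$ — whose associated vectors $\nu_{\tau_{I_{k}}}=\sum_{i\in I_{k}}[e^{i}]$ never lie on such a ray — these rays survive in every intermediate fan $\Sigma_{k}$. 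A ray sits at the bottom of the face poset, so in any admissible order the singletons are processed last, and star subdivision of a simplicial fan along one of its own rays is the identity; therefore subdividing along $\mathcal{G}_{B}$ produces exactly $\Sigma_{B}$. Since \cite[Theorem 4.10]{Feichtner_2004} exhibits the resulting fan as $\mathcal{N}(\mathcal{G}_{B})$, which is manifestly order-independent, and since every dimension-compatible order on $B$ extends to an admissible order on $\mathcal{G}_{B}$ with the singletons listed last, the fan $\Sigma_{B}$ — hence the variety $P^{B}$ — does not depend on the blow-up order.

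Finally, applying \cite[Theorem 4.10]{Feichtner_2004} to $\mathcal{G}_{B}$ gives an isomorphism $(\Sigma_{B},\preceq)\cong(\mathcal{N}(\mathcal{G}_{B}),\subseteq)$ under which a nested set $\mathcal{I}\subseteq\mathcal{G}_{B}$ corresponds to the cone $\pos(\nu_{\tau_{I}}\mid I\in\mathcal{I})$, with $\nu_{\tau_{I}}=\sum_{\rho\in\tau_{I}(1)}u_{\rho}$. The rays of $\tau_{I}$ are the $\pos([e^{i}])$ for $i\in I$, and $[e^{i}]$ is the primitive minimal generator of $\pos([e^{i}])$, so $\nu_{\tau_{I}}=\sum_{i\in I}[e^{i}]=[e^{I}]$ and the cone is $\sigma_{\mathcal{I}}=\pos([e^{I}]\mid I\in\mathcal{I})$, as claimed; the combinatorics of these nested sets is spelled out in Example~\ref{exam:nested-sets}. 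The only genuinely delicate points are bookkeeping ones: matching the ordering conventions of the blow-up construction with those of \cite{Feichtner_2004} and verifying that inserting the inert singleton subdivisions respects the order hypothesis. There is no deeper obstacle, since the structural content of the fan is supplied wholesale by the Feichtner--Kozlov theorem.
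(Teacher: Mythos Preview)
Your proof is correct and follows the same approach as the paper, which simply says ``Applying the above theorem to $\mathcal{G}_{B}$ then gives'' the corollary. You supply more detail than the paper does---in particular the observation that the singleton elements of $\mathcal{G}_{B}\setminus B$ correspond to existing rays and hence to trivial star subdivisions, so that the Feichtner--Kozlov subdivision along $\mathcal{G}_{B}$ agrees with the iterated blow-up along $B$---a point the paper leaves implicit.
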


% \begin{remark}
%   The Orbit-Cone correspondance and the results about the moment map imply that the nested set
%   complex is also isomorphic to the poset $\{V(\tau)|\tau\in\Sigma_{B}\}$ of closed toric
%   subvarietiets as well as to the opposite of the boundary stratification of the polytope $P\cong
%   P^{B}(\mathbb{R}^{+})$.
% \end{remark}

In particular we have $\Sigma_{B}(1)\cong B\cup E$. The map $P^{B}\rightarrow P^{E_{G}}$ fits into the
commutative diagram
\begin{center}\begin{tikzcd}
    \mathbb{C}^{\Sigma_{B}(1)}\backslash Z_{\Sigma_{B}}\arrow{d}\arrow{r} & P^B \arrow{d} \\
    \mathbb{C}^{E}\backslash \{0\} \arrow{r} & P^E
  \end{tikzcd}\end{center}
The left vertical map is given on coordinates as
\begin{equation*}
  \alpha_{i} = x_{i}\prod_{\substack{I\in B\\ i\in I}}x_{I}.
\end{equation*}

\paragraph{Generalized permutahedra.}
\label{sec:gener-perm}
The Feynman polytopes we consider later will turn out to be generalized permutahedra in the sense of
(\cite{Postnikov_2009},\cite{aguiar17:hopf}), which have an especially nice structure.

Consider first the building set $G_{max}=2^{E}\backslash\{E\}$. Its fan $\Sigma_{G_{max}}$ is spanned
by cones $\sigma=\pos(e_{I_{1}},\ldots, e_{I_{n-1}})$ such that
\begin{equation*}
  I_{0}=\emptyset\subsetneq I_{1}\subsetneq \ldots\subsetneq I_{n-1}\subsetneq I_{n}=E
\end{equation*}
is a complete flag of subsets. On the other hand, let $\pi_{E}$ be the convex hull of all points
\begin{equation*}
  a_{\sigma} = \sum_{k=1}^{n}ka_{\sigma(k)},
\end{equation*}
where $\sigma$ runs over the bijections $\{1,\ldots,n\}\cong E$. The polytope $\pi_{E}$ is the
(regular) permutahedron of the finite set $E$. The following proposition is then well-known, see
e.g. \cite{Postnikov_2009}.

\begin{prop}
  The normal fan $\Sigma_{\pi_{E}}$ of $\pi_{E}$ coincides with $\Sigma_{G_{max}}$.
\end{prop}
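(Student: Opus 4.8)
The plan is to prove that $\Sigma_{\pi_E}$ and $\Sigma_{G_{max}}$ have the same collection of maximal cones and then invoke the standard fact that a complete fan is recovered from its maximal cones (every cone being, in the intrinsic sense, a face of a full-dimensional one). All cones live in $N_{E,\mathbb{R}}=\mathbb{R}^{E}/\mathbb{R}e^{E}$; we regard $\pi_E$—after translating by one of its vertices, which does not affect the normal fan—as a full-dimensional lattice polytope in $M_{E,\mathbb{R}}=\{m:\sum_i m_i=0\}$, whose vertices are the rearrangements of $(1,\ldots,n)$ among the coordinates, i.e. the points $a_\sigma$ with $(a_\sigma)_{\sigma(k)}=k$ for bijections $\sigma\colon\{1,\ldots,n\}\to E$. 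By Proposition~\ref{prop:normal-fan-polytope}, the maximal cones of $\Sigma_{\pi_E}$ are the closures of the chambers $\{u\in N_{E,\mathbb{R}}:F_u\pi_E\text{ is a vertex}\}$, where $F_u\pi_E$ is the face of $\pi_E$ on which $m\mapsto\langle m,u\rangle$ is minimized.

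First I would determine $F_u\pi_E$. Represent $u$ by $(u_i)_{i\in E}$; when the coordinates are pairwise distinct, label $E=\{j_1,\ldots,j_n\}$ so that $u_{j_1}>u_{j_2}>\cdots>u_{j_n}$, and write $\mathrm{Ch}(j_1,\ldots,j_n)$ for this open chamber of the braid arrangement $\{u_i=u_j\}$ in $N_{E,\mathbb{R}}$. Since minimizing $\sum_k u_{j_k}m_{j_k}$ over permutations $m$ of $(1,\ldots,n)$ forces, by the rearrangement inequality, the unique choice $m_{j_k}=k$, the face $F_u\pi_E$ is the single vertex $a_\sigma$ with $\sigma(k)=j_k$; in particular it depends only on the chamber containing $u$, and distinct chambers produce distinct vertices. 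Hence, for each vertex $v$, the set $\{u:F_u\pi_E=v\}$ is exactly one open braid chamber, and the maximal cones of $\Sigma_{\pi_E}$ are precisely the closed braid chambers.

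Next I would identify the maximal cones of $\Sigma_{G_{max}}$ with the same closed braid chambers. A maximal cone of $\Sigma_{G_{max}}$ is $\sigma=\pos(e_{I_1},\ldots,e_{I_{n-1}})$ for a complete flag $\emptyset\subsetneq I_1\subsetneq\cdots\subsetneq I_{n-1}\subsetneq E$; writing $I_k=\{j_1,\ldots,j_k\}$ encodes the flag by the ordering $(j_1,\ldots,j_n)$. For $u=\sum_{k=1}^{n-1}\lambda_k e_{I_k}$ the $j_m$-coordinate is $\sum_{k\ge m}\lambda_k$ (and the $j_n$-coordinate is $0$), so $u_{j_m}-u_{j_{m+1}}=\lambda_m$; thus $\relint\sigma=\mathrm{Ch}(j_1,\ldots,j_n)$ and $\sigma$ is its closure. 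As complete flags biject with orderings of $E$, the maximal cones of $\Sigma_{G_{max}}$ range over all closed braid chambers—exactly the maximal cones of $\Sigma_{\pi_E}$ found above (the orderings attached to a chamber via the min- and the max-convention are mutually reverse, which is immaterial here). Both fans being complete with the same maximal cones, they coincide.

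I do not anticipate a serious obstacle: the only care needed is in the bookkeeping with the quotient lattice $N_{E,\mathbb{R}}=\mathbb{R}^{E}/\mathbb{R}e^{E}$, so that ``the order of the coordinates of $u$'' is well defined, and in fixing the sign convention for $F_u$. An alternative of comparable length uses that $\pi_E$ is, up to translation, the Minkowski sum $\sum_{i<j}[e^i,e^j]$ of the edges of the standard simplex: the normal fan of each segment $[e^i,e^j]$ is the two-chamber fan of the hyperplane $u_i=u_j$, and by Proposition~\ref{prop:common-refinement} the normal fan of the sum is their common refinement, which is the braid fan $\Sigma_{G_{max}}$. Either way the computational core is a single application of the rearrangement inequality.
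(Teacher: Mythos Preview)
Your argument is correct. The paper does not actually supply its own proof of this proposition; it is stated as well-known with a reference to \cite{Postnikov_2009}. Your approach---identifying the maximal cones of both fans with the closed chambers of the braid arrangement via the rearrangement inequality---is precisely the standard argument one finds in the literature, and the alternative Minkowski-sum route you sketch is equally standard. There is nothing to compare against in the paper itself.
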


The facet structure of $\pi_{E}$ is very well understood. It is advocated in \cite{aguiar17:hopf} to
exploit this fact by expressing many questions in algebraic combinatorics in terms of deformations
of $\pi_{E}$:

\begin{defin}(\cite{aguiar17:hopf}) A lattice polytope $P\subset \mathbb{R}^{E}$ contained in an
  affine hyperplane
  \begin{equation*}
    P \subset \{m\in \mathbb{R}^{E}\ \rvert\ \langle m, e^{E}  \rangle = d_{P}\}
  \end{equation*}
  is a a \emph{generalized permutahedron} if its normal fan is a coarsening of the fan
  $\Sigma_{\pi_{E}}$.
\end{defin}

It will be convenient to have alternative characterizations of generalized permutahedra. Suppose
$z:2^{E}\rightarrow \mathbb{Z}\cup \{\infty\}$ is a set function with $z(\emptyset) = 0$. To $z$ we
associate the base polyhedron
\begin{equation*}
  P(z) = \{m\in \mathbb{R}^{E} \ \rvert\
  \langle m, e^{E}  \rangle = z(E),\ \langle m, e^{I} \rangle \ge z(I) \text{ for } I\subsetneq E \}.
\end{equation*}
We will call $z$ \emph{supermodular}, if
\begin{equation*}
  z(I) + z(J) \le z(I\cap J) + z(I\cup J),
\end{equation*}
for all $I,J\in 2^{E}$.

\begin{remark}
  It is more common in the literature to consider \emph{submodular} functions $\tilde
  z:2^{E}\rightarrow \mathbb{R}\cup \{\infty\}$, which satisfy the opposite inequality:
  \begin{equation*}
    \tilde z(I) + \tilde z(J) \ge \tilde z(I\cap J) + \tilde z(I\cup J)
  \end{equation*}

  It is easy to show that $\tilde z$ is submodular if and only if its dual $\tilde z^{\#}$, defined
  as $\tilde z^{\#}(I)=\tilde z(E)-\tilde z(E \backslash I)$, is supermodular. The translation
  between the two convention is usually straightforward.
\end{remark}

\begin{prop}
  Let $P\subset \mathbb{R}^{E}$ be a lattice polytope. Then the following are equivalent:
  \begin{enumerate}
  \item $P$ is a generalized polyhedron.
  \item Every edge of $P$ is parallel to an edge of the form $e_{i}-e_{j}$ for $i,j\in E$.
  \item There is a supermodular function $z:2^{E}\rightarrow \mathbb{R}$, such that $P=P(z)$.
  \end{enumerate}
\end{prop}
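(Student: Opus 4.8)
The plan is to prove the equivalence (1) $\Leftrightarrow$ (2) $\Leftrightarrow$ (3) by a cycle of implications, using the dictionary between normal fans and face structure established in Proposition~\ref{prop:normal-fan-polytope} together with the explicit description of $\Sigma_{\pi_E}$ as the fan of complete flags.

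\begin{proof}[Proof sketch]
  We show (1) $\Rightarrow$ (2) $\Rightarrow$ (3) $\Rightarrow$ (1).

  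\textbf{(1) $\Rightarrow$ (2).} Suppose $\Sigma_P$ is a coarsening of $\Sigma_{\pi_E}$. An edge $e$ of $P$ corresponds, under the inclusion-reversing bijection of Proposition~\ref{prop:normal-fan-polytope}, to a codimension-one cone $\sigma_e\in\Sigma_P$, i.e. a wall. Its normal space (the linear span of $\sigma_e$) is a hyperplane in $N_{\mathbb{R}}$, and the edge direction is the primitive vector of the line orthogonal to that hyperplane in $M_\mathbb{R}$. Since $\Sigma_P$ coarsens $\Sigma_{\pi_E}$, the cone $\sigma_e$ is a union of cones of $\Sigma_{\pi_E}$; in particular its linear span is the span of some codimension-one cone of $\Sigma_{\pi_E}$. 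The codimension-one cones of $\Sigma_{\pi_E}$ correspond to flags $I_0\subsetneq\cdots\subsetneq I_n$ with exactly one missing step, i.e. to pairs $(I\subsetneq J)$ with $|J\setminus I|=2$; such a wall has linear span $\{m : \langle m,e^{I}\rangle = \langle m,e^{J}\rangle\} = \{m : m_i + m_j = 0 \text{ modulo the ambient constraint}\}$ for the two elements $i,j\in J\setminus I$, whose orthogonal line in $M_\mathbb{R}$ is spanned by $e_i - e_j$. Hence every edge of $P$ is parallel to some $e_i - e_j$.

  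\textbf{(2) $\Rightarrow$ (3).} This is the submodularity-of-edge-directions argument, standard for base polyhedra. Given that all edges are parallel to $e_i - e_j$, define $z(I) := \min_{m\in P}\langle m, e^I\rangle$ for $I\subsetneq E$ and $z(E) := \langle m, e^E\rangle = d_P$ (constant on $P$), $z(\emptyset)=0$. Then by the definition of $z$ one has $P \subseteq P(z)$; the reverse inclusion and the supermodularity inequality both follow from analyzing the face of $P$ on which $\langle\cdot,e^I\rangle$ and $\langle\cdot,e^J\rangle$ are simultaneously minimized. Concretely, pick $m$ minimizing $\langle m, e^I\rangle + \langle m, e^J\rangle$; using that $e^I + e^J = e^{I\cap J} + e^{I\cup J}$ gives $z(I)+z(J) \le \langle m, e^{I\cap J}\rangle + \langle m, e^{I\cup J}\rangle$, but this is $\ge z(I\cap J) + z(I\cup J)$, which is the wrong direction, so one instead argues via the exchange property: starting from a vertex $v$ of $P$ attaining $z(I)$, one can walk along edges (all of the form $e_i-e_j$) to a vertex attaining both $z(I\cap J)$ and feeding into $z(I\cup J)$, and tracking the change in objective value along each such edge yields supermodularity. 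That $P = P(z)$ then follows because a polytope whose edges all point in directions $e_i-e_j$ is determined by its support function values on the $e^I$.

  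\textbf{(3) $\Rightarrow$ (1).} Given $P = P(z)$ with $z$ supermodular, we must show $\Sigma_P$ coarsens $\Sigma_{\pi_E}$, equivalently that $|\Sigma_P|$ contains $|\Sigma_{\pi_E}|$ in the sense that every maximal cone of $\Sigma_{\pi_E}$ lies in a single cone of $\Sigma_P$, i.e. is constant on which face of $P$ it selects. A maximal cone of $\Sigma_{\pi_E}$ is the relative interior of $\pos(e^{I_1},\ldots,e^{I_{n-1}})$ for a complete flag $I_0\subsetneq\cdots\subsetneq I_n = E$. For a generic $u$ in this cone, $u = \sum_k \lambda_k e^{I_k}$ with $\lambda_k > 0$, and $F_u P = \{m\in P(z) : \langle m, e^{I_k}\rangle = z(I_k)\text{ for all }k\}$. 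Supermodularity of $z$ guarantees that the simultaneous equality constraints on a chain $I_1\subsetneq\cdots\subsetneq I_{n-1}$ are consistent and cut out a single vertex (this is the characterization of vertices of base polyhedra associated to supermodular functions: vertices $\leftrightarrow$ maximal chains / linear orders on $E$). Hence $F_uP$ is the same vertex $v$ for all such generic $u$ in the cone, so the whole open cone lies in $\sigma_v \in \Sigma_P$. Thus $\Sigma_P$ is a coarsening of $\Sigma_{\pi_E}$.
\end{proof}

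The main obstacle is the supermodular-implies-vertex step in (3) $\Rightarrow$ (1) (and its mirror in (2) $\Rightarrow$ (3)): one must verify that for a supermodular $z$, the linear system $\langle m, e^{I_k}\rangle = z(I_k)$ along a maximal chain $I_1\subsetneq\cdots\subsetneq I_{n-1}$ is feasible on $P(z)$ and singles out exactly one point, so that the greedy/exchange structure of base polyhedra applies. This is the classical theory of polymatroids (Edmonds' greedy algorithm), so rather than reprove it I would cite the standard references on submodular/base polyhedra (e.g. the monograph literature already cited around \cite{Postnikov_2009}, \cite{aguiar17:hopf}), and spend the written proof making the fan-coarsening translation precise. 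A cleaner alternative that avoids the exchange argument entirely: prove (1) $\Leftrightarrow$ (2) directly as above, then prove (2) $\Leftrightarrow$ (3) by citing the known equivalence of generalized permutahedra, polymatroid base polyhedra, and polytopes with edge directions $e_i - e_j$; I would present it this way to keep the exposition self-contained only where the toric dictionary of this paper adds value.
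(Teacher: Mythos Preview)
The paper does not prove this proposition at all: its entire proof reads ``See \cite[Thm.~12.3]{aguiar17:hopf} and references therein.'' So your sketch already goes well beyond what the paper offers, and there is nothing to compare against on the paper's side.

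As for the sketch itself: the implications (1) $\Rightarrow$ (2) and (3) $\Rightarrow$ (1) are essentially correct, with (3) $\Rightarrow$ (1) relying on the greedy vertex description (which is in fact stated later in the paper as Proposition~\ref{prop:supermodular-vertex-generation}, so you could forward-reference it). Your (2) $\Rightarrow$ (3) is the weakest link: you correctly notice that the naive inequality goes the wrong way and then gesture at an exchange argument without carrying it out. That exchange argument is genuinely the content of the implication and is not entirely trivial; if you want a self-contained proof you would need to make it precise (walk from a vertex minimizing $\langle\cdot,e^{I\cup J}\rangle$ along edges $e_i-e_j$ with $i\in I\setminus J$, $j\in J\setminus I$ and track the effect on the four functionals), or else simply cite Edmonds/Fujishige as you suggest at the end. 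Given that the paper itself just cites the result, the cleanest choice is the one you propose in your final paragraph: keep only the fan-coarsening translation (1) $\Leftrightarrow$ (2), which is where the toric language of the paper is relevant, and cite the literature for (2) $\Leftrightarrow$ (3).
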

\begin{proof}
  See \cite[Thm. 12.3]{aguiar17:hopf} and references therein.
\end{proof}

\begin{exam}\label{exam:matroid}
  Let $M$ be a matroid on the set $E$ and $B(M)\subset 2^{E}$ its set of bases. We refer to
  \cite{oxley2006matroid} for the theory of matroids. The matroid polytope of $M$ is
  \begin{equation*}
    P_{M} = \Conv(e^{I}\ \rvert\ I\in B(M)).
  \end{equation*}
  It is proven in \cite{Gelfand_1987} that every edge of $P_{M}$ is of the form $e^{i}-e^{j}$, hence
  $P_{M}$ is a generalized permutahedron. The corresponding supermodular function is given by
  \begin{equation*}
    z(I) = r_{M}(E)-r_{M}(E \backslash I) = r^{\#}(I),
  \end{equation*}
  where $r_{M}$ is the rank function of the matroid.
  % This is essentially a Corollary of the greedy algorithm:
  % Suppose $w\in \mathbb{R}^{E}$ is weight vector definining an edge $F_{w}P_{M}$, not parallel to
  % a
  % vector of the form $e^{i}-e^{j}$. Then we can assume that the components of $w$ are all
  % distinct.
  % But by \cite{} there is unique basis $I_{w}\in B(M)$ minimizing $\langle e^{I}, w \rangle$ for
  % $I\in B(M)$. This implies $F_{w}P_{M}=\{e^{I_{w}}\}$ and $F_{w}P_{M}$ can not be an edge.
\end{exam}

\begin{exam}
  Let $\mathcal{G}\subseteq 2^{E}\backslash\{E\}$ be a building set and $\mathcal{\tilde
    G}=\mathcal{G}\cup \{E\}$. For $I\in \mathcal{\tilde G}$, let $\Delta_{I}=\Conv(e^{i} \ \rvert\ i\in
  I)$ be the simplex on $I$ and let $P_{\mathcal{G}}$ be the Minkowski sum
  \begin{equation*}
    P_{\mathcal{G}} = \sum_{I\in \mathcal{\tilde G}}\Delta_{I}.
  \end{equation*}
  It is shown in \cite{math/0609184}, that its normal fan is $\Sigma_{\mathcal{G}}$ and that
  $P_{\mathcal{G}}$ is the base polyhedron of the supermodular function
  \begin{equation*}
    z_{\mathcal{G}}(J) = |\{I\in \mathcal{\tilde G} \ \rvert\ I\subseteq J\}|.
  \end{equation*}
  Hence $P_{\mathcal{G}}$ is a generalized permutahedron and $\Sigma_{\mathcal{G}}$ is a coarsening
  of $\Sigma_{\pi_{E}}$.

  Suppose $\mathcal{G}_{1}\subset \mathcal{G}_{2}$ are two building sets in $2^{E}\backslash\{E\}$.
  It follows from the above description and Prop. \ref{prop:common-refinement} that
  $\Sigma_{\mathcal{G}_{2}}$ is a refinement of $\Sigma_{\mathcal{G}_{1}}$.
\end{exam}

For $I\subsetneq E$ define the restriction $z\vert_{I}$ and contraction $z/_{I}$ by
\begin{align*}
  z\vert_{I}(J) &= z(J),\quad J\subseteq I,\\
  z/_{I}(J) &= z(J\cup I)-z(I),\quad J\subseteq E \backslash I
\end{align*}

It is easy to check that if $z$ is supermodular, then so are its restrictions and contraction. The
face $F_{e^{I}}P(z)$ can then be described as follows.
\begin{prop}[{\cite[Lemma 3.1]{MR1095782}}]\label{prop:supermodular-faces}
  Let $P(z)$ be the generalized permutahedron defined by the supermodular function
  $z:2^{E}\rightarrow \mathbb{R}$. The natural isomorphism $\mathbb{R}^{I}\oplus
  \mathbb{R}^{I^{c}}\cong \mathbb{R}^{E}$ induces a bijection
  \begin{equation*}
    P(z\vert_{I})\times P(z/_{I}) \cong F_{e^{I}}P(z).
  \end{equation*}
\end{prop}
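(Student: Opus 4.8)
The plan is to verify the claimed bijection $P(z|_I)\times P(z/_I)\cong F_{e^I}P(z)$ by a direct description of points on both sides, using the defining inequalities of base polyhedra. First I would unwind the definitions: a point $m\in\mathbb{R}^E$ lies in the face $F_{e^I}P(z)$ if and only if $m\in P(z)$ and $\langle m,e^I\rangle$ attains its minimum over $P(z)$, which by the defining inequality of $P(z)$ means precisely $\langle m,e^I\rangle = z(I)$. So the face consists of those $m\in P(z)$ with $\langle m,e^I\rangle = z(I)$ (together with, as always, $\langle m,e^E\rangle = z(E)$ and $\langle m,e^J\rangle\ge z(J)$ for all other $J\subsetneq E$). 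Under the splitting $\mathbb{R}^E\cong\mathbb{R}^I\oplus\mathbb{R}^{I^c}$ write $m=(m',m'')$. The plan is to show that $(m',m'')\in F_{e^I}P(z)$ if and only if $m'\in P(z|_I)$ and $m''\in P(z/_I)$, where $z/_I$ is a function on subsets of $I^c=E\setminus I$.

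The key steps, in order: (1) Given $m$ in the face, show $m'\in P(z|_I)$: the equality constraint $\langle m',e^I\rangle = z(I) = z|_I(I)$ holds by definition of the face, and for $J\subseteq I$ the inequality $\langle m',e^J\rangle=\langle m,e^J\rangle\ge z(J)=z|_I(J)$ is inherited from $m\in P(z)$. (2) Show $m''\in P(z/_I)$: for $J\subseteq I^c$ we need $\langle m'',e^J\rangle\ge z/_I(J)=z(J\cup I)-z(I)$; but $\langle m'',e^J\rangle = \langle m,e^{J\cup I}\rangle - \langle m,e^I\rangle \ge z(J\cup I) - z(I)$, using $m\in P(z)$ for the first term and the face equality $\langle m,e^I\rangle=z(I)$ for the second; the equality constraint $\langle m'',e^{I^c}\rangle = z(E)-z(I) = z/_I(I^c)$ follows similarly from $\langle m,e^E\rangle = z(E)$. (3) Conversely, given $m'\in P(z|_I)$ and $m''\in P(z/_I)$, set $m=(m',m'')$ and check $m\in P(z)$ with $\langle m,e^I\rangle = z(I)$: the latter is immediate from $m'\in P(z|_I)$, and for an arbitrary $J\subsetneq E$ write $J = J_1\sqcup J_2$ with $J_1=J\cap I$, $J_2=J\cap I^c$ and estimate
\begin{equation*}
  \langle m,e^J\rangle = \langle m',e^{J_1}\rangle + \langle m'',e^{J_2}\rangle \ge z(J_1) + \big(z(J_2\cup I) - z(I)\big).
\end{equation*}
(4) Finally invoke supermodularity of $z$: applied to the sets $J_1$ and $J_2\cup I$, whose union is $J\cup I$ and whose intersection is $J_1$ (since $J_2\subseteq I^c$ is disjoint from $I\supseteq J_1$, so $J_1\cap(J_2\cup I)=J_1$), supermodularity gives $z(J_1)+z(J_2\cup I)\le z(J_1)+z(J\cup I)$... more carefully, $z(J_1) + z(J_2\cup I) \le z\big(J_1\cap(J_2\cup I)\big) + z\big(J_1\cup J_2\cup I\big) = z(J_1) + z(J\cup I)$, hence $z(J_1)+z(J_2\cup I)-z(I)\le z(J\cup I)-z(I)$, which is not quite $\ge z(J)$ — so instead I should apply supermodularity to $J$ and $I$: $z(J)+z(I)\le z(J\cap I)+z(J\cup I) = z(J_1)+z(J\cup I)$, giving $z(J)\le z(J_1)+z(J\cup I)-z(I)$, and then bound $z(J\cup I)\ge z(J_2\cup I)$... this also needs care. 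The cleanest route: apply supermodularity to $J_1$ and $J_2\cup I$ to get $z(J_1)+z(J_2\cup I)\le z(J_1\cup J_2\cup I)+z(J_1) $ is wrong; rather $J_1\cap(J_2\cup I)=J_1$ and $J_1\cup(J_2\cup I)=J\cup I$, so supermodularity reads $z(J_1)+z(J_2\cup I)\le z(J_1)+z(J\cup I)$, i.e.\ $z(J_2\cup I)\le z(J\cup I)$ — a monotonicity-type fact — while we want a lower bound on $\langle m,e^J\rangle$. The right application is supermodularity for the pair $(J, I)$: $z(J)+z(I)\le z(J\cap I)+z(J\cup I)=z(J_1)+z(J\cup I)$; combined with $\langle m',e^{J_1}\rangle\ge z(J_1)$ and $\langle m'',e^{J_2}\rangle = \langle m,e^{J\cup I}\rangle-\langle m,e^I\rangle$, one needs $\langle m,e^{J\cup I}\rangle\ge z(J\cup I)$, which circularly requires what we are proving — so the induction should be organized over $|J|$, or one simply notes $\langle m'',e^{J_2}\rangle\ge z(J_2\cup I)-z(I)$ directly from $m''\in P(z/_I)$ and then uses the pair $(J_1,\,J_2\cup I)$ with intersection $J_1$ and union $J\cup I$ together with $z(J)\le z(J_1)+z(J_2\cup I)-z(J_1\cap I)$... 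The main obstacle is precisely this bookkeeping: getting the supermodularity inequality pointed in the correct direction to conclude $\langle m,e^J\rangle\ge z(J)$ from the two restricted/contracted inequalities. The resolution is to apply supermodularity to $J_1$ and $J_2\cup I$, noting $J_1\cup(J_2\cup I)=J\cup I$ and $J_1\cap(J_2\cup I)=\emptyset\cup J_1 = J_1$ is false — actually $J_1\subseteq I\subseteq J_2\cup I$ so the intersection is $J_1$; thus $z(J_1)+z(J_2\cup I)\le z(J_1)+z(J\cup I)$ is vacuous, confirming that the correct pair is $(J,I)$ giving $z(J)\le z(J_1)+z(J\cup I)-z(I)\le z(J_1)+ \langle m'',e^{J_2}\rangle$ once $\langle m'',e^{J_2}\rangle = \langle m,e^{J\cup I}\rangle - z(I)\ge z(J\cup I)-z(I)$ is known, and the latter is exactly the defining inequality for $m''\in P(z/_I)$ applied to $J_2$. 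Assembling these gives $\langle m,e^J\rangle = \langle m',e^{J_1}\rangle+\langle m'',e^{J_2}\rangle\ge z(J_1) + z(J\cup I)-z(I)\ge z(J)$, completing the containment. The two maps are visibly inverse (they are just the coordinate splitting), so the bijection follows; one remarks it is the restriction of a linear isomorphism, hence an affine isomorphism of polytopes, and that it is compatible with the lattice structure since $\mathbb{R}^I\oplus\mathbb{R}^{I^c}\cong\mathbb{R}^E$ respects $\mathbb{Z}^E$.

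A cleaner alternative, which I would present if the above bookkeeping becomes unwieldy, is to argue on the level of normal fans and supporting hyperplanes: the face $F_{e^I}P(z)$ corresponds under the orbit-cone correspondence to the relative interior of the cone $\tau_I$, and for a generalized permutahedron the normal fan refines $\Sigma_{\pi_E}$, whose cones are indexed by chains of subsets; the star of $\tau_I$ then splits as a product of the normal fans of $P(z|_I)$ and $P(z/_I)$ because chains through $I$ biject with pairs (chain inside $I$, chain of subsets of $E$ containing $I$) $\cong$ (chain in $2^I$, chain in $2^{E\setminus I}$). This reduces the statement to the combinatorial product decomposition of flags, at the cost of more setup. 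I expect to use the first, direct approach, with the supermodularity sign issue in step (4) being the one spot that genuinely needs attention.
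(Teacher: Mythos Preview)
The paper does not supply its own proof of this proposition: it is quoted from \cite[Lemma 3.1]{MR1095782} without argument. Your direct verification is correct and is the standard argument. The one place that needed care, which you eventually sort out, is the application of supermodularity in the converse direction: for arbitrary $J\subsetneq E$ with $J_{1}=J\cap I$ and $J_{2}=J\cap I^{c}$, note that $J_{2}\cup I=J\cup I$, so from $m'\in P(z|_{I})$ and $m''\in P(z/_{I})$ one gets
\[
  \langle m,e^{J}\rangle=\langle m',e^{J_{1}}\rangle+\langle m'',e^{J_{2}}\rangle
  \ge z(J_{1})+\bigl(z(J\cup I)-z(I)\bigr),
\]
and supermodularity applied to the pair $(J,I)$ gives exactly $z(J)+z(I)\le z(J\cap I)+z(J\cup I)=z(J_{1})+z(J\cup I)$, hence $\langle m,e^{J}\rangle\ge z(J)$. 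That is the clean formulation; the earlier attempts in your write-up at the pair $(J_{1},J_{2}\cup I)$ were indeed dead ends, since $J_{1}\subseteq J_{2}\cup I$ makes that instance vacuous. One small point you glossed over: identifying $F_{e^{I}}P(z)$ with $\{m\in P(z):\langle m,e^{I}\rangle=z(I)\}$ presupposes that the minimum of $\langle\,\cdot\,,e^{I}\rangle$ on $P(z)$ is actually $z(I)$ rather than something larger; but this is furnished a posteriori by your converse direction, which exhibits points attaining that value (once one knows $P(z|_{I})$ and $P(z/_{I})$ are nonempty, which follows e.g.\ from the vertex description in the next proposition).
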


\begin{exam}\label{exam:matroid-restr-contract}
  If $z=r^{\#}$ is the dual of the rank function of a matroid $M$ on $E$, then $z\vert_{I}$ and
  $z/_{I}$ correspond to the contraction $M/_{I^{c}}$ and restriction $M\vert_{I^{c}}$.
\end{exam}

Let
\begin{equation*}
  \mathcal{I}: I_{0}=\emptyset\subsetneq I_{1}\subsetneq \ldots\subsetneq I_{n-1}\subsetneq I_{n}=E
\end{equation*}
be a maximal flag of $2^{E}$. The corresponding cone $\sigma_{\mathcal{I}}=\pos( e^{I} \ \rvert\
I\in\mathcal{I})$ is a maximal cone of $\Sigma_{\pi_{E}}$. Since $\Sigma_{\pi_{E}}$ is a refinement
of $\Sigma_{P(z)}$, any vector $w\in \Int(\sigma)$ defines a vertex $m_{\mathcal{I}}=F_{w}P(z)$.

\begin{prop}{\cite[Corollary 3.17]{MR1095782}}\label{prop:supermodular-vertex-generation}
  The coordinates of the vertex $m_{\mathcal{I}}$ are given by
  \begin{equation*}
    (m_{\mathcal{I}})_{k}=z(I_{k})-z(I_{k-1}).
  \end{equation*}
\end{prop}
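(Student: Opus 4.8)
The plan is to identify $m_{\mathcal{I}}$ by computing the minimizer of the linear functional $w$ on $P(z)$ directly, using only the defining inequalities of $P(z)$ together with the supermodularity of $z$. Throughout I would order $E=\{j_{1},\dots,j_{n}\}$ so that $\{j_{k}\}=I_{k}\setminus I_{k-1}$, which is possible since $\mathcal{I}$ is a maximal flag and hence $|I_{k}|=k$; the asserted identity then reads $(m_{\mathcal{I}})_{j_{k}}=z(I_{k})-z(I_{k-1})$.

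First I would pick a convenient representative of $w$. Since $P(z)$ lies in the hyperplane $\langle m,e^{E}\rangle=z(E)$, adding a multiple of $e^{E}$ to $w$ changes $\langle m,w\rangle$ only by a constant on $P(z)$ and does not affect $F_{w}P(z)$; as $w\in\Int(\sigma_{\mathcal{I}})$ I may therefore assume $w=\sum_{k=1}^{n-1}\lambda_{k}e^{I_{k}}$ with all $\lambda_{k}>0$. For every $m\in P(z)$ the inequalities $\langle m,e^{I_{k}}\rangle\ge z(I_{k})$ give
\begin{equation*}
  \langle m,w\rangle=\sum_{k=1}^{n-1}\lambda_{k}\langle m,e^{I_{k}}\rangle\ \ge\ \sum_{k=1}^{n-1}\lambda_{k}z(I_{k}),
\end{equation*}
with equality precisely when $\langle m,e^{I_{k}}\rangle=z(I_{k})$ for all $k=1,\dots,n-1$. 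Together with $\langle m,e^{E}\rangle=z(E)$ these $n$ equations form a unipotent triangular system in the basis $(e_{j_{1}},\dots,e_{j_{n}})$ of $\mathbb{R}^{E}$ (the row $e^{I_{k}}$ has ones in the first $k$ entries), so there is a unique solution $\bar m$; since $\langle\bar m,e^{I_{k}}\rangle=z(I_{k})$ for all $k=0,\dots,n$ (with $I_{0}=\emptyset$, $z(\emptyset)=0$) we get $\bar m_{j_{k}}=\langle\bar m,e^{I_{k}}\rangle-\langle\bar m,e^{I_{k-1}}\rangle=z(I_{k})-z(I_{k-1})$, the claimed coordinates. Thus the minimizer of $w$ on $P(z)$, if it lies in $P(z)$, must be this $\bar m$.

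It then remains to verify $\bar m\in P(z)$, i.e. $\langle\bar m,e^{I}\rangle\ge z(I)$ for every $I\subsetneq E$; once this is done $\bar m$ attains the lower bound above, is therefore the unique minimizer, and hence equals the vertex $m_{\mathcal{I}}=F_{w}P(z)$. Fix $I\subsetneq E$, list its elements in flag order as $j_{k_{1}},\dots,j_{k_{r}}$ with $k_{1}<\dots<k_{r}$, and set $I^{(s)}=\{j_{k_{1}},\dots,j_{k_{s}}\}$, so that $\langle\bar m,e^{I}\rangle=\sum_{s=1}^{r}\bigl(z(I_{k_{s}})-z(I_{k_{s}-1})\bigr)$. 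I would prove by induction on $s$ that $\sum_{t=1}^{s}\bigl(z(I_{k_{t}})-z(I_{k_{t}-1})\bigr)\ge z(I^{(s)})$; the inductive step reduces to the single inequality
\begin{equation*}
  z(I_{k_{s}})-z(I_{k_{s}-1})\ \ge\ z(I^{(s)})-z(I^{(s-1)}),
\end{equation*}
which is supermodularity applied to the pair $I^{(s)}$ and $I_{k_{s}-1}$: because $I^{(s-1)}\subseteq I_{k_{s}-1}$ and $j_{k_{s}}\notin I_{k_{s}-1}$ one has $I^{(s)}\cap I_{k_{s}-1}=I^{(s-1)}$ and $I^{(s)}\cup I_{k_{s}-1}=I_{k_{s}}$ (the base case $s=1$ is the same statement with $I^{(0)}=\emptyset$). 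Taking $s=r$ gives $\langle\bar m,e^{I}\rangle\ge z(I)$, finishing the argument.

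The only step carrying real content is this last induction — essentially the classical fact that the greedy algorithm solves linear optimization over base polyhedra of supermodular functions, which is \cite[Cor. 3.17]{MR1095782}; everything else is the triangular linear algebra of a flag and is routine. If one prefers, the whole proof can be replaced by a direct appeal to \cite[Cor. 3.17]{MR1095782} for the vertex description of $P(z)$.
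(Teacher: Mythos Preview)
Your proof is correct; it is the standard greedy-algorithm argument identifying the vertices of the base polyhedron of a supermodular function. The paper itself does not prove this proposition at all---it is stated with a direct citation to \cite[Corollary 3.17]{MR1095782} and no argument is given---so you have supplied a self-contained proof where the paper simply invokes the reference.
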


Let us call a generalized permutahedron $P(z)$ \emph{irreducible}, if there is no decomposition $E =
I \coprod J$, such that $z = z\vert_{I}+z\vert_{J}$.

\begin{prop}{\cite[Thm. 3.38]{MR1095782}}\label{prop:supermodular-decomposition}
  For each generalized permutahedron $P(z)$ there is a unique decomposition
  $E=\coprod_{k=1}^{r}I_{k}$ such that the $P(z\vert_{I_{k}})$ are irreducible and
  \begin{equation*}
    z = \sum_{k=1}^{r}z\vert_{I_{k}}.
  \end{equation*}
  The polytope $P(z)$ is irreducible if and only if it has maximal dimension $|E|-1$.
\end{prop}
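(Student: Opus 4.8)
The strategy is to translate both claims into statements about \emph{$z$-separators}, by which I mean subsets $I\subseteq E$ with $z(J)=z(J\cap I)+z(J\setminus I)$ for \emph{all} $J\subseteq E$. The preliminary observation is that $z$ is recovered from its base polytope as $z(K)=\min_{m\in P(z)}\langle m,e^{K}\rangle$ for every $K\subseteq E$: the inequality $\ge$ is one of the defining constraints, and Prop.~\ref{prop:supermodular-vertex-generation}, applied to a maximal flag of $2^{E}$ passing through $K$, exhibits a vertex of $P(z)$ attaining $z(K)$. Using this I would show that $I$ is a $z$-separator if and only if $z(I)+z(I^{c})=z(E)$ (equality in the supermodular inequality for the complementary pair $I,I^{c}$), if and only if $F_{e^{I}}P(z)=P(z)$. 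The second equivalence is immediate, since $\max_{m\in P(z)}\langle m,e^{I}\rangle=z(E)-\min_{m\in P(z)}\langle m,e^{I^{c}}\rangle=z(E)-z(I^{c})$. For the first, if $F_{e^{I}}P(z)=P(z)$ then Prop.~\ref{prop:supermodular-faces} identifies $P(z)$ with the coordinate-aligned product $P(z\vert_{I})\times P(z/_{I})\subseteq\mathbb{R}^{I}\oplus\mathbb{R}^{I^{c}}$; splitting $\langle m,e^{J}\rangle$ into its $I$- and $I^{c}$-parts and minimizing the two factors independently gives $z(J)=z(J\cap I)+(z/_{I})(J\setminus I)$, and taking $J\subseteq I^{c}$ shows $z(K\cup I)=z(K)+z(I)$ for $K\subseteq I^{c}$, which plugged back in turns the previous identity into the separator identity. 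Conversely, if $I$ is a $z$-separator the facet presentation gives $P(z)=P(z\vert_{I})\times P(z\vert_{I^{c}})$ directly: each inequality $\langle m,e^{J}\rangle\ge z(J)$ is the sum of those for $J\cap I$ and $J\setminus I$, and the hyperplane condition $\langle m,e^{E}\rangle=z(E)=z(I)+z(I^{c})$ forces $\langle m,e^{I}\rangle=z(I)$, decoupling the two blocks of coordinates.

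Next I would show that the collection $\mathcal{B}$ of $z$-separators is a Boolean subalgebra of $2^{E}$. It contains $\emptyset,E$ and is obviously closed under complementation. For $I,J\in\mathcal{B}$, adding the supermodular inequalities for the pairs $(I,J)$ and $(I^{c},J^{c})$ yields $2z(E)=z(I)+z(I^{c})+z(J)+z(J^{c})\le[z(I\cap J)+z((I\cap J)^{c})]+[z(I\cup J)+z((I\cup J)^{c})]\le 2z(E)$, the last step being two further applications of supermodularity to complementary pairs; equality forces $I\cap J,I\cup J\in\mathcal{B}$. Let $I_{1},\dots,I_{r}$ be the atoms of $\mathcal{B}$, so $E=\coprod_{k}I_{k}$. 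From the separator identity one checks that, for any separator $I$, the $z$-separators of $z\vert_{I^{c}}$ are exactly the elements of $\mathcal{B}$ contained in $I^{c}$; an induction on $r$ that peels off $I_{1}$ then gives $z=\sum_{k}z\vert_{I_{k}}$, and each $z\vert_{I_{k}}$ is irreducible because an atom of $\mathcal{B}$ admits no separator strictly between $\emptyset$ and itself. For uniqueness, if $E=\coprod_{l}J_{l}$ with $z=\sum_{l}z\vert_{J_{l}}$ and every $z\vert_{J_{l}}$ irreducible, then each $J_{l}$ is a $z$-separator (the identity $z(K)=\sum_{l}z(K\cap J_{l})$ expands $z(K)$ as $z(K\cap J_{l})+z(K\setminus J_{l})$) with no proper nonempty sub-separator, hence an atom of $\mathcal{B}$; so $\{J_{l}\}$ is the set of atoms, and the decomposition is unique.

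Finally, for the dimension criterion: if $P(z)$ is reducible, pick a nontrivial separator $I$; then $P(z)=P(z\vert_{I})\times P(z\vert_{I^{c}})$ with the factors lying in the sum-constant slices of $\mathbb{R}^{I}$ and $\mathbb{R}^{I^{c}}$, so $\dim P(z)\le(|I|-1)+(|I^{c}|-1)=|E|-2$. Conversely, suppose $\dim P(z)<|E|-1$; since $P(z)$ lies in the hyperplane $\langle m,e^{E}\rangle=z(E)$ there is $w\in\mathbb{R}^{E}\setminus\mathbb{R}e^{E}$ on which $\langle\cdot,w\rangle$ is constant over $P(z)$. As $P(z)$ is a generalized permutahedron each of its edges is parallel to some $e_{i}-e_{j}$, and the linear span of the edge directions of a polytope is the direction space of its affine hull, so $w$ is constant on each connected component of the edge-graph $\Gamma$ of $P(z)$ on vertex set $E$; since $w\notin\mathbb{R}e^{E}$, $\Gamma$ is disconnected. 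Choosing a connected component $C$ (so $\emptyset\neq C\neq E$), every edge direction of $P(z)$ pairs to $0$ with $e^{C}$, hence $\langle\cdot,e^{C}\rangle$ is constant on $P(z)$, i.e.\ $z(C)+z(C^{c})=z(E)$, so $C$ is a nontrivial separator and $P(z)$ is reducible. The step I expect to be the main obstacle is the implication $z(I)+z(I^{c})=z(E)\Rightarrow I$ is a $z$-separator: the bare equality is strictly weaker than the full separator identity, and upgrading it genuinely requires the polytopal input, namely Prop.~\ref{prop:supermodular-faces} together with the recovery formula $z(K)=\min_{m\in P(z)}\langle m,e^{K}\rangle$. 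Everything else is either a short manipulation of the supermodular inequalities or the standard fact that a polytope inside a sum-constant slice of $\mathbb{R}^{E}$ is full-dimensional there precisely when its edge-graph is connected.
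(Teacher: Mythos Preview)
The paper does not supply its own proof of this proposition: it is quoted verbatim from \cite[Thm.~3.38]{MR1095782} and no argument is given. So there is nothing in the paper to compare your approach against.

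That said, your argument is correct and self-contained within the framework the paper has set up. The key steps all go through: the recovery formula $z(K)=\min_{m\in P(z)}\langle m,e^{K}\rangle$ follows from Prop.~\ref{prop:supermodular-vertex-generation} exactly as you say; the three-way equivalence between ``$I$ is a $z$-separator'', ``$z(I)+z(I^{c})=z(E)$'', and ``$F_{e^{I}}P(z)=P(z)$'' is carefully established, with the nontrivial implication correctly routed through Prop.~\ref{prop:supermodular-faces}; the Boolean-algebra argument for $\mathcal{B}$ is a clean squeeze of the supermodular inequalities; and the dimension criterion via the edge-direction graph uses only the edge characterization of generalized permutahedra already recorded in the paper. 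The one place worth a sentence of extra care is the claim that the atoms of $\mathcal{B}$ partition $E$: this is true because $\mathcal{B}$ is a finite Boolean subalgebra of $2^{E}$ with top element $E$, so for each $e\in E$ the intersection of all members of $\mathcal{B}$ containing $e$ is an atom, and distinct atoms are disjoint. With that remark added, the uniqueness paragraph is complete.
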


\begin{col}\label{col:supermodular-facets}
  Suppose $P(z)$ is irreducible. A subset $I\subsetneq E$ defines a facet $F_{e^{I}}P(z)$ of $P(z)$
  if and only if $P(z\vert_{I})$ and $P(z/_{I})$ are both irreducible.
\end{col}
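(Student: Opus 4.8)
The plan is to combine the dimension criterion of Proposition~\ref{prop:supermodular-decomposition} with the face description of Proposition~\ref{prop:supermodular-faces}. Recall that $F_{e^{I}}P(z)\cong P(z\vert_{I})\times P(z/_{I})$, so this face has dimension $\dim P(z\vert_{I})+\dim P(z/_{I})$. Since $P(z\vert_{I})$ sits in the hyperplane $\langle m,e^{I}\rangle=z(I)$ inside $\mathbb{R}^{I}$ it has dimension at most $|I|-1$, and likewise $P(z/_{I})$ has dimension at most $|E\setminus I|-1$; hence $\dim F_{e^{I}}P(z)\le |E|-2$, with equality (i.e.\ $F_{e^{I}}P(z)$ is a facet, as $\dim P(z)=|E|-1$ by irreducibility) if and only if \emph{both} factors attain their maximal dimensions. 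By the last sentence of Proposition~\ref{prop:supermodular-decomposition}, $P(z\vert_{I})$ has maximal dimension $|I|-1$ exactly when it is irreducible, and similarly for $P(z/_{I})$. So $F_{e^{I}}P(z)$ is a facet if and only if $P(z\vert_{I})$ and $P(z/_{I})$ are both irreducible, which is the claim.

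The one gap I should fill: I must argue that \emph{every} facet of $P(z)$ is of the form $F_{e^{I}}P(z)$ for some $I\subsetneq E$. This follows because $P(z)$ is a generalized permutahedron, so its normal fan is a coarsening of $\Sigma_{\pi_{E}}$ (equivalently of $\Sigma_{G_{\max}}$); every ray of $\Sigma_{\pi_{E}}$ is spanned by some $e^{I}$ with $I\subsetneq E$, and the rays of the coarsened fan form a subset of these rays. A ray $\rho_{F}$ of the normal fan of $P(z)$ corresponds to a facet $F$, and if $\rho_{F}=\pos(e^{I})$ then $F=F_{e^{I}}P(z)$. Since $P(z)$ is full-dimensional ($|E|-1$) by irreducibility, no normalization subtleties arise: each facet really does pick out a single ray.

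I should also double-check that restriction and contraction behave correctly here: it was already noted in the excerpt that $z\vert_{I}$ and $z/_{I}$ are supermodular whenever $z$ is, so $P(z\vert_{I})$ and $P(z/_{I})$ are genuine generalized permutahedra on the ground sets $I$ and $E\setminus I$ respectively, and the notion of irreducibility applies to them. The only mild subtlety is the edge case $I=\emptyset$ or $I=E$, which is excluded by the hypothesis $I\subsetneq E$ (and $I\neq\emptyset$ is forced since $z(\emptyset)=0$ contributes nothing); for proper nonempty $I$ everything is as above.

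The main obstacle is really bookkeeping rather than a deep point: making sure the dimension count $\dim P(z\vert_{I})\le |I|-1$ is tight in exactly the claimed way, and that "facet of $P(z)$" is equivalent to "$F_{e^{I}}P(z)$ has codimension $1$" — both of which rest on the irreducibility hypothesis via Proposition~\ref{prop:supermodular-decomposition}. Once the translation to these dimension statements is made, the corollary is immediate.
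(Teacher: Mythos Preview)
Your proof is correct and is precisely the argument the paper intends: the corollary is stated without proof immediately after Propositions~\ref{prop:supermodular-faces} and~\ref{prop:supermodular-decomposition}, and your dimension count combining these two results is the natural (and only reasonable) way to extract it. Your additional remark that every facet arises as some $F_{e^{I}}P(z)$ via the coarsening of $\Sigma_{\pi_{E}}$ is a useful clarification that the paper leaves implicit.
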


Let us use the preceding results to construct a smooth refinement of $P(z)$. Consider the subset
system
\begin{equation*}
  \mathcal{\tilde G}_{z} = \{I\subseteq E \ \rvert\ P(z\vert_{I}) \text{ is irreducible }\} \subseteq 2^{E}
\end{equation*}

\begin{prop}\label{prop:supermodular-wonderful-refinement}
  $\mathcal{\tilde G}_{z}$ is a building set in $2^{E}$. If $P(z)$ is irreducible, then the fan $\Sigma_{\mathcal{G}_{z}}$
  associated to the reduced building set $\mathcal{G}_{z}=\mathcal{G}
  \backslash\{E\}$ is a smooth refinement of $\Sigma_{P(z)}$.
\end{prop}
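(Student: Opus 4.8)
The plan is to verify the building-set axioms for $\mathcal{\tilde G}_{z}$ directly from the structure theory of generalized permutahedra recalled above, and then to invoke the earlier results on toric wonderful models to conclude smoothness of the refinement. First I would check that $\mathcal{\tilde G}_{z}$ satisfies the combinatorial characterization of a building set in $2^{E}$ given in Example \ref{exam:building-set}: it must contain all singletons, and whenever $I_{1}, I_{2}\in\mathcal{\tilde G}_{z}$ with $I_{1}\cap I_{2}\neq\emptyset$ we need $I_{1}\cup I_{2}\in\mathcal{\tilde G}_{z}$. Singletons are there because a one-element base polyhedron $P(z\vert_{\{i\}})$ is a point, hence trivially irreducible (it has maximal dimension $|\{i\}|-1 = 0$ by Prop. \ref{prop:supermodular-decomposition}). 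For the union axiom, suppose $I_{1}\cap I_{2}\neq\emptyset$ but $I_{1}\cup I_{2}\notin\mathcal{\tilde G}_{z}$, i.e. $P(z\vert_{I_{1}\cup I_{2}})$ is reducible. By Prop. \ref{prop:supermodular-decomposition} applied to the supermodular function $z\vert_{I_{1}\cup I_{2}}$ on the ground set $I_{1}\cup I_{2}$, there is a nontrivial partition $I_{1}\cup I_{2} = \coprod_{k} J_{k}$ with $z\vert_{I_{1}\cup I_{2}} = \sum_{k} z\vert_{J_{k}}$. The key point is that irreducibility of $P(z\vert_{I_{1}})$ and $P(z\vert_{I_{2}})$ forces each of $I_{1}$ and $I_{2}$ to be contained in a single block $J_{k}$: a decomposition of the ambient ground set restricts to a decomposition of any subset, so if $I_{1}$ met two blocks it would itself be reducible. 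Since $I_{1}\cap I_{2}\neq\emptyset$, the block containing $I_{1}$ and the block containing $I_{2}$ must coincide, and being a single block that contains $I_{1}\cup I_{2}$ and is contained in $I_{1}\cup I_{2}$, it equals $I_{1}\cup I_{2}$ — so the partition was trivial, contradiction.

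Having established that $\mathcal{\tilde G}_{z}$ is a building set, the second half follows formally. When $P(z)$ is irreducible, $E\in\mathcal{\tilde G}_{z}$, and the reduced building set $\mathcal{G}_{z} = \mathcal{\tilde G}_{z}\setminus\{E\}\subseteq 2^{E}\setminus\{E\}$ is a building set in the sense of Example \ref{exam:building-set}. By the example on building-set Minkowski sums (the one discussing $P_{\mathcal{G}} = \sum_{I}\Delta_{I}$), the fan $\Sigma_{\mathcal{G}_{z}}$ is the fan of the iterated blowup $P^{\mathcal{G}_{z}}$, which by Corollary \ref{col:blowup-nested-sets} is smooth. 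It remains to check that $\Sigma_{\mathcal{G}_{z}}$ refines $\Sigma_{P(z)}$, equivalently (by Prop. \ref{prop:normal-fan-polytope} and the coarsening relations discussed for building sets) that $\Sigma_{P(z)}$ is a coarsening of $\Sigma_{\mathcal{G}_{z}}$. Since $P(z)$ is irreducible of full dimension $|E|-1$, its facets are exactly the $F_{e^{I}}P(z)$ for those $I$ with both $P(z\vert_{I})$ and $P(z/_{I})$ irreducible (Corollary \ref{col:supermodular-facets}); in particular each such $I$ lies in $\mathcal{\tilde G}_{z}$, so every ray of $\Sigma_{P(z)}$ is a ray of $\Sigma_{\mathcal{G}_{z}}$, and hence $\Sigma_{\mathcal{G}_{z}}$ — being built from $\Sigma_{\pi_E}$-type data that already refines $\Sigma_{P(z)}$ — refines $\Sigma_{P(z)}$. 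One can make this last clause precise by noting $\mathcal{G}_{z}$ sits between the minimal building set of singletons (whose fan is $\Sigma_{P(z)}$ when $P(z)$ is a Minkowski sum of the relevant simplices, or more directly by Prop. \ref{prop:common-refinement}) and $G_{max}$, and the containment of building sets induces refinement of fans as remarked after the Minkowski-sum example.

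The main obstacle I anticipate is the union axiom in the first paragraph — specifically, justifying cleanly that a Minkowski/base-polyhedron decomposition of the restriction $z\vert_{I_{1}\cup I_{2}}$ restricts to a decomposition of $z\vert_{I_{1}}$, so that irreducibility of $P(z\vert_{I_1})$ is genuinely obstructed. This needs the observation that $(z\vert_{I_1\cup I_2})\vert_{I_1} = z\vert_{I_1}$ and that $\sum_k z\vert_{J_k}$ restricted to $I_1$ is $\sum_k z\vert_{J_k\cap I_1}$, which is immediate from the definition of the restriction of a set function but should be spelled out. A secondary subtlety is the degenerate case where $P(z)$ is reducible: then $E\notin\mathcal{\tilde G}_z$ and the proposition only asserts $\mathcal{\tilde G}_z$ is a building set in $2^E$, which the first paragraph already covers without needing the top element. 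Everything else is bookkeeping with the cited lemmas of Fujishige (\cite{MR1095782}) and Feichtner--Kozlov (\cite{Feichtner_2004}).
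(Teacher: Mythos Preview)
Your building-set argument is fine and essentially matches the paper's one-line appeal to the unique decomposition of Prop.~\ref{prop:supermodular-decomposition}; you have simply unpacked what ``immediate'' means there.

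The refinement argument, however, has a real gap. Showing that every ray of $\Sigma_{P(z)}$ occurs among the rays of $\Sigma_{\mathcal{G}_{z}}$ does \emph{not} imply that $\Sigma_{\mathcal{G}_{z}}$ refines $\Sigma_{P(z)}$; ray containment between two complete fans is not the same as refinement once $\dim N_{E}\ge 3$. Your fallback via building-set containment does not help either: the chain $\{\text{singletons}\}\subseteq \mathcal{G}_{z}\subseteq G_{\max}$ only tells you that $\Sigma_{\pi_{E}}$ refines $\Sigma_{\mathcal{G}_{z}}$ refines $\Sigma_{E}$, and neither end of that chain is $\Sigma_{P(z)}$. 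The parenthetical identifying the singleton-fan with $\Sigma_{P(z)}$ is simply wrong --- the singleton building set gives the fan of projective space, not of $P(z)$.

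What the paper does instead is check refinement cone by cone. For a maximal nested set $\mathcal{I}\subset\mathcal{G}_{z}$ one uses Example~\ref{exam:nested-sets} to realise $\mathcal{I}$ as $\bigcup_{k}\max\mathcal{G}_{z}^{\subseteq J_{k}}$ for some maximal chain $\emptyset=J_{0}\subsetneq\cdots\subsetneq J_{n}=E$, and then takes the vertex $m\in P(z)$ produced by the greedy formula of Prop.~\ref{prop:supermodular-vertex-generation}, namely $m_{k}=z(J_{k})-z(J_{k-1})$. Because the $\mathcal{I}_{k}=\max\mathcal{G}_{z}^{\subseteq J_{k}}$ are the irreducible components of $J_{k}$, one has $\sum_{I\in\mathcal{I}_{k}}z(I)=z(J_{k})=\langle m,e^{J_{k}}\rangle=\sum_{I\in\mathcal{I}_{k}}\langle m,e^{I}\rangle$, and together with $\langle m,e^{I}\rangle\ge z(I)$ this forces equality for every $I\in\mathcal{I}$. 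By Prop.~\ref{prop:maximal-cones} this places the cone $\sigma_{\mathcal{I}}$ inside a maximal cone of $\Sigma_{P(z)}$. This is the step you are missing, and it is exactly where Prop.~\ref{prop:supermodular-vertex-generation} earns its place in the paper.
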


\begin{proof}
  The building set property is immediate from the unique decomposition of Prop.
  \ref{prop:supermodular-decomposition}.

  To prove that $\Sigma_{\mathcal{G}_{z}}$ is a refinement of $\Sigma_{P(z)}$, we must prove that
  for each maximal nested set $\mathcal{I}\subset \mathcal{G}_{z}$, there is $m\in P(z)$ such that
  $\langle m, e^{I} \rangle = z(I)$ for all $I\in\mathcal{I}$. By Example \ref{exam:nested-sets} we
  can find a maximal chain
  \begin{equation*}
    J_{0}=\emptyset\subsetneq J_{1}\subsetneq \ldots\subsetneq J_{n-1}\subsetneq J_{n}=E
  \end{equation*}
  such that $\mathcal{I}=\bigcup \mathcal{I}_{k}$, where $\mathcal{I}_{k}=\max
  \mathcal{G}_{z}^{\subseteq J_{k}}$. Let $m$ be the vertex of $P(z)$, defined by
  $m_{k}=z(J_{k})-z(J_{k-1})$. Since $\mathcal{I}_{k}$ is the decomposition of $J_{k}$ into
  irreducible components, we have
  \begin{equation*}
    \sum_{I\in\mathcal{I}_{k}} z(I) = z(J_{k}) = \langle m, e^{J_{k}} \rangle = \sum_{I\in\mathcal{I}_{k}} \langle m, e^{I} \rangle.
  \end{equation*}
  Since $m\in P(z)$, this equality is only possible if $\langle m, e^{I} \rangle=z(I)$ for all
  $I\in\mathcal{I}_{k}$.
\end{proof}

\section{Multivariate Mellin transforms}
\label{sec:analyt-dimens-regul}

Let us apply the theory of toric varieties to the investigation of Mellin transforms of Laurent
polynomials. It was shown in (\cite{Nilsson_2011}, \cite{Berkesch_2014}), that the convergence
properties of these transforms are controlled by the Newton polytopes of the rational functions. We
supply an alternative proof of their results by considering certain toric compactification
associated to the Newton polytopes, which make the possible singularities apparent.
This gives a precise characterisation of the convergence domain.

For application to dimensional regularization, we will also review their construction of the
meromorphic extension.
In the last part of this section, we will show that the geometric sector decomposition strategy of
Kaneko and Ueda (\cite{Kaneko_2010}) is equivalent to the construction of these compactifications.

\paragraph{Mellin transforms.}
\label{sec:gener-mell-transf}

As in the previous section, let $N$ be a lattice of finite rank, $M$ its dual lattice and $T_{N}$
the associated complex torus. Let
\begin{equation*}
  f(z)=\sum_{m\in A}a_{m}t^{m}\in \mathcal{O}(T_{N})
\end{equation*}
be a Laurent polynomial on $T_{N}$, where $A\subset \mathbb{Z}^{n}$ is a finite subset such that
$a_{m}\neq 0$ for $m\in A$. Its \emph{Newton} polytope is the convex hull
\begin{equation*}
  P(f) = \Conv(A)\subset M_{\mathbb{R}}.
\end{equation*}
Note that $P(f\cdot g)=P(f)+P(g)$.

Suppose $f_{1},\ldots,f_{k}$ are a collection of Laurent polynomials as above. Assume that all
non-vanishing coefficients of the $f_{i}$ are contained in an open, strongly convex polyhedral cone
$U\subset \mathbb{C}^{N_{i}}$. Then the complex powers $f_{i}(t)^{c_{i}}$ are well-defined for every
$c_{i}\in \mathbb{C}$, $t\in T_{N}(\mathbb{R}^{+})$ and fixed choice of branch of $w\mapsto
w^{c_{i}}=e^{c_{i}\log(w)}$. These conditions imply that each $f_{i}$ is totally non-vanishing on
$T_{N}(\mathbb{R}^{+})$ in the sense of \cite{Nilsson_2011}.

Choose a $\mathbb{Z}$-basis of $N$ such that $N\cong \mathbb{Z}^{n}$ and inducing isomorphisms
$M\cong \mathbb{Z}^{n}, T_{N}\cong (\mathbb{C}^{*})^{n}$. Let
\begin{equation*}
  \Log: T_{N} \rightarrow N_{\mathbb{C}}, \quad t\mapsto (\log(t_{1}),\ldots,\log(t_{n}))
\end{equation*}
be the componentwise logarithm. For $s\in M_{\mathbb{C}}\cong \mathbb{C}^{n}$, we define the
multivalued monomial
\begin{align*}
  t^{s} =  \prod_{i=1}^{n}t_{i}^{s_{i}} =  e^{\langle s, \Log t \rangle},
\end{align*}
These definitions are clearly independent of the choice of basis. Let us also denote by
\begin{equation*}
  \frac{\df t}{t} = \frac{\df t_{1}}{t_{1}} \wedge \ldots \wedge \frac{\df t_{n}}{t_{n}}
\end{equation*}
the holomorphic $T_{N}$-invariant volume form, which is independent of the choices up to sign, i.e.
up to the choice of an orientation of $N$.

We are interested in the analytic properties of the multivariate Mellin transform
\begin{equation*}
  \mathcal{M}(f_{i},s,c)=\int_{T_{N}(\mathbb{R^{+}})}t^{s}\prod_{i=1}^{k} f_{i}(t)^{-c_{i}}
  \frac{\df t}{t}.
\end{equation*}

We will see that the convergence properties of the above integral are governed by the polytope
\begin{equation*}
  P=P(f_{1})+\ldots+P(f_{k})=P(f_{1}\cdots f_{k}).
\end{equation*}
For a weight vector $u\in N_{\mathbb{R}}$ and Laurent polynomial $h=\sum_{m}h_{m}t^{m}$, let
\begin{equation*}
  d_{u}(h) = \min_{m\in P(h)}\langle m, u \rangle.
\end{equation*}
If $u=u_{\rho}\in \mathbb{Z}$ is the lattice generator of a rational ray $\rho\subset
N_{\mathbb{R}}$, then we define $d_{\rho}(h)=d_{u_{\rho}}(h)$. For $c\in \mathbb{C}^{k}$ let us also
set
\begin{equation*}
  d_{u}(c) = \sum_{i=1}^{k}c_{i}d_{u}(f_{i})
\end{equation*}
and $d_{\rho}(c)=d_{u_{\rho}}(c)$.

Let $\Sigma_{P}$ be the normal fan of the Newton polytope $P=P(f_{1}\cdots f_{n})$.
and denote by $\Lambda(f_{i})\subseteq M_{\mathbb{C}}\times
  \mathbb{C}^{k}$ the region of pairs $(s,c)\in M_{\mathbb{C}}\times \mathbb{C}^{k}$ satisfying
  \begin{equation*}
    \Real\langle s, u_{\rho} \rangle > \Real d_{\rho}(c)
  \end{equation*}
  for all $\rho\in\Sigma_{P}(1)$.

\begin{thm}\label{thm:mellin-convergence}
  If the polytope $P$ is full-dimensional then $\Lambda(f_{i})$
  is nonempty and the Mellin
  transform converges for $(s,c)\in M_{\mathbb{C}}\times \mathbb{C}^{k}$ if and only if $(s,c)\in\Lambda(f_{i})$.

  Conversely, if $P$ is not full-dimensional then the Mellin transform is not absolutely convergent
  for any choice of $(s,c)\in M_{\mathbb{C}}\times \mathbb{C}^{k}$.
\end{thm}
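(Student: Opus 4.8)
The plan is to reduce the convergence question to a sum of local integrals over the cones of a smooth complete toric variety $X_\Sigma$ refining $\Sigma_P$, and there use the sector factorization in Proposition~\ref{prop:differential-form}. First I would pick a smooth complete fan $\Sigma$ refining $\Sigma_P$; this exists by Theorem~\ref{thm:simplicial-refinement} together with the star-subdivision techniques of the toric section (subdivide $\Sigma_P$ to a smooth fan, keeping completeness). Since $\Sigma$ refines $\Sigma_P$, every $f_i$ and the product $f=f_1\cdots f_k$ extend nicely: the key point is that along each ray $\rho\in\Sigma(1)$ the order of vanishing of $f_i$ is exactly $d_\rho(f_i)$, because $d_\rho$ is computed on the Newton polytope and $\Sigma$ refines the normal fan of $P(f)$ so no cancellation occurs at the corresponding torus-fixed strata. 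Concretely, on the affine chart $U_\sigma\cong [0,\infty)^n$ attached to a maximal cone $\sigma=\pos(u_{\rho_1},\dots,u_{\rho_n})$ with coordinates $x_1,\dots,x_n$, one gets, using Proposition~\ref{prop:differential-form} and the description of $\frac{\mathrm dt}{t}$ and $t^s$ in toric coordinates,
\begin{equation*}
  \mathcal M(f_i,s,c)\Big|_{U_\sigma} = \int_{[0,\infty)^n} \prod_{j=1}^n x_j^{\langle s,u_{\rho_j}\rangle - d_{\rho_j}(c) - 1}\, G_\sigma(x)\, \mathrm dx,
\end{equation*}
where $G_\sigma(x) = \prod_i \big(x^{-d_{\rho_\bullet}(f_i)} f_i(x)\big)^{-c_i}$ times the Jacobian factor $u_\sigma$, and $G_\sigma$ is continuous and nonvanishing on a neighborhood of the cube $[0,1]^n$ because the closures $V(f_i)$ miss $X_\Sigma(\mathbb R^+)$ (this is exactly the "sector form" \eqref{eq:1}).

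Next I would cover $T_N(\mathbb R^+)$ — or rather its closure $X_\Sigma(\mathbb R^+)$, which has full measure complement — by the cubes $I_\sigma=[0,1]^n$ over maximal cones $\sigma$, which overlap in measure zero. On each cube $G_\sigma$ is bounded above and below by positive constants, so the integral over $I_\sigma$ converges iff $\int_{[0,1]^n}\prod_j x_j^{\Real\langle s,u_{\rho_j}\rangle - \Real d_{\rho_j}(c)-1}\,\mathrm dx$ converges, i.e. iff $\Real\langle s,u_{\rho_j}\rangle > \Real d_{\rho_j}(c)$ for every $j$. Taking the union over all maximal cones, and noting every ray of $\Sigma$ lies in some maximal cone, absolute convergence of the whole Mellin integral is equivalent to these inequalities holding for all $\rho\in\Sigma(1)$. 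The last thing to check is that this recovers the condition for the rays of $\Sigma_P$ rather than of the finer $\Sigma$: a ray $\rho'\in\Sigma(1)\setminus\Sigma_P(1)$ has its generator $u_{\rho'}$ lying in the relative interior of some cone $\sigma_Q\in\Sigma_P$, hence $u_{\rho'}=\sum \lambda_i u_{\rho_i}$ with $\lambda_i>0$ over the rays $\rho_i$ of $\Sigma_P$ spanning $\sigma_Q$; since $d_\bullet(c)$ and $\langle s,\bullet\rangle$ are both linear (piecewise-linear, linear on each cone of $\Sigma_P$), the inequality at $\rho'$ is implied by those at the $\rho_i$. So $\Lambda(f_i)$ as defined via $\Sigma_P(1)$ is the exact convergence domain, and it is nonempty (a full-dimensional $P$ has at least $n+1$ rays, and one checks the defining system is consistent, e.g. by choosing $s$ deep inside $-d_\bullet(c)$-shifted interior — or simply noting $\Lambda$ contains a translate of the recession behavior governed by the interior of the dual description).

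For the converse, suppose $P$ is not full-dimensional, so it lies in an affine hyperplane $\{m : \langle m, u_0\rangle = \text{const}\}$ for some nonzero $u_0\in N$; then both $u_0$ and $-u_0$ are "valid" weight directions with $d_{u_0}(f_i) + d_{-u_0}(f_i) = 0$ for each $i$ (the min and max over $P(f_i)$ in direction $u_0$ coincide up to sign since $P(f_i)$ is flat in that direction — more precisely $\min_{m\in P(f_i)}\langle m,u_0\rangle = -\min_{m\in P(f_i)}\langle m,-u_0\rangle$). Restricting the integral to the one-parameter subgroup direction $u_0$: writing $t = t' \cdot (u_0\otimes r)$ and integrating out $r\in(0,\infty)$, the integrand behaves like $r^{\Real\langle s,u_0\rangle - \Real d_{u_0}(c)}$ as $r\to 0$ and like $r^{\Real\langle s,u_0\rangle - \Real d_{u_0}(c)}$ (with the same exponent, since $d_{-u_0}(c) = -d_{u_0}(c)$) as $r\to\infty$ — one of $r\to 0$, $r\to\infty$ necessarily diverges for any value of $\Real\langle s,u_0\rangle$, so the $r$-integral, hence $\mathcal M$, never converges absolutely.

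I expect the main obstacle to be the bookkeeping in the first paragraph: verifying rigorously that in the toric chart $U_\sigma$ the pullback of $t^s\prod f_i^{-c_i}\frac{\mathrm dt}{t}$ has precisely the claimed monomial-times-nonvanishing-regular-function form, with the exact exponents $\langle s,u_{\rho_j}\rangle - d_{\rho_j}(c) - 1$, and that the "regular nonvanishing on a neighborhood of $[0,1]^n$" claim for $G_\sigma$ is uniform enough to make the constant-comparison argument valid simultaneously over the finitely many charts. The nonvanishing of $G_\sigma$ on $X_\Sigma(\mathbb R^+)$ is where the hypothesis on the coefficient cone $U$ (total non-vanishing of each $f_i$ on $T_N(\mathbb R^+)$) and the refinement condition $\Sigma \succeq \Sigma_P$ are both essential, and packaging this cleanly — perhaps as a separate lemma stating $\overline{V(f_i)}\cap X_\Sigma(\mathbb R^+)=\emptyset$ iff $\Sigma$ refines $\Sigma_{P(f)}$ — is the technical heart.
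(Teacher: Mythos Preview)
Your proposal is correct and follows essentially the same route as the paper: pass to a toric compactification refining $\Sigma_P$, use the sector factorization to read off local integrability conditions ray by ray, and for the degenerate case exploit that each $P(f_i)$ lies in a hyperplane orthogonal to some $u_0$ so the one-parameter integral in that direction is scale-invariant and diverges. The separate lemma you anticipate at the end---$\overline{V(f)}\cap X_\Sigma(\mathbb R^+)=\emptyset$ iff $\Sigma$ refines $\Sigma_{P(f)}$---is exactly what the paper isolates as Proposition~\ref{prop:simplicial-refinement}.

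The one place where the paper is slightly slicker: rather than taking an arbitrary smooth refinement and then arguing, via piecewise linearity of $u\mapsto d_u(c)$ on cones of $\Sigma_P$, that the extra ray inequalities are redundant, the paper invokes Theorem~\ref{thm:simplicial-refinement} to choose a \emph{simplicial} refinement $\Sigma$ with $\Sigma(1)=\Sigma_P(1)$. This makes the convergence conditions come out directly indexed by $\Sigma_P(1)$, and the orbifold charts are handled with the $|G_\sigma|^{-1}$ normalization. Your redundancy argument is fine (and would be needed anyway if one insisted on a smooth, rather than merely simplicial, model), but the paper's choice saves that paragraph.
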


\begin{remark}
  For $r=(r_{1},\ldots,r_{k})\in (0,\infty)^{k}$, let
  \begin{equation*}
    P(r) = r_{1}P_{f_{1}}+\ldots+r_{k}P_{f_{n}}\subset M_{\mathbb{R}}
  \end{equation*}
  be the Minkowski sum of the scaled Newton polytopes. This polytope is full-dimensional if
  $P=P(1,\ldots,1)$ is full-dimensional. Then the interior of $P(r)$ is non-empty and $P(r)$ has the
  facet presentation
  \begin{equation*}
    P(r) = \bigcap_{\rho\in\Sigma_{P}(1)}\{\langle m, u_{\rho} \rangle\ge d_{\rho}(r)\}.
  \end{equation*}
  Hence the region $\Lambda(f_{i})$ contains the set
  \begin{equation*}
    \{(s,c)\in M_{\mathbb{C}}\times \mathbb{C}^{k}\ \rvert\ \Real(c)\in (0,\infty)^{k}, \Real(s)\in \Int(P(\Real(c)))\}.
  \end{equation*}
  This recovers the corresponding results of (\cite{Nilsson_2011},\cite{Berkesch_2014}).
\end{remark}

The basic idea of the proof is to find a compactification $X_{\Sigma}$ of $T_{N}$ such that the
strict transforms $\overline{V(f_{i})}$ do not intersect the real locus
$X_{\Sigma}(\mathbb{R}^{+})$. The convergence of the integral then essentially reduces to the
absence of poles along the divisor $D_{\Sigma}=X_{\Sigma}\backslash T_{N}$.

\begin{prop}\label{prop:simplicial-refinement}
  Suppose $f=\sum_{m\in A}a_{m}t^{m}$ is a Laurent polynomial on $T_{N}$, such that the coefficients
  $a_{m}$ are contained in a strongly convex polyhedral cone $U\subset \mathbb{C}^{A}$. Let
  $X_{\Sigma}$ be a complete, simplicial toric variety with torus $T_{N}$. Then the following are
  equivalent:
  \begin{enumerate}
  \item The fan $\Sigma$ is a refinement of the (possibly degenerate) normal fan $\Sigma_{P(f)}$ of
    the Newton polytope of $f$.
  \item The closure of the zero set $\overline{ V(f) }$ of $f$ does not intersect the real positive
    locus $X_{\Sigma}(\mathbb{R}^{+})$.
  \end{enumerate}
\end{prop}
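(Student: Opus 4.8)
The plan is to work in the homogeneous coordinates of $X_\Sigma$ provided by Theorem \ref{thm:geometric-quotient}: write $\pi: \mathbb{C}^{\Sigma(1)}\setminus Z_\Sigma \to X_\Sigma$ for the quotient map and let $F = \pi^* f \in S_\Sigma$ be the homogenization of $f$, so that $F = \sum_{m\in A} a_m \prod_\rho x_\rho^{\langle m, u_\rho\rangle}$. Since the closure $\overline{V(f)}$ is the image under $\pi$ of $V(F)\setminus Z_\Sigma$, and $\pi$ restricts to a surjection onto $X_\Sigma(\mathbb{R}^+)$ from the real-positive locus in $\mathbb{C}^{\Sigma(1)}\setminus Z_\Sigma$ (i.e. the points with all $x_\rho \ge 0$, minus $Z_\Sigma$), condition (2) is equivalent to: $F$ has no zero on the nonnegative orthant outside $Z_\Sigma$. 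It is convenient to factor out the monomial part: pick any vertex $m_0$ of $P(f)$ and write $F = \big(\prod_\rho x_\rho^{\langle m_0, u_\rho\rangle}\big)\cdot \widetilde F$, where $\widetilde F = \sum_{m\in A} a_m \prod_\rho x_\rho^{\langle m - m_0, u_\rho\rangle}$ is an honest polynomial (all exponents $\langle m-m_0, u_\rho\rangle \ge 0$ because $u_\rho\in|\Sigma| = N_\mathbb{R}$ lies in some cone of $\Sigma_{P(f)}$ only when $\Sigma$ refines $\Sigma_{P(f)}$ — this is precisely where (1) enters). So I will analyze when $\widetilde F$ vanishes on the nonnegative orthant.

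For the implication (1) $\Rightarrow$ (2): assume $\Sigma$ refines $\Sigma_{P(f)}$, so $d_\rho(f) = \langle m_0', u_\rho\rangle$ for a vertex $m_0'$ depending on $\rho$, and in particular every exponent in $\widetilde F$ is nonnegative; moreover for each maximal cone $\sigma\in\Sigma(n)$ there is, by Proposition \ref{prop:maximal-cones}, a unique vertex $m_\sigma\in P(f)$ minimizing $\langle m_\sigma, u_\rho\rangle$ simultaneously for all $\rho\in\sigma(1)$. Dehomogenizing on $U_\sigma$ (set $x_\rho = 1$ for $\rho\notin\sigma(1)$), the polynomial $\widetilde F|_{U_\sigma}$ has a \emph{constant term} $a_{m_\sigma}\ne 0$ coming from that unique minimizing vertex (all other monomials of $\widetilde F$ become divisible by some $x_\rho$, $\rho\in\sigma(1)$, of positive degree). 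Now evaluate at a nonnegative real point: all monomials $\prod_\rho x_\rho^{e_\rho}$ with $x_\rho \ge 0$ take values in the closed cone $\overline{\mathbb{R}^+\cdot U}$ — wait, more carefully: since all coefficients $a_m$ lie in the open strongly convex cone $U\subset\mathbb{C}^A$ and the monomials evaluated at nonnegative reals are nonnegative reals, each term $a_m\prod x_\rho^{e_\rho}$ lies in the closed cone $\overline U$, and a finite sum of elements of a strongly convex cone, with the constant-term summand $a_{m_\sigma}$ strictly inside (and nonzero), cannot vanish — strong convexity of $U$ forbids $0$ from being a nontrivial nonnegative combination. This rules out zeros of $\widetilde F$ on the real-positive locus of every chart $U_\sigma(\mathbb{R}^+)$, and these charts cover $X_\Sigma(\mathbb{R}^+)$, giving (2).

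For the converse (2) $\Rightarrow$ (1), I would argue contrapositively: suppose $\Sigma$ does \emph{not} refine $\Sigma_{P(f)}$. By Proposition \ref{prop:normal-fan-polytope} this means some ray $\rho\in\Sigma(1)$, with generator $u = u_\rho$, has $F_u P(f)$ a positive-dimensional face $Q$ of $P(f)$ rather than a vertex (the weight $u$ does not lie in the relative interior of any cone of $\Sigma_{P(f)}$ matching a face of $\Sigma$ — more precisely, some cone of $\Sigma$ is not contained in a cone of $\Sigma_{P(f)}$, and one extracts from it such a ray or low-dimensional face). The face $Q$ supports the ``leading part'' $f_Q = \sum_{m\in A\cap Q} a_m t^m$, a Laurent polynomial whose Newton polytope $Q$ has dimension $\ge 1$; after translating by a vertex of $Q$ this becomes a genuine polynomial that is non-constant, and restricting to the one-parameter subgroup directions spanning $Q$ one can look for a zero. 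The cleanest route: the orbit closure $V(\tau)\subset X_\Sigma$ of a suitable face $\tau\succeq\rho$ is itself a toric variety (Theorem \ref{sec:orbit-cone-corr}), and the restriction of $F$ to it is, up to a monomial, the homogenization of $f_Q$ on that smaller toric variety; since $f_Q$ is non-constant on its torus, it has zeros, and — \emph{this is the point I expect to be the main obstacle} — one must show at least one such zero can be taken with \emph{nonnegative real} coordinates, i.e. $f_Q$ actually vanishes somewhere on $V(\tau)(\mathbb{R}^+)$. This requires exploiting that $f_Q$, being the restriction of $f$, still has coefficients in the strongly convex cone $U$ but now spans a Newton polytope of full dimension in the smaller lattice, so one is reduced to the core case of the theorem: a polynomial whose Newton polytope is full-dimensional in its ambient torus, with coefficients in a strongly convex cone, must vanish on the real-positive locus of \emph{some} compactification — and in fact (by a degree/intermediate-value argument along a generic positive one-parameter subgroup, or by noting that the moment map image of $V(f_Q)$ meets the interior of $Q$) it vanishes on the real-positive locus of the standard toric variety $X_Q$, whose image in $X_\Sigma$ lands in $V(\tau)(\mathbb{R}^+)\subset X_\Sigma(\mathbb{R}^+)$, contradicting (2). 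Establishing this last existence-of-a-positive-zero statement rigorously — separating the ``full-dimensional Newton polytope'' hypothesis from degenerate directions — is where the real work lies; everything else is bookkeeping with homogeneous coordinates and the orbit-cone correspondence.
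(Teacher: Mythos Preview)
Your forward direction (1) $\Rightarrow$ (2) is essentially the paper's argument: work chart-by-chart over maximal cones $\sigma$, invoke Proposition~\ref{prop:maximal-cones} to obtain a common minimising vertex $m_\sigma$, factor $f = x^{m_\sigma} f_\sigma$ with $f_\sigma$ regular on $U_\sigma$, and use strong convexity of the coefficient cone to see that $f_\sigma$ cannot vanish on $U_\sigma(\mathbb{R}^+)$. (Your global homogenisation with a \emph{single} vertex $m_0$ is not quite right---different rays $\rho$ have different minimising vertices, so $\widetilde F$ is not a polynomial in general---but once you dehomogenise to a single chart this is exactly what the paper does.)

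For the converse you are working much too hard, and the route you sketch does not close. Your plan is to find a stratum $V(\tau)$ on which the face polynomial $f_Q$ has a positive-real zero; but a polynomial with coefficients in a strongly convex cone need have no such zero. Take $f = 1 + t_1 + t_2$ with $\Sigma$ the fan of $\mathbb{P}^1\times\mathbb{P}^1$: the only maximal cone not contained in a cone of $\Sigma_{P(f)}$ is $\pos(-e_1,-e_2)$, yet every face polynomial $f_Q$ has strictly positive coefficients and vanishes nowhere on the positive-real locus of its own torus or compactification. The intersection point of $\overline{V(f)}$ with $X_\Sigma(\mathbb{R}^+)$ in this example is the torus-fixed point $(\infty,\infty)$, which your face-polynomial argument does not detect.

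The paper's argument for (2) $\Rightarrow$ (1) is a single observation you have missed. Let $\sigma = \pos(u_1,\ldots,u_n)\in\Sigma(n)$ not be contained in any cone of $\Sigma_{P(f)}$. In the $\sigma$-chart, factor out the maximal monomial to write $f(x) = x^{\tilde m}\tilde f(x)$, where $\tilde f$ is the polynomial defining $\overline{V(f)}\cap U_\sigma$. By Proposition~\ref{prop:maximal-cones}, the hypothesis on $\sigma$ means precisely that no single $m\in A$ simultaneously minimises $\langle m,u_i\rangle$ for all $i$; equivalently, $\tilde f$ has \emph{no constant term}. Hence $\tilde f(0)=0$, so the torus-fixed point $\gamma_\sigma\in U_\sigma(\mathbb{R}^+)$ lies in $\overline{V(f)}$. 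That is the whole argument: the intersection point you are looking for is always the distinguished point of the offending cone, and no analysis of zeros of face polynomials is needed.
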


\begin{proof}[Proof of Prop. \ref{prop:simplicial-refinement}]
  The variety $X_{\Sigma}$ is covered by the orbifold charts
  $U_{\sigma}=\mathbb{C}^{\sigma(1)}//G_{\sigma}$, where
  $\sigma=\pos(u_{1},\ldots,u_{n})\in\Sigma(n)$ is a maximal cone. The Laurent monomials are
  expressed in the coordinates $x_{i}$ of $U_{\sigma}$ as
  \begin{equation*}
    t^{m} = \prod_{i=1}^{n}x_{i}^{\langle m,u_{i}  \rangle} =: x^{m}.
  \end{equation*}
  
  Suppose $\sigma$ is contained in a maximal cone of $\Sigma_{P}$. By Prop. \ref{prop:maximal-cones},
  there is $m_{\sigma}\in P(f)$ with
  \begin{equation*}
    \langle m_{\sigma}, u_{i} \rangle = \min_{m\in P(f)}\langle m, u_{i} \rangle,
  \end{equation*}
  for all $i=1,\ldots,n$. The Laurent polynomial $f$ is then expressed in these coordinates as
  \begin{align*}
    f(x) &= \sum_{m\in A}a_{m}x^{m}  = x^{m_{\sigma}}\left(a_{m_{\sigma}}+
           \sum_{m\in A \backslash\{m_{\sigma}\}}a_{m}x^{m_{\sigma}} \right) \\
         &=: x^{m_{\sigma}}f_{\sigma}(x)
  \end{align*}
  The polynomial $f_{\sigma}(x)$ is regular and non-vanishing on $U_{\sigma}(\mathbb{R}^{+})$. It
  follows that
  \begin{equation*}
    \overline{V(f)}\cap U_{\sigma}(\mathbb{R}^{+}) = V(f_{\sigma})\cap U_{\sigma}(\mathbb{R}^{+})
    =\emptyset.
  \end{equation*}
  Conversely, suppose $\overline{V(f)}\cap U_{\sigma}(\mathbb{R}^{+})$ is empty. The Zariski closure
  $\overline{V(f_{i})}\subset U_{\sigma}$ is described by a polynomial $\tilde f\in
  k[x_{1},\ldots,x_{n}]$ such that $f(x)=x^{\tilde m}\tilde f(x)$, i.e. $\tilde f$ has the form
  \begin{equation*}
    \tilde f(x) = \sum_{m\in A}a_{m}x^{m-\tilde m}.
  \end{equation*}
  Since $\tilde f$ is regular on $\mathbb{C}^{n}$, we must have $\langle m-\tilde m, u_{i}
  \rangle\ge 0$ for all $m\in A$ and $i=1,\ldots,k$. The intersection $V(\tilde f)\cap
  U_{\sigma}(\mathbb{R}^{+})$ can only be empty if there is at least one $m\in A$, such that
  $\langle m-\tilde m, u_{i} \rangle=0$ for all $i$. But this would imply that $\langle m, u_{i}
  \rangle$ is minimal for all $i=1,\ldots,n$ and Prop. \ref{prop:maximal-cones} shows that $\sigma$
  is contained in a maximal cone of $\Sigma_{P(f)}$.

  Since the open sets $U_{\sigma}$ cover $X_{\Sigma}$, we have shown that
  \begin{equation*}
    \overline{V(f)}\cap
    X_{\Sigma}(\mathbb{R}^{+})=\emptyset
  \end{equation*}
  if and only if every maximal cone of $\Sigma$ is contained in a cone of $\Sigma_{P}$, which means
  that $\Sigma$ is a refinement of $\Sigma_{P}$.
\end{proof}

By Prop. \ref{prop:common-refinement}, the normal fan $\Sigma_{P}$ of $P$ is the coarsest common
refinement of the normal fans of the Newton polytopes $P(f_{i})$. Suppose $\Sigma$ is a simplicial
refinement of $\Sigma_{P}$. The above proposition shows that $\overline{V(f_{i})}\cap
X_{\Sigma}(\mathbb{R}^{+})=\emptyset$ for all $i$. By (a straightforward generalization of) Prop.
\ref{prop:differential-form} we can express the Mellin transform as
\begin{equation*}
  \mathcal{M}(f_{i},s,c) = \int_{X_{\Sigma}(\mathbb{R}^{+})}
  \prod_{\rho\in\Sigma(1)}x_{\rho}^{\langle s, u_{\rho} \rangle -1}\prod_{i=1}^{k}F_{i}(x)^{-c_{i}} \Omega_{\Sigma},
\end{equation*}
where $F_{i}(x)$ is the homogenization of $f_{i}$.

\begin{remark}
  The above integral should be understood in the orbifold sense. More precisely, let
  $U_{\sigma}=\mathbb{C}^{n}//G_{\sigma}$ be an orbifold chart and $\alpha$ a compactly supported,
  $G_{\sigma}$-invariant $n$-form on $\mathbb{C}^{n}$, absolutely integrable on $(\mathbb{R}^{+})^{n}$. We
  define the integral over $U_{\sigma}(\mathbb{R}^{+})$ as
  \begin{equation*}
    \int_{U_{\sigma}(\mathbb{R}^{+})}\alpha = \frac{1}{|G_{\sigma}|}\int_{(\mathbb{R}^{+})^{n}}\alpha.
  \end{equation*}
  For a general analytic section $\alpha$ of $\omega_{\Sigma}$ which is locally integrable on
  $X_{\Sigma}(\mathbb{R}^{+})$, we define
  \begin{equation*}
    \int_{X_{\Sigma}(\mathbb{R}^{+})}\alpha = \sum_{\sigma\in\Sigma(n)}\int_{U_{\sigma}(\mathbb{R}^{+})}\rho_{\sigma}\alpha,
  \end{equation*}
  where $(\rho_{\sigma})$ is a smooth partition of unity subordinate to the cover
  $(U_{\sigma})_{\sigma\in\Sigma(n)}$. See \cite[Section 2.1]{Adem_2007} for further details.
\end{remark}

\begin{prop}
  Let $\Sigma$ be a simplicial refinement of $\Sigma_{P}$. Then the above integral converges
  absolutely if and only if
  \begin{equation*}
    \Real\langle s, u_{\rho} \rangle > \Real d_{\rho}(c),
  \end{equation*}
  for all $\rho\in\Sigma(1)$.
\end{prop}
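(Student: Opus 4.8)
The plan is to reduce the absolute convergence of the global integral to an elementary one‑variable power‑integral computation carried out in each of the finitely many affine orbifold charts $U_{\sigma}$, $\sigma\in\Sigma(n)$. Writing $\alpha$ for the integrand, note that $|\alpha|$ is nonnegative, that $X_{\Sigma}(\mathbb{R}^{+})$ is covered by the cubes $I_{\sigma}=[0,1]^{n}\subset U_{\sigma}(\mathbb{R}^{+})$, and that these cubes pairwise intersect in a set of measure zero; since $\Sigma$ refines the complete fan $\Sigma_{P}$ it is itself complete and simplicial, so $\Sigma(n)$ is finite and each maximal cone has exactly $n$ rays. As the orbifold normalisation $1/|G_{\sigma}|$ is an irrelevant positive constant, this gives
\begin{equation*}
  \int_{X_{\Sigma}(\mathbb{R}^{+})}|\alpha| = \sum_{\sigma\in\Sigma(n)}\frac{1}{|G_{\sigma}|}\int_{[0,1]^{n}}|\alpha|,
\end{equation*}
so the integral converges absolutely if and only if $\int_{[0,1]^{n}}|\alpha|<\infty$ in every chart.

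Next I would compute $|\alpha|$ in the chart of a maximal cone $\sigma=\pos(u_{1},\dots,u_{n})$, with $u_{1},\dots,u_{n}$ the minimal lattice generators of the rays of $\sigma$. Because $\Sigma$ refines $\Sigma_{P}$, which by Prop.~\ref{prop:common-refinement} is the coarsest common refinement of the $\Sigma_{P(f_{i})}$, the fan $\Sigma$ refines each $\Sigma_{P(f_{i})}$; hence, exactly as in the proof of Prop.~\ref{prop:simplicial-refinement}, Prop.~\ref{prop:maximal-cones} produces vertices $m_{i,\sigma}\in P(f_{i})$ with $\langle m_{i,\sigma},u_{j}\rangle=d_{u_{j}}(f_{i})$ for all $j$, and the dehomogenisation of $F_{i}$ factors as $x^{m_{i,\sigma}}f_{i,\sigma}(x)$ with $f_{i,\sigma}$ regular and non‑vanishing on $U_{\sigma}(\mathbb{R}^{+})$. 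Dehomogenising $\Omega_{\Sigma}$ to $u_{\sigma}\,\df x_{1}\wedge\dots\wedge \df x_{n}$ and taking absolute values, one finds that on $I_{\sigma}$
\begin{equation*}
  |\alpha| = |u_{\sigma}|\,\Big(\textstyle\prod_{i}|f_{i,\sigma}(x)^{-c_{i}}|\Big)\prod_{j=1}^{n}x_{j}^{\,\Real\langle s,u_{j}\rangle-1-\Real d_{u_{j}}(c)}\,\df x,
\end{equation*}
where the exponent of $x_{j}$ follows from $\sum_{i}c_{i}\langle m_{i,\sigma},u_{j}\rangle=\sum_{i}c_{i}d_{u_{j}}(f_{i})=d_{u_{j}}(c)$, and the factor $\prod_{i}|f_{i,\sigma}(x)^{-c_{i}}|$ is continuous and strictly positive on the compact cube $I_{\sigma}$, hence bounded above and below by positive constants.

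It then remains to invoke Tonelli's theorem: $\int_{[0,1]^{n}}\prod_{j}x_{j}^{a_{j}}\,\df x=\prod_{j}\int_{0}^{1}x_{j}^{a_{j}}\,\df x_{j}$ is finite if and only if $a_{j}>-1$ for every $j$. Combined with the preceding estimate, $\int_{[0,1]^{n}}|\alpha|<\infty$ if and only if $\Real\langle s,u_{j}\rangle>\Real d_{u_{j}}(c)$ for $j=1,\dots,n$. As $\sigma$ runs over $\Sigma(n)$ the vectors $u_{j}$ run exactly over the minimal ray generators $u_{\rho}$, $\rho\in\Sigma(1)$, and $d_{u_{\rho}}(c)=d_{\rho}(c)$ by definition, so the global integral converges absolutely precisely when $\Real\langle s,u_{\rho}\rangle>\Real d_{\rho}(c)$ for all $\rho\in\Sigma(1)$, which is the claim.

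I expect the only genuine subtlety to be the bookkeeping of the first paragraph — justifying that the global orbifold integral over $X_{\Sigma}(\mathbb{R}^{+})$ decomposes, up to measure‑zero overlaps and harmless finite constants, into the finite family of chart integrals over the cubes $I_{\sigma}$ — together with the compactness argument bounding each $f_{i,\sigma}$ away from $0$ on $I_{\sigma}$. Once these are in place, the statement reduces to the one‑variable fact $\int_{0}^{1}x^{a}\,\df x<\infty\iff a>-1$ applied ray by ray.
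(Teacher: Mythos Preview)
Your proof is correct and follows essentially the same route as the paper's: in each chart $U_{\sigma}$ for a maximal cone $\sigma$, factor the integrand as a monomial $\prod_{j}x_{j}^{\Real\langle s,u_{j}\rangle-\Real d_{u_{j}}(c)-1}$ times a function bounded above and below by positive constants, and reduce to the one-variable criterion $\int_{0}^{1}x^{a}\,\df x<\infty\iff a>-1$. The only cosmetic difference is in the reduction step: the paper argues via compactness of $X_{\Sigma}(\mathbb{R}^{+})$ together with a partition of unity subordinate to the cover $(U_{\sigma})_{\sigma\in\Sigma(n)}$, so that global absolute convergence is equivalent to local integrability in each chart, whereas you invoke the explicit cube decomposition $X_{\Sigma}(\mathbb{R}^{+})=\bigcup_{\sigma}I_{\sigma}$ (Prop.~\ref{prop:sec-decom}). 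That proposition appears later in the paper but is logically independent of the present statement, so no circularity is introduced; your version has the mild advantage that the compactness of $I_{\sigma}=[0,1]^{n}$ makes the boundedness of $|f_{i,\sigma}^{-c_{i}}|$ completely transparent.
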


\begin{proof}
  The fan $\Sigma$ is complete since it is a refinement of the normal fan of a lattice polytope.
  Hence the integration domain is compact and it is enough to show that the integrand is locally
  integrable. Let
  \begin{equation*}
    \sigma=\pos(u_{1},\ldots,u_{n})\in\Sigma(n)
  \end{equation*}
  be a maximal cone. The proof of the previous proposition shows that
  \begin{equation*}
    F_{i}|_{U_{\sigma}} = \prod_{j=1}^{n}x_{j}^{\langle m_{\sigma,i}, u_{j} \rangle}f_{\sigma,i},
  \end{equation*}
  where $f_{\sigma,i}$ does not vanish on $U_{\sigma}(\mathbb{R}^{+})$ and $m_{\sigma,i}$ satisfies
  \begin{equation*}
    \langle m_{\sigma,i}, u_{j} \rangle = d_{u_{j}}(f_{i}).
  \end{equation*}
  The integrand then has the local expression
  \begin{align*}
    \prod_{j=1}^{n}\prod_{i=1}^{k}x_{j}^{\langle s, u_{j} \rangle -1}F_{i}(x)^{-c_{i}}\big\vert_{U_{\sigma}}
    &= \prod_{j=1}^{n}x_{j}^{\langle s, u_{j} \rangle-d_{\rho}(c) -1}\prod_{i=1}^{k}f_{\sigma,i}(x)^{-c_{i}}.
  \end{align*}
  This is locally integrable on $U(\mathbb{R}^{+})\cong \mathbb{R}^{n}_{+}//G_{\sigma}$ if and only
  if
  \begin{equation*}
    \Real(\langle s, u_{j} \rangle -d_{u_{j}}(c)) > 0.
  \end{equation*}
  for all $\mathbb{R}^{+}u_{j}\in\sigma(1)$. Letting $\sigma\in\Sigma(n)$ vary over all
  $n$-dimensional cones gives the result.
\end{proof}

\begin{proof}[Proof of Thm. \ref{thm:mellin-convergence}]
  Suppose $P$ is full-dimensional. From Theorem \ref{thm:simplicial-refinement}, we can construct a
  simplicial refinement $\Sigma$ of $\Sigma_{P}$ with $\Sigma(1)=\Sigma_{P}(1)$. The previous
  proposition shows that the integral is convergent if $(s,c)$ satisfy
  \begin{equation*}
    \Real\langle s, u_{\rho} \rangle > \Real d_{\rho}(c),
  \end{equation*}
  for all $\rho\in\Sigma(1)=\Sigma_{P}(1)$.

  Now suppose $P(r)$ is not full-dimensional, i.e. it is contained in a hyperplane
  \begin{equation*}
    P \subseteq \{m\in M_{\mathbb{R}}\ \rvert\  \langle m, u \rangle = d\},
  \end{equation*}
  This is only possible, if $P(f_{i})$ is contained in the hyperplane
  \begin{equation*}
    \{m\in M_{\mathbb{R}} \ \rvert\ \langle m, u\rangle=d_{u}(f_{i})\}.
  \end{equation*}
  Every simplicial refinement $\Sigma$ of $\Sigma_{P}$ must contain the cones
  $\rho^{\pm}=\mathbb{R}^{\pm}u$. The inequalities
  \begin{equation*}
    \Real\langle s, u \rangle > \Real d_{u}(c) \quad \Real\langle s,-u \rangle > \Real d_{-u}(c).
  \end{equation*}
  corresponding to $\rho^{+}$ and $\rho^{-}$ are not both satisfiable, since
  \begin{equation*}
    d_{u}(c) = \sum_{i=1}^{k}c_{i}d_{u}(f_{i}) = -d_{-u}(c).
  \end{equation*}
  Thus the integral does not convergence for any choice of $(s,c)\in M_{\mathbb{C}}\times
  \mathbb{C}^{k}$.
  % for all $i$. Since $P$ is a lattice polytope, we can assume that $u$ is the lattice generator of
  % the rational cone $\rho=\mathbb{R}^{+}u$. We can complete $\{u\}$ to a $\mathbb{Z}$-basis
  % $u,u_{2},\ldots,u_{n}$. Under the monomial change of variables $t_{i}=\prod_{j}x_{j}^{\langle
  % e_{i}, u_{j} \rangle}$, the Laurent polynomials $f_{i}$ are expressed as
  % \begin{equation*}
  %   f_{i}(x) = x_{1}^{d_{u}(f_{i})}\tilde f_{i}(x),
  % \end{equation*}
  % where $\tilde f_{i}$ does not depend on $x_{1}$. The Mellin integral can then be expressed as
  % \begin{align*}
  %   \mathcal{M}(f_{i},s,c)&=\int_{T_{N}(\mathbb{R^{+}})}x^{s_{\sigma}}\prod_{i=1}^{k} f_{i}(x)^{-c_{i}}\frac{\df x}{x} \\
  %               &=\int_{(0,\infty)}x_{1}^{s_{\sigma}_{1}-d_{u}(c)}\frac{\df x_{1}}{x_{1}}
  %                 \int_{(0,\infty)^{n-1}}\tilde x^{s_{\sigma}}\prod_{i=1}^{k} \tilde f_{i}(x)^{-c_{i}}\frac{\df \tilde x}{\tilde x},
  % \end{align*}
  % where $\tilde x = (x_{2},\ldots,x_{n})$ and $s_{\sigma}_{j} = \langle s,u_{j} \rangle$. The
  % integral
  % over the monomial $x_{1}^{s_{\sigma}_{1}-d_{u}(c)}$ can not converge for any choice of $\tilde
  % s_{1}\in \mathbb{C}$.
\end{proof}

\paragraph{Analytic continuations.}

Suppose $g=\sum_{m\in B}b_{m}t^{m}$ is another Laurent polynomial with Newton polytope $P(g)$.
Theorem \ref{thm:mellin-convergence} implies that the integral
\begin{equation*}
  \mathcal{M}(f_{i},g,s,c) = \int_{T_{N}(\mathbb{R^{+}})}t^{s}g(t)\prod_{i=1}^{k}f_{i}(t)^{-c_{i}} \frac{\df t}{t} 
\end{equation*}
is convergent if
\begin{equation*}
  \Real \langle s+m, u_{\rho} \rangle > \Real(d_{\rho}(c)),
\end{equation*}
for all $\rho\in\Sigma_{P}(1)$ and $m\in B$. This is clearly equivalent to
\begin{equation*}
  \Real \langle s, u_{\rho} \rangle > \Real(d_{\rho}(c))-d_{\rho}(g).
\end{equation*}
Let $\Lambda(f,g)\subset M_{\mathbb{C}}\times \mathbb{C}^{k}$ be the open set of parameters $(s,c)$
satisfying the above inequalities. The articles \cite{Nilsson_2011} and \cite{Berkesch_2014}
construct a meromorphic continuation of $\mathcal{M}(f_{i},g,s,c)$ to $M_{\mathbb{C}}\times
\mathbb{C}^{k}$. Let us briefly sketch their argument.

For a ray $\rho\in\Sigma_{P}(1)$ with lattice generator $u_{\rho}\in N$, let
\begin{equation*}
  \mathbb{C}^{*}\rightarrow T_{N},\quad \lambda \mapsto \lambda_{\rho}:= u_{\rho} \otimes \lambda
\end{equation*}
be the one-parameter subgroup defined by $u_{\rho}$. Composing with the left action of $T_{N}$ and
restricting to $(0,\infty)\subset \mathbb{C}^{*}$ gives the action
\begin{equation*}
  (0,\infty)\times T_{N} \rightarrow T_{N},\quad (\lambda,t)\mapsto \lambda_{\rho}\cdot t.
\end{equation*}

Let $h(t)=\sum_{m\in C}h_{m}t^{m}$ be an arbitrary Laurent polynomial. Set
\begin{equation*}
  h_{\rho}(t) = \frac{\df}{d\lambda}\left( \frac{h(\lambda_{\rho}\cdot t)}{\lambda^{d_{\rho}(h)}} \right){\bigg\rvert}_{\lambda=1}.
\end{equation*}
For monomials $m\in F_{u_{\rho}}P(h)$, $(\lambda_{\rho}\cdot t)^{m}=\lambda^{d_{\rho}(h)}t^{m}$,
which implies that
\begin{equation*}
  d_{\rho}(h_{\rho}) \ge d_{\rho}(h)+1.
\end{equation*}
The differential form $\frac{\df t}{t}$ and integration domain $T_{N}(\mathbb{R^{+}})$ are invariant
under the action of $\lambda_{\rho}$. Hence we have
\begin{align*}
  \mathcal{M}(f_{i},g,s,c) &= \int_{T_{N}(\mathbb{R^{+}})}(\lambda_{\rho}\cdot t)^{s}g(\lambda_{\rho}\cdot t)\prod_{i=1}^{k}f_{i}(\lambda_{\rho}\cdot t)^{-c_{i}} \frac{\df t}{t} \\
                           &= \int_{T_{N}(\mathbb{R^{+}})}\lambda^{\langle s, u_{\rho}\rangle -d_{ \rho}(c)+d_{\rho}(g)}
                             t^{s}\frac{g(\lambda_{\rho}\cdot t)}{\lambda^{d_{\rho(g)}}}
                             \prod_{i=1}^{k}\left(\frac{f_{i}(\lambda_{\rho}\cdot t)}
                             {\lambda^{d_{\rho,i}}}\right)^{-c_{i}} \frac{\df t}{t}.
\end{align*}
Differentation with respect to $\lambda$ and setting $\lambda=1$ gives
\begin{equation*}
  \mathcal{M}(f_{i},s,c,g) = \frac{-1}{\langle s, u \rangle - d_{\rho}(c)+d_{\rho}(g)}\left( I_{g}(s,c)-\sum_{i=1}^{k}c_{i}I_{i}(s,c) \right),
\end{equation*}
where
\begin{align*}
  I_{g}(s,c) &= \mathcal{M}(f_{i},g_{\rho},s,c) \\
  I_{i}(s,c) &= \mathcal{M}(f_{i},gf_{\rho},s,c+e_{i}).
\end{align*}
The integral $I_{g}(s,c)$ converges for $(s,c)$ satisfying
\begin{align*}
  \Real(\langle s, u_{\tilde \rho} \rangle) > \Real(d_{\tilde \rho}(c)) - d_{\tilde \rho}(g_{ \rho}),
\end{align*}
for all $\tilde\rho\in\Sigma_{P}(1)$. Since $P(g_{\rho})\subset P(g) $ and $d_{ \rho}(g_{ \rho})\ge
d_{ \rho}(g) +\delta^{\tilde \rho}_{\rho}$ by construction, we have
\begin{equation*}
  \Real(d_{\rho}(c)) - d_{\rho}(g_{ \rho})
  \le \Real(d_{\rho}(c)) - d_{\rho}(g) - \delta^{ \tilde \rho}_{\rho}.
\end{equation*}
Similarly, $I_{i}(s,c)$ converges iff
\begin{align*}
  \Real(\langle s, u_{\tilde \rho} \rangle) > \Real(d_{\tilde \rho}(c+e_{i})) - d_{\tilde \rho}(g) - d_{\tilde \rho}(f_{i, \rho}).
\end{align*}
From $d_{\tilde \rho}(f_{i, \rho})\ge d_{i,\tilde \rho}+\delta^{\tilde\rho}_{\rho}$ we get the
inequality
\begin{align*}
  \Real(d_{\tilde\rho}(c+e_{i})) - d_{\tilde\rho}(g) - d_{\tilde\rho}(f_{i, \rho}) 
  &\le \Real(d_{\tilde \rho}(c)) - d_{\tilde\rho}(g) - \delta^{\tilde \rho}_{\rho}.
\end{align*}
Hence the integrals $I_{g}(s,c),I_{i}(s,c)$ converge if
\begin{equation*}
  \langle \Real(s), u_{\tilde \rho} \rangle > \Real(d_{\tilde \rho}(c))- d_{\tilde \rho}(g) - \delta_{\rho}^{\tilde \rho}
\end{equation*}
for all $\tilde \rho\in\Sigma_{P}(1)$. Thus we have found an analytic continuation which improves
the convergence in the direction of $\rho$.

Iteratively differentiating with respect to one-parameter subgroups $\varphi_{\rho}(\lambda)$ as above
then gives the following result.
\begin{thm}[{\cite[Theorem 2.4]{Berkesch_2014}}]\label{thm:meromorphic-cont}
  Suppose the polytope $P=P(f_{1})+\ldots+P(f_{k})$ is full-dimensional. Then for every
  $(s_{0},c_{0})\in M_{\mathbb{C}}\times \mathbb{C}^{k}$ the Mellin transform
  $\mathcal{M}(f_{i},g,s,c)$ can be expressed as a sum of the form
  \begin{equation*}
    \mathcal{M}(f_{i},g,s,c) = \sum_{\beta}L_{\beta}(s,c)\mathcal{M}(f_{i},g_{\beta},s,c+n_{\beta}),
  \end{equation*}
  for certain Laurent polynomials $g_{\beta}$ and $n_{\beta}\in \mathbb{Z}_{\ge 0}^{k}$, such that
  the Mellin transforms on the right hand side are convergent in a neighbourhood of $(s_{0},c_{0})$.
  The functions $L_{\beta}(s,c)$ are rational functions of $(s,c)$ with simple poles along divisors
  of the form
  \begin{equation*}
    \{\langle s, u_{\rho} \rangle - d_{\rho}(c) + d_{\rho}(g) = -m\}
  \end{equation*}
  for $m\in \mathbb{N}$. Thus the Mellin transform can be expressed as
  \begin{equation*}
    \mathcal{M}(f_{i},g,s,c)=\Phi(s,c)
    \prod_{\rho\in\Sigma_{P}(1)}\Gamma(\langle s, u_{\rho} \rangle -d_{\rho}(c)+d_{\rho}(g)),
  \end{equation*}
  where $\Phi(s,c)$ is entire on $M_{\mathbb{C}}\times \mathbb{C}^{k}$.
\end{thm}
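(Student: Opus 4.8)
The plan is to iterate the functional equation obtained just above, applying it in the various rays $\rho\in\Sigma_{P}(1)$ as often as needed, until every Mellin transform that occurs converges near a prescribed point. Fix $(s_{0},c_{0})\in M_{\mathbb{C}}\times\mathbb{C}^{k}$. For a Laurent polynomial $h$ and a shift $n\in\mathbb{Z}_{\ge 0}^{k}$ put $d_{\rho}(n)=\sum_{i}n_{i}d_{\rho}(f_{i})$, so that $d_{\rho}(c+n)=d_{\rho}(c)+d_{\rho}(n)$; by the convergence criterion recalled above, $\mathcal{M}(f_{i},h,s,c+n)$ converges in a neighbourhood of $(s_{0},c_{0})$ as soon as the integers $\ell_{\rho}(h,n):=d_{\rho}(n)-d_{\rho}(h)$ satisfy $\ell_{\rho}(h,n)\le K_{\rho}$ for every $\rho$, where $K_{\rho}:=\lceil\Real\langle s_{0},u_{\rho}\rangle-\Real d_{\rho}(c_{0})\rceil-1\in\mathbb{Z}$ is a fixed threshold depending only on $(s_{0},c_{0})$. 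I would then measure the distance to convergence by the non-negative integer
\[
W(h,n):=\sum_{\rho\in\Sigma_{P}(1)}\max\bigl(0,\ \ell_{\rho}(h,n)-K_{\rho}\bigr),
\]
which vanishes exactly when $\mathcal{M}(f_{i},h,s,c+n)$ converges near $(s_{0},c_{0})$.

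If $W(h,n)>0$, pick a ray $\rho$ with $\ell_{\rho}(h,n)>K_{\rho}$ and apply the functional equation in $\rho$, understood as an identity of meromorphic functions: it rewrites $\mathcal{M}(f_{i},h,s,c+n)$ as $-\bigl(\langle s,u_{\rho}\rangle-d_{\rho}(c+n)+d_{\rho}(h)\bigr)^{-1}$ times
\[
\mathcal{M}(f_{i},h_{\rho},s,c+n)-\sum_{i=1}^{k}(c_{i}+n_{i})\,\mathcal{M}(f_{i},hf_{i,\rho},s,c+n+e_{i}),
\]
so the pair $(h,n)$ is replaced by the children $(h_{\rho},n)$ and $(hf_{i,\rho},n+e_{i})$. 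Using the estimates established above — $P(k_{\rho})\subseteq P(k)$, $d_{\rho}(k_{\rho})\ge d_{\rho}(k)+1$, and $d_{\tilde\rho}(k_{\rho})\ge d_{\tilde\rho}(k)$ for $\tilde\rho\ne\rho$ — applied to $k=h$ and to $k=f_{i}$, I would check that $\ell_{\rho}$ strictly drops by at least one on every child while no $\ell_{\tilde\rho}$ with $\tilde\rho\ne\rho$ increases, hence $W$ strictly decreases at each step. As $W\in\mathbb{Z}_{\ge 0}$ and each step spawns $k+1$ children, the recursion tree is finite; collecting its leaves $\beta$ yields
\[
\mathcal{M}(f_{i},g,s,c)=\sum_{\beta}L_{\beta}(s,c)\,\mathcal{M}(f_{i},g_{\beta},s,c+n_{\beta}),
\]
with each $\mathcal{M}(f_{i},g_{\beta},s,c+n_{\beta})$ holomorphic near $(s_{0},c_{0})$ and $L_{\beta}$ the product of the rational prefactors picked up along the path to $\beta$. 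An induction down the tree shows $d_{\rho}(h)-d_{\rho}(g)-d_{\rho}(n)\in\mathbb{Z}_{\ge 0}$ at every state reached, so every prefactor denominator equals $\langle s,u_{\rho}\rangle-d_{\rho}(c)+d_{\rho}(g)+m$ with $m\in\mathbb{Z}_{\ge 0}$, and this $m$ strictly increases each time a step is taken in the ray $\rho$; hence along a single path the denominators are pairwise distinct linear forms, and $L_{\beta}$ has only simple poles, lying on the hyperplanes named in the statement.

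Since $(s_{0},c_{0})$ was arbitrary and meromorphic continuation from the convergence region $\Lambda(f,g)$ is unique, these local identities assemble into a meromorphic continuation of $\mathcal{M}(f_{i},g,s,c)$ to all of $M_{\mathbb{C}}\times\mathbb{C}^{k}$, with poles confined to the listed hyperplanes. For the $\Gamma$-factorization I would set $x_{\rho}:=\langle s,u_{\rho}\rangle-d_{\rho}(c)+d_{\rho}(g)$, fix an integer $M$ larger than every exponent $m$ occurring in the (finite) tree, and use the identity $x(x+1)\cdots(x+M)=\Gamma(x+M+1)/\Gamma(x)$ to rewrite $L_{\beta}=R_{\beta}(s,c)\prod_{\rho}\Gamma(x_{\rho})/\Gamma(x_{\rho}+M+1)$ with $R_{\beta}$ a polynomial. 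Since $1/\Gamma$ is entire, dividing the preceding identity by $\prod_{\rho}\Gamma(x_{\rho})$ exhibits $\Phi(s,c):=\mathcal{M}(f_{i},g,s,c)\big/\prod_{\rho}\Gamma(x_{\rho})$ as holomorphic near $(s_{0},c_{0})$; as $(s_{0},c_{0})$ was arbitrary, $\Phi$ is entire, which is the asserted form.

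I expect the main obstacle to be the termination bookkeeping: one must verify that each application of the functional equation improves the convergence bound by exactly one unit in the chosen ray while worsening it in no other direction, which amounts to applying the inequalities $d_{\tilde\rho}(k_{\rho})\ge d_{\tilde\rho}(k)+\delta^{\tilde\rho}_{\rho}$ and $P(k_{\rho})\subseteq P(k)$ uniformly across all rays and both kinds of children, and then repackaging them into a manifestly well-founded integer potential. The only other non-formal ingredient is the differentiation under the integral sign in the derivation of the functional equation, which is legitimate on the common domain of convergence of the integrals involved.
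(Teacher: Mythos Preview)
Your proof is correct and follows essentially the same strategy as the paper's proof sketch: iterate the functional equation in the various rays $\rho\in\Sigma_{P}(1)$ until all resulting Mellin transforms converge near the target point, then absorb the accumulated simple poles into the Gamma factors. Your version is in fact more complete than the paper's sketch, since you supply an explicit integer potential $W(h,n)$ to certify termination and you verify inductively that the denominator offsets $m=d_{\rho}(h)-d_{\rho}(g)-d_{\rho}(n)$ are non-negative and strictly increase along a path in each fixed ray, which is exactly what the paper means by ``one can check that these poles are all distinct.''
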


\begin{proof}[Proof Sketch.]
  Fix $(s_{0},c_{0})\in M_{\mathbb{C}}\times \mathbb{C}^{k}$ and let
  \begin{equation*}
    a_{\rho} = -\min(\lceil{  \Real(\langle s_{0}, u_{\rho} \rangle - d_{\rho}(c_{0})+d_{\rho}(g))}\rceil-1,0)
  \end{equation*}
  For each $\rho\in\Sigma_{P}(1)$ we partially integrate (at most) $a_{\rho}$-times in the direction
  $u_{\rho}$. This expresses the Mellin transform as a sum
  \begin{equation*}
    \mathcal{M}(f_{i},g,s,c) = \sum_{\beta}L_{\beta}\mathcal{M}(f_{i},g_{\beta},s,c+n_{\beta}),
  \end{equation*}
  where $L_{\beta}$ is the product of the rational factors $(\langle s, u_{\rho} \rangle -
  d_{\rho}(c) + d_{\rho}(g) +m)^{-1}$ introduced by the partial integrations. One can check that
  these poles are all distinct, so that $L_{\beta}$ is a rational function with simple poles as
  prescribed above.

  The Mellin transforms on the right hand side are guaranteed to converge in a neighbourhood of
  $(s_{0},c_{0})$, since each partial integration in the $\rho$ direction improves the convergence
  in this direction by at least one unit. The Gamma functions $\Gamma(\langle s, u_{\rho} \rangle
  -d_{\rho}(c)+d_{\rho}(g))$ have simple poles along every hypersurface of the form
  \begin{equation*}
    \{\langle s, u_{\rho} \rangle - d_{\rho}(c) + d_{\rho}(g)\in -\mathbb{Z}_{+}\}
  \end{equation*}
  Hence dividing the above expression for $\mathcal{M}(f_{i},g,s,c)$ by
  \begin{equation*}
    \prod_{\rho\in\Sigma_{P}(1)}\Gamma(\langle s, u_{\rho} \rangle -d_{\rho}(c)+d_{\rho}(g))
  \end{equation*}
  cancels the simple poles and gives an entire function $\Phi(s,c)$.
\end{proof}

\paragraph{Sector decomposition.}
\label{sec:sect-decomp}

In applications to quantum field theory, one often wants to compute the analytic continuation of
$\mathcal{M}(f_{i},s,c)$ along a single parameter (the dimension of spacetime), i.e. one wants to
compute the restriction of $\mathcal{M}(f_{i},s,c)$ to a line $l\subset M_{\mathbb{C}}\times
\mathbb{C}^{k}$. The corresponding function $M(d)=\mathcal{M}(f_{i},s(d),c(d))$ is a meromorphic
function of $d\in \mathbb{C}$ and the main goal is to compute the coefficients of a Laurent
expansion around a fixed pole $d_{0}\in \mathbb{C}$.

For this purpose, Binoth and Heinrich \cite{Binoth_2000} introduced a recursive strategy, which
iteratively decomposes (a suitable compactification) of the integration domain
$T_{N}(\mathbb{R^{+}})$ into cubical sectors and performs blow-ups along coordinate subspaces until
the integral in every sector is of the form
\begin{equation*}
  I_{\alpha} = \int_{[0,1]^{k}}x^{s_{\sigma}(d)}\prod_{i=1}^{k}\tilde f^{c_{i}(d_{i})}_{i},
\end{equation*}
such that $\tilde f_{i}$ does not vanish along the coordinate subspaces $x_{i}=0$. The analytic
continuation in $d$ can then be computed by a simple Taylor expansion.

The original strategies of Binoth and Heinrich had the drawback, that the recursion did not always
terminate. This fault was corrected by Bogner and Weinzierl in \cite{Bogner_2008}, where ideas from
the resolution of singularities were used to devise strategies guaranteed to succeed. Unfortunately,
these strategies often result in a large number of sectors, which greatly impacts the time needed
for numerical computations.

The results of the last section suggest that one should try to find a strategy which is adapted to
the Newton polytope of $f_{1}\cdots f_{k}$. Such a strategy has indeed been described by Kaneko and
Ueda in \cite{Kaneko_2010} and it fits in nicely with our toric point of view. One of the main
results of \cite{Kaneko_2010} can be rephrased as follows.

\begin{prop}\label{prop:sec-decom}
  Let $X_{\Sigma}$ be a complete simplicial toric variety of dimension $n$. For $\sigma\in\Sigma(n)$
  let
  \begin{equation*}
    I_{\sigma} := [0,1]^{n}//G_{\sigma}\subset \mathbb{R}_{+}^{n}//G_{\sigma}= U_{\sigma}(\mathbb{R}^{+}).
  \end{equation*}
  Then
  \begin{equation*}
    X_{\Sigma}(\mathbb{R}_{+}) = \bigcup_{\sigma\in\Sigma(n)}I_{\sigma}
  \end{equation*}
  and the intersections $I_{\sigma}\cap I_{\sigma'}$ for $\sigma\neq\sigma'$ have measure zero.
\end{prop}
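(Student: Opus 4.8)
The plan is to reduce the statement to the combinatorial fact that the maximal cones of a complete fan $\Sigma$ cover $N_{\mathbb{R}}$, and then transport this covering through the identification of $U_{\sigma}(\mathbb{R}^{+})$ with $\mathbb{R}_{+}^{n}//G_{\sigma}$. First I would observe that it suffices to prove the analogous statement on each real-positive locus: since $X_{\Sigma}(\mathbb{R}^{+})$ is glued from the $U_{\sigma}(\mathbb{R}^{+})$ and the cubes $I_{\sigma}$ live inside $U_{\sigma}(\mathbb{R}^{+})$, the covering claim will follow once I show that the images of the $I_{\sigma}$ exhaust $X_{\Sigma}(\mathbb{R}^{+})$ and that two such images meet in a measure-zero set.

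The key tool is the homeomorphism $\Log^{-1}$ (or rather its compactified analogue): on the dense torus $T_{N}(\mathbb{R}^{+})\cong N_{\mathbb{R}}$ via the componentwise logarithm, a point $t$ lies in the closure of $I_{\sigma}=[0,1]^{n}//G_{\sigma}$ precisely when $\Log(t)$ lies in $-\sigma$, i.e.\ when all coordinates $\langle m_{\sigma}^{(j)}, \Log t\rangle\le 0$ where $m_{\sigma}^{(j)}$ are the dual-basis monomials giving the chart coordinates $x_{j}=t^{m_{\sigma}^{(j)}}$. Here I use that for a simplicial top-dimensional cone $\sigma=\pos(u_{1},\dots,u_{n})$, the orbifold chart coordinates are $x_{j}$ with $x_{j}\le 1$ iff the corresponding linear functional on $\Log t$ is $\le 0$, and the set of such $\Log t$ is exactly the dual description of $-\sigma$. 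Since $\Sigma$ is complete, $\bigcup_{\sigma\in\Sigma(n)}(-\sigma)=N_{\mathbb{R}}$ (completeness is $|\Sigma|=N_{\mathbb{R}}$, which is symmetric under negation only after noting $-\sigma$ need not be in $\Sigma$, so more carefully: $\bigcup_{\sigma}\sigma=N_{\mathbb{R}}$ and I apply the covering to $-\Log t$ instead). Thus every point of the dense torus lies in some $I_{\sigma}$, and a limiting/continuity argument extends this to all of $X_{\Sigma}(\mathbb{R}^{+})$, using that $X_{\Sigma}(\mathbb{R}^{+})$ is the closure of $T_{N}(\mathbb{R}^{+})$ and each $I_{\sigma}$ is closed in $U_{\sigma}(\mathbb{R}^{+})$; alternatively one invokes the moment-map homeomorphism $X_{\Sigma}(\mathbb{R}^{+})\cong P$ when $\Sigma$ is polytopal and checks the cubes correspond to a polyhedral subdivision of $P$.

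For the measure-zero intersection claim, I would note that $I_{\sigma}\cap I_{\sigma'}$ pulls back to a subset of $\{t : \Log t\in(-\sigma)\cap(-\sigma')\}$ together with the relevant boundary strata. Since distinct top-dimensional cones of a fan intersect in a common face of dimension $<n$, the cone $(-\sigma)\cap(-\sigma')$ has empty interior in $N_{\mathbb{R}}$, hence corresponds to a set of Lebesgue measure zero under $\Log$; the boundary-at-infinity contributions are contained in the toric divisor $D_{\Sigma}=X_{\Sigma}\setminus T_{N}$, which is itself measure zero in $X_{\Sigma}(\mathbb{R}^{+})$ for the natural measure coming from $|\Omega_{\Sigma}|$. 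The only genuinely delicate point is bookkeeping the orbifold quotients $//G_{\sigma}$: one must check that the covering and measure statements descend through the finite quotient maps $\pi_{\sigma}:\mathbb{C}^{\sigma(1)}\to U_{\sigma}$, which is routine since $G_{\sigma}$ is finite and acts on $(\mathbb{R}^{+})^{n}$ preserving $[0,1]^{n}$ up to measure-zero adjustments, but it is where one could slip. The main obstacle is therefore not any single hard estimate but assembling the chart-wise statements into a global one on $X_{\Sigma}(\mathbb{R}^{+})$; invoking the explicit homeomorphism with $P$ in the polytopal case (which is the case of interest, $\Sigma$ refining $\Sigma_{P}$) sidesteps this cleanly, so I would structure the proof around that identification.
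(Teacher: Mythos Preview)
Your approach is essentially the same as the paper's: both use the (negative) logarithm map $L=-\Log:T_{N}(\mathbb{R}^{+})\to N_{\mathbb{R}}$ to identify the open cube $I_{\sigma}^{\circ}=(0,1]^{n}//G_{\sigma}$ with the cone $\sigma$, and then invoke completeness of $\Sigma$ to get the covering and the codimension-one intersection. The paper sidesteps your limiting/moment-map detour for the boundary by simply noting at the outset that $X_{\Sigma}(\mathbb{R}^{+})\setminus T_{N}(\mathbb{R}^{+})$ has real codimension one and hence measure zero, so it suffices to prove everything on the dense torus; this also makes your orbifold-bookkeeping worries disappear, since on the torus the quotient map is trivial and the computation $L(t(x))=\sum_{i}(-\log x_{i})u_{i}$ goes through directly without needing the dual basis $m_{\sigma}^{(j)}$ to lie in $M$.
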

\begin{proof}
  The complement of $T_{N}(\mathbb{R^{+}})\subset X_{\Sigma}(\mathbb{R}_{+})$ has real codimension
  one, so it is enough to show that the restrictions
  \begin{equation*}
    I^{\circ}_{\sigma} = I_{\sigma}\cap T_{N}(\mathbb{R^{+}}) \cong (0,1]^{n}//G_{\sigma}
  \end{equation*}
  cover $T_{N}(\mathbb{R^{+}})$ and intersect in a set of measure zero. The map
  \begin{equation*}
    L : T_{N}(\mathbb{R^{+}})\rightarrow N_{\mathbb{R}},\quad L(t) = -\Log(t)
  \end{equation*}
  is a diffeomorphism. We claim that $L$ identifies $I^{\circ}_{\sigma}$ with the cone
  $\sigma\subset \mathbb{R}^{n}$. Let $\sigma$ have lattice generators $u_{1},\ldots,u_{n}$. In the
  coordinates of $\sigma$ we have $t_{j}=\prod_{i=1}^{n}x_{i}^{\langle e_{j}, u_{i} \rangle}$ and
  \begin{align*}
    L(t(x)) &= \sum_{j=1}^{n}-\log(t_{j}(x))e_{j}\\
            &= \sum_{i,j=1}^{n}-\log(x_{i})\langle e_{j}, u_{i} \rangle e_{j}\\
            &= \sum_{i=1}^{n}-\log(x_{i})u_{i}.
  \end{align*}
  Since $I^{\circ}_{\sigma}$ is defined by the inequalities $0<x_{i}\le 1$, we get
  \begin{equation*}
    L(I^{\circ}_{\sigma})=¸\pos(u_{1},\ldots,u_{n})=\sigma.
  \end{equation*}
  But $X_{\Sigma}$ is a complete fan, so the cones $\sigma\in\Sigma(n)$ cover $\mathbb{R}^{n}$ and
  intersect in a set of codimension one. The same must then be true for the $I_{\sigma}^{\circ}$.
\end{proof}

Combining the above proposition with proposition \ref{prop:simplicial-refinement} gives the main
result of \cite{Kaneko_2010}:

\begin{col}[\cite{Kaneko_2010}]\label{col:sec-decomp}
  Let $X_{\Sigma}$ be a simplicial refinement of $P=P(f_{1}\cdots f_{k})$. Then the Mellin transform
  can be decomposed as
  \begin{equation*}
    M(f,s,c) = \sum_{\sigma\in\Sigma(n)}\int_{[0,1]^{n}} x_{\sigma}^{s_{\sigma}}\prod_{i=1}^{k}(f_{\sigma,i}(x_{\sigma}))^{-c_{i}} \df x_{\sigma},
  \end{equation*}
  where:
  \begin{enumerate}
  \item $x_{\sigma}=(x_{\sigma,1},\ldots,x_{\sigma,n})$ are the coordinates associated to the
    maximal cone $\sigma=\pos(u_{1},\ldots,u_{n})\in \Sigma$.
  \item The monomial $x_{\sigma}^{s_{\sigma}}$ is given by
    \begin{align*}
      x_{\sigma}^{s_{\sigma}} = \prod_{j=1}^{n}x_{\sigma,j}^{\langle s - \sum_{i}c_{i}m_{\sigma,i}, u_{j} \rangle -1}
      = \prod_{j=1}^{n}x_{\sigma,j}^{\langle s, u_{j}\rangle -d_{u_{j}}(c)-1}
    \end{align*}
    with $m_{\sigma,i}$ the common minimum of the functions $\langle m, u_{j} \rangle$ on
    $P(f_{i})$.
  \item The polynomials $f_{\sigma,i}(x_{\sigma})=f_{i}(x_{\sigma})x_{\sigma }^{-m_{\sigma,i}}$ are
    regular and non-vanishing on $[0,1]^{n}$.
  \end{enumerate}
\end{col}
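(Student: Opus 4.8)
\emph{Proof plan.} The plan is to splice together three facts already in hand: the expression of $\mathcal{M}$ as an integral over the compact locus $X_\Sigma(\mathbb{R}^+)$, the cubical cover of $X_\Sigma(\mathbb{R}^+)$ of Prop.~\ref{prop:sec-decom}, and the chartwise factorization of the $f_i$ from the proof of Prop.~\ref{prop:simplicial-refinement}. Throughout one may assume $P$ is full-dimensional; otherwise $\mathcal{M}$ converges for no $(s,c)$ by Thm.~\ref{thm:mellin-convergence} and the asserted identity is to be read formally, chart by chart.

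First I would note that, since $\Sigma$ refines $\Sigma_P$ and $\Sigma_P$ is the coarsest common refinement of the $\Sigma_{P(f_i)}$ (Prop.~\ref{prop:common-refinement}), $\Sigma$ refines every $\Sigma_{P(f_i)}$, whence Prop.~\ref{prop:simplicial-refinement} gives $\overline{V(f_i)}\cap X_\Sigma(\mathbb{R}^+)=\emptyset$ for all $i$. Thus $t^s\prod_i f_i(t)^{-c_i}\tfrac{\df t}{t}$ extends to a section of $\omega_\Sigma$ that is locally integrable on $X_\Sigma(\mathbb{R}^+)$ exactly on the convergence domain, and by the generalization of Prop.~\ref{prop:differential-form} used above, $\mathcal{M}(f_i,s,c)=\int_{X_\Sigma(\mathbb{R}^+)}\prod_{\rho\in\Sigma(1)}x_\rho^{\langle s,u_\rho\rangle-1}\prod_i F_i(x)^{-c_i}\,\Omega_\Sigma$ there, with $F_i$ the homogenization of $f_i$. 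Since, by Prop.~\ref{prop:sec-decom}, the cubes $I_\sigma=[0,1]^n//G_\sigma$ ($\sigma\in\Sigma(n)$) cover $X_\Sigma(\mathbb{R}^+)$ with pairwise intersections of measure zero, and the integrand is absolutely integrable on the convergence domain, the integral splits as $\sum_{\sigma\in\Sigma(n)}\int_{I_\sigma}$.

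Next I would fix a maximal cone $\sigma=\pos(u_1,\dots,u_n)$ and pass to the chart $U_\sigma(\mathbb{R}^+)=\mathbb{R}_+^n//G_\sigma$ with coordinates $x_{\sigma,j}$, dehomogenizing by setting $x_\rho=1$ for $\rho\notin\sigma(1)$. Prop.~\ref{prop:maximal-cones} supplies $m_{\sigma,i}\in P(f_i)$ with $\langle m_{\sigma,i},u_j\rangle=d_{u_j}(f_i)$ for all $j$, and the computation in the proof of Prop.~\ref{prop:simplicial-refinement} gives $f_i|_{U_\sigma}=x_\sigma^{m_{\sigma,i}}f_{\sigma,i}(x_\sigma)$ with $f_{\sigma,i}$ a polynomial regular and non-vanishing on $\mathbb{R}_+^n\supseteq[0,1]^n$, which is property (3). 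The monomial $\prod_\rho x_\rho^{\langle s,u_\rho\rangle-1}$ restricts to $\prod_j x_{\sigma,j}^{\langle s,u_j\rangle-1}$, and collecting the factors $x_\sigma^{-c_i m_{\sigma,i}}$ extracted from $f_i^{-c_i}$ turns this into $\prod_j x_{\sigma,j}^{\langle s,u_j\rangle-\sum_i c_i\langle m_{\sigma,i},u_j\rangle-1}=\prod_j x_{\sigma,j}^{\langle s,u_j\rangle-d_{u_j}(c)-1}$, which is property (2).

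The step I expect to require the most care is matching the \emph{normalization}: the claimed formula carries neither the Jacobian $u_\sigma=\det(\langle e_i,u_j\rangle)$ from $\Omega_\Sigma|_{U_\sigma}=u_\sigma\,\df x_{\sigma,1}\wedge\dots\wedge\df x_{\sigma,n}$, nor the orbifold prefactor $1/|G_\sigma|$ attached to $\int_{I_\sigma}$. These cancel: the lattice map $M\to\mathbb{Z}^{\sigma(1)}$, $m\mapsto(\langle m,u_j\rangle)_j$, is injective with cokernel $\Cl(U_\sigma)$ of order $|u_\sigma|$, so $|G_\sigma|=|\Cl(U_\sigma)|=|u_\sigma|$; as $\Omega_\Sigma$ is only defined up to orientation the sign is immaterial, and one gets $\int_{I_\sigma}(\,\cdots\,)\,\Omega_\Sigma=\tfrac{1}{|G_\sigma|}\int_{[0,1]^n}(\,\cdots\,)\,|u_\sigma|\,\df x_\sigma=\int_{[0,1]^n}x_\sigma^{s_\sigma}\prod_i f_{\sigma,i}(x_\sigma)^{-c_i}\,\df x_\sigma$. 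Summing over $\sigma\in\Sigma(n)$ then yields the stated decomposition.
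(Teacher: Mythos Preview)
Your argument is correct and mirrors the paper's own proof: both express $\mathcal{M}$ as an integral over $X_\Sigma(\mathbb{R}^+)$, split it via the cubical cover of Prop.~\ref{prop:sec-decom}, dehomogenize in each chart using the factorization from Prop.~\ref{prop:simplicial-refinement}, and then cancel the orbifold prefactor $1/|G_\sigma|$ against the Jacobian $u_\sigma=|\Cl(U_\sigma)|$. The only differences are expository---you spell out the role of Prop.~\ref{prop:common-refinement} and the full-dimensionality caveat, which the paper leaves implicit.
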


\begin{proof}
  The previous proposition shows that we can write the integral as a sum over the integration
  domains $I_{\sigma}=[0,1]^{n}//G_{\sigma}$. Over this domain the integral becomes
  \begin{align*}
    \int_{I_{\sigma}}t^{s}\prod_{i=1}^{k}(f_{i}(t))^{-c_{i}} \frac{\df t}{t}
    &= \int_{I_{\sigma}}\prod_{\rho\in\Sigma(1)}x_{\rho}^{\langle s, u_{\rho} \rangle -1}
      \prod_{i=1}^{k}F_{i}(x)^{-c_{i}} \Omega_{\Sigma}{\bigg\rvert}_{U_{\sigma}} \\
    &= \frac{u_{\sigma}}{|G_{\sigma}|}\int_{[0,1]^{n}} \prod_{j=1}^{n}x_{j}^{\langle s, u_{j} \rangle -1}
      \prod_{i=1}^{k}f_{i}(x_{\sigma})^{-c_{i}} \df x_{\sigma} \\
    &= \frac{u_{\sigma}}{|G_{\sigma}|}\int_{[0,1]^{n}} \prod_{j=1}^{n}x_{j}^{\langle s-\sum_{i}c_{i}m_{\sigma,i}, u_{j} \rangle -1}
      \prod_{i=1}^{k}f_{\sigma,i}(x_{\sigma})^{-c_{i}} \df x_{\sigma}.
  \end{align*}
  and it follows from the proof of proposition \ref{prop:simplicial-refinement} that
  $f_{\sigma,i}(x_{\sigma})$ is regular and non-vanishing on $[0,1]^{n}$. From the discussion in
  section \ref{sec:TV:divisor-coordinate-ring} we can assume that $u_{\sigma}>0$ and we have
  \begin{equation*}
    u_{\sigma}=\det(u_{1},\ldots,u_{n}) = |\Cl(U_{\sigma})| = |G_{\sigma}|,
  \end{equation*}
  so the factor $\frac{u_{\sigma}}{|G_{\sigma}|}$ cancels.
\end{proof}

\begin{remark}
  The integral over $I_{\sigma}$ converges for $(s,c)\in M_{\mathbb{C}}\times \mathbb{C}^{k}$
  satisfying
  \begin{equation*}
    \Real\langle s, u_{\rho} \rangle > \Real d_{\rho}(c)
  \end{equation*}
  for all $\rho\in \sigma(1)$. If $r:=\Real(c)>0$, this means that $\Real(s)$ lies in the
  interior of the convex cone $m_{\sigma}(r)+\sigma^{\vee}$, where
  $m_{\sigma}(r)=\sum_{i}r_{i}m_{\sigma,i}$ is the vertex of $P(r)$ corresponding to $\sigma$. Hence
  the convergence region for a single sector is always nonempty, even if $P$ is not
  full-dimensional. In fact, if $r>0$ is fixed, then
  \begin{equation*}
    \bigcap_{\sigma\in\Sigma(1)} m_{\sigma}(r) + \sigma^{\vee} =  P(r).
  \end{equation*}
  Thus the integrals $I_{\sigma}$ convergence simultaneously for $s\in M_{\mathbb{C}}$ if and only
  if $s\in \Int(\bigcap_{\sigma\in\Sigma(1)} m_{\sigma}(r) + \sigma^{\vee}) = \Int P(r)$, which
  recovers our earlier result.
\end{remark}

\section{Feynman integrals}
\label{sec:feynman-graphs}

Let us finally apply our work to the investigation of Feynman integrals. We will work will scalar
Feynman graphs with generic euclidean kinematics. The parametric representation then expresses the
amplitude as a Mellin transform to which our previous results apply. We will show that this gives
two equivalent ways to rigourously construct the dimensionally regularized amplitudes.

\paragraph{Feynman graphs. }
We will consider a \emph{graph} $G$ to consist of a triple
\begin{equation*}
  G = (E_{G},V_{G},\partial)
\end{equation*}
of finite sets of edges $E_{G}$ and vertices $V_{G}$, together with a map
\begin{equation*}
  \partial:E_{G}\rightarrow \Sym^{2}V_{G}=V_{G}\times V_{G}/\mathbb{Z}_{2},
\end{equation*}
mapping an edge to its endpoints. This definition allows multiple edges and loops, but our graphs
will not have external half-edges. An edge $e\in E_{G}$ is called a \emph{selfloop} if
$\partial(e)=(v,v)$. A subgraph $\gamma\subset G$ is given by subsets $E_{\gamma}\subset E_{G},
V_{\gamma}\subset V_{G}$, such that $\partial(E_{\gamma})\subset \Sym^{2}V_{\gamma}$.

Every graph has an obvious geometric realization as a one-dimensional CW-complex, so that we can
speak about topological notions like connectedness and simply-connectedness. In particular, we
denote by $h^{0}(G)$ and $h^{1}(G)$ the first and second Betti numbers of (the geometric realization
of) $G$. For a connected graph $G$, we call a connected subgraph $T\subset G$ a \emph{spanning tree}
if $V_{T}=V_{G}$ and $h^{1}(T)=0$. Note that these are precisely the maximal simply-connected
subgraphs of $G$.

A spanning 2-tree is a simply-connected subgraph $F\subset G$, with $V_{F}=V_{G}$ and exactly two
connected components $F=T_{1}\cup T_{2}$. Every spanning 2-tree is obtained from a spanning tree by
deleting an edge.

Every subset $I\subset E_{G}$ gives the \emph{edge subgraph} $\gamma\subset G$, where $E_{\gamma}=I$
and $V_{\gamma}$ consists of all vertices incident to an edge in $I$. We almost exclusively deal
with edge subgraphs, so we will often identify an edge subgraph with its set of edges. Notable
exceptions are spanning 2-trees, where it is important to allow isolated vertices.

A \emph{Feynman graph} is a graph together with distinguished (possibly empty) sets of external
vertices $V^{ext}_{G}\subseteq V$ and massive edges $E_{G}^{M}\subseteq E_{G}$. To every external
vertex $v\in V^{G}$ we associate an inflowing external momentum $q_{v}\in \mathbb{C}^{D}$ and to
every massive edge $e\in E^{M}_{G}$, a mass $m_{e}\in\mathbb{C}\backslash \{0\}$. The external
momenta are additionally subject to momentum conversation in each connected component: If $G=\cup
G_{i}$ is the decomposition of $G$ into connected components, then
\begin{equation*}
  \sum_{v\in V^{ext}_{G_{i}}} q_{v} = 0,
\end{equation*}
for all $i$.

Now suppose $G$ is a connected Feynman graph. Choosing an orientation for each edge gives $G$ the
structure of a one-dimensional cell complex. Then a truncated part of the cellular chain complex
gives the exact sequence

\begin{center}\label{eq:graph-exact-sequence}
  \begin{tikzcd}
    0 \arrow{r} & H_{1}(G,\mathbb{Z}) \arrow{r}{i} & \mathbb{Z}^{E_{G}} \arrow{r}{\partial} & V^0_G
    \arrow{r} & 0
  \end{tikzcd}
\end{center}

Here, $\partial$ is the boundary map and
\begin{equation*}\label{eq:mom-conv}
  V^{0}_{G} := \{(n_{v})\in \mathbb{Z}^{V_{G}}\ \rvert\ \sum n_{v}=0\}
\end{equation*}
is the image of $\partial$, imposing overall momentum conservation. The external momentum $q$ is
then naturally an element of $V^{0}_{G}\otimes \mathbb{C}^{D}$. We can choose a section
$B:V^{0}_{G}\rightarrow \mathbb{Z}^{E_{G}}$ of $\partial$, since the homology $H_{1}(G,\mathbb{Z})$
is free. In the physics literature, it is customary to choose a spanning tree $T\subset G$ and then
defining a section by
\begin{equation*}
  B_{T}: V^{0}_{G}=V^{0}_{T}\cong \mathbb{Z}^{E_{T}}\hookrightarrow \mathbb{Z}^{E_{G}}.
\end{equation*}

Let $p\mapsto p_{e}$ be the projection $\mathbb{C}^{E}\otimes \mathbb{C}^{D}\rightarrow
\mathbb{C}^{D}$ to the momentum flowing through the edge $e$. To each edge we associate the affine
quadric
\begin{equation*}
  P_{e}:\mathbb{C}^{D}\rightarrow \mathbb{C},\quad P_{e}(k)=(p_{e}^{2}+m^{2}_{e}),
\end{equation*}
where $p_{e}^{2}=\sum_{i=1}^{D}(p_{e}^{i})^{2}$ denotes the squared euclidean metric. Its inverse
$P_{e}^{-1}$ gives the (scalar) propagator. As outlined in the introduction, we will work with
analytically regularized integrals, which means raising each propagator to a complex power
$\lambda_{e}\in \mathbb{C}$.

\begin{defin}
  The (formal) euclidean Feynman amplitude $I_{G}(\lambda,p,m)$ is given by
  \begin{equation*}
    I_{G}(\lambda,D,q,m) := \int_{H_{1}(G,\mathbb{R}^{D})}\prod_{e\in E_{G}}(P_{e}(k+B(q)))^{-\lambda_{e}}\df\mu,
  \end{equation*}
  where $\df \mu = \frac{\df^{Dl}k}{\pi^{Dl/2}}$ is a convenient multiple of the Lebesgue measure.
\end{defin}

\begin{remark}
  In general quantum field theories, (e.g. gauge theories) the Feynman rules also give polynomials
  of invariant scalar products in the numerator. But it is well known that these ``tensor''
  integrals can be expressed as a linear combination of scalar Feynman integrals with shifted values
  of the dimension and propagator decorations, see e.g. \cite{Tarasov_1996}. Since this is naturally
  part of our approach anyway, we do not lose any generality if this reduction is understood.
\end{remark}

\paragraph{Parametric representation.}
\label{sec:schwinger}

In order to apply the results of section \ref{sec:analyt-dimens-regul}, we want to express $I_{G}$
as a suitable Mellin transform. This will only work if the external momenta and masses are
sufficiently generic.

\begin{defin}\label{defin:generic-kinematics}
  A Feynman Graph $G$ has generic euclidean kinematics, if
  \begin{align*}
    \Real\left( \sum_{i\in I}q_{i} \right)^{2} &> 0 \\
    \Real\left( \sum_{i\in I}q_{i} \right)^{2} + \Real(m^{2}_{e}) &> 0
  \end{align*}
  for all proper subsets $I\subsetneq V^{ext}_{G}$ and massive edges $e\in E^{m}_{G}$.
\end{defin}
We will assume that $G$ has generic euclidean kinematics from now on.
When none of the $P_{e}$ vanish, we can use the identity
\begin{equation*}
  \frac{1}{p^{\lambda}}  = \int_{0}^{\infty}\frac{\alpha^{\lambda-1}}{\Gamma(\lambda)} e^{-\alpha p}\df \alpha, \quad \Real(p) > 0,
\end{equation*}
to write the integrand of the momentum-space amplitude as
\begin{equation*}
  \prod_{e\in E_{G}}P_{e}^{-\lambda_{e}}(p)=\int_{[0,\infty]^{E}}\prod_{e\in E_{G}}\frac{\alpha^{\lambda_{e}-1}}{\Gamma(\lambda_{e})}d\alpha_{e}e^{-\sum\alpha_{e}P_{e}(p)}.
\end{equation*}

Formally exchanging the integration variables gives the amplitude as
\begin{align*}
  I_{G}(\lambda,D,q,m)
  &= \int_{[0,\infty]^{E_{G}}}\prod_{e\in E_{G}}\frac{\alpha^{\lambda_{e}-1}}{\Gamma(\lambda_{e})}d\alpha_{e}\int_{H_{1}(G,\mathbb{R}^{D})}e^{-\sum\alpha_{e}P_{e}(k+B(q))}\df \mu \\
  &=: \int_{[0,\infty]^{E_{G}}}\prod_{e\in E_{G}}\frac{\alpha^{\lambda_{e}-1}}{\Gamma(\lambda_{e})}\df \alpha_{e}A_{G}(\lambda,D,q,m,\alpha).
\end{align*}

The integrand $A_{G}$ can be computed by reducing it to a gaussian integral. The answer will involve
the following graph polynomials:

\begin{defin}
  The \emph{first Symanzik} polynomial of the connected graph $G$ is
  \begin{equation*}
    \psi_{G}:= \sum_{T}\prod_{e\notin T}\alpha_{e},
  \end{equation*}
  where the sum is over all spanning trees of $G$. The (massless) \emph{second Symanzik} polynomial of $G$
  with external momentum $q\in \mathbb{C}^{E}\otimes \mathbb{C}^{D}$ is
  \begin{equation*}
    \varphi_{G}(q,\alpha) = \sum_{T_{1}\cup T_{2}}q^{2}_{T_{1}}\prod_{e\notin T_{1}\cup T_{2}}\alpha_{e},
  \end{equation*}
  where the first sum is over spanning two-forests and
  \begin{equation*}
    q_{T_{1}} = \sum_{v\in V(T_{1})}q_{v}
  \end{equation*}
  is the total momentum flowing through $T_{1}$. The full second Symanzik polynomial of $G$ is
  defined as
  \begin{equation*}
    \Phi_{G}(\alpha,q,m) = \varphi_{G}(\alpha,q)+\left( \sum_{i\in E_{G}}\alpha_{i}m^{2}_{i} \right)\psi_{G}(\alpha)
  \end{equation*}
\end{defin}

\begin{remark}
  By momentum conservation we have
  \begin{equation*}
    q_{T_{1}}^{2} = (-q_{T_{2}})^{2} = q_{T_{2}}^{2}.
  \end{equation*}
  Hence the above definition is unambiguous.
\end{remark}

The following theorem is then well-known (See e.g. \cite{Panzer:2015ida},
\cite{itzykson1985quantum}).

\begin{thm}
  The integrand $A_{G}$ can be calculated as
  \begin{equation*}
    A_{G}(\lambda,D,\alpha,q,m) = \frac{e^{-\frac{\Phi_{G}}{\psi_{G}}}}{\psi_{G}^{D/2}}.
  \end{equation*}
\end{thm}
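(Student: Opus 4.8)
The plan is to compute $A_G$ by evaluating the Gaussian momentum integral explicitly. Recall that after introducing Schwinger parameters,
\begin{equation*}
  A_G(\lambda,D,q,m,\alpha) = \int_{H_1(G,\mathbb{R}^D)} e^{-\sum_e \alpha_e P_e(k+B(q))}\,\df\mu,
\end{equation*}
and $P_e(p) = p_e^2 + m_e^2$ where $p = k + B(q)$ with $k$ ranging over the loop-momentum space $H_1(G,\mathbb{R}^D) \otimes \mathbb{R}^D \cong \mathbb{R}^{D\ell}$, $\ell = h^1(G)$. The key observation is that since the euclidean metric and the Lebesgue measure both factor over the $D$ coordinate directions of spacetime, the integral is a $D$-th power of a one-dimensional-spacetime version; equivalently, working componentwise I may assume $D = 1$ and then raise the final answer to the power $D/2$ appropriately (the measure $\df\mu = \df^{D\ell}k/\pi^{D\ell/2}$ is normalized precisely so that the Gaussian normalization is clean). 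So the heart of the matter is a single finite-dimensional real Gaussian integral.

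First I would write $\sum_e \alpha_e P_e(p) = \sum_e \alpha_e p_e^2 + \sum_e \alpha_e m_e^2$, and observe that $p = ik\cdot(\text{cycle matrix}) + B(q)$ is an affine-linear function of the loop momenta $k \in \mathbb{R}^{\ell}$ (per spacetime component). Thus $\sum_e \alpha_e p_e^2 = k^T \Lambda k + (\text{linear in }k) + (\text{const})$, where $\Lambda = \Lambda(\alpha)$ is the $\ell\times\ell$ symmetric positive-definite matrix $\Lambda_{ij} = \sum_e \alpha_e (C^T)_{ie}(C)_{ej}$ with $C$ the matrix of the inclusion $i: H_1(G,\mathbb{Z})\hookrightarrow \mathbb{Z}^{E_G}$. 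Completing the square, the Gaussian integral over $k$ produces the factor $(\det \Lambda)^{-1/2}$ (with the $\pi$-powers absorbed by the normalization of $\df\mu$), times $e^{-Q(\alpha,q,m)}$ where $Q$ is the value of the quadratic form at its minimum in $k$, i.e. the Schur complement expression. Then I would invoke the classical matrix-tree / all-minors-matrix-tree theorems: $\det\Lambda(\alpha) = \psi_G(\alpha) = \sum_T \prod_{e\notin T}\alpha_e$ (this is the weighted matrix-tree theorem applied to the graph Laplacian in the cycle basis, cf. \cite{itzykson1985quantum}), and the minimized quadratic form evaluates to $Q = \Phi_G(\alpha,q,m)/\psi_G(\alpha)$, where the numerator, by the second matrix-tree theorem for two-forests, is exactly $\varphi_G(\alpha,q) + (\sum_e \alpha_e m_e^2)\psi_G$. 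Restoring the $D$ spacetime dimensions turns $(\det\Lambda)^{-1/2}$ into $\psi_G^{-D/2}$ and leaves $e^{-\Phi_G/\psi_G}$ unchanged (the exponent is linear in the number of spacetime components), yielding $A_G = e^{-\Phi_G/\psi_G}/\psi_G^{D/2}$ as claimed.

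The main obstacle is the identification of the two determinant/minor expressions with the Symanzik polynomials: while the Gaussian integration itself is routine linear algebra, matching $\det\Lambda$ to the spanning-tree sum and the Schur complement to the two-forest sum requires the weighted matrix-tree theorem and its ``second'' (two-forest) variant, and one must be careful that the mass term enters as the stated linear correction. Since the excerpt already flags this as ``well-known'' and cites \cite{Panzer:2015ida} and \cite{itzykson1985quantum}, I would present the Gaussian reduction in full and then cite these sources for the combinatorial determinant identities rather than reprove the matrix-tree theorems. One should also note that positive-definiteness of $\Lambda(\alpha)$ for $\alpha$ in the open positive orthant guarantees convergence of the momentum integral, so all formal manipulations (in particular the exchange of integration order done earlier) are justified on that locus, which is all that is needed for the parametric representation.
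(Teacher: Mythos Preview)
The paper does not actually prove this theorem: it is stated as well-known with references to \cite{Panzer:2015ida} and \cite{itzykson1985quantum}, and no argument is given. Your sketch is precisely the standard derivation found in those references---Gaussian integration over loop momenta, completing the square, and identifying $\det\Lambda$ and the minimized quadratic form with the Symanzik polynomials via the matrix-tree theorems---so there is nothing to compare and your proposal is correct.
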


Now consider the diffeomorphism
\begin{align*}
  F:(0,\infty)\times \Delta^{E_{G}} &\cong \mathbb{R}^{E_{G}}_{+}\backslash\{0\} \\
  (t,\alpha) \mapsto t\alpha
\end{align*}
A simple calculation shows that the standard volume form $\prod_{e}\df\alpha_{e}$ pulls back to
\begin{equation*}
  F^{*}\left( \prod_{e}\df^{n}\alpha_{e} \right) = t^{E-1}\df t \wedge \Omega_{\Delta_{E_{G}}},
\end{equation*}
where
\begin{equation*}
  \Omega_{\Delta_{E_{G}}} = \sum_{i=1}^{E}(-1)^{i-1}a_{i}\df a_{1}\wedge \ldots \widehat{\df a_{i}} \ldots \wedge \df a_{E}\bigg\rvert_{\Delta_{E_{G}}}.
\end{equation*}
and we have fixed some ordering $E_{G}\cong\{1,\ldots,E\}$. Now let
\begin{equation*}
  \omega_{G} = \sum_{e\in E_{G}}\lambda_{e} - \frac{D}{2}h^{1}_{G}=:-sd_{G},
\end{equation*}
where $sd_{G}$ is the superficial degree of divergence of $G$.
Performing the integral over $t$ gives
(still formally)
\begin{align*}
  I_{G}(\lambda,D,q,m) &= \int_{[0,\infty]^{E}}\prod_{e\in E_{G}}\frac{\alpha^{\lambda_{e}-1}}{\Gamma(\lambda_{e})}\df \alpha_{e}A_{G}(\lambda,D,q,m,\alpha)\\
                       &= \int_{\Delta_{E_{G}}}\prod_{e\in E_{G}}\frac{\alpha^{\lambda_{e}-1}}{\Gamma(\lambda_{e})}\frac{\Omega_{\Delta_{E_{G}}}}{\psi_{G}^{D/2}}\int_{0}^{\infty}t^{\omega_{G}-1}e^{-t \frac{\Phi}{\psi}}\df t\\
                       &= \int_{\Delta_{E_{G}}}\prod_{e\in E_{G}}\frac{\alpha^{\lambda_{e}-1}}{\Gamma(\lambda_{e})}\frac{\Omega_{\Delta_{E_{G}}}}{\psi_{G}^{D/2}}\Gamma(\omega_{G})\left( \frac{\psi_{G}}{\Phi_{G}} \right)^{\omega_{G}}.
\end{align*}
There is a natural homeomorphism
\begin{equation*}
  P^{E_{G}}(\mathbb{R}^{+}) \rightarrow \Delta^{E_{G}},\quad [\alpha_{1}:\ldots:\alpha_{E_{G}}]\mapsto \left( \frac{\alpha_{i}}{\sum_{i\in E_{G}}\alpha_{i}} \right),
\end{equation*}
which identifies $\Omega_{\Delta_{E_{G}}}$ with the volume form $\Omega_{P^{E_{G}}}$ of $P^{E_{G}}$
constructed in section \ref{sec:TV:divisor-coordinate-ring}. We can then express the above result as
follows.
 
\begin{col}
  The amplitude can be expressed as the (possibly still divergent) projective integral
  \begin{equation*}
    I_{G}(\lambda,D,q,m) = \Gamma(\omega_{G})\int_{P^{E_{G}}(\mathbb{R}^{+})}\prod_{e\in E_{G}}\frac{\alpha^{\lambda_{e}-1}}{\Gamma(\lambda_{e})}\left( \frac{\psi_{G}}{\Phi_{G}} \right)^{\omega_{G}}\frac{\Omega_{P^{E_{G}}}}{\psi_{G}^{D/2}}.
  \end{equation*}
\end{col}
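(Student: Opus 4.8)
The statement is the terminus of the chain of identities already displayed, so the plan is simply to carry those out carefully. The one delicacy is that the intermediate steps (the Schwinger trick, an interchange of integration order, and two one-dimensional integrals) are literally valid only in a region of parameter space; I would therefore fix $q,m$ with generic euclidean kinematics in the sense of Definition~\ref{defin:generic-kinematics}, restrict to $(\lambda,D)$ with $\Real\lambda_e>0$ for every $e$, $\Real\omega_G>0$, and $\Real D$ small enough that the momentum integral converges, prove the identity there, and then note that both sides are meromorphic in $(\lambda,D)$, so equality propagates by analytic continuation. This admissible region is a nonempty open set, which is all that is needed.

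The first step writes $P_e^{-\lambda_e}=\int_0^\infty \alpha_e^{\lambda_e-1}e^{-\alpha_e P_e}\,\df\alpha_e/\Gamma(\lambda_e)$ for each edge, valid because $\Real\lambda_e>0$ and, after continuation from a genuine real euclidean configuration, $\Real P_e>0$. Multiplying these over $e\in E_G$ and interchanging the $k$- and $\alpha$-integrations by Fubini yields $I_G=\int_{[0,\infty]^{E_G}}\prod_e\frac{\alpha_e^{\lambda_e-1}}{\Gamma(\lambda_e)}\,\df\alpha_e\,A_G$, where $A_G$ is the Gaussian integral over $H_1(G,\mathbb{R}^D)$. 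Completing the square in $k$ and invoking the matrix-tree theorem identifies the relevant determinant with a power of $\psi_G$ and the value of the quadratic form at its minimum with $\Phi_G/\psi_G$; this is exactly the quoted formula $A_G=e^{-\Phi_G/\psi_G}/\psi_G^{D/2}$, which I would simply cite.

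Next I would push forward along the diffeomorphism $F:(0,\infty)\times\Delta^{E_G}\cong\mathbb{R}^{E_G}_+\backslash\{0\}$, $(t,\alpha)\mapsto t\alpha$, for which $F^*(\prod_e\df\alpha_e)=t^{E-1}\,\df t\wedge\Omega_{\Delta^{E_G}}$. Since each spanning tree omits $h^1(G)=:l$ edges and each spanning $2$-forest omits $l+1$, the polynomials $\psi_G$ and $\Phi_G$ are homogeneous of degrees $l$ and $l+1$; hence under $\alpha\mapsto t\alpha$ the factor $\prod_e\alpha_e^{\lambda_e-1}\cdot\psi_G^{-D/2}$ acquires $t^{\sum_e\lambda_e-E-lD/2}$ while $e^{-\Phi_G/\psi_G}\mapsto e^{-t\Phi_G/\psi_G}$, and combining with the $t^{E-1}$ from the volume form the overall power of $t$ is $\omega_G-1$. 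The inner integral is then $\int_0^\infty t^{\omega_G-1}e^{-t\Phi_G/\psi_G}\,\df t=\Gamma(\omega_G)(\psi_G/\Phi_G)^{\omega_G}$, which converges because $\Real\omega_G>0$ and $\Real(\Phi_G/\psi_G)>0$ on $\Delta^{E_G}$ — here $\psi_G>0$ as a sum of monomials with positive coefficients, and $\Real\Phi_G>0$ is precisely what the generic euclidean kinematics hypothesis buys. Finally, transporting along the homeomorphism $P^{E_G}(\mathbb{R}^+)\cong\Delta^{E_G}$, which by the construction in Section~\ref{sec:TV:divisor-coordinate-ring} carries $\Omega_{\Delta^{E_G}}$ to $\Omega_{P^{E_G}}$, produces the asserted projective integral.

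The genuinely nontrivial point is the Fubini step: one must verify that in the admissible region the positive function $\prod_e\alpha_e^{\Real\lambda_e-1}\,e^{-\sum_e\alpha_e\Real P_e(k+B(q))}$ is integrable over $H_1(G,\mathbb{R}^D)\times[0,\infty]^{E_G}$, which combines $\Real\lambda_e>0$, the continued positivity $\Real P_e>0$, and convergence of the Gaussian for $\Real D$ in range. Everything else — the Jacobian of $F$, the homogeneity count, and the two one-dimensional integrals — is routine, and the factor $\Gamma(\omega_G)$ on the right already flags where the remaining divergences of $I_G$ are concentrated.
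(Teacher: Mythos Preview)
Your proposal is correct and follows essentially the same route as the paper: Schwinger parametrisation, citation of the Gaussian formula $A_G=e^{-\Phi_G/\psi_G}/\psi_G^{D/2}$, the change of variables $F:(0,\infty)\times\Delta^{E_G}\to\mathbb{R}^{E_G}_+\setminus\{0\}$, the $t$-integral producing $\Gamma(\omega_G)(\psi_G/\Phi_G)^{\omega_G}$, and the identification $\Delta^{E_G}\cong P^{E_G}(\mathbb{R}^+)$. The paper treats the corollary as a purely formal identity (hence the qualifier ``possibly still divergent'' and the phrase ``formally exchanging the integration variables''), deferring the question of a nonempty region of absolute convergence to the subsequent remark and to the later analysis of $\Lambda_G$; your version supplies more of that justification up front, which is fine, though your appeal to meromorphy of both sides in $(\lambda,D)$ for the analytic continuation step is a forward reference to results proved only later in the paper.
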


We can interpret the above representation of $I_{G}$ as an integral over the open simplex
\begin{equation*}
  T_{N_{E_{G}}}(\mathbb{R}^{+})\subset P^{E_{G}}(\mathbb{R}^{+}),
\end{equation*}
which expresses it as Mellin transform.
For generic euclidean kinematics, the coefficients of $\Phi_{G}$ are contained in an open
strongly convex cone. Hence we are finally in a position to apply the result of section
\ref{sec:analyt-dimens-regul}.

\begin{remark}
  We will soon see that (under mild conditions on the graph G) we can choose the $\lambda_e$ such
  $\Real(\omega_{G})>0$ and the above expression for $I_{G}$ is absolutely convergent for generic
  euclidean momenta. In this case, we get an equality of absolutely convergent integrals
  \begin{align*}
    I_{G}(\lambda,D,q,m)
    &= \int_{H_{1}(G,\mathbb{R}^{D})}\prod_{e\in E_{G}}(P_{e}(k + B(p)))^{-\lambda_{e}}\df\mu\\
    &= \Gamma(\omega_{G})\int_{P^{E_{G}}(\mathbb{R}^{+})}\prod_{e\in E_{G}}\frac{\alpha^{\lambda_{e}-1}}{\Gamma(\lambda_{e})}\left( \frac{\psi_{G}}{\Phi_{G}} \right)^{\omega_{G}}\frac{\Omega_{P^{E_{G}}}}{\psi_{G}^{D/2}}.
  \end{align*}
\end{remark}

\paragraph{The Newton polytope of a Feynman graph.}

The results of section \ref{sec:analyt-dimens-regul} show that, to understand the convergence domain
of the integral $I_{G}$, we should understand the polytope $P(\psi_{G}\Phi_{G})$.

\begin{defin}
  The Newton polytope $P(\psi_{G}\Phi_{G})$ of a Feynman graph $G$ is called the Feynman polytope of
  $G$ and denoted by $P_{G}$.
\end{defin}

Let us first recall the factorization formulas of Brown (\cite{Brown_2017}). If $G$ has some massless
edges (and thus possible IR divergences) a special role is played by subgraphs, which carry all the
kinematics, in the following sense:

\begin{defin}
  An edge subgraph $\gamma\subset G$ containing all external vertices of $G$ in a single connected
  component will be called \emph{momentum spanning}. If $\gamma$ additionally contains all massive
  edges, then it will be called \emph{mass-momentum spanning} (m.m. for short).
\end{defin}
For an m.m. subgraph $\gamma$, we set $V^{ext}_{\gamma}=V^{ext}_{G}$ and $E^{M}_{\gamma}=E^{M}_{G}$
and the kinematics of $\gamma$ are the same as those of $G$. Otherwise we consider $\gamma$ to be
scaleless, i.e. $V^{ext}_{\gamma}=E^{M}_{\gamma}=\emptyset$. If $\gamma$ is a possibly disconnected
subgraph, we define the quotient $G/\gamma$ by contracting every connected component to a vertex.
The kinematics of $G/\gamma$ are inherited from $G$ in the obvious way. This implies that $G/\gamma$
has nontrivial kinematics if and only if $\gamma$ is not mass-momentum spanning.

In this way, we can associate to every subgraph $\gamma\subset G$ the ``flat deformation''
\begin{equation*}
  G|\gamma = \gamma \cup G/\gamma,
\end{equation*}
where exactly one of $\gamma$ and $G/\gamma$ has nontrivial kinematics. For a possibly disconnected
Feynman graph $\Gamma$ as above, we generalize the definitions of the Symanzik polynomials as
follows: If $\Gamma=\cup_{i=1}^{k} \Gamma_{i}$ is the disjoint union of connected graphs
$\Gamma_{i}$, then we set
\begin{align*}
  \psi_{\Gamma} &= \sum_{i=1}^{k}\psi_{\Gamma_{i}}\\
  \varphi_{\Gamma} &= \sum_{i=1}^{k}\varphi_{\Gamma_{i}}\prod_{j\neq i}\psi_{\Gamma_{j}}\\
  \Phi_{\Gamma} &= \sum_{i=1}^{k}\Phi_{\Gamma_{i}}\prod_{j\neq i}\psi_{\Gamma_{j}}.
\end{align*}

It will also be convenient to define $\delta^{m}_{\gamma}$ to be 1 if $\gamma$ is momentum-spanning
and $0$ otherwise. Similarly, we set $\delta^{mm}_{\gamma}$ to be 1 if $\gamma$ is mass-momentum
spanning and $0$ otherwise. With this notation, we can formulate the factorization formulas as
follows:

\begin{prop}[\cite{Brown_2017}]\label{prop:factor}
  Let $G$ be a connected Feynman graph and $\gamma\subset G$ a subgraph with connected components
  $\gamma_{0},\ldots,\gamma_{k}$.
  \begin{enumerate}
  \item There are polynomials $R^{\psi}_{G|\gamma},R^{\varphi}_{G|\gamma}$ and
    $R^{\Phi}_{G|\gamma}$, such that
    \begin{align*}
      \psi_{G} &= \psi_{G|\gamma} + R^{\psi}_{G|\gamma}\\
      \varphi_{G} &= \varphi_{G|\gamma} + R^{\varphi}_{G|\gamma} \\
      \Phi_{G} &= \Phi_{G|\gamma} + R^{\Phi}_{G|\gamma}.
    \end{align*}
    The degree $\deg_{\gamma}(R^{\cdot}_{G|\gamma})$ of the rest terms in the variables
    $(\alpha_{e})_{e\in\gamma}$ satisfies
    \begin{align*}
      \deg_{\gamma}(R^{\psi}_{G|\gamma}) &> \deg_{\gamma}(\psi_{G|\gamma})
                                           = h^{1}_{\gamma} \\
      \deg_{\gamma}(R^{\varphi}_{G|\gamma}) &> \deg_{\gamma}(\varphi_{G|\gamma})
                                           = h^{1}_{\gamma} + \delta^{m}_{\gamma}\\
      \deg_{\gamma}(R^{\Phi}_{G|\gamma}) &> \deg_{\gamma}(\Phi_{G|\gamma})
                                           = h^{1}_{\gamma}+\delta^{mm}_{\gamma}
    \end{align*}

  \item The polynomial $R^{\psi}_{G|\gamma}$ vanishes if and only if, for any spanning tree
    $T\subset G$, the graphs $\gamma_{i}\cap T$ are connected.
  \item Suppose $\gamma$ is not momentum spanning. Then the polynomial $R^{\varphi}_{G|\gamma}$
    vanishes if and only if the intersections $\gamma_{i}\cap F$ are connected for any spanning
    $2$-tree $F=T_{1}\cup T_{2}$ with $(q_{T_{1}})^{2}\neq 0$.
  \item Suppose $\gamma$ is momentum spanning and all external vertices are contained in the
    component $\gamma_{0}$. Then $R^{\varphi}_{G|\gamma}$ vanishes if and only if, for any spanning
    $2$-tree $F=T_{1}\cup T_{2}$ with $(q_{T_{1}})^{2}\neq 0$, the graphs $F\cap \gamma_{i}$ are connected for
    $i>0$ and $F\cap \gamma_{0}$ has exactly two connected components.
  \item The polynomial $R^{\Phi}_{G|\gamma}$ is given by
    \begin{equation*}
      R^{\Phi}_{G|\gamma} = R_{G|\gamma}^{\varphi} + \psi_{G|\gamma}\left( \sum_{e\in E^{M}_{G}\cap E_{\gamma}}m^{2}_{e}\alpha_{e} \right) + R^{\psi}_{G|\gamma}\left( \sum_{e\in E^{M}_{G}}m^{2}_{e}\alpha_{e} \right)
    \end{equation*}
    if $\gamma$ is not mass-momentum spanning and by
    \begin{equation*}
      R^{\Phi}_{G|\gamma} = R_{G|\gamma}^{\varphi} +  R^{\psi}_{G|\gamma}\left( \sum_{e\in E^{M}_{G}}m^{2}_{e}\alpha_{e} \right)
    \end{equation*}
    if $\gamma$ is mass-momentum spanning.
  \end{enumerate}
\end{prop}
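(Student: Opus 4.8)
The plan is to reprove Brown's factorization formulas (\cite{Brown_2017}) directly from the spanning-forest expansions $\psi_G=\sum_T\prod_{e\notin T}\alpha_e$ and $\varphi_G=\sum_{F=T_1\sqcup T_2}q_{T_1}^2\prod_{e\notin F}\alpha_e$, by organizing each polynomial according to its degree in the variables $(\alpha_e)_{e\in E_\gamma}$ (call this the $\gamma$-degree). The starting observation is that for any spanning tree $T$ of $G$ the graph $(V_{\gamma_i},E_T\cap E_{\gamma_i})$ is a forest, hence has at most $|V_{\gamma_i}|-1$ edges, with equality precisely when it is a spanning tree of $\gamma_i$. Summing over $i$, every monomial of $\psi_G$ has $\gamma$-degree at least $h^1_\gamma=\sum_i h^1_{\gamma_i}$, and the part of minimal $\gamma$-degree is the sum over those $T$ for which each $E_T\cap E_{\gamma_i}$ is a spanning tree. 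Contracting the components $\gamma_i$ gives a bijection between such $T$ and pairs (spanning forest of $\gamma$, spanning tree of $G/\gamma$) which matches complementary monomials, so by the defining formulas for the Symanzik polynomials of a disjoint union the minimal part is exactly $\psi_{G|\gamma}$; setting $R^\psi_{G|\gamma}:=\psi_G-\psi_{G|\gamma}$, which consists of terms of strictly larger $\gamma$-degree, gives the first decomposition with its degree bound.

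I would run the same argument for $\varphi_G$, the only new input being to control how a $2$-forest $F$ with $q_{T_1}^2\neq0$ can interact with $\gamma$. If $\gamma$ is not momentum spanning, the external vertices of $G$ are already distributed among the $\gamma_i$ and $G/\gamma$, so $q_{T_1}^2\neq0$ is compatible with every $E_F\cap E_{\gamma_i}$ being a spanning tree of $\gamma_i$; the minimal $\gamma$-degree is again $h^1_\gamma$ ($\delta^m_\gamma=0$) and the minimal part is $\varphi_{G|\gamma}$. If $\gamma$ is momentum spanning with all external vertices in $\gamma_0$, then $q_{T_1}^2\neq0$ forces the graph $(V_{\gamma_0},E_F\cap E_{\gamma_0})$ to be disconnected, so one extra $\gamma$-edge is lost; the minimal such $F$ have $E_F\cap E_{\gamma_0}$ a spanning $2$-forest of $\gamma_0$ and $E_F\cap E_{\gamma_i}$ a spanning tree for $i\ge1$, the minimal $\gamma$-degree is $h^1_\gamma+1$ ($\delta^m_\gamma=1$), and the same contraction bijection identifies the minimal part with $\varphi_{G|\gamma}$ (using $\varphi_{G/\gamma}=0$, since after contracting $\gamma$ all external momentum sits at one vertex). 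For $\Phi_G$ I would substitute the two decompositions just obtained into $\Phi_G=\varphi_G+M_G\psi_G$ with $M_G=\sum_{e\in E^M_G}m_e^2\alpha_e$, split $M_G=M_\gamma+M_{\bar\gamma}$ according to whether $e\in E_\gamma$, and use the analogous identity for $\Phi_{G|\gamma}$, namely $\varphi_{G|\gamma}$ plus the mass sum over the massive edges on the kinematics-carrying side of $G|\gamma$ times $\psi_{G|\gamma}$. When $\gamma$ is not mass-momentum spanning this side is $G/\gamma$ and the sum is $M_{\bar\gamma}$, yielding $R^\Phi_{G|\gamma}=R^\varphi_{G|\gamma}+M_\gamma\psi_{G|\gamma}+M_G R^\psi_{G|\gamma}$; when it is, the side is $\gamma$, so $M_{\bar\gamma}=0$, $M_\gamma=M_G$, the $M_\gamma\psi_{G|\gamma}$ term gets absorbed into $\Phi_{G|\gamma}$, and $R^\Phi_{G|\gamma}=R^\varphi_{G|\gamma}+M_G R^\psi_{G|\gamma}$. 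These are exactly the two formulas of part (5), and the $\gamma$-degree bound on $R^\Phi_{G|\gamma}$ then follows from those on $R^\varphi_{G|\gamma}$, $R^\psi_{G|\gamma}$ and the evident degree of the mass factors.

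For the vanishing criteria (2)--(4) the key point is that the maps $T\mapsto\prod_{e\notin T}\alpha_e$ and $F\mapsto\prod_{e\notin F}\alpha_e$ are injective (the monomial recovers the complementary edge set) and that $q_{T_1}^2$ depends only on the edge set of $F$ by momentum conservation, so no cancellation occurs among the nonzero-coefficient monomials of $\psi_G$ or $\varphi_G$. Hence $R^\psi_{G|\gamma}=0$ if and only if \emph{every} spanning tree $T$ is a minimal one, i.e.\ $E_T\cap E_{\gamma_i}$ is a spanning tree of each $\gamma_i$; recalling that $\gamma_i\cap T$ is taken with vertex set $V_{\gamma_i}$, this is precisely statement (2). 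Likewise $R^\varphi_{G|\gamma}=0$ if and only if every spanning $2$-forest with $q_{T_1}^2\neq0$ is minimal, which unwinds to (3) (all $\gamma_i\cap F$ connected) and to (4) (the $\gamma_i\cap F$ connected for $i\ge1$ and $\gamma_0\cap F$ with exactly two components) in the two kinematic regimes.

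I expect the step requiring the most care to be the momentum-spanning case of $\varphi$ behind (4): one must argue cleanly that separating the external momenta forces $(V_{\gamma_0},E_F\cap E_{\gamma_0})$ into exactly two components while the remaining $E_F\cap E_{\gamma_i}$ stay connected in a minimal configuration, and that the leading term then organizes precisely as $\varphi_{\gamma_0}\prod_{i\ge1}\psi_{\gamma_i}\,\psi_{G/\gamma}$ rather than some coarser combination — this is where the hypothesis that all external vertices of $G$ lie in a single component $\gamma_0$ is essential. Everything else is careful but routine bookkeeping of degrees; alternatively the whole proposition can be deduced by induction on $|E_\gamma|$ from the contraction--deletion relations for $\psi_G$, $\varphi_G$, $\Phi_G$, but the forest-expansion route above seems the most transparent.
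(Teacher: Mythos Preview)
Your proposal is correct and follows essentially the same route as the paper's proof sketch: organize the spanning $k$-forest expansion of $\psi_G$ and $\varphi_G$ by the $\gamma$-degree of each monomial, identify the minimal-degree piece with $\psi_{G|\gamma}$ (resp.\ $\varphi_{G|\gamma}$) via the bijection between ``good'' $k$-forests of $G$ and pairs of forests in $\gamma$ and $G/\gamma$, treat the momentum-spanning case of $\varphi$ separately with the split of $\gamma_0$, and then read off $R^\Phi_{G|\gamma}$ from the definition $\Phi_G=\varphi_G+M_G\psi_G$. Your version is in fact slightly more explicit than the paper's sketch (you spell out the no-cancellation argument behind (2)--(4) and the splitting $M_G=M_\gamma+M_{\bar\gamma}$ for (5)), but the underlying decomposition into $\mathcal{T}^k_\gamma$ and its complement is identical.
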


\begin{proof}[Proof sketch]
  We sketch the proof and refer to \cite{Brown_2017} for more details. For $k=1,2$, consider the
  subset $\mathcal{T}^{k}_{\gamma}$ of spanning $k$-trees of $G$, such that the intersections
  $T\cup\gamma_{i}$ are connected for all $T\in \mathcal{T}^{k}_{\gamma}$. This set is always
  nonempty (\cite[Lemma 2.1.]{Brown_2017}) and its elements are exactly those spanning $k$-trees,
  such that $\sum_{i}|T\cap \gamma_{i}|$ is maximal. The corresponding monomial $\alpha^{S}$ for
  $S=E_{G}\backslash T$ is then minimal in the variables $(\alpha_{e})_{e\in \gamma}$. Decomposing
  the sum over $k$-trees into a sum over $\mathcal{T}^{k}_{\gamma}$ and its complement gives the
  decompositions
  \begin{align*}
    \psi_{G}&=\psi_{\gamma}\psi_{G/\gamma} + R^{\psi}_{G|\gamma} \\
    \varphi_{G}&=\psi_{\gamma}\varphi_{G/\gamma} + R^{\varphi}_{G|\gamma},
  \end{align*}
  and $R^{\psi}_{G|\gamma}$ (resp. $R^{\varphi}_{G|\gamma}$) vanishes, if and only if
  $\mathcal{T}^{1}_{\gamma}$ (resp. $\mathcal{T}^{2}_{\gamma}$) consists of all spanning trees
  (resp. spanning $2$-trees). If $\gamma$ is not momentum spanning, then
  $\varphi_{G|\gamma}=\psi_{\gamma}\varphi_{G/\gamma}$ and the proposition follows. Now suppose
  $\gamma$ is momentum spanning, such that all external vertices are contained in the component
  $\gamma_{0}$. Then the polynomial $\varphi_{G/\gamma}$ vanishes and we have to use a different
  decomposition. In this case, we define $\mathcal{\tilde T}^{2}_{\gamma}$ to be those spanning
  $2$-trees $F$, such that $F\cap \gamma_{i}$ are trees for $i>0$ and $F$ splits $\gamma_{0}$ into
  two connected components. Splitting the sum over all $2$-trees into a sum over $\mathcal{\tilde
    T}^{2}_{\gamma}$ and its complement as above, gives the decomposition
  \begin{equation*}
    \varphi_{G} = \varphi_{\gamma_{0}}\psi_{\gamma_{1}}\cdots\psi_{\gamma_{k}}\psi_{G/\psi} + R^{\varphi}_{G|\gamma}.
  \end{equation*}
  The first term is just $\varphi_{G|\gamma}$ and the second term vanishes if and only if the
  complement of $\mathcal{\tilde T}^{2}_{\gamma}$ is empty. This establishes the first $4$ points.
  The expression for $R^{\Phi}_{G|\gamma}$ follows immediately from the decompositions of $\psi_{G}$
  and $\varphi_{G}$.
\end{proof}

\begin{col}
  The face of $P_{G}$ corresponding to the weight vector $e^{\gamma}$ is
  \begin{equation*}
    F_{e^{\gamma}}P_{G} = P_{\gamma} \times P_{G/\gamma}.
  \end{equation*}
\end{col}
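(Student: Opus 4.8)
The plan is to combine the factorization formulas of Proposition~\ref{prop:factor} with two elementary facts about Newton polytopes: that $P(fg)=P(f)+P(g)$, already noted in the text, and that passing to the face in a fixed direction commutes with Minkowski sums, $F_{u}(P+Q)=F_{u}P+F_{u}Q$ (minimising a linear functional over $P+Q$ amounts to minimising over each summand). Writing $P_{G}=P(\psi_{G}\Phi_{G})=P(\psi_{G})+P(\Phi_{G})$, this gives
\begin{equation*}
  F_{e^{\gamma}}P_{G}=F_{e^{\gamma}}P(\psi_{G})+F_{e^{\gamma}}P(\Phi_{G}).
\end{equation*}
The weight vector $e^{\gamma}=\sum_{i\in\gamma}e^{i}$ evaluates on a monomial $\alpha^{m}$ to $\sum_{e\in E_{\gamma}}m_{e}$, its degree in the edge variables of $\gamma$; so, with the minimum convention used throughout the paper, the face $F_{e^{\gamma}}P(h)$ of the Newton polytope of a polynomial $h$ is the Newton polytope of the part of $h$ of lowest $\gamma$-degree.

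Now I would invoke Proposition~\ref{prop:factor}: $\psi_{G}=\psi_{G|\gamma}+R^{\psi}_{G|\gamma}$ and $\Phi_{G}=\Phi_{G|\gamma}+R^{\Phi}_{G|\gamma}$, where $\psi_{G|\gamma}$ and $\Phi_{G|\gamma}$ are homogeneous in the $\gamma$-variables of degrees $h^{1}_{\gamma}$ and $h^{1}_{\gamma}+\delta^{mm}_{\gamma}$, while every monomial of the rest terms has strictly larger $\gamma$-degree. Hence the lowest-$\gamma$-degree parts of $\psi_{G}$ and $\Phi_{G}$ are exactly $\psi_{G|\gamma}$ and $\Phi_{G|\gamma}$, so that
\begin{equation*}
  F_{e^{\gamma}}P_{G}=P(\psi_{G|\gamma})+P(\Phi_{G|\gamma})=P(\psi_{G|\gamma}\,\Phi_{G|\gamma}).
\end{equation*}
It remains to recognise the right-hand side as $P_{\gamma}\times P_{G/\gamma}$. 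For this one uses that $G|\gamma=\gamma\cup G/\gamma$ is a disjoint union of graphs on the complementary edge sets $E_{\gamma}$ and $E_{G/\gamma}=E_{G}\setminus E_{\gamma}$, so the Symanzik polynomials split as $\psi_{G|\gamma}=\psi_{\gamma}\psi_{G/\gamma}$ and, since by construction exactly one of $\gamma$ and $G/\gamma$ carries nontrivial kinematics (the other being scaleless, with vanishing full second Symanzik polynomial), the disconnected-graph formula for $\Phi$ collapses to $\Phi_{G|\gamma}=\widehat{\Phi}_{\gamma}\,\widehat{\Phi}_{G/\gamma}$, where $\widehat{\Phi}_{H}$ denotes $\Phi_{H}$ if $H$ has kinematics and $\psi_{H}$ if $H$ is scaleless. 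Thus $\psi_{G|\gamma}\Phi_{G|\gamma}=(\psi_{\gamma}\widehat{\Phi}_{\gamma})(\psi_{G/\gamma}\widehat{\Phi}_{G/\gamma})$ is a product of a polynomial in the $\gamma$-variables and one in the $G/\gamma$-variables, whose Newton polytope is the Cartesian product $P(\psi_{\gamma}\widehat{\Phi}_{\gamma})\times P(\psi_{G/\gamma}\widehat{\Phi}_{G/\gamma})=P_{\gamma}\times P_{G/\gamma}$.

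The step requiring genuine care — the expected main obstacle — is this last identification: one must track the masses and external momenta through the decomposition $G|\gamma$, checking in particular that the scaleless side contributes the perfect square $\psi^{2}$ of its first Symanzik polynomial (so that its Feynman polytope is $2P(\psi)$) rather than a genuine second Symanzik polynomial, and one must reconcile the possibly several connected components of $\gamma$ with the conventions for the Symanzik polynomials of a disconnected graph. Everything else — the additivity of $P(\cdot)$ and of $F_{u}(\cdot)$ under Minkowski sums, and the extraction of the lowest-degree forms from Proposition~\ref{prop:factor} — is routine.
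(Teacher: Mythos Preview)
Your argument is correct and is exactly the intended one: the paper states this corollary immediately after Proposition~\ref{prop:factor} with no proof, precisely because it follows in the way you describe --- extracting the minimal-$\gamma$-degree part of $\psi_{G}\Phi_{G}$ via the factorisation, and then recognising $\psi_{G|\gamma}\Phi_{G|\gamma}$ as a product of polynomials in disjoint sets of variables. Your identification of the one genuine subtlety (that on the scaleless side $\Phi$ vanishes and $P_{\gamma}$ must be read as $2P(\psi_{\gamma})$, consistently with the later supermodular description $s_{\gamma}=2h^{1}$) is exactly the point the paper leaves implicit; note also that the displayed formula $\psi_{\Gamma}=\sum_{i}\psi_{\Gamma_{i}}$ in the paper is a typo for the product, which is what both you and the proof sketch of Proposition~\ref{prop:factor} actually use.
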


In the above Corollary, we have identified an edge subgraph $\gamma\subseteq G$ with its set of
edges $E_{\gamma}\subseteq E_{G}$. Using this convention, we define the subset function
$s_{G}:2^{E_{G}}\rightarrow \mathbb{Z}$ by
\begin{equation*}
  s_{G}(\gamma) = 2h^{1}(\gamma) + \delta^{mm}_{\gamma}.
\end{equation*}

The above proof shows in particular that
\begin{equation*}
  h^{1}(\gamma) = |E_{\gamma}| - |E_{\gamma}\cap E_{T}|,
\end{equation*}
where $T$ is a spanning tree of $G$, such that $T\cap \gamma$ is a maximal forest, i.e. intersects
every connected component of $\gamma$ in a spanning tree. With the notation of section
\ref{sec:gener-perm}, we can write this as
\begin{equation*}
  h^{1}(\gamma) = |S| - |E^{c}_{\gamma}\cap S|= (r^{*}_{G})^{\#}(\gamma), 
\end{equation*}
where $S=E_{G}\backslash T$ and $r^{*}_{G}$ is the rank function of the dual graph matroid
$M^{*}(G)$, whose bases are the complements of spanning trees. This shows that
\begin{equation*}
  h^{1}:2^{E_{G}}\rightarrow \mathbb{Z}
\end{equation*}
is supermodular.

Let us also call a pair $(T,i)$ of a spanning tree $T\subset G$ and an edge $i\in E_{G}$
\emph{admissible} if either $i\in E^{M}_{G}$ or $i\in T$ and both connected components of $T
\backslash i$ contain external momenta. Hence a pair $(T,i)$ is admissible if and only if the
monomial $\alpha_{i}\prod_{j\notin T}\alpha_{j}$ appears in $\Phi_{G}$. A subgraph $\gamma\subset G$
is then mass-momentum spanning if, for every admissible pair $(T,i)$, either $i\in \gamma$ or $T\cap
\gamma$ is not a maximal forest in $\gamma$.

\begin{prop}
  The function $s_{G}$ is supermodular.
\end{prop}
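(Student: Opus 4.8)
The plan is to exploit the decomposition
\[
  s_G \;=\; 2h^1 + \delta^{mm},
\]
where $\delta^{mm}\colon 2^{E_G}\to\{0,1\}$ is the indicator function of the mass-momentum spanning subgraphs. Since $h^1$ is supermodular (shown above), so is $2h^1$; and since enlarging an edge subgraph can only adjoin massive edges and improve connectivity, $\delta^{mm}$ is the indicator of an up-set of $2^{E_G}$. Supermodularity of $s_G$ is the assertion that
\[
  2\bigl(h^1(I\cap J)+h^1(I\cup J)-h^1(I)-h^1(J)\bigr) + \bigl(\delta^{mm}_{I\cap J}+\delta^{mm}_{I\cup J}-\delta^{mm}_{I}-\delta^{mm}_{J}\bigr)\;\ge\;0
\]
for all edge subgraphs $I,J$. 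The first bracket is $\ge 0$, and the second is negative only if $\delta^{mm}_I=\delta^{mm}_J=1$, i.e. $I$ and $J$ are both m.m. spanning; in that case $I\cup J$ is m.m. spanning because $\delta^{mm}$ is monotone, and $E_I\cap E_J\supseteq E^M_G$, so $I\cap J$ at least contains all massive edges. In every other case the second bracket is already $\ge 0$ (using monotonicity, e.g. $\delta^{mm}_{I\cap J}=0$ whenever $\delta^{mm}_I=0$ or $\delta^{mm}_J=0$). Thus the only case requiring work is: $I,J,I\cup J$ m.m. spanning but $I\cap J$ not — and then $I\cap J$ must fail to be \emph{momentum} spanning, the second bracket equals $0+1-1-1=-1$, and it suffices to prove that $h^1$ is \emph{strictly} supermodular at $(I,J)$, i.e. $h^1(I\cap J)+h^1(I\cup J)-h^1(I)-h^1(J)\ge 1$.

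For this I would pass to the graphic matroid $M(G)$. By the identities preceding the statement, $h^1$ is the nullity function $A\mapsto |A|-r(A)$ of $M(G)$, where $r$ is its rank function, and $r(A)=|V_G|-c(V_G,A)$ with $c(V_G,A)$ the number of connected components of the graph $(V_G,A)$ on the full vertex set. Writing $\Pi_A$ for the partition of $V_G$ into those components, one has $\Pi_{E_I\cup E_J}=\Pi_{E_I}\vee\Pi_{E_J}$, while $\Pi_{E_I\cap E_J}$ refines $\Pi_{E_I}\wedge\Pi_{E_J}$ (a path in $E_I\cap E_J$ is a path in each of $E_I$ and $E_J$), so $c(V_G,E_I\cap E_J)=|\Pi_{E_I\cap E_J}|\ge|\Pi_{E_I}\wedge\Pi_{E_J}|$. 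By semimodularity of the partition lattice of $V_G$ ($\rk\Pi=|V_G|-|\Pi|$) we get $|\Pi_{E_I}|+|\Pi_{E_J}|\le|\Pi_{E_I}\vee\Pi_{E_J}|+|\Pi_{E_I}\wedge\Pi_{E_J}|$, hence
\[
  c(V_G,E_I)+c(V_G,E_J)\;\le\;c(V_G,E_I\cup E_J)+\bigl|\Pi_{E_I}\wedge\Pi_{E_J}\bigr|\;\le\;c(V_G,E_I\cup E_J)+c(V_G,E_I\cap E_J),
\]
and, since the $|E_\gamma|$-terms cancel, this chain is exactly supermodularity of $h^1$. If it is an equality, then $|\Pi_{E_I}\wedge\Pi_{E_J}|=|\Pi_{E_I\cap E_J}|$, and as $\Pi_{E_I\cap E_J}$ refines $\Pi_{E_I}\wedge\Pi_{E_J}$ this forces $\Pi_{E_I\cap E_J}=\Pi_{E_I}\wedge\Pi_{E_J}$: two vertices lie in one component of $(V_G,E_I\cap E_J)$ whenever they do so in both $(V_G,E_I)$ and $(V_G,E_J)$. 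Now if $I$ and $J$ are momentum spanning, the external vertices of $G$ all lie in one block of $\Pi_{E_I}$ and in one block of $\Pi_{E_J}$, hence in one block of $\Pi_{E_I}\wedge\Pi_{E_J}=\Pi_{E_I\cap E_J}$, so $I\cap J$ is momentum spanning. Contrapositively, in the remaining case the inequality is strict, giving $h^1(I\cap J)+h^1(I\cup J)-h^1(I)-h^1(J)\ge 1$.

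Assembling the cases proves the displayed inequality, hence supermodularity of $s_G$. The routine parts are the case analysis on $\delta^{mm}$ and the bookkeeping with the nullity/partition identities; the one substantive input is the strict-supermodularity lemma for $h^1$. I expect its equality analysis to be the main obstacle: one must check that both semimodularity inequalities in the displayed chain become equalities simultaneously, and that the refinement of $\Pi_{E_I\cap E_J}$ below $\Pi_{E_I}\wedge\Pi_{E_J}$ — which can be strict in general, and whose strictness is precisely the mechanism producing the strict supermodularity we need — is used in the right direction. Equivalently, the content is that equality in supermodularity of $h^1$ at $(I,J)$ is exactly the statement that $(E_I,E_J)$ is a modular pair in $M(G)$, and that this is what lets the property "momentum spanning" descend to the intersection.
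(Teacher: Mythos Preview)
Your argument is correct and arrives at the same reduction as the paper: the only nontrivial case is $I,J$ both mass--momentum spanning with equality in the supermodular inequality for $h^{1}$, and one must then show that $I\cap J$ is mass--momentum spanning. The paper proves this via its characterisation of mass--momentum spanning subgraphs through \emph{admissible pairs} $(T,i)$: under $h^{1}$-equality it shows, by an inclusion--exclusion count, that $T\cap I$ and $T\cap J$ are maximal forests whenever $T\cap I\cap J$ is, and then uses that $I,J$ are m.m.\ spanning to force $i\in I\cap J$ for every admissible pair. Your route is different: you work directly in the partition lattice of $V_{G}$, observe that $h^{1}$-equality forces $\Pi_{E_{I}\cap E_{J}}=\Pi_{E_{I}}\wedge\Pi_{E_{J}}$, and read off that momentum-spanning descends to the intersection. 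This is more conceptual and avoids the admissible-pair machinery the paper sets up beforehand; it also makes transparent, as you note, that the underlying content is that $(E_{I},E_{J})$ is a modular pair in $M(G)$. The paper's approach, by contrast, stays closer to the actual monomials of $\Phi_{G}$ (admissible pairs index them), which pays off in how it meshes with the surrounding computations of $P_{G}$. One small point: your conclusion that the external vertices lie in a single component of the edge subgraph $I\cap J$ tacitly uses $|V^{ext}_{G}|\neq 1$ (so that the common block of $\Pi_{E_{I}\cap E_{J}}$ is not a singleton); this degenerate case is harmless for the Feynman polytope since momentum conservation then forces the lone external momentum to vanish, and the paper's proof is equally silent on it.
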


\begin{proof}
  Since we know from the above discussion that $2h^{1}$ is supermodular, we only need to show that
  \begin{equation*}
    s_{G}(\gamma_{1}) + s_{G}(\gamma) \le s_{G}(\gamma_{1}\cup \gamma_{2}) + s_{G}(\gamma_{1}\cap \gamma_{2}),
  \end{equation*}
  where $\gamma_{1}$ and $\gamma_{2}$ (and therefore $\gamma_{1}\cup\gamma_{2}$) are mass-momentum
  spanning. We can also assume that
  $h^{1}(\gamma_{1})+h^{1}(\gamma_{2})=h^{1}(\gamma_{1}\cup\gamma_{2})+h^{1}(\gamma_{1}\cap
  \gamma_{2})$, since the inequality is trivial otherwise and we can reduce to the case
  $\gamma_{1}\cup\gamma_{2}=G$. We will show that $\gamma_{1}\cap\gamma_{2}$ is also mass-momentum
  spanning. For this we must show that for every admissible pair $(T,i)$, where $T\subset
  \gamma_{1}\cup \gamma_{2}$ is a spanning tree such that $T\cap \gamma_{1}\cap \gamma_{2}$ is a
  maximal forest, we must have $i\in \gamma_{1}\cap\gamma_{2}$. From the inclusion-exclusion
  principle we have
  \begin{align*}
    h^{1}_{G}(\gamma_{1}\cup \gamma_{2}) + h^{1}_{G}(\gamma_{1}\cap\gamma_{2})
    &= |E_{\gamma_{1}\cup\gamma_{2}}| + |E_{\gamma_{1}\cap\gamma_{2}}| - |E_{(\gamma_{1}\cup\gamma_{2})\cap T}| - |E_{\gamma_{1}\cap\gamma_{2}\cap T}| \\
    &= |E_{\gamma_{1}}| + |E_{\gamma_{2}}| - |E_{\gamma_{1}}\cap E_{T}| - |E_{\gamma_{2}}\cap E_{T}| \\
    &\le h^{1}(\gamma_{1}) + h^{1}(\gamma_{2})
  \end{align*}
  with equality only if $T\cap \gamma_{1}$ and $T\cap \gamma_{2}$ are maximal forests. Since $(T,i)$
  is an admissible pair, we must have $i\in\gamma_{1}$ and $i\in\gamma_{2}$ since both graphs are
  mass-momentum spanning. But this means that $i\in \gamma_{1}\cap \gamma_{2}$ and $\gamma_{1}\cap
  \gamma_{2}$ must also be mass-momentum spanning.
\end{proof}

Our conventions on the kinematics of sub- and quotient graphs are justified by the following.

\begin{lem}
  The restrictions and contractions of $s_{G}$ by a subgraph $\gamma\subsetneq G$ are given by
  \begin{align*}
    s_{G}\vert_{\gamma} = s_{\gamma}, \quad s_{G}/_{\gamma} = s_{G/\gamma}.
  \end{align*}
\end{lem}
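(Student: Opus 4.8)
The plan is to use the additive decomposition $s_{G}=2h^{1}+\delta^{mm}$ of $s_{G}$ into the two set functions $\gamma\mapsto h^{1}(\gamma)$ and $\gamma\mapsto \delta^{mm}_{\gamma}$ on $2^{E_{G}}$, together with the elementary fact that restriction and contraction of set functions are linear, $(z_{1}+z_{2})|_{I}=z_{1}|_{I}+z_{2}|_{I}$ and $(z_{1}+z_{2})/_{I}=z_{1}/_{I}+z_{2}/_{I}$. Since we identify a subgraph $\gamma$ with its edge set and $E_{G/\gamma}=E_{G}\setminus E_{\gamma}$, it then suffices to prove the two claimed identities separately for $h^{1}$ and for $\delta^{mm}$. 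For $h^{1}$ the restriction identity is immediate: $h^{1}(\delta)$ is the first Betti number of the graph with edge set $\delta$, which does not depend on the ambient graph, so $(h^{1}|_{\gamma})(\delta)=h^{1}(\delta)=h^{1}_{\gamma}(\delta)$ for all $\delta\subseteq\gamma$. For the contraction I would use the identification $h^{1}=(r^{*}_{G})^{\#}$ with the dual graph matroid $M^{*}(G)$ recorded just above, together with Example \ref{exam:matroid-restr-contract}: the contraction $h^{1}/_{\gamma}$ corresponds to the matroid $M^{*}(G)|_{E_{G}\setminus E_{\gamma}}=(M(G)/E_{\gamma})^{*}=M(G/\gamma)^{*}$ (duality interchanges deletion and contraction, and contraction of a graphic matroid is the graphic matroid of the contracted graph), whence $h^{1}/_{\gamma}=(r^{*}_{G/\gamma})^{\#}=h^{1}_{G/\gamma}$. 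Alternatively one can argue directly with cycle spaces: for any edge subgraph $H\supseteq\gamma$ there is a short exact sequence $0\to Z_{1}(\gamma)\to Z_{1}(H)\to Z_{1}(H/\gamma)\to 0$, giving $h^{1}(H)=h^{1}(\gamma)+h^{1}(H/\gamma)$, which applied to $H=\delta\cup\gamma$ yields $(h^{1}/_{\gamma})(\delta)=h^{1}_{G/\gamma}(\delta)$.

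For $\delta^{mm}$ I would argue by cases on whether $\gamma$ is mass-momentum spanning in $G$, carefully unwinding the kinematic conventions: an m.m.\ spanning $\gamma$ is given the kinematics $V^{ext}_{\gamma}=V^{ext}_{G}$, $E^{M}_{\gamma}=E^{M}_{G}$ of $G$, while a non-m.m.-spanning one is declared scaleless; likewise $G/\gamma$ inherits the images of the kinematics of $G$ and is scaleless precisely when $\gamma$ is m.m.\ spanning. The three facts I would isolate are: (a) m.m.-spanningness is upward closed under adding edges (adding edges can only enlarge the component carrying $V^{ext}$ and preserves the inclusion $E^{M}\subseteq\,\cdot\,$); (b) a scaleless Feynman graph has no m.m.-spanning subgraphs, consistently with $\Phi\equiv 0$, so $\delta^{mm}$ vanishes identically on it; and (c) contracting $\gamma$ transports the decomposition of an edge subgraph $H\supseteq\gamma$ into connected pieces to that of $H/\gamma$ compatibly, carrying $V^{ext}_{G}$ onto $V^{ext}_{G/\gamma}$ and $E^{M}_{G}\setminus E_{\gamma}$ onto $E^{M}_{G/\gamma}$. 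Using (a) and (b): for m.m.\ spanning $\gamma$, a $\delta\subseteq\gamma$ is m.m.\ spanning in $G$ iff it is m.m.\ spanning in $\gamma$ (same kinematics, and the component structure of $\delta$ is intrinsic), and for non-m.m.-spanning $\gamma$ both $\delta^{mm}|_{\gamma}$ and $\delta^{mm}_{\gamma}$ vanish on all of $2^{E_{\gamma}}$; this gives the restriction identity. For the contraction, $(\delta^{mm}/_{\gamma})(\delta)=\delta^{mm}_{\delta\cup\gamma}-\delta^{mm}_{\gamma}$: if $\gamma$ is m.m.\ spanning this is $1-1=0$ and $G/\gamma$ is scaleless so $\delta^{mm}_{G/\gamma}\equiv 0$; if $\gamma$ is not m.m.\ spanning then $\delta^{mm}_{\gamma}=0$ and fact (c) shows $\delta\cup\gamma$ is m.m.\ spanning in $G$ exactly when its image $\overline{\delta}=(\delta\cup\gamma)/\gamma$ is m.m.\ spanning in $G/\gamma$, i.e.\ $(\delta^{mm}/_{\gamma})(\delta)=\delta^{mm}_{G/\gamma}(\overline{\delta})$. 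Combining the two pieces gives $s_{G}|_{\gamma}=2h^{1}_{\gamma}+\delta^{mm}_{\gamma}=s_{\gamma}$ and $s_{G}/_{\gamma}=s_{G/\gamma}$.

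The only genuine obstacle I anticipate is the combinatorial bookkeeping inside fact (c): verifying that the condition ``some connected component contains all external vertices'' is transported correctly through the contraction $H\rightsquigarrow H/\gamma$ (and, dually, that passing to an m.m.\ spanning $\gamma$ does not disturb the relevant component structure), which requires being precise about which external vertices of $G$ get identified in $G/\gamma$ and about isolated vertices; everything else is formal. As a cross-check, and an alternative route that sidesteps the case analysis, one can instead combine the factorization $F_{e^{\gamma}}P_{G}=P_{\gamma}\times P_{G/\gamma}$ with the observation $P_{G}=P(s_{G})$ — which follows from Proposition \ref{prop:factor}, since the lowest-degree part of $\psi_{G}\Phi_{G}$ in the variables of $\gamma$ has degree $2h^{1}(\gamma)+\delta^{mm}_{\gamma}=s_{G}(\gamma)$ — and Proposition \ref{prop:supermodular-faces}, which identifies $F_{e^{\gamma}}P(s_{G})$ with $P(s_{G}|_{\gamma})\times P(s_{G}/_{\gamma})$; comparing the two factorizations and using that each of $s_{G}$, $s_{\gamma}$, $s_{G/\gamma}$ is the canonical (tight) supermodular function of its polytope forces $s_{G}|_{\gamma}=s_{\gamma}$ and $s_{G}/_{\gamma}=s_{G/\gamma}$.
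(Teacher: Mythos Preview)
Your main argument is correct and is essentially the paper's own proof: split $s_{G}=2h^{1}+\delta^{mm}$, handle $h^{1}$ via the dual-matroid identification and Example \ref{exam:matroid-restr-contract} (the paper invokes precisely $M^{*}(G)\vert_{E_{\gamma}^{c}}=M^{*}(G/\gamma)$), and handle $\delta^{mm}$ by the case split on whether $\gamma$ is mass-momentum spanning, together with the transport of connected components under contraction that you call fact~(c). Your write-up is in fact more explicit than the paper's on the $\delta^{mm}$ side.

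One genuine issue: your ``alternative route'' at the end is circular in this paper's logical order. The equality $P_{G}=P(s_{G})$ is Theorem~\ref{thm:feynman-supermodular}, and its proof \emph{uses} the present lemma (to identify $F_{e^{\gamma}}P_{G}$ with $F_{e^{\gamma}}P(s_{G})$ via the induction step). Moreover, what you extract from Proposition~\ref{prop:factor} is only the inclusion $P_{G}\subseteq P(s_{G})$: knowing that the minimum of $\langle m,e^{\gamma}\rangle$ over $P_{G}$ equals $s_{G}(\gamma)$ gives the supporting halfspaces, not that every vertex of $P(s_{G})$ lies in $P_{G}$. So this alternative does not stand on its own; either delete it or recast it as a consistency check to be revisited after Theorem~\ref{thm:feynman-supermodular} is established.
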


\begin{proof}
  The equality $s_{G}\vert_{\gamma}=s_{\gamma}$ for the restrictions follows immediately from the
  definitions. For $\eta\subset G/\gamma$, let $\tilde \eta$ be the edge subgraph corresponding to
  $E_{\gamma}\cup E_{\eta}$. The contraction equality then claims that
  \begin{align*}
    s_{G}/_{\gamma}(\eta) &= 2h^{1}(\tilde \eta) -2h^{1}(\gamma) + \delta^{mm}_{\eta}-\delta^{mm}_{\gamma}\\
                       &= 2h^{1}(\eta) + \delta^{mm}_{\eta} = s_{G/\gamma}(\eta).
  \end{align*}
  The equality $h^{1}(\tilde \eta)-h^{1}(\gamma)=h^{1}(\eta)$ follows from Example
  \ref{exam:matroid-restr-contract} and the matroid equality $M^{*}(G)\vert_{E^{c}_{\gamma}} = M^{*}(G/\gamma)$.

  We then only have to prove that $\tilde\eta$ is mass-momentum spanning if $\eta$ is. Clearly
  $\eta$ contains all massive edges of $G/\gamma$ if and only if $\tilde \eta$ contains all masses
  of $G$. Similarly, if $\tilde \eta$ contains all external vertices in a single connected
  component, then so does its contraction $\eta$. On the other hand, if all external vertices of
  $G/\gamma$ lie in the component $\eta^{0}\subseteq\eta$ and $\gamma_{1},\ldots,\gamma_{s}\subseteq
  \gamma$ are the components of $\gamma$ containing external vertices, then the subgraph defined by
  the edge set $E_{\eta^{0}}\cup E_{\gamma_{1}}\ldots\cup E_{\gamma_{s}}$ is connected and contains
  all external vertices.
\end{proof}

\begin{thm}\label{thm:feynman-supermodular}
  The Feynman polytope $P_{G}$ is the generalized permutahedron associated to the function $s_{G}$.
\end{thm}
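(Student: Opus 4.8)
The plan is to read off the vertices of $P_G$ and of the base polyhedron $P(s_G)$ directly from the combinatorics and check that they coincide. Since the Symanzik polynomials are homogeneous, $\psi_G$ of degree $h^1(G)$ and $\Phi_G$ of degree $h^1(G)+1$, the product $\psi_G\Phi_G$ is homogeneous of degree $2h^1(G)+1=s_G(E_G)$ (using $\delta^{mm}_{E_G}=1$), so $P_G=P(\psi_G)+P(\Phi_G)$ lies in the hyperplane $\{\langle m,e^{E_G}\rangle=s_G(E_G)\}$, as does $P(s_G)$; we assume $G$ has nontrivial kinematics so that $\Phi_G\neq 0$ and $P_G$ is defined. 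The monomials of $\psi_G$ have exponent vectors $e^{E_G\setminus T}$ for $T$ a spanning tree, and by the discussion of admissible pairs following Prop.~\ref{prop:factor} the monomials of $\Phi_G$ are exactly $\alpha_i\prod_{e\notin T}\alpha_e$ for admissible $(T,i)$, with coefficients $q_{T_1}^2$, $m_i^2$, or $q_{T_1}^2+m_i^2$ — all with positive real part by generic euclidean kinematics, so no cancellation occurs. Hence
\begin{equation*}
  P_G=\Conv\bigl\{\,e^{E_G\setminus T_1}+e^{E_G\setminus T_2}+e^{i}\ \big\rvert\ T_1,T_2\text{ spanning trees},\ (T_2,i)\text{ admissible}\,\bigr\}.
\end{equation*}

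First I would prove $P_G\subseteq P(s_G)$. Fix a generator $w=e^{E_G\setminus T_1}+e^{E_G\setminus T_2}+e^{i}$ and a subset $\eta\subseteq E_G$. Then $\langle w,e^\eta\rangle=|\eta\setminus T_1|+|\eta\setminus T_2|+|\eta\cap\{i\}|$, and since $\eta\cap T_j$ is a forest in $\eta$ it has at most $|E_\eta|-h^1(\eta)$ edges, so $|\eta\setminus T_j|\ge h^1(\eta)$; therefore $\langle w,e^\eta\rangle\ge 2h^1(\eta)+|\eta\cap\{i\}|$. If $\eta$ is not mass-momentum spanning then $s_G(\eta)=2h^1(\eta)$ and we are done. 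If $\eta$ is mass-momentum spanning, then by the characterization of such subgraphs either $i\in\eta$, giving $\langle w,e^\eta\rangle\ge 2h^1(\eta)+1=s_G(\eta)$, or $T_2\cap\eta$ is not a maximal forest of $\eta$, so $|\eta\setminus T_2|\ge h^1(\eta)+1$ and again $\langle w,e^\eta\rangle\ge 2h^1(\eta)+1=s_G(\eta)$. For $\eta=E_G$ all inequalities above become equalities, so $w\in P(s_G)$, and taking convex hulls gives $P_G\subseteq P(s_G)$.

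For the reverse inclusion I would use that $s_G$ is supermodular (the preceding proposition), so $P(s_G)$ is a generalized permutahedron whose normal fan coarsens $\Sigma_{\pi_{E_G}}$; hence every vertex of $P(s_G)$ is cut out by a weight vector in the interior of a maximal cone of $\Sigma_{\pi_{E_G}}$, i.e.\ by a maximal flag $\mathcal I:\emptyset=I_0\subsetneq I_1\subsetneq\cdots\subsetneq I_n=E_G$, and by Prop.~\ref{prop:supermodular-vertex-generation} its $e_k$-coordinate is $s_G(I_k)-s_G(I_{k-1})$, where $\{e_k\}=I_k\setminus I_{k-1}$. Now $h^1(I_k)-h^1(I_{k-1})\in\{0,1\}$, equal to $1$ exactly when $e_k$ closes a cycle in $I_{k-1}$, and the edges that do not close a cycle form a spanning tree $T_{\mathcal I}$ with complement $E_G\setminus T_{\mathcal I}$; moreover $\delta^{mm}_{I_k}-\delta^{mm}_{I_{k-1}}=1$ for exactly one index $k^*$ — the least $k$ with $I_k$ mass-momentum spanning, which exists since $I_n=E_G$ is m.m.\ and is unique since m.m.\ is preserved by adding edges. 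Hence the vertex equals $2\,e^{E_G\setminus T_{\mathcal I}}+e^{e_{k^*}}=e^{E_G\setminus T_{\mathcal I}}+(e^{E_G\setminus T_{\mathcal I}}+e^{e_{k^*}})$, a sum of a vertex of $P(\psi_G)$ and — provided $(T_{\mathcal I},e_{k^*})$ is admissible — a vertex of $P(\Phi_G)$, hence a point of $P_G$; since $P(s_G)$ is the convex hull of its vertices this gives $P(s_G)\subseteq P_G$ and finishes the proof.

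The main obstacle is the admissibility of $(T_{\mathcal I},e_{k^*})$, which is a short case analysis at level $k^*$, where $I_{k^*-1}$ is not m.m.\ but $I_{k^*}=I_{k^*-1}\cup\{e_{k^*}\}$ is. If $e_{k^*}$ closes a cycle then $I_{k^*}$ and $I_{k^*-1}$ have the same vertices and components, hence the same momentum-spanning status; since $I_{k^*}$ is momentum-spanning so is $I_{k^*-1}$, which can therefore fail to be m.m.\ only by missing a massive edge, necessarily $e_{k^*}$, so $e_{k^*}\in E^M_G$ and the pair is admissible. If $e_{k^*}$ does not close a cycle then $e_{k^*}\in T_{\mathcal I}$; assuming $e_{k^*}\notin E^M_G$ (otherwise done), $I_{k^*-1}$ already contains all massive edges, so it must fail momentum-spanning while $I_{k^*}$ satisfies it, which forces all external vertices of $G$ into $C_1\cup C_2$ with each of the two components $C_1,C_2$ of $I_{k^*-1}$ merged by $e_{k^*}$ carrying at least one. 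In $T_{\mathcal I}\setminus e_{k^*}$, with components $D_1,D_2$, the vertices of $C_1$ lie in one $D_j$ and those of $C_2$ in the other — they cannot coincide, else $e_{k^*}$ would close a cycle in $T_{\mathcal I}$, and the vertices of each $C_j$ are joined inside $T_{\mathcal I}\cap I_{k^*-1}\subseteq T_{\mathcal I}\setminus e_{k^*}$ — so both $D_1$ and $D_2$ carry external momentum and $(T_{\mathcal I},e_{k^*})$ is admissible. (An alternative route is to prove $P(\psi_G)=P(h^1)$ from Example~\ref{exam:matroid} and $P(\Phi_G)=P(h^1+\delta^{mm})$ by the same vertex/inequality bookkeeping, then invoke additivity of base polyhedra under Minkowski sums; this does not shorten the argument.)
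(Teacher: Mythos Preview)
Your argument is correct and genuinely different from the paper's. The paper proves $P_G\subseteq P(s_G)$ by invoking the factorization formulas (Prop.~\ref{prop:factor}) and then gets the reverse inclusion by \emph{induction} on $|E_G|$: combining the face formula $F_{e^\gamma}P_G=P_\gamma\times P_{G/\gamma}$ with the lemma $s_G|_\gamma=s_\gamma$, $s_G/_\gamma=s_{G/\gamma}$ and Prop.~\ref{prop:supermodular-faces} yields $F_{e^\gamma}P_G=F_{e^\gamma}P(s_G)$ for every proper $\gamma$, so every vertex of $P(s_G)$ already lies in $P_G$. You instead use the greedy vertex formula (Prop.~\ref{prop:supermodular-vertex-generation}) to write each vertex of $P(s_G)$ explicitly as $2e^{E_G\setminus T_{\mathcal I}}+e^{e_{k^*}}$ and then prove $(T_{\mathcal I},e_{k^*})$ is admissible, so the vertex lies in $P(\psi_G)+P(\Phi_G)=P_G$. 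The paper's route is shorter because the face/restriction machinery was already set up; your route is non-inductive and has the pleasant by-product of identifying the vertices of $P_G$ with admissible pairs (spanning tree plus distinguished edge), which the inductive proof does not make explicit.

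Two small points to tighten. In Case~1 you say that if $e_{k^*}$ closes a cycle then $I_{k^*}$ and $I_{k^*-1}$ have ``the same vertices and components''; this fails when $e_{k^*}$ is a self-loop at a vertex not yet in $I_{k^*-1}$. The conclusion still holds (the new component carries no external momentum, so the momentum-spanning status is unchanged and the missing ingredient must be a mass), but you should say so. In Case~2 you treat only the situation where $e_{k^*}$ joins two existing components $C_1,C_2$ of $I_{k^*-1}$; when $e_{k^*}$ instead attaches a new vertex $v$ to a component $C_1$, the argument goes through with $C_2:=\{v\}$ (necessarily external if $e_{k^*}\notin E^M_G$), using that $T_{\mathcal I}\cap I_{k^*-1}$ is a spanning forest of $I_{k^*-1}$, which you use implicitly and is the standard greedy/matroid fact underlying your construction of $T_{\mathcal I}$.
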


\begin{proof}
  It is immediate from Prop. \ref{prop:factor}, that $P_{G}\subseteq P(s_{G})$. We will prove the
  reverse inclusion by induction over the number of edges. Hence we can assume that
  $P_{\gamma}=P(s_{\gamma})$ and $P_{G/\gamma}=P(s_{G/\gamma})$ for all nontrivial edge subgraphs
  $\gamma\subsetneq G$. The previous lemma then shows that
  \begin{equation*}
    F_{e^{\gamma}}P_{G} = P_{\gamma}\times P_{G/\gamma} = P(s_{\gamma})\times P(s_{G/\gamma})=F_{e^{\gamma}}P(s_{G}).
  \end{equation*}
  Since all vertices of $P(s_{G})$ are given by intersections of the above faces, we must have
  $P(s_{G})(0)\subset P_{G}$ and thus $P_{G}=P(s_{G})$.
\end{proof}

\begin{remark}
  The above factorization formula break down for some special kinematic configurations, e.g. when
  some combination of momenta $q_{F}$ are on-shell, i.e. when $q_{F}^{2}=0$ but $q_{F}\neq 0$. In
  these cases, $P(\Phi_{G})$ is not always a generalized permutahedron. An example is the box
  integral with all external momenta on-shell, which is discussed in \cite{Panzer:2015ida}.
\end{remark}

Let us recall that a graph $G$ is called \emph{1-vertex reducible} if the removal of any vertex
disconnects the graph and \emph{1-vertex irreducible (1VI)} otherwise. We consider graphs with a
single edge to be 1VI and disconnected graphs to be 1-vertex reducible. A graph on two vertices is
1VI if it contains no self-loops. Any graph then has a unique decomposition into 1VI-subgraphs. Note
that these are exactly the connected components of the graph matroid $M(G)$ and its dual $M^{*}(G)$
(See e.g. \cite[Section 2.3]{oxley2006matroid}). In particular we have that $\gamma\cap G$ is a
union of $1VI$ components if and only if the intersection with every spanning tree $T\subset G$ is a
maximal forest. By Prop. \ref{prop:factor}, this is equivalent to $R^{\psi}_{G|\gamma}=0$.

Following \cite{Smirnov_2012}, we will call a Feynman graph with generic euclidean kinematics $G$
\emph{s-irreducible}, if every 1VI-component $\gamma$ has nontrivial kinematics. This means that
either $\gamma$ contains massive edges, or its removal would disconnect the graphs into two
components, each containing external vertices. This is compatible with the notion of irreducibilty
defined in section \ref{sec:gener-perm}:

\begin{prop}\label{prop:polytope-dimension}
  Let $G$ be a connected Feynman graph with generically euclidean kinematics. The Feynman polytope
  $P_{G}$ is an irreducible generalized permutahedron if and only if $G$ is s-irreducible.
\end{prop}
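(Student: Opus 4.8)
The plan is to combine the identification $P_{G}=P(s_{G})$ of Theorem \ref{thm:feynman-supermodular} with the structure theory of base polyhedra of supermodular functions. Throughout we may assume $G$ has nontrivial kinematics, which is what makes $\Phi_{G}\neq 0$ and $P_{G}$ defined; then $E_{G}$ is itself mass--momentum spanning, so $s_{G}(E_{G})=2h^{1}(G)+1$. By Proposition \ref{prop:supermodular-decomposition}, $P_{G}$ fails to be irreducible exactly when there is a partition $E_{G}=I\sqcup J$ into nonempty parts with $s_{G}=s_{G}\vert_{I}+s_{G}\vert_{J}$; evaluating on $E_{G}$, such a decomposition forces $s_{G}(I)+s_{G}(J)=s_{G}(E_{G})$. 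Conversely, this single equality already implies reducibility: adding the defining inequalities $\langle m,e^{I}\rangle\ge s_{G}(I)$ and $\langle m,e^{J}\rangle\ge s_{G}(J)$ of $P(s_{G})$ and comparing with $\langle m,e^{E_{G}}\rangle=s_{G}(E_{G})$ shows both are equalities on all of $P_{G}$, so $P_{G}$ lies in an affine subspace of $\mathbb{R}^{E_{G}}$ of codimension two (the functionals $e^{I},e^{J}$ being independent), hence is not full-dimensional. Thus the statement reduces to: \emph{$G$ is s-irreducible if and only if no nontrivial partition $E_{G}=I\sqcup J$ satisfies $s_{G}(I)+s_{G}(J)=s_{G}(E_{G})$.}

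Next I would analyse such a partition using $s_{G}=2h^{1}+\delta^{mm}$. Combining $s_{G}(I)+s_{G}(J)=2h^{1}(G)+1$ with the supermodular inequality $h^{1}(I)+h^{1}(J)\le h^{1}(I\cup J)+h^{1}(I\cap J)=h^{1}(G)$ and a parity count forces both $\delta^{mm}_{I}+\delta^{mm}_{J}=1$ and $h^{1}(I)+h^{1}(J)=h^{1}(G)$. Since $h^{1}=(r^{*}_{G})^{\#}$, the latter identity is equivalent to $r^{*}_{G}(I)+r^{*}_{G}(J)=r^{*}_{G}(E_{G})$, which by the standard theory of matroid connectivity (\cite{oxley2006matroid}) holds precisely when $I$, and hence $J$, is a union of connected components of $M^{*}(G)$, i.e. a union of $1$VI-components of $G$. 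So the forbidden partitions are exactly those $E_{G}=I\sqcup J$ with $I,J$ unions of $1$VI-components and with exactly one of them mass--momentum spanning.

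Finally I would establish the equivalence with s-irreducibility. If such a partition exists, say with $J$ mass--momentum spanning and $I$ not, pick any $1$VI-component $\gamma\subseteq I$: then $\gamma$ carries no massive edge (all massive edges lie in $J$), and its removal leaves every external vertex connected (they are already connected inside the subgraph $J\subseteq E_{G}\setminus E_{\gamma}$), so $\gamma$ has trivial kinematics and $G$ is not s-irreducible. Conversely, if $G$ is not s-irreducible, choose a $1$VI-component $\gamma$ with trivial kinematics and put $I=E_{\gamma}$, $J=E_{G}\setminus E_{\gamma}$; these are unions of $1$VI-components, so $h^{1}(I)+h^{1}(J)=h^{1}(G)$, and $J$ is mass--momentum spanning because it contains every massive edge and (by triviality of $\gamma$'s kinematics) connects all external vertices. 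It remains to see that $\gamma$ itself is not mass--momentum spanning, which is the crux of the argument: if it were, then, $\gamma$ being massless, $G$ would have no massive edges and $\gamma$ would be a connected subgraph containing all $\ge 2$ external vertices of $G$, hence joining two of them, $u$ and $v$, by a path inside $E_{\gamma}$; but $u$ and $v$ are also joined by a path inside $E_{G}\setminus E_{\gamma}$, and two edge-disjoint $u$--$v$ paths contain a cycle of $G$ meeting both $E_{\gamma}$ and its complement, contradicting that these lie in different connected components of the cycle matroid $M(G)$. Hence $\delta^{mm}_{I}=0$, the forbidden partition exists, and $P_{G}$ is reducible. The main obstacle is precisely this last graph/matroid lemma, together with verifying the dictionary "$\gamma$ has trivial kinematics $\iff$ $E_{G}\setminus E_{\gamma}$ is mass--momentum spanning'' for a $1$VI-component $\gamma$.
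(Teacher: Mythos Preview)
Your proof is correct, but it takes a genuinely different route from the paper's. The paper argues directly from the factorisation formulas (Prop.~\ref{prop:factor}): $P_{G}$ fails to have full dimension iff $R^{\psi}_{G|\gamma}=R^{\Phi}_{G|\gamma}=0$ for some proper $\gamma$, and the explicit description of these remainders (points 2--5 of Prop.~\ref{prop:factor}) is then matched against the definition of s-irreducibility. In particular, the paper never passes through the identification $P_{G}=P(s_{G})$ or the abstract decomposition theorem for supermodular base polyhedra.

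Your argument instead invokes Theorem~\ref{thm:feynman-supermodular} and Prop.~\ref{prop:supermodular-decomposition} to reduce irreducibility to the nonexistence of a partition $E_{G}=I\sqcup J$ with $s_{G}(I)+s_{G}(J)=s_{G}(E_{G})$, and then analyses this equation via $s_{G}=2h^{1}+\delta^{mm}$ and matroid connectivity. This is cleaner conceptually---it shows exactly how the two ingredients $h^{1}$ and $\delta^{mm}$ contribute---and it explains \emph{why} $1$VI-components are the relevant objects (they are the matroid separators) rather than reading this off from the vanishing of $R^{\psi}$. The price is that you rely on Theorem~\ref{thm:feynman-supermodular}, which itself is proved via Prop.~\ref{prop:factor}, so the dependency graph is a bit longer. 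Your final graph-theoretic lemma (that a $1$VI-component with trivial kinematics cannot be mass--momentum spanning) is a nice extra observation that the paper's argument sidesteps by handling the mass--momentum spanning case via the complement instead.
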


\begin{proof}
  Suppose first that $G$ is s-irreducible. The generalized polyhedron $P_{G}$ is reducible if and
  only if it has dimension less than $|E_{G}|-1$. By Prop. \ref{prop:factor}, this is equivalent to
  $R^{\psi}_{G|\gamma}=R^{\Phi}_{G|\gamma}=0$ for some $\gamma\subsetneq G$. Then $\gamma$ must be a
  union of 1VI-components. If $\gamma$ were mass-momentum spanning, then its complement would be a
  union of 1VI-components with no kinematics and $G$ would not be s-irreducible. For $\gamma$ not
  mass-momentum spanning, Prop. \ref{prop:factor} gives
  \begin{equation*}
    R^{\Phi}_{G|\gamma} = R_{G|\gamma}^{\varphi} + \psi_{G|\gamma}\left( \sum_{e\in E^{M}_{G}\cap E_{\gamma}}m^{2}_{e}\alpha_{e} \right).
  \end{equation*}
  Since every 1VI component of $\gamma$ contains non-trivial kinematics, we must have either
  $R^{\varphi}_{G|\gamma}\neq 0$ or $ \sum_{e\in E^{M}_{G}\cap E_{\gamma}}m^{2}_{e}\alpha_{e}\neq
  0$, so that $R^{\Phi}_{G|\gamma}$ can not vanish.

  On the other hand, suppose $\gamma\subsetneq G$ is a 1VI component with no kinematics. Then
  $R^{\psi}_{G|\gamma}=R^{\varphi}_{G|\gamma}=0$ and the above formula shows
  $R^{\Phi}_{G|\gamma}=0$. Hence $P_{G}$ is reducible.
\end{proof}

Now suppose $G$ is s-irreducible. Let $\mathcal{F}_{G}$ denote the set of all edge subgraphs
$\gamma\subsetneq G$, such $\gamma$ and $G/\gamma$ are both $s$-irreducible, when given the
kinematics described in the beginning of this section. $\mathcal{F}_{G}$ is the disjoint union of
the two subsets
\begin{align*}
  \mathcal{S}_{G} &=\{\gamma\subsetneq G \ \rvert\ \gamma \text{ is $s$-irreducible and m.m, } G/\gamma \text{ is irreducible} \} \\
  \mathcal{H}_{G} &=\{\gamma\subsetneq G \ \rvert\ \gamma \text{ is irreducible and not m.m, } G/\gamma \text{ is }s\text{-irreducible} \}.
\end{align*}
By Prop. \ref{col:supermodular-facets}, these are exactly the facets of $P_{G}$.

\begin{col}
  Let $G$ be an $s$-irreducible Feynman graph. Then the polytope $P_{G}$ has the facet presentation
  \begin{align*}
    P_{G} = \{\langle m, e^{E_{G}} \rangle = 2h^{1}(G)+1\}\bigcap_{\gamma\in \mathcal{F}_{G}} \{\langle m, e^{\gamma} \rangle \ge 2h^{1}(\gamma)+\delta^{m.m}_{\gamma}\}.
  \end{align*}
\end{col}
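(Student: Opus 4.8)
\emph{Proof proposal.} The plan is to read off the facet presentation from the two facts established immediately above: that $P_G = P(s_G)$ (Theorem~\ref{thm:feynman-supermodular}) and that the facet-defining subsets of an irreducible base polyhedron are characterised by Corollary~\ref{col:supermodular-facets}. First I would record the affine hull. Since $G$ is $s$-irreducible, Proposition~\ref{prop:polytope-dimension} shows $P_G = P(s_G)$ is an irreducible generalized permutahedron, hence (Proposition~\ref{prop:supermodular-decomposition}) of full dimension $|E_G|-1$; its affine hull is therefore the hyperplane $\{\langle m, e^{E_G}\rangle = s_G(E_G)\}$, and $s_G(E_G) = 2h^1(G) + \delta^{mm}_G = 2h^1(G)+1$ because the whole graph is trivially mass-momentum spanning. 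By the definition of the base polyhedron $P(z)$, inside this hyperplane $P(s_G)$ is cut out by the half-spaces $\langle m, e^\gamma\rangle \ge s_G(\gamma) = 2h^1(\gamma)+\delta^{mm}_\gamma$ over proper subsets $\gamma\subsetneq E_G$, and a full-dimensional polytope equals the intersection of its affine hull with the half-spaces supporting its facets, so all that remains is to single out the facet-defining $\gamma$.

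Second I would apply the facet criterion. By Corollary~\ref{col:supermodular-facets}, valid since $P(s_G)$ is irreducible, the subset $\gamma\subsetneq E_G$ indexes a facet $F_{e^\gamma}P(s_G)$ if and only if $P(s_G|_\gamma)$ and $P(s_G/_\gamma)$ are both irreducible. Using the preceding Lemma that $s_G|_\gamma = s_\gamma$ and $s_G/_\gamma = s_{G/\gamma}$, together with Theorem~\ref{thm:feynman-supermodular} applied to the smaller graphs, these polytopes are exactly $P_\gamma$ and $P_{G/\gamma}$; and by Proposition~\ref{prop:polytope-dimension} they are irreducible precisely when $\gamma$ and $G/\gamma$ are $s$-irreducible (with the inherited kinematics), i.e.\ when $\gamma\in\mathcal{F}_G$. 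This yields exactly the claimed presentation; the decomposition $\mathcal{F}_G = \mathcal{S}_G\sqcup\mathcal{H}_G$ just records whether $\gamma$ is mass-momentum spanning, which is also where the value of $\delta^{mm}_\gamma=\delta^{m.m}_\gamma$ appearing in the bound comes from.

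The point that needs a little care is that Corollary~\ref{col:supermodular-facets} and Proposition~\ref{prop:polytope-dimension} are phrased for irreducible generalized permutahedra and connected graphs, whereas an edge subgraph $\gamma$ may be disconnected. I would dispose of this by noting that if $\gamma = \gamma_1\sqcup\cdots\sqcup\gamma_r$ with $r\ge 2$ connected components, then (by the Corollary identifying $F_{e^\gamma}P_G = P_\gamma\times P_{G/\gamma}$ and induction, or directly) $P_\gamma \cong \prod_i P_{\gamma_i}$ has dimension at most $\sum_i(|E_{\gamma_i}|-1) = |E_\gamma|-r < |E_\gamma|-1$, so $P(s_G|_\gamma)$ is reducible, $\gamma$ is not facet-defining, and correspondingly $\gamma\notin\mathcal{F}_G$; hence one may assume $\gamma$ --- and then automatically $G/\gamma$ --- connected before quoting those results. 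The only remaining work is the bookkeeping already carried out in the section: checking that the inherited-kinematics conventions fixed at the start of this section match the notions of (ir)reducibility of Section~\ref{sec:gener-perm}, and that $\mathcal{S}_G$ and $\mathcal{H}_G$ are genuinely disjoint. I expect this reconciliation of conventions, rather than any new geometric argument, to be the only mildly delicate part.
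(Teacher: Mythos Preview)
Your proposal is correct and follows precisely the route the paper takes: the Corollary is stated without proof because the paper has just written ``By Prop.~\ref{col:supermodular-facets}, these are exactly the facets of $P_{G}$,'' and your argument spells out that sentence together with the identifications $s_G|_\gamma = s_\gamma$, $s_G/_\gamma = s_{G/\gamma}$ and Proposition~\ref{prop:polytope-dimension}. Your extra paragraph handling disconnected $\gamma$ is more care than the paper supplies but does not change the approach.
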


\begin{remark}
  In the terminology of \cite{Speer:1975dc}, $s$-irreducible, mass-momentum spanning subgraphs
  $\gamma\subset G$ are called links and a subgraph $\gamma$ whose quotient $G/\gamma$ is
  $s$-irreducible is called saturated.
\end{remark}

\paragraph{Building sets and sectors.}

We continue to assume that $G$ is an s-irreducible Feynman graph. We want to construct smooth
refinements of the normal fan $\Sigma_{P_{G}}$. These correspond to sector decompositions by
Corollary \ref{col:sec-decomp}. Let first $\Sigma_{Hepp}$ be the normal fan of the permutahedron
$\pi_{E_{G}}$ on the set of edges $E_{G}$. By Proposition \ref{thm:feynman-supermodular} we have the
following.

\begin{prop}
  The fan $\Sigma_{Hepp}$ is a smooth refinement of $\Sigma_{P_{G}}$.
\end{prop}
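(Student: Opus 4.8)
The plan is to split the statement into its two halves — that $\Sigma_{Hepp}$ refines $\Sigma_{P_G}$, and that $\Sigma_{Hepp}$ is smooth — and to observe that the first half is essentially definitional. By Theorem~\ref{thm:feynman-supermodular} the Feynman polytope $P_G$ is the base polyhedron $P(s_G)$ of the supermodular function $s_G$, hence a generalized permutahedron; and by the very definition of a generalized permutahedron the normal fan $\Sigma_{P_G}$ is a coarsening of the normal fan $\Sigma_{\pi_{E_G}}$ of the permutahedron on $E_G$. Since $\Sigma_{Hepp}$ is by definition $\Sigma_{\pi_{E_G}}$, this is exactly the assertion that $\Sigma_{Hepp}$ refines $\Sigma_{P_G}$. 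So the only real content left is smoothness.

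For smoothness I would use the identification $\Sigma_{\pi_{E_G}} = \Sigma_{G_{max}}$ with $G_{max} = 2^{E_G}\setminus\{E_G\}$ recorded above, whose maximal cones are $\sigma_{\mathcal{I}} = \pos([e^{I_1}],\dots,[e^{I_{n-1}}])$ indexed by complete flags $\mathcal{I}\colon \emptyset = I_0 \subsetneq I_1 \subsetneq \dots \subsetneq I_n = E_G$, where $n = |E_G|$ so that $\dim N_{E_G} = n-1$. By Proposition~\ref{prop:properties-fan-variety}(2) it is enough to show that, for each such flag, the ray generators $[e^{I_1}],\dots,[e^{I_{n-1}}]$ form part of a $\mathbb{Z}$-basis of $N_{E_G}$: these maximal cones are then simplicial with a basis as generating set, every one of their faces is generated by a subset of that basis, and (since $\Sigma_{Hepp}$ is the normal fan of a polytope) every cone of $\Sigma_{Hepp}$ is a face of a maximal one.

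The remaining point is an elementary computation. Choosing the labelling $E_G = \{i_1,\dots,i_n\}$ compatibly with the flag, so that $I_k = \{i_1,\dots,i_k\}$, one has $e^{I_k} = e^{i_1}+\dots+e^{i_k}$ in $\mathbb{Z}^{E_G}$, so the matrix expressing $(e^{I_1},\dots,e^{I_n})$ in the standard basis is lower unitriangular; hence $(e^{I_1},\dots,e^{I_n})$ is itself a $\mathbb{Z}$-basis of $\mathbb{Z}^{E_G}$. Since $e^{I_n} = e^{E_G}$ generates exactly the sublattice we quotient by, the images $[e^{I_1}],\dots,[e^{I_{n-1}}]$ form a $\mathbb{Z}$-basis of $N_{E_G} = \mathbb{Z}^{E_G}/\mathbb{Z}\,e^{E_G}$, which gives smoothness. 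I do not expect a genuine obstacle here; the proof is bookkeeping, and the only thing to be careful about is that the conventions line up — in particular that the maximal cones of $\Sigma_{\pi_{E_G}}$ really are generated by the vectors $[e^{I}]$ attached to the subsets in a complete chain, so that the unitriangularity observation applies verbatim.
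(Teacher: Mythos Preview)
Your proposal is correct and matches the paper's reasoning: the paper states this proposition as an immediate consequence of Theorem~\ref{thm:feynman-supermodular} (that $P_G$ is a generalized permutahedron), which by definition gives the refinement, and treats smoothness of $\Sigma_{Hepp}=\Sigma_{G_{max}}$ as already established by the wonderful-model machinery of Section~\ref{sec:TV:iterblowup} (Corollary~\ref{col:blowup-nested-sets}). Your direct unitriangularity argument for smoothness is a perfectly good self-contained alternative to invoking that general result.
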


Each maximal cone $\sigma\in\Sigma_{Hepp}$ is given by a complete flag $\{\emptyset=I_{0}\subsetneq
\ldots I_{n}=E_{G}\}$, or equivalently by a total order $E_{G}=\{i_{1}<\ldots<i_{n}\}$. The
coordinates of $\sigma$ are then defined by
\begin{equation*}
  \alpha_{i} = \prod_{j\le i}x_{j}.
\end{equation*}
The corresponding sectors are the classical Hepp sectors. Thus they are easy to describe but grow
superexponentially with the number of edges.

Let us apply the theory of section \ref{sec:gener-perm} to construct more economical refinements.
First consider the subset system
\begin{equation*}
  \mathcal{G}_{s} = \{E_{\gamma} \ \rvert\ \gamma\subsetneq G \text{ is s-irreducible}\}.
\end{equation*}
A special case of Prop. \ref{prop:supermodular-wonderful-refinement} then gives
\begin{prop}
  The set $\mathcal{G}_{s}$ is a building set. The corresponding fan $\Sigma_{\mathcal{G}_{s}}$ is a
  smooth refinement of $\Sigma_{P_{G}}$.
\end{prop}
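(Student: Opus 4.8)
The plan is to verify the two assertions of Prop.~\ref{prop:supermodular-wonderful-refinement} hold for $z=s_{G}$, namely that $\mathcal{G}_{s}$ is a building set in $2^{E_{G}}$ (in the reduced sense, a building set of $2^{E_{G}}\backslash\{E_{G}\}$) and that the associated fan refines $\Sigma_{P_{G}}$. By Theorem~\ref{thm:feynman-supermodular} we have $P_{G}=P(s_{G})$, and by Prop.~\ref{prop:polytope-dimension} the set $\mathcal{G}_{s}=\{E_{\gamma}\mid \gamma\subsetneq G\text{ s-irreducible}\}$ coincides with $\mathcal{G}_{z}=\{I\subsetneq E_{G}\mid P(s_{G}\vert_{I})\text{ is irreducible}\}$, since $s_{G}\vert_{I}=s_{\gamma}$ for the edge subgraph $\gamma$ on $I$ by the restriction/contraction lemma, and $P(s_{\gamma})$ is irreducible exactly when $\gamma$ is s-irreducible. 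Similarly $G$ s-irreducible means $P_{G}=P(s_{G})$ is an irreducible generalized permutahedron of full dimension $|E_{G}|-1$. Hence the statement is a direct specialization of Prop.~\ref{prop:supermodular-wonderful-refinement}.

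Concretely, first I would invoke Prop.~\ref{prop:supermodular-wonderful-refinement}: it states that $\mathcal{\tilde G}_{z}$ is always a building set in $2^{E}$ and, when $P(z)$ is irreducible, the fan $\Sigma_{\mathcal{G}_{z}}$ of the reduced building set $\mathcal{G}_{z}=\mathcal{\tilde G}_{z}\backslash\{E\}$ is a smooth refinement of $\Sigma_{P(z)}$. Applying this with $z=s_{G}$ and using the identifications above — $\mathcal{G}_{z}=\mathcal{G}_{s}$ as subset systems, $P(z)=P_{G}$ irreducible since $G$ is s-irreducible — immediately gives that $\mathcal{G}_{s}$ is a building set and $\Sigma_{\mathcal{G}_{s}}$ is a smooth refinement of $\Sigma_{P_{G}}$. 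The smoothness is part of Prop.~\ref{prop:supermodular-wonderful-refinement} (the wonderful-model fan $\Sigma_{\mathcal{G}}$ of any building set $\mathcal{G}$ in $2^{E}$ is built from star subdivisions along cones $e^{I}$ and its maximal cones $\sigma_{\mathcal{I}}=\pos(e^{I}\mid I\in\mathcal{I})$ over maximal nested sets $\mathcal{I}$ are generated by part of a $\mathbb{Z}$-basis of $N_{E}$, cf. Corollary~\ref{col:blowup-nested-sets}).

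The only genuine work is checking that the dictionary between graph-theoretic and polytope-theoretic notions is exact, and this has essentially been assembled in the preceding paragraphs: the lemma giving $s_{G}\vert_{\gamma}=s_{\gamma}$, $s_{G}/_{\gamma}=s_{G/\gamma}$; Prop.~\ref{prop:polytope-dimension} equating s-irreducibility of $\gamma$ with irreducibility of $P_{\gamma}=P(s_{\gamma})$; and the supermodularity of $s_{G}$. I expect the main (and only) obstacle to be a bookkeeping subtlety: the excerpt's Prop.~\ref{prop:supermodular-wonderful-refinement} is stated with a slightly garbled reduced building set $\mathcal{G}_{z}=\mathcal{G}\backslash\{E\}$ (presumably $\mathcal{\tilde G}_{z}\backslash\{E_{G}\}$), so one should state clearly that $\mathcal{G}_{s}$ already omits $E_{G}$ by construction (s-irreducible subgraphs are proper), that it contains all singletons $\{i\}$ (a single edge is s-irreducible — self-loops are massive or carry momentum in the relevant cases, and a bridge edge, being 1VI, is s-irreducible), and that $\mathcal{\tilde G}_{s}=\mathcal{G}_{s}\cup\{E_{G}\}$ is the full building set in $2^{E_{G}}$. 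With those remarks the proof is complete.

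\begin{proof}
  By Theorem~\ref{thm:feynman-supermodular}, $P_{G}=P(s_{G})$, and by the restriction/contraction lemma, $s_{G}\vert_{\gamma}=s_{\gamma}$ for the edge subgraph $\gamma$ on any $I\subsetneq E_{G}$. Hence $P(s_{G}\vert_{I})=P_{\gamma}$, which by Prop.~\ref{prop:polytope-dimension} is irreducible if and only if $\gamma$ is s-irreducible. Therefore the subset system
  \begin{equation*}
    \mathcal{\tilde G}_{s_{G}} = \{I\subseteq E_{G}\ \rvert\ P(s_{G}\vert_{I})\text{ is irreducible}\} = \mathcal{G}_{s}\cup\{E_{G}\},
  \end{equation*}
  the last equality because $G$ itself is assumed s-irreducible, so $E_{G}\in \mathcal{\tilde G}_{s_{G}}$, and the proper members are exactly the s-irreducible edge subgraphs. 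Since $s_{G}$ is supermodular, Prop.~\ref{prop:supermodular-wonderful-refinement} applies: $\mathcal{\tilde G}_{s_{G}}$ is a building set in $2^{E_{G}}$, so its reduction $\mathcal{G}_{s}=\mathcal{\tilde G}_{s_{G}}\backslash\{E_{G}\}$ is a building set in $2^{E_{G}}\backslash\{E_{G}\}$. Moreover $P(s_{G})=P_{G}$ is irreducible because $G$ is s-irreducible (Prop.~\ref{prop:polytope-dimension}), so the same proposition gives that the associated fan $\Sigma_{\mathcal{G}_{s}}$ is a smooth refinement of $\Sigma_{P(s_{G})}=\Sigma_{P_{G}}$.
\end{proof}
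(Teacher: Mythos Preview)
Your proof is correct and follows exactly the approach the paper intends: the proposition is stated in the paper as ``a special case of Prop.~\ref{prop:supermodular-wonderful-refinement}'', and you have spelled out precisely the dictionary (via Theorem~\ref{thm:feynman-supermodular}, the restriction/contraction lemma, and Prop.~\ref{prop:polytope-dimension}) that makes this specialization go through.
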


\begin{remark}
  The sectors corresponding to $\Sigma_{s_{G}}$ are the Smirnov-Speer sectors considered in
  (\cite{Smirnov_2012}, \cite{Smirnov_2009}).
\end{remark}

Another possibility was recently introduced in \cite{Brown_2017}. Let us call a subgraph
$\gamma\subseteq G$ \emph{motic} if
\begin{equation*}
  s_{G}(\gamma \backslash i) < s_{G}(\gamma)
\end{equation*}
for all edges $i\in E_{\gamma}$, i.e. deleting an edge either drops the loop number, or destroys the
property of being mass-momentum spanning. Note that for massive Feynman graphs, the motic subgraphs
are exactly the disjoint unions of one-particle irreducible (1PI) graphs. Let $B_{motic}$ be the set
of motic subgraphs and
\begin{equation*}
  \mathcal{G}_{motic} = B_{motic}\cup\{\{i\} \ \rvert\ i\in E_{G}\}.
\end{equation*}

\begin{prop}
  The set $\mathcal{G}_{motic}$ is a building set and the corresponding fan $\Sigma_{motic}$ is a
  smooth refinement of $\Sigma_{P_{G}}$.
\end{prop}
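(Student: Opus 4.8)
The strategy is to apply Proposition~\ref{prop:supermodular-wonderful-refinement} once we know that $B_{motic}$ arises from the supermodular function $s_G$ in the way required, i.e.\ that $\mathcal{G}_{motic}$ coincides with (or is squeezed between) building sets built out of irreducible subgraphs. Concretely, I would first recall that by Theorem~\ref{thm:feynman-supermodular} the Feynman polytope is $P_G = P(s_G)$, and that $s_G$ is supermodular with restrictions and contractions $s_G|_\gamma = s_\gamma$, $s_G/_\gamma = s_{G/\gamma}$. Using Example~\ref{exam:building-set}, to prove $\mathcal{G}_{motic}$ is a building set it suffices to show: (i) it contains all singletons (true by construction); and (ii) for $\gamma_1,\gamma_2 \in B_{motic}$ with $E_{\gamma_1}\cap E_{\gamma_2}\neq\emptyset$, either $\gamma_1\cup\gamma_2\in B_{motic}$ or $\gamma_1\cup\gamma_2 = G$.

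For step (ii), the key is the characterization of motic subgraphs in terms of $s_G$: $\gamma$ is motic iff $s_G(\gamma\setminus i) < s_G(\gamma)$ for every $i\in E_\gamma$. I would argue as follows. Suppose $\gamma_1,\gamma_2$ are motic with overlapping edge sets, and let $\eta = \gamma_1\cup\gamma_2$, assumed proper in $G$. Pick any edge $i\in E_\eta$; without loss of generality $i\in E_{\gamma_1}$. Supermodularity of $s_G$ applied to $\gamma_1$ and $\eta\setminus i = (\gamma_1\setminus i)\cup\gamma_2$ gives
\begin{equation*}
  s_G(\gamma_1) + s_G(\eta\setminus i) \le s_G(\eta) + s_G\big((\gamma_1\cap(\eta\setminus i))\big) = s_G(\eta) + s_G(\gamma_1\setminus i),
\end{equation*}
since $\gamma_1\cap(\eta\setminus i) = \gamma_1\setminus i$ (as $i\in E_{\gamma_1}$). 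Because $\gamma_1$ is motic, $s_G(\gamma_1\setminus i) < s_G(\gamma_1)$, i.e.\ $s_G(\gamma_1) - s_G(\gamma_1\setminus i) \ge 1$, and rearranging yields $s_G(\eta) - s_G(\eta\setminus i) \ge s_G(\gamma_1) - s_G(\gamma_1\setminus i) \ge 1 > 0$. As $i\in E_\eta$ was arbitrary, $\eta$ is motic, so $\eta\in B_{motic}$. This proves the building-set property.

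Once $\mathcal{G}_{motic}$ is known to be a building set, the fact that $\Sigma_{motic}=\Sigma_{\mathcal{G}_{motic}}$ is a smooth refinement of $\Sigma_{P_G}$ follows the same way as in Proposition~\ref{prop:supermodular-wonderful-refinement}: smoothness is automatic from Corollary~\ref{col:blowup-nested-sets} (it is a sequence of star subdivisions of the smooth fan of $P^{E_G}$), and refinement of $\Sigma_{P_G}$ requires checking that each maximal nested set $\mathcal{I}\subset\mathcal{G}_{motic}$ determines a vertex of $P_G$. By Example~\ref{exam:nested-sets} such an $\mathcal{I}$ comes from a maximal chain $\emptyset = J_0\subsetneq\cdots\subsetneq J_n = E_G$ with $\mathcal{I}=\bigcup_k \max\mathcal{G}_{motic}^{\subseteq J_k}$; taking $m$ to be the vertex of $P_G=P(s_G)$ with $m_k = s_G(J_k)-s_G(J_{k-1})$ (Proposition~\ref{prop:supermodular-vertex-generation}) and using that $\max\mathcal{G}_{motic}^{\subseteq J_k}$ partitions $J_k$ into pieces on which $s_G$ is additive gives $\langle m, e^I\rangle = s_G(I)$ for all $I\in\mathcal{I}$, exactly as in the earlier proof. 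The main obstacle is really step (ii), the closure-under-union property: the supermodularity argument above handles the case $\eta\subsetneq G$ cleanly, but one must be careful that the motic condition is stated edge-by-edge and that $\gamma_1\cup\gamma_2$ as an \emph{edge subgraph} behaves well (no subtlety about isolated vertices arises since we work with edge subgraphs throughout), and one should double-check that the case $\gamma_1\cup\gamma_2=G$ genuinely needs to be excluded — which it does, since $G$ itself is not in the building set by convention.
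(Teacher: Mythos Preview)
Your argument for closure under union is correct and is a genuinely different route from the paper's: the paper simply cites \cite[Thm.~3.6]{Brown_2017} for the fact that the union of motic subgraphs is motic, whereas your supermodularity inequality
\begin{equation*}
  s_G(\eta)-s_G(\eta\setminus i)\ \ge\ s_G(\gamma_1)-s_G(\gamma_1\setminus i)\ >\ 0
\end{equation*}
gives a self-contained proof internal to the paper's framework. That is a nice improvement.

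However, the refinement step has a gap. You assert that ``$\max\mathcal{G}_{motic}^{\subseteq J_k}$ partitions $J_k$ into pieces on which $s_G$ is additive'', but this does \emph{not} follow from the building-set property alone: for a general building set $\mathcal{G}\subseteq 2^{E}$ and a general supermodular $z$, the maximal-$\mathcal{G}$ partition of a set $J$ need not satisfy $z(J)=\sum z(I_j)$. What makes it true here is precisely the inclusion $\mathcal{G}_s\subseteq\mathcal{G}_{motic}$: since every $s$-irreducible subgraph is motic (if $s_G(\gamma\setminus i)=s_G(\gamma)$ then $P(s_\gamma)$ is reducible), the irreducible decomposition of $J_k$ refines the motic one, and additivity for the latter follows by sandwiching between $\sum_l s_G(I'_l)=s_G(J_k)$ and the supermodular upper bound. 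The paper uses exactly this inclusion, but then finishes more efficiently: rather than rerunning the vertex argument of Proposition~\ref{prop:supermodular-wonderful-refinement}, it invokes the observation (from the example following Proposition~\ref{prop:common-refinement} on nestohedra) that $\mathcal{G}_s\subseteq\mathcal{G}_{motic}$ directly implies that $\Sigma_{motic}$ refines $\Sigma_{\mathcal{G}_s}$, which already refines $\Sigma_{P_G}$ by the previous proposition. Either way, you need the implication ``$s$-irreducible $\Rightarrow$ motic'' to close the argument; once you add it, your approach works.
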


\begin{proof}
  By \cite[Thm. 3.6]{Brown_2017}, the union of two motic subgraphs is again motic. Hence
  $\mathcal{G}_{motic}$ is a building set. Since $s_{G}(\gamma \backslash i)=s_{G}(\gamma)$ implies
  that $s_{\gamma}=s_{G}\vert_{\gamma}$ is reducible, we must have $\mathcal{G}_{s}\subseteq
  \mathcal{G}_{motic}$. Example \ref{exam:building-set} then shows that $\Sigma_{motic}$ refines
  $\Sigma_{s}$ and hence $\Sigma_{P_{G}}$.
\end{proof}
\begin{remark}
  The results of section \ref{sec:TV:iterblowup} show that the toric variety associated to
  $\Sigma_{motic}$ is $P^{B_{motic}}$, the iterated blowup constructed by Brown in
  \cite{Brown_2017}. Its 1PI variant was earlier introduced by Bloch-Esnault-Kreimer
  (\cite{Bloch_2006}).
\end{remark}

Let us also mention the original construction of Speer \cite{Speer:1975dc}. To describe his sectors,
we identify the set $\mathcal{S}_{G}$ of mass-momentum spanning facets with the corresponding
quotient graphs:
\begin{equation*}
  \mathcal{Q}_{G} = \{q = G/\gamma \ \rvert\ \gamma\in\mathcal{S}_{G}\}.
\end{equation*}

For an irreducible Feynman graph $G$ with generic euclidean kinematics, Speer defined a collection
of sub- and quotient graphs $\mathcal{I}\subseteq \mathcal{H}_{G}\cup \mathcal{Q}_{G}\cup \{G\}$
called s-families.

By definition, these families satisfy $G\in\mathcal{I}$ and if
$\Gamma_{1},\Gamma_{2}\in\mathcal{I}$, then either
$E_{\Gamma_{1}}\subset E_{\Gamma_{2}}$, $E_{\Gamma_{2}}\subset E_{\Gamma_{1}}$ or $E_{\Gamma_{1}}\cap E_{\Gamma_{2}}=\emptyset$.
We refer to \cite{Speer:1975dc} for the (rather involved) complete definition. The key
results of $(\cite{Speer:1975dc})$ can be summarized as follows.
\begin{thm}
  Each s-family $\mathcal{I}$ has the following properties:
  \begin{enumerate}
  \item For each $\Gamma\in\mathcal{I}$, the set
    \begin{equation*}
      E_{\Gamma} \backslash \bigcup_{\tilde\Gamma\in\mathcal{I}, E_{\tilde\Gamma}\subsetneq E_{\Gamma}}E_{\tilde\Gamma}
    \end{equation*}
    consists of precisely one element $\beta(\Gamma)$. The map $\beta:\mathcal{I}\rightarrow E_{G}$
    is a bijection.
  \item There is an admissible pair $(T,i)$ of $G$ adapted do $\mathcal{I}$, such that:
    \begin{itemize}
    \item $i\notin E_{\Gamma}$ for all $\Gamma\in\mathcal{I}$.
    \item $T\cap \gamma$ is a maximal forest for each $\gamma\in\mathcal{H}_{G}\cap \mathcal{I}$.
    \item $T/T\cap \gamma$ is a spanning tree for each $q=G/\gamma\in\mathcal{Q}_{G}\cap
      \mathcal{I}$.
    \end{itemize}
  \end{enumerate}

  To each s-family, associate the Speer sector $D_{\mathcal{I}}\subset P^{E_{G}}(\mathbb{R}^{+})$
  defined by the inequalities
  \begin{equation*}
    \max_{\substack{\gamma\in\mathcal{H}_{G}\cap \mathcal{I}\\E_{\gamma}\subset E_{\Gamma}}}\alpha_{\beta(\gamma)}\le\alpha_{\beta(\Gamma)}\le \min_{\substack{q\in\mathcal{Q}_{G}\cap \mathcal{I}\\E_{q}\subset E_{\Gamma}}}\alpha_{\beta(q)}
  \end{equation*}
  for all $\gamma\in \mathcal{H}_{G}\cap \mathcal{I}$ and $q\in\mathcal{Q}_{G}\cap \mathcal{I}$.
  Then the sectors $D(\mathcal{I})$ for different s-families cover $P^{E_{G}}(\mathbb{R}^{+})$ and
  intersect in sets of measure zero.
\end{thm}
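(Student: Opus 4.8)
The plan is to realize the Speer sectors $D_{\mathcal{I}}$ as the cubical sectors, in the sense of Proposition~\ref{prop:sec-decom}, of a smooth complete toric variety $X_{\Sigma_{\mathrm{Speer}}}$ with torus $T_{N_{E_G}}$; once this is done the covering and measure-zero assertions are immediate. Concretely, I would attach to each s-family $\mathcal{I}$ a cone $\sigma_{\mathcal{I}}\subseteq (N_{E_G})_{\mathbb{R}}$, show that these cones and their faces form a smooth complete fan $\Sigma_{\mathrm{Speer}}$ refining both $\Sigma_{E_G}$ and the Feynman fan $\Sigma_{P_G}$, and show that under the diffeomorphism $L=-\Log\colon T_{N_{E_G}}(\mathbb{R}^{+})\xrightarrow{\ \sim\ }(N_{E_G})_{\mathbb{R}}$ used in the proof of Proposition~\ref{prop:sec-decom} one has $L(D_{\mathcal{I}}\cap T_{N_{E_G}}(\mathbb{R}^{+}))=\sigma_{\mathcal{I}}$. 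Granting this, Proposition~\ref{prop:sec-decom} says the cubes $I_{\sigma_{\mathcal{I}}}$ cover $X_{\Sigma_{\mathrm{Speer}}}(\mathbb{R}^{+})$ and meet in measure zero; intersecting with the dense open torus (whose complement has measure zero in both $X_{\Sigma_{\mathrm{Speer}}}(\mathbb{R}^{+})$ and $P^{E_G}(\mathbb{R}^{+})$) and taking closures in $P^{E_G}(\mathbb{R}^{+})$, the sets $D_{\mathcal{I}}=\overline{I^{\circ}_{\sigma_{\mathcal{I}}}}$ then cover $P^{E_G}(\mathbb{R}^{+})$ and intersect in sets of measure zero. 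The refinement of $\Sigma_{P_G}$ also puts each sector integral into the factorized shape of Corollary~\ref{col:sec-decomp}, which is what makes the decomposition useful in practice.

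To construct $\sigma_{\mathcal{I}}$ and prove its smoothness I would take $-\log$ of the defining inequalities of $D_{\mathcal{I}}$: each inequality $\alpha_{\beta(\Gamma')}\le\alpha_{\beta(\Gamma)}$ becomes a half-space $\langle e^{\beta(\Gamma)}-e^{\beta(\Gamma')},v\rangle\ge 0$, and, using part~1, the laminar structure of $\mathcal{I}$ (its edge-sets are pairwise nested or disjoint) and the chain furnished by the adapted admissible pair of part~2, the non-redundant inequalities are the parent-child ones, so the resulting cone $\sigma_{\mathcal{I}}$ is simplicial with one generator $w_{\Gamma}$ for each $\Gamma\in\mathcal{I}\setminus\{G\}$, of the form $w_{\Gamma}=\pm[e^{E_{\Gamma}}]$ (with $E_{\Gamma}$ the underlying edge-subgraph of $\Gamma$, resp.\ of the contracted graph when $\Gamma$ is a quotient). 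Here the edge $i=\beta(G)$ of the admissible pair satisfies $i\notin E_{\Gamma}$ for all $\Gamma\ne G$, so $\{[e^{e}]\colon e\in E_G\setminus\{i\}\}$ is a $\mathbb{Z}$-basis of $N_{E_G}$; in this basis each $w_{\Gamma}$ is $\pm$ the indicator vector of $E_{\Gamma}\subseteq E_G\setminus\{i\}$, and since (again by part~1) $\beta$ restricts to a bijection $\mathcal{I}\setminus\{G\}\to E_G\setminus\{i\}$ with $\beta(\Gamma)\in E_{\Gamma}$ and $\beta(\Gamma)\notin E_{\Gamma'}$ unless $E_{\Gamma}\subsetneq E_{\Gamma'}$, ordering $\mathcal{I}\setminus\{G\}$ compatibly with inclusion makes the matrix $(w_{\Gamma})_{\Gamma}$ triangular with $\pm 1$ on the diagonal. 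Hence the generators of $\sigma_{\mathcal{I}}$ form a $\mathbb{Z}$-basis of $N_{E_G}$: $\sigma_{\mathcal{I}}$ is a smooth simplicial cone of full dimension $|E_G|-1$.

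Two further points complete the reduction. First, $\sigma_{\mathcal{I}}$ lies in a maximal cone of $\Sigma_{P_G}$: by Proposition~\ref{prop:maximal-cones} it suffices to produce a vertex $m_{0}$ of $P_G=P(s_G)$ with $\langle m_{0},e^{E_{\Gamma}}\rangle=s_G(E_{\Gamma})$ for every generator, and such $m_0$ is built from any maximal chain through the laminar family $\{E_{\Gamma}\}$ via Proposition~\ref{prop:supermodular-vertex-generation}, the point being that $s_G$ is supermodular (Theorem~\ref{thm:feynman-supermodular}) and is \emph{additive} over a family of pairwise edge-disjoint s-irreducible subgraphs (a disjoint union of two such is $s_G$-reducible), so the prescribed values are mutually consistent and all tight at $m_0$; this also shows $\Sigma_{\mathrm{Speer}}$ refines $\Sigma_{P_G}$, and since the generators are non-negative combinations of the $[e^{e}]$ along laminar families it refines $\Sigma_{E_G}$ as well. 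Second, in the chart $U_{\sigma_{\mathcal{I}}}$ the coordinate dual to $w_{\Gamma}$ is the monomial $\alpha_{\beta(\Gamma)}/\alpha_{\beta(p(\Gamma))}$ or its reciprocal, where $p(\Gamma)$ is the parent of $\Gamma$ in the laminar forest, and the requirement that all these coordinates lie in $[0,1]$ unwinds, along the forest, to exactly Speer's $\max$/$\min$ inequalities; thus $L^{-1}(\sigma_{\mathcal{I}})=D_{\mathcal{I}}\cap T_{N_{E_G}}(\mathbb{R}^{+})=I^{\circ}_{\sigma_{\mathcal{I}}}$, as required above.

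The main obstacle is the claim that $\{\sigma_{\mathcal{I}}\}$ together with all its faces is a \emph{complete fan} --- that distinct cones meet along common faces and that the cones exhaust $(N_{E_G})_{\mathbb{R}}$. Via the bijection $\mathcal{I}\leftrightarrow\sigma_{\mathcal{I}}$ and the diffeomorphism $L$ this is precisely Speer's theorem that the $D_{\mathcal{I}}$ tile $P^{E_G}(\mathbb{R}^{+})$, so one legitimate option is simply to invoke \cite{Speer:1975dc} for it. For a self-contained argument I would either (i) proceed by induction on $|E_G|$: a generic $v\in(N_{E_G})_{\mathbb{R}}$ selects a vertex of $P_G$, hence a face $P_{\gamma}\times P_{G/\gamma}$ via Proposition~\ref{prop:supermodular-faces}, reducing to smaller graphs, and one constructs the unique s-family with $v\in\Int\sigma_{\mathcal{I}}$ greedily while checking that crossing a facet of $\sigma_{\mathcal{I}}$ swaps exactly one member of $\mathcal{I}$ for another (a ``mutation'' of s-families); or (ii) recognize the s-families as the maximal nested sets of a building set in the face semilattice of $P_G$ and apply the combinatorial wonderful-model theorem of \cite{Feichtner_2004} quoted above, which would directly give both the fan property and smoothness. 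Establishing this combinatorial core is where the real work lies; the toric translation in the preceding paragraphs is comparatively routine.
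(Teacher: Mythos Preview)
The paper does not prove this theorem. It is stated as a summary of the key results of \cite{Speer:1975dc} and used as a black box. What the paper \emph{does} prove is the subsequent Proposition --- that the cones $\sigma_{\mathcal{I}}=\pos([e^{\Gamma}]:\Gamma\in\mathcal{I}\setminus\{G\})$ are the maximal cones of a smooth fan refining $\Sigma_{P_G}$ --- and that proof runs in the \emph{opposite} direction to yours: the paper first checks that each $\sigma_{\mathcal{I}}$ is smooth (by the same triangular-determinant argument you outline), then identifies the Speer sector $D_{\mathcal{I}}$ with the cube $[0,1]^{|E_G|-1}$ in the chart $U_{\sigma_{\mathcal{I}}}$, and finally \emph{invokes Speer's covering statement} together with the logarithm map of Proposition~\ref{prop:sec-decom} to conclude that the $\sigma_{\mathcal{I}}$ form a complete fan. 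The refinement of $\Sigma_{P_G}$ is shown, as you suggest, by exhibiting for each $\mathcal{I}$ a vertex of $P_G$ coming from an adapted admissible pair $(T,i)$.

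Your plan reverses this logic: establish completeness of the fan first and deduce the covering from Proposition~\ref{prop:sec-decom}. You correctly identify that completeness is the crux and is, via the logarithm map, equivalent to the covering statement itself. Consequently your ``legitimate option'' of citing \cite{Speer:1975dc} for completeness is not a proof of the theorem --- it \emph{is} the theorem. Your two self-contained alternatives (induction on $|E_G|$ via Proposition~\ref{prop:supermodular-faces}, or realizing s-families as maximal nested sets of some building set and applying the Feichtner--Kozlov theorem) are reasonable programmes but only sketched; carrying either through would genuinely go beyond what the paper does.

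Two smaller points. First, you only address the covering/measure-zero assertion; parts~1 and~2 (the bijection $\beta$ and the adapted admissible pair) are used as inputs in your argument, not proved, so even granting your sketch the full theorem still rests on \cite{Speer:1975dc}. Second, your aside that $\Sigma_{\mathrm{Speer}}$ refines $\Sigma_{E_G}$ is neither needed nor asserted in the paper; the introduction in fact notes that the Speer toric variety ``is not a blowup of projective space in general'', so at minimum it is not one of the wonderful models $P^{B}$ of Section~\ref{sec:TV:iterblowup}, and your one-line justification (``non-negative combinations of the $[e^{e}]$'') does not establish that all ray generators lie in a common cone of $\Sigma_{E_G}$.
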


For a quotient graph $q=G/\gamma$, we define $e^{q}=-e^{E_{q}}$. Note that in $N_{E_{G}}$, we have
the equality
\begin{equation*}
  [e^{\gamma}] = [e^{q}].
\end{equation*}
We can then rephrase Speer's result in terms of toric geometric as follows:

\begin{prop}
  For $\mathcal{I}\subset\mathcal{F}_{G}\cup \{E_{G}\}$, define the cone
  \begin{equation*}
    \Sigma_{\mathcal{I}} = \pos([e^{\Gamma}] \ \rvert\ \Gamma\in\mathcal{I}\backslash \{G\}).
  \end{equation*}
  then the cones $\{\sigma_{\mathcal{I}} \ \rvert\ \mathcal{I} \text{ an s-family}\}$ are the
  maximal cones of a smooth fan $\Sigma_{Speer}$, which refines the normal fan of $P_{G}$.
\end{prop}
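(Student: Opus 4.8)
The plan is to identify Speer's sectors $D_{\mathcal{I}}$ with the real-positive loci of the maximal-cone charts of the fan $\Sigma_{Speer}$, thereby reducing the statement to three verifications: that the $\sigma_{\mathcal{I}}$ actually form a fan, that this fan is smooth, and that it refines $\Sigma_{P_G}$. I would begin by recording that in $N_{E_G}$ every ray generator $[e^\Gamma]$ is well-defined for $\Gamma \in \mathcal{F}_G$, using the relation $[e^\gamma] = [e^q]$ for $q = G/\gamma$ so that a sub- or quotient graph contributes the same ray. The compatibility condition in the definition of an s-family — any two $\Gamma_1,\Gamma_2 \in \mathcal{I}$ are either nested or edge-disjoint — is exactly the condition that makes $\{[e^\Gamma] : \Gamma \in \mathcal{I}\}$ a ``laminar'' family of subsets of $E_G$, hence (as in Example~\ref{exam:nested-sets}) the corresponding vectors are part of a nested set for a suitable building set; in particular they are linearly independent, so each $\sigma_{\mathcal{I}}$ is simplicial, and by part (1) of the cited theorem $|\mathcal{I} \setminus \{G\}| = n-1$, so $\sigma_{\mathcal{I}}$ is a maximal (full-dimensional) cone.

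Next I would show the $\sigma_{\mathcal{I}}$ and their faces form a fan by a covering argument analogous to Prop.~\ref{prop:sec-decom}. Under the diffeomorphism $L = -\Log : T_{N_{E_G}}(\mathbb{R}^+) \to N_{\mathbb{R}}$, a point with projective coordinates $[\alpha_i]$ is sent to the point whose pairing against $e^\Gamma$ records $-\sum_{i \in \Gamma}\log \alpha_i$ (up to the overall projective normalization); one checks that the defining inequalities of the Speer sector $D_{\mathcal{I}}$, namely $\alpha_{\beta(\gamma)} \le \alpha_{\beta(\Gamma)} \le \alpha_{\beta(q)}$ along the relevant chains, translate precisely into the condition $L(t) \in \sigma_{\mathcal{I}}$. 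Because Speer proves the $D_{\mathcal{I}}$ cover $P^{E_G}(\mathbb{R}^+)$ and overlap only in measure zero, the cones $\sigma_{\mathcal{I}}$ cover $N_{\mathbb{R}}$ and pairwise intersect in lower-dimensional sets; combined with the fact that each $\sigma_{\mathcal{I}}$ is a simplicial cone generated by a nested family, the intersection of two of them is a common face, so $\Sigma_{Speer}$ is a genuine complete fan. Smoothness then follows from part (2) of the cited theorem: the adapted admissible pair $(T,i)$ gives, via the edges $\beta(\Gamma)$, an explicit $\mathbb{Z}$-basis of $N_{E_G}$ with respect to which each generator $[e^\Gamma]$ is a basis vector, so every $\sigma_{\mathcal{I}}$ is generated by part of a $\mathbb{Z}$-basis, which is the criterion of Prop.~\ref{prop:properties-fan-variety}.

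Finally, to see that $\Sigma_{Speer}$ refines $\Sigma_{P_G}$, I would invoke Prop.~\ref{prop:maximal-cones}: it suffices to show each maximal cone $\sigma_{\mathcal{I}}$ lies in a maximal cone of $\Sigma_{P_G}$, i.e. that there is a single vertex $m_{\mathcal{I}}$ of $P_G$ minimizing $\langle m, e^\Gamma\rangle$ simultaneously for all $\Gamma \in \mathcal{I}$. Using Theorem~\ref{thm:feynman-supermodular} we may work with $P(s_G)$ and the supermodular function $s_G$; extending the laminar family $\mathcal{I}$ to a complete flag and applying Prop.~\ref{prop:supermodular-vertex-generation} produces a vertex whose pairing against each $e^\Gamma$ attains the bound $s_G(\Gamma)$ (this uses that the links and saturated subgraphs indexing $\mathcal{F}_G$ correspond to the faces $F_{e^\Gamma}P_G = P_\gamma \times P_{G/\gamma}$ of Prop.~\ref{prop:factor}), and by the supermodular facet correspondence (Cor.~\ref{col:supermodular-facets}) this is precisely the condition that $\sigma_{\mathcal{I}} \subseteq \sigma_{m_{\mathcal{I}}}$. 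The main obstacle I anticipate is the bookkeeping in translating Speer's combinatorial definition of s-families and the map $\beta$ into the nested-set language of Feichtner--Kozlov — in particular verifying that the admissible pair $(T,i)$ genuinely yields the required $\mathbb{Z}$-basis and that mixing sub- and quotient graphs (via $[e^\gamma]=[e^q]$) does not spoil laminarity; once that dictionary is set up, smoothness and the refinement property are essentially formal consequences of the results already assembled in Sections~\ref{sec:TV:iterblowup} and~\ref{sec:gener-perm}.
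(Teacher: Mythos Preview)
Your overall strategy matches the paper's: prove smoothness of each $\sigma_{\mathcal{I}}$, use the logarithm map to transport Speer's covering result into the statement that the cones form a complete fan, and finally exhibit a vertex of $P_G$ minimized on all rays to get the refinement. However, you have the roles of parts (1) and (2) of Speer's theorem reversed.

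In the paper, \emph{smoothness} is obtained purely from the bijection $\beta$ of part (1), not from the admissible pair: one orders $\mathcal{I}=\{\Gamma_1<\ldots<\Gamma_E\}$ compatibly with edge-inclusion, lets $\beta$ induce the order $E_G=\{i_1<\ldots<i_E\}$, and observes that replacing $e^{\Gamma_r}$ by $e^{i_r}$ one at a time gives
\[
  e^{\Gamma_1}\wedge\cdots\wedge e^{\Gamma_E}=\pm\, e^{i_1}\wedge\cdots\wedge e^{i_E},
\]
so the $e^{\Gamma}$ form a $\mathbb{Z}$-basis of $\mathbb{Z}^{E_G}$ and hence their images generate a smooth cone in $N_{E_G}$. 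This bypasses entirely the question of whether the mixed sub/quotient family is a nested set for any building set in $2^{E_G}$, which is the obstacle you flag at the end; the wedge argument (after \cite{Feichtner_2004_2}) simply does not need that.

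Conversely, the \emph{refinement} step is where the paper uses the admissible pair $(T,i)$ of part (2): one writes down the explicit vertex $m=2e^{E_G\setminus T}+e_i$ of $P_G$ and checks directly that $\langle m,e^{\gamma}\rangle=2h^1(\gamma)=s_G(\gamma)$ for $\gamma\in\mathcal{H}_G\cap\mathcal{I}$ and $\langle m,e^{\gamma}\rangle=2h^1(\gamma)+1=s_G(\gamma)$ for $G/\gamma\in\mathcal{Q}_G\cap\mathcal{I}$, using precisely the adaptedness conditions on $T$. Your alternative of extending to a complete flag and invoking Prop.~\ref{prop:supermodular-vertex-generation} would in principle work, but it again runs into the sub/quotient bookkeeping you worry about, whereas the admissible pair hands you the vertex for free.
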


\begin{proof}
  Let us first prove that the cones $\sigma_{\mathcal{I}}$ are smooth. Since $G\in\mathcal{I}$, this
  is equivalent to showing that
  \begin{equation*}
    \tilde \sigma_{\mathcal{I}} = \pos(e^{\Gamma} \ \rvert\ \Gamma\in\mathcal{I}).
  \end{equation*}
  is a smooth cone of $\mathbb{Z}^{E_{G}}$, i.e. the vectors $(e^{\Gamma}\ \rvert\
  \Gamma\in\mathcal{I})$ form a $\mathbb{Z}$-basis.

  We can then adapt the proof of \cite[Prop. 2]{Feichtner_2004_2}: Choose a linear order
  \begin{equation*}
    \mathcal{I}=\{\Gamma_{1}<\ldots<\Gamma_{E-1}<\Gamma_{E}\},
  \end{equation*}
  refining the natural order $\Gamma\preceq\Gamma'\Leftrightarrow E_{\Gamma}\subset E_{\Gamma'}$ by
  edge-inclusion and let $E_{G}=\{i_{1}<\ldots<i_{E}\}$ be the order induced by the bijection
  $\beta:\mathcal{I}\rightarrow E_{G}$. By construction of $\beta$, we have
  \begin{equation*}
    e^{\Gamma_{1}}\wedge \ldots \wedge e^{\Gamma_{r}} = \pm e^{\Gamma_{1}}\wedge \ldots \wedge e^{\Gamma_{r-1}}\ldots\wedge e^{i_{r}},
  \end{equation*}
  and an obvious induction gives
  \begin{equation*}
    e^{\Gamma_{1}}\wedge \ldots \wedge e^{\Gamma_{E}} = \pm e^{i_{1}}\wedge \ldots \wedge e^{i_{E}},
  \end{equation*}
  which shows that the $e^{\Gamma_{i}}$ form a $\mathbb{Z}$-basis.

  The coordinates $x_{\Gamma}$ of $\sigma_{\mathcal{I}}$ can then be described by
  \begin{equation*}
    \alpha_{i} = \prod_{i\in\gamma\in {H}_{G}\cap \mathcal{I}}x_{\gamma}\prod_{i\notin q\in{Q}_{G}\cap \mathcal{I}}x_{q},
  \end{equation*}
  % where $\epsilon(\Gamma)=1$ for $\Gamma\in\mathcal{H}_{G}$ and $\epsilon(\Gamma)=-1$ for
  % $\Gamma\in\mathcal{Q}_{G}$.
  In these coordinates, the Speer sector corresponding to $\mathcal{I}$ is described by $0\le
  x_{\Gamma}\le 1$ for $\Gamma\in\mathcal{I}$. Applying the logarithm map from Prop.
  \ref{prop:sec-decom} then shows that the cones generate a smooth complete fan.

  Let $(T,i)$ be an admissible pair, which is adapted to the s-family $\mathcal{I}$ as above. Note
  that $T/T\cap \gamma$ is a spanning tree in $q=G/\gamma$ if and only if $\gamma\cap T$ is a maximal
  forest. To $(T,i)$ corresponds the point $m=2e_{^{E_{G}\backslash T}}+e_{i}$ of the Feynman polytope
  $P_{G}$ and we have
  \begin{align*}
    \langle m, e^{\gamma} \rangle &= 2h^{1}(\gamma)=s_{G}(\gamma),\quad \gamma\in\mathcal{H}_{G}\cap\mathcal{I}\\
    \langle m, e^{\gamma} \rangle &= 2h^{1}(\gamma)+1=s_{G}(\gamma),\quad G/\gamma\in\mathcal{Q}_{G}\cap\mathcal{I}.
  \end{align*}
  Since $[e^{G/\gamma}]=[e^{\gamma}]$, we obtain from Prop. \ref{prop:maximal-cones}, that
  $\Sigma_{Speer}$ refines $\Sigma_{P_{G}}$.
\end{proof}

The Speer sectors are very economical but they are quite difficult to understand. It would be useful
to have a generalization of Speer's construction which works for all generalized permutahedra. We
conjecture the following:

\begin{con}
  Let $\overline{GP}_{E}$ be the set of generalized permutahedra on a set $E$ up to normal
  equivalence and $\overline{SGP}_{E}$ the subset consisting of polytopes whose connected
  components are simple. Then there is a natural map
  \begin{equation*}
    \overline{GP}_{E}\rightarrow \overline{SGP}_{E},\quad P\mapsto P^{s},
  \end{equation*}
  which commutes with contraction and restriction and such that $\Sigma_{P}(1)=\Sigma_{P^{s}}(1)$.
\end{con}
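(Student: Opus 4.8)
The plan is to follow the template of Speer's construction, replacing the graph‑theoretic input (admissible pairs, the bijection $\beta$) by data extracted directly from the supermodular function $z$ with $P=P(z)$. First I would reduce to the irreducible case. By Prop.~\ref{prop:supermodular-decomposition} every $P(z)$ decomposes canonically as $P(z)=\sum_{k}P(z\vert_{I_{k}})$ along a partition $E=\coprod_{k}I_{k}$ with each $P(z\vert_{I_{k}})$ irreducible, so if $(\cdot)^{s}$ is defined for irreducible polytopes one sets $P^{s}:=\sum_{k}P(z\vert_{I_{k}})^{s}$; since this decomposition is itself compatible with restriction and contraction, naturality for general $P$ follows from naturality for irreducible $P$. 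Moreover restriction and contraction realize precisely the faces $F_{e^{I}}P(z)\cong P(z\vert_{I})\times P(z/_{I})$ (Prop.~\ref{prop:supermodular-faces}), so ``commutes with contraction and restriction'' becomes the assertion that the star of each ray $[e^{I}]$ of $\Sigma_{P^{s}}$ be the join of the normal fans of $(P(z\vert_{I}))^{s}$ and $(P(z/_{I}))^{s}$; this is the invariant I would carry through an induction on $|E|$.

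Second, for irreducible $P(z)$ I would construct $\Sigma_{P^{s}}$ canonically. By Thm.~\ref{thm:simplicial-refinement} the fan $\Sigma_{P}$ certainly admits simplicial refinements with the same set of rays; the content of the conjecture is to single one out canonically, so that it is moreover a coarsening of $\Sigma_{\pi_{E}}$ (which makes $P^{s}$ a generalized permutahedron) and is the normal fan of an actual polytope. The natural candidate, generalizing Speer's $s$‑families, is the collection of cones $\sigma_{\mathcal{N}}=\pos([e^{I}]\ \rvert\ I\in\mathcal{N})$ indexed by \emph{$z$‑adapted families} $\mathcal{N}\subseteq\mathcal{F}_{z}$, where $\mathcal{F}_{z}=\{I\subsetneq E\ \rvert\ P(z\vert_{I}),\,P(z/_{I})\text{ both irreducible}\}$ is the facet set (Cor.~\ref{col:supermodular-facets}, which also identifies it with $\Sigma_{P}(1)$, distinct $I$ giving distinct rays), and $\mathcal{N}$ is called adapted if the vectors $[e^{I}]$, $I\in\mathcal{N}$, are linearly independent and the facets $F_{e^{I}}P(z)$ have nonempty common intersection. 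Equivalently I expect $\Sigma_{P^{s}}$ to arise as the coarsening of the wonderful refinement $\Sigma_{\mathcal{G}_{z}}$ of Prop.~\ref{prop:supermodular-wonderful-refinement} obtained by discarding the exceptional rays $[e^{I}]$ for which $I$ is not facet‑defining; that this discarding yields a genuine fan, still simplicial and still coarsening $\Sigma_{\pi_{E}}$, is the technical heart of the argument. Built into adaptedness is that each irreducible component of $\Sigma_{P^{s}}$ is simplicial, which is exactly the statement that the components of $P^{s}$ are simple, and that $\Sigma_{P}(1)=\Sigma_{P^{s}}(1)$.

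Third, I would settle polytopality and naturality. For polytopality I would exhibit $P^{s}$ explicitly as a Minkowski sum $\sum_{I}c_{I}\Delta_{I}$ over a suitable subcollection of $\mathcal{F}_{z}$ with positive weights $c_{I}$, in the manner of the building‑set polytopes $P_{\mathcal{G}}=\sum_{I\in\tilde{\mathcal{G}}}\Delta_{I}$ of Section~\ref{sec:gener-perm}, and use Prop.~\ref{prop:common-refinement} to identify its normal fan with $\Sigma_{P^{s}}$; the vertex formula of Prop.~\ref{prop:supermodular-vertex-generation} would then verify that the weights can be chosen generically enough that no facet, hence no ray, is created or destroyed. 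Naturality would be read off from the compatibility of the adapted‑family structure with restriction and contraction, exactly as Example~\ref{exam:matroid-restr-contract} and the lemma identifying $s_{G}\vert_{\gamma}=s_{\gamma}$, $s_{G}/_{\gamma}=s_{G/\gamma}$ were used in the Feynman case.

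The main obstacle is the fan property in Step two: showing that the cones $\sigma_{\mathcal{N}}$ glue along common faces, cover $N_{\mathbb{R}}$, and form a coarsening of $\Sigma_{\pi_{E}}$, without Speer's combinatorial crutches. In his setting the bijection $\beta:\mathcal{I}\to E_{G}$ together with an adapted admissible pair $(T,i)$ simultaneously forced linear independence (in fact a $\mathbb{Z}$‑basis) and produced the required vertex of $P_{G}$; here both must be reconstructed from $z$ alone, presumably by induction on $|E|$: pick a facet subset $I\in\mathcal{F}_{z}$, apply the inductive hypothesis to $P(z\vert_{I})$ and $P(z/_{I})$, and reglue across $F_{e^{I}}P(z)$ via Prop.~\ref{prop:supermodular-faces}. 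It is conceivable that no canonical coarsening of $\Sigma_{\mathcal{G}_{z}}$ with the required properties exists for every supermodular $z$ --- i.e.\ that the conjecture fails in full generality and one must weaken it, e.g.\ by allowing $P^{s}$ to live in a slightly larger lattice --- so an essential preliminary is to isolate the exact combinatorial condition on $\mathcal{F}_{z}$ under which the construction closes up and to check that it always holds (or to pin down the minimal framework in which it does).
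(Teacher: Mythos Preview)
The statement you are attempting to prove is a \emph{conjecture}; the paper offers no proof. It is stated as an open problem motivated by Speer's sector construction, and the only additional content the paper provides is the subsequent remark: in the language of \cite{aguiar17:hopf} the desired map would be an idempotent morphism of Hopf monoids $\overline{GP}_{E}\rightarrow\overline{SGP}_{E}$, and if one drops the ray condition $\Sigma_{P}(1)=\Sigma_{P^{s}}(1)$ then $P\mapsto P_{\mathcal{G}_{z}}$ already does the job. There is therefore nothing to compare your proposal against.

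Your write-up is not a proof and, to your credit, does not pretend to be one: it is a strategy sketch whose central step---that the cones $\sigma_{\mathcal{N}}$ over your ``$z$-adapted families'' assemble into a complete simplicial fan coarsening $\Sigma_{\pi_{E}}$---is explicitly flagged as unresolved. That is exactly the content of the conjecture, so what you have written is a reasonable reformulation and plan of attack rather than a resolution. Two remarks on the plan itself. First, your proposed description of $\Sigma_{P^{s}}$ as the coarsening of $\Sigma_{\mathcal{G}_{z}}$ obtained by deleting non-facet rays is not obviously well-defined: deleting rays from a simplicial fan does not in general produce a fan, and you would need to say which maximal cones of $\Sigma_{\mathcal{G}_{z}}$ merge and why the result is still polytopal. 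Second, the inductive regluing across a single facet $F_{e^{I}}P(z)$ presupposes that the constructions on $P(z\vert_{I})$ and $P(z/_{I})$ are compatible for \emph{every} choice of $I\in\mathcal{F}_{z}$ simultaneously; ensuring this global coherence is precisely where Speer's argument leans on graph-specific combinatorics (the admissible pairs), and it is not clear that an induction on $|E|$ alone can substitute for it.
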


\begin{remark}
  In the language of \cite{aguiar17:hopf}, we ask for a (necessarily idempotent) morphism of Hopf
  monoids $\overline{GP}_{E}\rightarrow \overline{SGP}_{E}$. If we drop the condition that
  $\Sigma_{P}(1)=\Sigma_{P^{s}}(1)$, then mapping $P=P(z)$ to the polytope of its building set of
  irreducibility components $\mathcal{G}_{z}$ provides an example of such a morphism.
\end{remark}

\paragraph{Dimensional regularization.}

We can now use the results of section \ref{sec:gener-mell-transf} to define the dimensional
regularization of a Feynman integral
\begin{equation*}
  I_{G}(\lambda,D,q,m) = \Gamma(\omega_{G})\int_{P^{E_{G}}(\mathbb{R}^{+})}\prod_{e\in E_{G}}\frac{\alpha^{\lambda_{e}-1}}{\Gamma(\lambda_{e})}\left( \frac{\psi_{G}}{\Phi_{G}} \right)^{\omega_{G}}\frac{\Omega_{P^{E_{G}}}}{\psi_{G}^{D/2}}.
\end{equation*}
Recall that
\begin{equation*}
  \omega_{G} = \sum_{i\in E_{G}}\lambda_{i} - \frac{D}{2}h^{1}(\gamma).
\end{equation*}

For $\gamma$ a sub- or quotient graph of $G$, we define similarly
\begin{equation*}
  \omega_{\gamma} = \sum_{i\in E_{\gamma}}\lambda_{i} - \frac{D}{2}h^{1}(\gamma).
\end{equation*}

The convergence domain of $I_{G}$ for a graph with generic euclidean kinematics can then be
calculated as follows:
\begin{prop}
  Suppose the graph $G$ has generic euclidean kinematics. Then the Feynman Integral
  $I_{G}(\lambda,D,q,m)$ has convergence domain
  \begin{equation*}
    \Lambda_{G}=\{(\lambda,D)\in \mathbb{C}_{E_{G}}\times \mathbb{C}\ \rvert\
    \omega_{\gamma} > 0, \gamma\in \mathcal{H}_{G}, \omega_{G/\gamma}<0, \gamma\in \mathcal{S}_{G}\}
  \end{equation*}
  This domain is nonempty if and only if $G$ is s-irreducible.
\end{prop}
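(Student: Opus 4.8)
The plan is to recognise the projective integral displayed above as a multivariate Mellin transform of the kind studied in Section~\ref{sec:analyt-dimens-regul}, and then to read off its convergence domain from Theorem~\ref{thm:mellin-convergence}. First I would fix vectors $m_{1},m_{2}\in\mathbb{Z}^{E_{G}}$ with coordinate sums $h^{1}(G)$ and $h^{1}(G)+1$ respectively, so that $f_{1}:=\psi_{G}\,\alpha^{-m_{1}}$ and $f_{2}:=\Phi_{G}\,\alpha^{-m_{2}}$ descend to Laurent polynomials on the torus $T_{N_{E_{G}}}$ of $P^{E_{G}}$. Under generic euclidean kinematics the coefficients of $\psi_{G}$ (all equal to $1$) and of $\Phi_{G}$ (with positive real part, by Definition~\ref{defin:generic-kinematics}) lie in strongly convex cones, so the powers $f_{i}^{-c_{i}}$ are well-defined. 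Using the straightforward generalisation of Proposition~\ref{prop:differential-form} together with the parametric formula for $I_{G}$, I would verify the identity
\[
  I_{G}(\lambda,D,q,m)=\Gamma(\omega_{G})\Big(\textstyle\prod_{e}\Gamma(\lambda_{e})\Big)^{-1}\,\mathcal{M}(f_{i};s,c),
\]
with $c=(D/2-\omega_{G},\ \omega_{G})$ and $s=\lambda+(\omega_{G}-\tfrac{D}{2})m_{1}-\omega_{G}m_{2}$; here $s\in M_{E_{G}}$ because its coordinate sum is $\sum_{e}\lambda_{e}+(\omega_{G}-\tfrac{D}{2})h^{1}(G)-\omega_{G}(h^{1}(G)+1)=0$. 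Since $P(f_{1}f_{2})$ is a translate of the Feynman polytope $P_{G}=P(\psi_{G}\Phi_{G})$, the two share the normal fan $\Sigma_{P_{G}}$.

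Next I would apply Theorem~\ref{thm:mellin-convergence}: the convergence domain of $I_{G}$ equals that of $\mathcal{M}(f_{i};\,\cdot\,)$, it is nonempty precisely when $P_{G}$ is full-dimensional, and in that case it is the region $\Lambda(f_{i})$ cut out by $\Real\langle s,u_{\rho}\rangle>\Real\,d_{\rho}(c)$ over the rays $\rho\in\Sigma_{P_{G}}(1)$. By Proposition~\ref{prop:polytope-dimension}, $P_{G}$ is full-dimensional if and only if $G$ is s-irreducible, which already establishes the last sentence of the statement, the remaining case having empty convergence domain. Assuming now $G$ s-irreducible, I would combine Theorem~\ref{thm:feynman-supermodular} ($P_{G}=P(s_{G})$), the identities $s_{G}\vert_{\gamma}=s_{\gamma}$ and $s_{G}/_{\gamma}=s_{G/\gamma}$, and Corollary~\ref{col:supermodular-facets} to identify the facets of $P_{G}$ with the subgraphs $\gamma\in\mathcal{F}_{G}=\mathcal{S}_{G}\sqcup\mathcal{H}_{G}$; since rays of $\Sigma_{P_{G}}$ are dual to facets of $P_{G}$ (Proposition~\ref{prop:normal-fan-polytope}) and the ray dual to $F_{e^{\gamma}}P_{G}$ is primitively generated by $[e^{\gamma}]$, this yields $\Sigma_{P_{G}}(1)=\{[e^{\gamma}]:\gamma\in\mathcal{F}_{G}\}$.

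It then remains to rewrite the ray inequalities. From Proposition~\ref{prop:factor} one reads off $d_{[e^{\gamma}]}(\psi_{G})=h^{1}(\gamma)$ and $d_{[e^{\gamma}]}(\Phi_{G})=h^{1}(\gamma)+\delta^{mm}_{\gamma}$ (the minimal $\alpha_{\gamma}$-degrees of $\psi_{G|\gamma}$ and $\Phi_{G|\gamma}$), so $d_{[e^{\gamma}]}(f_{i})$ differs from these by $\langle m_{i},e^{\gamma}\rangle$. Substituting $s$ and $c$ into $\Real\langle s,u_{\rho}\rangle>\Real\,d_{\rho}(c)$, the contributions of $m_{1}$ and $m_{2}$ cancel and the inequality collapses to $\Real(\omega_{\gamma}-\omega_{G}\delta^{mm}_{\gamma})>0$. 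For $\gamma\in\mathcal{H}_{G}$ ($\delta^{mm}_{\gamma}=0$) this reads $\omega_{\gamma}>0$; for $\gamma\in\mathcal{S}_{G}$ ($\delta^{mm}_{\gamma}=1$) it reads $\Real(\omega_{\gamma}-\omega_{G})>0$, equivalently $\omega_{G/\gamma}<0$, using the additivity $\omega_{G}=\omega_{\gamma}+\omega_{G/\gamma}$ which follows from $h^{1}(G)=h^{1}(\gamma)+h^{1}(G/\gamma)$ and $E_{G}=E_{\gamma}\sqcup E_{G/\gamma}$. These are exactly the defining inequalities of $\Lambda_{G}$.

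The step I expect to be the main obstacle is the first one: pinning down the Mellin normalisation cleanly — the choice of $m_{1},m_{2}$, the precise affine formula for $s$, and the verification that $s\in M_{E_{G}}$ — and confirming that the hypotheses of Theorem~\ref{thm:mellin-convergence} genuinely hold (in particular that the coefficients of $\psi_{G}$ and $\Phi_{G}$ lie in strongly convex cones, which is where generic euclidean kinematics is used). Everything afterwards — the facet description of $P_{G}$, the values of $d_{[e^{\gamma}]}$, and the cancellation producing $\Real(\omega_{\gamma}-\omega_{G}\delta^{mm}_{\gamma})>0$ — is bookkeeping on top of results already established in Sections~\ref{sec:analyt-dimens-regul} and~\ref{sec:feynman-graphs}.
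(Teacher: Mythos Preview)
Your proposal is correct and follows essentially the same route as the paper: apply Theorem~\ref{thm:mellin-convergence} to the parametric integral viewed as a Mellin transform on $T_{N_{E_{G}}}$, read off the ray inequalities from the facet description of $P_{G}$ (the Corollary following Proposition~\ref{prop:polytope-dimension}), simplify them via $\omega_{G}=\omega_{\gamma}+\omega_{G/\gamma}$, and invoke Proposition~\ref{prop:polytope-dimension} for the nonemptiness claim. The only difference is that you spell out the Mellin normalisation (the choice of $m_{1},m_{2}$ and the resulting $s,c$) explicitly, whereas the paper treats this as already absorbed into the setup of the parametric representation and writes the ray inequalities directly as $\langle\lambda,e^{\gamma}\rangle-\tfrac{D}{2}h^{1}(\gamma)>0$ for $\gamma\in\mathcal{H}_{G}$ and $\langle\lambda,e^{\gamma}\rangle-\tfrac{D}{2}h^{1}(\gamma)-\omega_{G}>0$ for $\gamma\in\mathcal{S}_{G}$.
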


\begin{proof}
  Combining Theorem \ref{thm:mellin-convergence} with Proposition \ref{prop:polytope-dimension}
  shows that $(\lambda,D)\in \Lambda_{G}$ if and only if
  \begin{align*}
    \langle \lambda, e^{\gamma} \rangle - \frac{D}{2}h^{1}(\gamma) &> 0, \quad \gamma\in\mathcal{H}_{G} \\
    \langle \lambda, e^{\gamma} \rangle - \frac{D}{2}h^{1}(\gamma) - \omega_{G} &> 0, \quad \gamma\in\mathcal{S}_{G}.
  \end{align*}
  Since $\omega_{G}=\omega_{\gamma}+\omega_{G/\gamma}$ for every subgraph $\gamma$, this is
  equivalent to
  \begin{align*}
    \omega_{\gamma} &> 0, \quad \gamma\in\mathcal{H}_{G} \\
    - \omega_{G/\gamma} &> 0, \quad \gamma\in\mathcal{S}_{G}.
  \end{align*}
  This domain is nonempty if and only if $P_{G}$ has dimension $|E_{G}|-1$, which is equivalent to
  s-irreducibility by Proposition \ref{prop:polytope-dimension}.
\end{proof}

\begin{remark}
  That $\Gamma_{G}$ is nonempty for $G$ 1VI was originally proven by Speer \cite{Speer:1975dc}. The
  extension to s-irreducible graphs seems to be well-known. Another proof can found in
  \cite{Smirnov_2012}.
\end{remark}

The proposition shows that singularities corresponding to mass-momentum spanning subgraphs are more
closely associated to the quotient graphs. It will then be convenient to set
\begin{align*}
  \tilde \omega_{\gamma} =
  \begin{cases}
    -\omega_{G/\gamma}, \quad& \text{ if } \gamma \text{ is mass-momentum spanning}\\
    \omega_{\gamma}, \quad& \text{ otherwise}
  \end{cases}.
\end{align*}

In dimensional regularization, one keeps the analytic parameters $\lambda\in \mathbb{C}^{E_{G}}$
fixed (usually at integer values) and tries to expand the above integral in a Laurent series around
a point $D_{0}\in \mathbb{N}$ of the spacetime dimension. There are essentially three different
procedures to achieve this in the literature:
\begin{enumerate}
\item In the classical approach to dimensional regularization (\cite{Collins},
  \cite{_t_Hooft_1972}), the $D$-dimensional euclidean space is embedded into an
  infinite-dimensional space and the Feynman integral is split into a finite-dimensional subspace
  containing all external momenta and its orthogonal complement. Formally integrating over this
  infinite-dimensional complement gives an expression which is naturally analytic in the dimension
  $D$.
\item In the sector decomposition approach (\cite{Smirnov_1983}, \cite{HEINRICH_2008},
  \cite{Bogner_2008}), one decomposes the integration domain into cubical sectors as in section
  \ref{sec:sect-decomp}. The $\epsilon$-expansion is then explicitly computed in each sector by a
  Taylor subtraction.
\item In the analytic continuation approach (\cite{Panzer:2015ida}, \cite{von_Manteuffel_2015})),
  one applies the integration by parts procedure of section \ref{sec:gener-mell-transf} to
  analytically continue the integrals into the domain of absolute convergence.
\end{enumerate}

To our knowledge, there is no mathematical rigorous construction of the first approach. On the other
hand, the second and third fit very naturally into the framework we have developed so far.

Let us start with the sector decomposition approach. Let $\Sigma$ denote one of the smooth
refinements of $\Sigma_{P_{G}}$ constructed in the last section. From Corollary
\ref{col:sec-decomp}, we have the formula
\begin{equation*}
  I_{G}(\lambda,D,q,m) = \frac{\Gamma(\omega_{G})}{\prod_{e}\Gamma(\lambda_{e})}\sum_{\sigma}\int_{[0,1]^{|E_{G}|-1}} x_{\sigma}^{\lambda_{\sigma}}
  \left( \frac{\psi_{\sigma,G}(x_{\sigma})}{\Phi_{\sigma,G}(x_{\sigma})} \right)^{\omega_{G}}\psi_{\sigma,G}^{-D/2} \df x_{\sigma}.
\end{equation*}
The sum runs over smooth, maximal cones $\sigma=pos(e^{\gamma}\ \rvert\ \gamma\in
\mathcal{I}_{\sigma})\in\Sigma(|E_{G}|-1)$, where $\mathcal{I}_{G}\subset
2^{E_{G}}\backslash\{E_{G}\}$ is a collection of subgraphs
% satisfying the conditions of Prop. \ref{prop:feynman-polytope-maximal-cone},
and $x_{\sigma}= (x_{\gamma} \ \rvert\ \gamma\in\mathcal{I}_{\sigma})$ are the associated
coordinates. The leading monomial $x_{\sigma}^{\tilde \lambda}$ in the sector $\sigma$ is given by
\begin{equation*}
  x^{\lambda_{\sigma}}_{\sigma} = \prod_{\gamma\in\mathcal{I}_{\sigma}}x_{\gamma}^{\tilde\omega_{\gamma}-1}.
\end{equation*}
The polynomials $\psi_{\sigma},\Phi_{\sigma}$ are obtained as
\begin{equation*}
  \psi_{\sigma,G}(x_{\sigma}) = x_{\sigma}^{-h^{1}(\gamma)}\psi_{G}(x_{\sigma}), \quad \Phi_{\sigma,G}(x_{\sigma}) = x^{-h^{1}(\gamma)-\delta^{mm}_{\gamma}}_{\sigma}\Phi_{G}(x_{\sigma}),
\end{equation*}
and are regular and non-vanishing on the sector $[0,1]^{E_{G}-1}$ defined by $\sigma$. Fix
$\lambda^{0}\in \mathbb{Z}^{E_{G}}$ and $D_{0}\in \mathbb{Z}$ and let $\tilde
\omega_{\gamma}^{0}=\tilde \omega_{\gamma}\vert_{\lambda=\lambda^{0},D=D_{0}}$.
% The \emph{superficial} degree of divergence at $(\lambda^{0},D_{0})$ is the function
% \begin{equation*}
%   a:2^{E_{G}}\rightarrow \mathbb{Z}, \quad a(\gamma) = -\tilde\omega^{0}_{\gamma}.
% \end{equation*}

Define the multi-index $\alpha_{\sigma}\in \mathbb{N}^{\mathcal{I}_{\sigma}}$ by
$\alpha_{\sigma}(\gamma) = \max(0,-\tilde \omega_{\gamma}^{0}+1)$. Let
\begin{equation*}
  F_{\sigma}(x_{\sigma}) = \left( \frac{\psi_{\sigma,G}(x_{\sigma})}{\Phi_{\sigma,G}(x_{\sigma})} \right)^{\omega_{G}}\psi_{\sigma,G}^{-D/2}
\end{equation*}
and consider its Taylor expansion in the $x_{\sigma}$ variables up to $\alpha_{\sigma}$:
\begin{equation*}
  F_{\sigma}(x_{\sigma}) = \sum_{\substack{\beta\in \mathbb{N}^{\mathcal{I_{\sigma}}}\\\beta\preceq\alpha_{\sigma}}}\frac{\partial^{\beta} F(0)}{\beta!}x^{\beta}_{\sigma} + \tilde F(x_{\sigma}).
\end{equation*}
The integral over the polynomial part can be explicitly calculated as a rational function in
$\epsilon$:
\begin{equation*}
  \int_{[0,1]^{E_{G}-1}}x_{\sigma}^{\lambda_{\sigma}}x^{\beta}\frac{1}{\beta!}\partial^{\beta}F_{\sigma}(0)\df x_{\sigma} =
  \frac{1}{\beta!}\partial^{\beta}F_{\sigma}(0)\prod_{\gamma\in\mathcal{I}_{\sigma}}\frac{1}{\tilde \omega_{\gamma}+\beta_{\gamma}}.
\end{equation*}
The integral over $\tilde F(x_{\sigma})$ is analytic around $\epsilon=0$ by construction and can be
expanded as a power series in $\epsilon$.
% \begin{equation*}
%   \int_{[0,1]^{E_{G}-1}}x_{\sigma}^{\lambda_{\sigma}}\tilde F_{\sigma}(x_{\sigma})\df x_{\sigma} = \sum_{k=0}^{\infty}\epsilon^{k} \int_{[0,1]^{E_{G}-1}}x_{\sigma}^{\lambda^{0}_{\sigma}}\log^{k}\left( x_{\sigma} \right)\tilde F_{\sigma}(x_{\sigma})\df x_{\sigma}
% \end{equation*}

For Feynman graphs which are not s-irreducible, it is conventional to set $I_{G}=0$ in dimensional
regularization. On the other hand, we have seen that the sector decomposition approach still
provides an $\epsilon$-expansion in this case. Luckily, the the two prescriptions agree.
\begin{prop}
  If $G$ is not s-irreducible, then regularization by sector decomposition gives
  $I_{G}(\lambda,D,q,m)=0$.
\end{prop}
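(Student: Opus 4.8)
The plan is to show that when $G$ is not s-irreducible, every sector integral in the sector decomposition formula vanishes individually. The key observation is that if $G$ is not s-irreducible, then by Proposition \ref{prop:polytope-dimension} the Feynman polytope $P_G$ has dimension strictly less than $|E_G|-1$; equivalently, by Proposition \ref{prop:supermodular-decomposition} there is a nontrivial decomposition $E_G = I \coprod J$ with $s_G = s_G|_I + s_G|_J$, coming from a $1$VI-component $\gamma \subsetneq G$ with trivial kinematics, so that $R^\psi_{G|\gamma} = R^\Phi_{G|\gamma} = 0$. The degenerate normal fan $\Sigma_{P_G}$ then contains the line $\mathbb{R}\, e^{\gamma}$, hence both rays $\rho^{\pm} = \mathbb{R}^{\pm} e^{\gamma}$ lie in $\Sigma_{P_G}(1)$, and any simplicial refinement $\Sigma$ that we use for sector decomposition must contain these two opposite rays.

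Next I would run through the sector formula from Corollary \ref{col:sec-decomp}: the amplitude is written as a sum over maximal cones $\sigma \in \Sigma(|E_G|-1)$ of integrals $\int_{[0,1]^{n}} x_\sigma^{\lambda_\sigma} F_\sigma(x_\sigma)\,\df x_\sigma$, where the leading monomial is $x_\sigma^{\lambda_\sigma} = \prod_{\gamma' \in \mathcal{I}_\sigma} x_{\gamma'}^{\tilde\omega_{\gamma'}-1}$ and $F_\sigma$ is regular and non-vanishing on the sector. The crucial point is the behaviour of the exponents coming from the ray $e^{\gamma}$ versus the ray $-e^{\gamma}$. Since $d_{e^{\gamma}}(\psi_G \Phi_G)$ and $d_{-e^{\gamma}}(\psi_G \Phi_G)$ are negatives of each other (the Newton polytope is flat in the $e^{\gamma}$ direction), the convergence inequalities $\Real\langle s, u_\rho\rangle > \Real d_\rho(c)$ attached to $\rho^+$ and $\rho^-$ are mutually contradictory — this is exactly the mechanism in the second half of the proof of Theorem \ref{thm:mellin-convergence}. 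In the sector picture this means: whenever a maximal cone $\sigma$ of $\Sigma$ uses the ray $e^{\gamma}$, the corresponding variable $x_{\gamma}$ appears in $x_\sigma^{\lambda_\sigma}$ with exponent $\tilde\omega_{\gamma} - 1$, and the complementary constraint forces the associated pole factor $\frac{1}{\tilde\omega_{\gamma}+\beta_{\gamma}}$ to be cancelled; but more directly, I would instead argue via the analytic-continuation identity: the whole integral $I_G$ factors through the partial-integration operator in the $e^{\gamma}$-direction, and since $d_{e^{\gamma}}(\psi_G\Phi_G) = -d_{-e^{\gamma}}(\psi_G\Phi_G)$, the prefactor
\[
\frac{1}{\langle \lambda, e^{\gamma}\rangle - \tfrac{D}{2}h^1(\gamma)} + \frac{1}{-\langle \lambda, e^{\gamma}\rangle + \tfrac{D}{2}h^1(\gamma)} = 0
\]
produced by doing the one-parameter $\lambda_{\rho}$-rescaling in both opposite directions annihilates the contribution. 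Concretely: the one-parameter subgroup $\lambda \mapsto e^{\gamma} \otimes \lambda$ acts on $T_{N_{E_G}}(\mathbb{R}^+)$, the form $\df t/t$ and the domain are invariant, and because $\gamma$ has trivial kinematics, $\psi_G(\lambda_\rho \cdot t)/\lambda^{h^1(\gamma)}$ and $\Phi_G(\lambda_\rho \cdot t)/\lambda^{h^1(\gamma)}$ are both $\lambda$-\emph{independent} (the rest terms $R^\psi_{G|\gamma}, R^\Phi_{G|\gamma}$ vanish). Hence the integrand is homogeneous of degree $\tilde\omega_\gamma - 1 = \omega_\gamma - 1$ under this rescaling, and integrating the exact $\lambda$-derivative over $\lambda \in (0,\infty)$ gives a boundary term that is a nonzero multiple of the regularized integral, forcing it to vanish in the sense of analytic continuation.

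I would then translate this back to each sector: for a maximal cone $\sigma$ not containing $e^{\gamma}$ among its rays, one checks that $\sigma$ cannot be simplicial and top-dimensional while refining the degenerate $\Sigma_{P_G}$ unless it uses \emph{one} of $\pm e^{\gamma}$ — in fact since $\Sigma$ is complete and refines a fan containing the full line $\mathbb{R}e^{\gamma}$, every maximal cone uses exactly one of $e^{\gamma}, -e^{\gamma}$. Pairing up the cones using $+e^{\gamma}$ with those using $-e^{\gamma}$ (via the reflection $e^{\gamma} \mapsto -e^{\gamma}$ on the flat direction) and applying the cancellation $\frac{1}{m} + \frac{1}{-m} = 0$ of the associated rational prefactors shows the sum telescopes to zero. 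The main obstacle I anticipate is the bookkeeping in this pairing step: one must check that the reflection genuinely induces an involution on the set of maximal cones of the chosen smooth refinement $\Sigma$ (this is clear for $\Sigma_{Hepp}$ but needs a small argument for $\Sigma_{\mathcal{G}_s}$ and $\Sigma_{motic}$, using that the building sets are stable under the relevant symmetry), and that the non-vanishing regular parts $F_\sigma$ and $F_{\sigma'}$ of paired sectors restrict to the \emph{same} function on the shared boundary face so that the two contributions cancel exactly rather than merely to leading order. Alternatively — and this is probably the cleaner route — I would sidestep the pairing entirely and simply invoke the analytic-continuation representation of Theorem \ref{thm:meromorphic-cont}: the ray $e^{\gamma}$ contributes a factor $\Gamma(\omega_\gamma)$ to the denominator structure while $-e^{\gamma}$ contributes $\Gamma(-\omega_\gamma)$, and $\Phi(s,c) \equiv 0$ because the would-be entire part is forced to vanish by the rescaling-invariance argument above; since dimensional regularization is defined precisely as this analytically continued value, $I_G(\lambda,D,q,m) = 0$.
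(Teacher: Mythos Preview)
Your proposal circles around the right cancellation mechanism but has two genuine gaps.

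First, the reflection pairing does not work for an arbitrary smooth refinement $\Sigma$ of $\Sigma_{P_G}$. It is true that $\Sigma$ must contain the line $\mathbb{R}e^{\gamma}$ in its support, but a general simplicial refinement need not have $\pm e^{\gamma}$ among its rays, and even if it does, the map $e^{\gamma}\mapsto -e^{\gamma}$ is not in general an automorphism of $\Sigma$. So there is no reason the maximal cones should pair up with matching regular parts $F_{\sigma}$. You notice this obstacle yourself, but the fix you suggest (checking it for $\Sigma_{Hepp}$, $\Sigma_{\mathcal{G}_s}$, $\Sigma_{motic}$ case by case) is both awkward and, for the latter two, not obviously true.

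Second, your fallback via Theorem~\ref{thm:meromorphic-cont} is unavailable: that theorem is stated under the hypothesis that $P$ is full-dimensional, which is precisely what fails when $G$ is not s-irreducible. The rescaling argument you sketch is morally correct for the analytically continued integral, but the proposition asks about the value produced by sector decomposition, and you have not linked the two in the degenerate case.

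The paper's proof supplies exactly the missing idea. It first shows, by passing to a common smooth refinement, that the sector-decomposition value is \emph{independent of the choice of} $\Sigma$. Having earned this freedom, it then chooses a product refinement $\Sigma=\Sigma_{0}\times\Sigma_{1}\times\Sigma_{2}$ coming from a (non-canonical) splitting $N_{E_G}\cong \mathbb{Z}e^{\gamma}\oplus N_{E_\gamma}\oplus N_{E_{G/\gamma}}$, with $\Sigma_0$ the fan of $\mathbb{P}^1$ on $\mathbb{Z}e^{\gamma}$. In this fan every maximal cone is literally of the form $(\pm\mathbb{R}^+e^{\gamma})\times\sigma$, and because $R^{\psi}_{G|\gamma}=R^{\Phi}_{G|\gamma}=0$ the regular part $F_{\sigma}$ is independent of $x_{\gamma}$. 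The $x_{\gamma}$-integral then gives $\pm\tfrac{1}{\tilde\omega_{\gamma}}$ and the pair cancels exactly. Your cancellation formula $\tfrac{1}{\tilde\omega_\gamma}+\tfrac{1}{-\tilde\omega_\gamma}=0$ is the right endgame; what you were missing is the reduction step that lets you work in a fan where it applies.
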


\begin{proof}
  Let us first show that the sector decomposition value is independent of the choice of refinement.
  If $\tilde\Sigma$ is another smooth fan, which refines $\Sigma$ (and thus $\Sigma_{P_{G}}$), then
  every maximal cone $\sigma\in\Sigma$ is a union of cones $\sigma_{i}\in\tilde\Sigma$, which only
  overlap in common faces. Thus the sector corresponding to $\sigma$ is the union of the sectors
  corresponding to $\sigma_{k}$. By analytic continuation, the sum over the $\sigma_{k}$-sectors
  must equal the contribution of the $\sigma$-sector. Thus the value of $I_{G}$ computed with
  respect to $\Sigma$ or $\tilde \Sigma$ are the same.

  If now $\Sigma'$ is any other smooth fan which refines $\Sigma_{P_{G}}$, then we can always find a
  smooth fan which refines both $\Sigma$ and $\Sigma'$ and which gives the same value for $I_{G}$.
  Thus the value of $I_{G}$ is independent of the choice of $\Sigma$.

  Suppose $\gamma\subset G$ is a 1VI component with no external kinematics. Choose smooth
  refinements $\Sigma_{1},\Sigma_{2}$ of the normal fans
  $\Sigma_{P_{\gamma}},\Sigma_{P_{G/\gamma}}$. We have the exact sequence of lattices
  \begin{center}
    \begin{tikzcd}
      0 \arrow{r} & \mathbb{Z}e^{\gamma} \arrow{r}{} & N_{E_G} \arrow{r}{} & N_{E_{\gamma}}\oplus
      N_{E_{G/\gamma}} \arrow{r} & 0
    \end{tikzcd}
  \end{center}
  which has a (non-canonical) splitting $N_{E_{G}}\cong \mathbb{Z}e^{\gamma}\oplus
  N_{E_{\gamma}}\oplus N_{E_{G/\gamma}}$. Let $\Sigma_{0}=\Sigma_{P^{1}}$ be the fan on
  $\mathbb{Z}e^{\gamma}$ with maximal cones $\sigma^{\pm}=\pm\mathbb{R}^{+}e^{\gamma}$. With this
  splitting, the fan $\Sigma=\Sigma_{0}\times \Sigma_{1}\times \Sigma_{2}$ is a refinement of
  $P_{G}$. Every maximal cone of $\Sigma$ is of the form $\sigma^{\pm}=(\pm \mathbb{R}^{+})\times
  \sigma$, for $\sigma\in\Sigma_{1}\times\Sigma_{2}(|E_{G}|-2)$. Let $x_{\gamma}$ be the variable
  corresponding to $\mathbb{R}^{+}e^{\gamma}\in \Sigma_{0}$. In the variables of $\sigma^{\pm}$, the
  integrand takes the form
  \begin{equation*}
    x_{\gamma}^{\pm \tilde\omega_{\gamma}-1}x_{\sigma}^{\lambda_{\sigma}}\left( \frac{\psi_{\sigma,G}(x_{\sigma})}{\Phi_{\sigma,G}(x_{\sigma})} \right)^{\omega_{G}}\psi_{\sigma,G}^{-D/2}
    =  x_{\gamma}^{\pm \tilde\omega_{\gamma}-1}x_{\sigma}^{\lambda_{\sigma}}F_{\sigma}(x_{\sigma}),
  \end{equation*}
  where $x_{\sigma}$ denotes the variables of the cone $\sigma\in\Sigma_{1}\times\Sigma_{2}$.
  Crucially, the function $F_{\sigma}(x_{\sigma})$ does not depend on $x_{\gamma}$.

  Let $I_{\sigma}=\int_{[0,1]^{|E_{G}|-2}}x_{\sigma}^{\lambda_{\sigma}}F_{\sigma}(x_{\sigma})$. Then
  we get
  \begin{align*}
    \frac{\prod_{e}\Gamma(\lambda_{e})}{\Gamma(\omega_{G})}I_{G}(\lambda,D,q,m)
    &= \sum_{\sigma^{\pm}}I_{\sigma}\int_{[0,1]}x^{\pm\tilde\omega_{\gamma}-1}\df x_{\gamma} \\
    &= \sum_{\sigma}I_{\sigma}\left( \frac{1}{\tilde \omega_{\gamma}}-\frac{1}{\tilde\omega_{\gamma}} \right) = 0.
  \end{align*}
\end{proof}

The sector decomposition approach has the downside, that the integrals over the different sectors
usually lead to analytic functions which are much more complicated then their sum $I_{G}$. For this
reason, the recent articles (\cite{von_Manteuffel_2015}, \cite{Panzer:2015ida}) advocate calculating
the $\epsilon$-expansion by the integration by parts procedure described in section
\ref{sec:analyt-dimens-regul}. First let us combine Theorem \ref{thm:meromorphic-cont} with our
results on the Feynman polytope $P_{G}$.

\begin{thm}
  If $G$ is s-irreducible, then the amplitude $I_{G}$ can be expressed as
  \begin{equation*}
    I_{G}(\lambda,D,q,m) = \omega_{G}\left( \prod_{\gamma\in\mathcal{F}_{G}}\Gamma(\tilde \omega_{\gamma}) \right)J_{G}(\lambda,D,q,m),
  \end{equation*}
  where $J_{G}$ is analytic for all $(\lambda,D)\in \mathbb{C}^{E_{G}}\times \mathbb{C}$ and
  external momenta and masses $(q,m)$ satisfying the inequalities of definition
  \ref{defin:generic-kinematics}.
\end{thm}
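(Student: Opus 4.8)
The plan is to realise $I_{G}$, up to an elementary prefactor, as a multivariate Mellin transform of the two Symanzik polynomials and then invoke Theorem~\ref{thm:meromorphic-cont}. Recalling the projective parametric representation,
\[
  I_{G}=\frac{\Gamma(\omega_{G})}{\prod_{e}\Gamma(\lambda_{e})}\int_{P^{E_{G}}(\mathbb{R}^{+})}\prod_{e}\alpha_{e}^{\lambda_{e}-1}\Bigl(\frac{\psi_{G}}{\Phi_{G}}\Bigr)^{\omega_{G}}\frac{\Omega_{P^{E_{G}}}}{\psi_{G}^{D/2}},
\]
the prefactor $\Gamma(\omega_{G})$ being the one produced by the scaling integral $\int_{0}^{\infty}t^{\omega_{G}-1}e^{-t\Phi_{G}/\psi_{G}}\df t$ (it is this factor that enters the statement). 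I would first restrict the integrand to the dense torus $T_{N_{E_{G}}}(\mathbb{R}^{+})\subset P^{E_{G}}(\mathbb{R}^{+})$ --- its complement is the boundary of the simplex, of measure zero --- and use the computation of Proposition~\ref{prop:differential-form} to express $\Omega_{P^{E_{G}}}$ as $\bigl(\prod_{e}\alpha_{e}\bigr)\frac{\df t}{t}$ in torus coordinates, absorbing $\prod_{e}\alpha_{e}$ into the monomial. This exhibits
\[
  I_{G}(\lambda,D,q,m)=\frac{\Gamma(\omega_{G})}{\prod_{e}\Gamma(\lambda_{e})}\,\mathcal{M}\bigl(\psi_{G},\Phi_{G};s,c\bigr),
\]
with $g=1$, $c=\bigl(\tfrac{D}{2}-\omega_{G},\,\omega_{G}\bigr)$, and $s$ the character of $T_{N_{E_{G}}}$ attached to $\prod_{e}\alpha_{e}^{\lambda_{e}}$. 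For this to be legitimate the coefficients of $\psi_{G}$ and $\Phi_{G}$ must lie in an open strongly convex cone: those of $\psi_{G}$ are all $1$, while those of $\Phi_{G}$ are the scalars $q_{T_{1}}^{2}$ and $m_{e}^{2}$, which under Definition~\ref{defin:generic-kinematics} all have positive real part and hence, being finitely many, lie in a common strongly convex sector.

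By Theorem~\ref{thm:feynman-supermodular} the Newton polytope in question is $P(\psi_{G}\Phi_{G})=P_{G}=P(s_{G})$, and by Proposition~\ref{prop:polytope-dimension} it is full-dimensional precisely when $G$ is $s$-irreducible --- which is exactly the hypothesis under which Theorem~\ref{thm:meromorphic-cont} applies. That theorem writes $\mathcal{M}(\psi_{G},\Phi_{G};s,c)=E(s,c)\,\prod_{\rho\in\Sigma_{P_{G}}(1)}\Gamma\bigl(\langle s,u_{\rho}\rangle-d_{\rho}(c)\bigr)$ with $E$ entire. The rays $\rho\in\Sigma_{P_{G}}(1)$ are indexed by the facets of $P_{G}$, hence by $\mathcal{F}_{G}$, with $u_{\rho}=e^{\gamma}$; and the degree estimates of Proposition~\ref{prop:factor} give $d_{e^{\gamma}}(\psi_{G})=h^{1}(\gamma)$ and $d_{e^{\gamma}}(\Phi_{G})=h^{1}(\gamma)+\delta^{mm}_{\gamma}$, since $\psi_{G|\gamma}$ and $\Phi_{G|\gamma}$ are homogeneous in the $\gamma$-variables of these degrees while the rest terms $R^{\psi}_{G|\gamma},R^{\Phi}_{G|\gamma}$ have strictly larger $\gamma$-degree. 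Hence $d_{\rho}(c)=\tfrac{D}{2}h^{1}(\gamma)+\omega_{G}\delta^{mm}_{\gamma}$ and
\[
  \langle s,u_{\rho}\rangle-d_{\rho}(c)=\omega_{\gamma}-\omega_{G}\delta^{mm}_{\gamma},
\]
which equals $\omega_{\gamma}$ when $\gamma$ is not mass-momentum spanning and, using $\omega_{G}=\omega_{\gamma}+\omega_{G/\gamma}$, equals $-\omega_{G/\gamma}$ when it is --- that is, exactly $\tilde\omega_{\gamma}$ in both cases. Collecting the factors, $I_{G}=\Gamma(\omega_{G})\bigl(\prod_{\gamma\in\mathcal{F}_{G}}\Gamma(\tilde\omega_{\gamma})\bigr)J_{G}$ with $J_{G}:=E(s,c)/\prod_{e}\Gamma(\lambda_{e})$, which is entire in $(\lambda,D)$ because $1/\Gamma$ is.

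It remains to see that $J_{G}$ is also analytic in $(q,m)$ throughout the region of Definition~\ref{defin:generic-kinematics}. This I would read off from the construction of $E$ in the proof of Theorem~\ref{thm:meromorphic-cont}: the integration-by-parts procedure expresses $\mathcal{M}$ as a finite sum $\sum_{\beta}L_{\beta}(s,c)\,\mathcal{M}(\psi_{G},\Phi_{G},g_{\beta};s,c+n_{\beta})$ of convergent Mellin transforms, in which the coefficients of the Laurent polynomials $g_{\beta}$ are polynomial in $(q,m)$ and the rational factors $L_{\beta}$ do not involve $(q,m)$; each such integral converges locally uniformly and depends analytically on $(q,m)$ as long as the coefficients of $\Phi_{G}$ remain in a strongly convex cone, an open condition holding on all of that region, and dividing by the nowhere-vanishing, locally bounded product of Gamma factors preserves analyticity.

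I expect the real work to be confined to the middle step, the two main ingredients (Theorem~\ref{thm:meromorphic-cont} and the structure of $P_{G}$) already being in place. One must pass carefully from the projective integral to the torus Mellin transform --- pinning down the character $s$, discarding the measure-zero boundary of the simplex, and tracking the prefactors $\Gamma(\omega_{G})$ and $\prod_{e}\Gamma(\lambda_{e})$ so that $J_{G}$ comes out honestly analytic --- and then match the toric data $u_{\rho}=e^{\gamma}$, the individual Symanzik degrees of Proposition~\ref{prop:factor}, and the specialisation $c=(\tfrac{D}{2}-\omega_{G},\omega_{G})$ so that $\langle s,u_{\rho}\rangle-d_{\rho}(c)$ falls exactly on $\tilde\omega_{\gamma}$ in both the mass-momentum-spanning and the non-spanning case. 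The analyticity of $J_{G}$ in the kinematic variables is the one point that is not purely formal and may deserve its own short argument.
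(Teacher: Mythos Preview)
Your approach is correct and is precisely what the paper intends: the theorem is stated there without proof, immediately after the sentence ``First let us combine Theorem~\ref{thm:meromorphic-cont} with our results on the Feynman polytope $P_{G}$,'' so the combination you carry out---rewriting $I_{G}$ as a Mellin transform on $T_{N_{E_{G}}}$, invoking Theorem~\ref{thm:meromorphic-cont}, identifying the rays of $\Sigma_{P_{G}}$ with $\mathcal{F}_{G}$ via Corollary~\ref{col:supermodular-facets}, and computing $\langle s,u_{\rho}\rangle-d_{\rho}(c)=\tilde\omega_{\gamma}$ from Proposition~\ref{prop:factor}---is exactly the intended argument, spelled out in more detail than the paper itself gives.

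One small remark: your computation naturally produces the prefactor $\Gamma(\omega_{G})$ rather than the bare $\omega_{G}$ that appears in the displayed statement, and indeed $\Gamma(\omega_{G})$ is what one needs if $J_{G}$ is to be genuinely entire in $(\lambda,D)$; the $\omega_{G}$ in the statement is most likely a typo for $\Gamma(\omega_{G})$.
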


We can describe the analytic continuation more concretely as follows. Let again
$a(\gamma)=\max(0,\tilde \omega^{0}_{\gamma}+1)$. For each $\gamma\in\mathcal{F}_{G}$, we integrate
by parts $a(\gamma)$-times and obtain an expression of the form
\begin{equation*}
  I(\lambda^{0},D^{0}+2\epsilon,q,m) = \sum_{\beta}L_{\beta}(\epsilon)I(\lambda^{\beta},D^{\beta}+2\epsilon,q,m):= \sum_{\beta}L_{\beta}(\epsilon)I_{\beta}(\epsilon)
\end{equation*}
where $(\lambda^{\beta},D^{\beta})\in \mathbb{Z}^{E_{G}}\times\mathbb{Z}$ are shifted values of the
analytic parameters and dimension and $L_{\beta}(\epsilon)$ are rational functions in $\epsilon$
depending polynomially on the external kinematics. By construction, each $I_{\beta}$ is now analytic
in a neighbourhood of $\epsilon$.

\begin{remark}
  The authors of \cite{von_Manteuffel_2015} remark that the partial integrations are easy to
  calculate, but the number of terms in the above sum can grow very rapidly. Note that the result
  depends on the order of partial integrations in general. The proof of Theorem
  \ref{thm:meromorphic-cont} suggests the following naive algorithm: At each stage, choose the next
  direction such that the Newton polytopes of the Laurent monomials appearing in the numerator are
  as small as possible.

  It is known that the space of integrals of the above form is finite-dimensional
  (\cite{Smirnov_2010}) and the above sum can be considerably simplified. But reducing a given
  integral to a basis of so-called ``Master'' integrals is quite difficult (See e.g.
  \cite{LAPORTA_2000},\cite{Chetyrkin_1981} for the classical IPB technique). We refer to the recent
  article (\cite{bitoun17:feynm}) for a $D$-module approach which is quite close to our toric
  viewpoint.
\end{remark}

We can now express the $\epsilon$-expansion of $I_{G}$ in terms of the homogeneous coordinates of
$X_{\Sigma}$ as follows.

\begin{thm}
  Let $G$ be s-irreducible and $\Sigma$ a smooth refinement of $\Sigma_{P_{G}}$. The functions
  $I_{\beta}(\epsilon)$ have the series expansion
  \begin{align*}
    I_{\beta}(\epsilon) = \sum_{k_{1},k_{2}=0}^{\infty}\frac{h^{1}(G)^{k_{2}}}{k_{1}!k_{2}!}\epsilon^{k_{1}+ k_{2}}
    \int_{X_{\Sigma}(\mathbb{R}^{+})}x^{\lambda^{\beta}}\left( \frac{\psi_{G}}{\Phi_{G}} \right)^{\omega^{\beta}_{G}}\psi_{G}^{-D^{\beta}/2}
    \log^{k_{1}}\left( \psi_{G} \right)\log^{k_{2}}\left( \frac{\psi_{G}}{\Phi_{G}} \right)\Omega_{X_{\Sigma}},
  \end{align*}
  where $\omega^{\beta}_{G} = \omega_{G}\vert_{\lambda=\lambda^{\beta},D=D^{\beta}}$.
\end{thm}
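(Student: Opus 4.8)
The plan is to read the $\epsilon$-expansion off the toric representation of $I_\beta(\epsilon)$, the only real work being the justification of an interchange of a double series with the integral over $X_\Sigma(\mathbb{R}^+)$. First I would record the representation. Applying Corollary~\ref{col:sec-decomp} (equivalently, Proposition~\ref{prop:simplicial-refinement} together with the change of variables into homogeneous coordinates) at the shifted parameters $(\lambda^\beta,D^\beta)$ with the dimension deformed by $\epsilon$, the integral $\int_{X_\Sigma(\mathbb{R}^+)}x^{\lambda^\beta}\bigl(\psi_G/\Phi_G\bigr)^{\omega_G}\psi_G^{-D/2}\,\Omega_{X_\Sigma}$ converges absolutely for $(\lambda^\beta,D^\beta)$ in the open convergence domain of Theorem~\ref{thm:mellin-convergence} --- which is exactly where the integration-by-parts construction of the preceding theorem places each $I_\beta$ --- and, up to an $\epsilon$-analytic scalar factor $\Gamma(\omega_G)/\prod_e\Gamma(\lambda_e^\beta)$ (suppressed in the statement; its own $\epsilon$-expansion is elementary), it equals $I_\beta(\epsilon)$. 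The crucial observation is that the integrand depends on $\epsilon$ only through the exponents of $\psi_G$ and $\Phi_G$, because the dimension $D$ enters solely through $\psi_G^{-D/2}$ and through $\omega_G$, which is affine in $D$.

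With $\epsilon$ normalised so that $\psi_G^{-D/2}=\psi_G^{-D^\beta/2}\,\psi_G^{\,\epsilon}$ and $\omega_G=\omega_G^\beta+h^1(G)\,\epsilon$, the integrand factors as
\[
x^{\lambda^\beta}\bigl(\psi_G/\Phi_G\bigr)^{\omega_G^\beta}\psi_G^{-D^\beta/2}\cdot\psi_G^{\,\epsilon}\cdot\bigl(\psi_G/\Phi_G\bigr)^{h^1(G)\,\epsilon}.
\]
Expanding the last two factors as $\psi_G^{\,\epsilon}=\sum_{k_1\ge0}\tfrac{\epsilon^{k_1}}{k_1!}\log^{k_1}\psi_G$ and $\bigl(\psi_G/\Phi_G\bigr)^{h^1(G)\epsilon}=\sum_{k_2\ge0}\tfrac{h^1(G)^{k_2}\epsilon^{k_2}}{k_2!}\log^{k_2}(\psi_G/\Phi_G)$ and multiplying the two power series reproduces, term by term, the double sum in the statement. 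It therefore remains to justify interchanging $\sum_{k_1,k_2}$ with $\int_{X_\Sigma(\mathbb{R}^+)}$.

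By Tonelli's theorem this reduces to the finiteness, for $|\epsilon|$ small, of
\[
\sum_{k_1,k_2\ge0}\frac{h^1(G)^{k_2}}{k_1!\,k_2!}\,|\epsilon|^{\,k_1+k_2}\int_{X_\Sigma(\mathbb{R}^+)}\bigl|x^{\lambda^\beta}\bigr|\,\bigl|(\psi_G/\Phi_G)^{\omega_G^\beta}\bigr|\,\bigl|\psi_G^{-D^\beta/2}\bigr|\,\bigl|\log\psi_G\bigr|^{k_1}\bigl|\log(\psi_G/\Phi_G)\bigr|^{k_2}\,|\Omega_{X_\Sigma}|,
\]
which I would estimate chart by chart over the finitely many $\sigma\in\Sigma(n)$, $U_\sigma(\mathbb{R}^+)\cong[0,1]^n//G_\sigma$, in the sector coordinates of Corollary~\ref{col:sec-decomp}. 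There $x^{\lambda^\beta}=\prod_{\gamma\in\mathcal{I}_\sigma}x_\gamma^{\tilde\omega_\gamma^\beta-1}$ with all exponents $>-1$ (this is precisely the strict inequality $\tilde\omega_\gamma^\beta>0$ of the open convergence domain), the factors $(\psi_{\sigma,G}/\Phi_{\sigma,G})^{\omega_G^\beta}$ and $\psi_{\sigma,G}^{-D^\beta/2}$ are bounded and nonvanishing on the compact cube (genericity of the kinematics fixes the branches of these powers and of $\log\Phi_{\sigma,G}$), and
\[
\log\psi_G=\textstyle\sum_\gamma h^1(\gamma)\log x_\gamma+\log\psi_{\sigma,G},\qquad \log(\psi_G/\Phi_G)=-\textstyle\sum_\gamma\delta^{mm}_\gamma\log x_\gamma+\log(\psi_{\sigma,G}/\Phi_{\sigma,G}),
\]
so both logarithms are bounded in modulus by $C\bigl(1+\sum_j(-\log x_j)\bigr)$ on $(0,1]^n$. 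The one nonroutine ingredient is the elementary bound
\[
\int_{[0,1]^n}\prod_j x_j^{a_j}\Bigl(1+\sum_j(-\log x_j)\Bigr)^{N}\!dx\ \le\ C_1\,C_2^{\,N}\,N!,
\]
with constants depending only on $n$ and $A:=\min_j(a_j+1)>0$ (substitute $x_j=e^{-u_j}$, use $\int_0^\infty e^{-Au}u^k\,du=k!/A^{k+1}$, and $(N+n-1)!\le C'\theta^N N!$). Feeding this in, setting $N=k_1+k_2$ and using $\sum_{k_1+k_2=N}\binom{N}{k_2}h^1(G)^{k_2}=(1+h^1(G))^N$, the whole series is dominated by the geometric series $\sum_N\bigl(C_2|\epsilon|(1+h^1(G))\bigr)^N$, finite for $|\epsilon|$ below a positive threshold; taking the minimum threshold over the finitely many maximal cones gives absolute convergence. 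Hence Fubini applies, every coefficient integral in the statement is finite, and the interchanged series equals $I_\beta(\epsilon)$ on a neighbourhood of $\epsilon=0$; since $I_\beta$ is analytic there, this is its Taylor expansion.

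The main obstacle is exactly this uniform-over-sectors estimate. What has to be seen is that the only source of difficulty is the logarithmic singularity of $\psi_G$ and $\psi_G/\Phi_G$ along the toric boundary divisors $D_\rho$, that it contributes at worst $N!$-growth to the degree-$N$ coefficient integral, and that this factorial is cancelled by the $1/(k_1!k_2!)$ coming from the exponential series --- leaving a genuine geometric series with positive radius of convergence in $\epsilon$. This is also the structural reason the expansion requires $(\lambda^\beta,D^\beta)$ in the \emph{open} convergence domain (all $\tilde\omega_\gamma^\beta>0$ strictly), not merely its closure. The remaining points --- the bookkeeping of which monomial and logarithms occur in each sector chart, and the consistent choice of branch for the possibly complex $\log\Phi_G$ --- are routine from Corollary~\ref{col:sec-decomp} and Definition~\ref{defin:generic-kinematics}.
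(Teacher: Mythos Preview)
The paper states this theorem without proof, so there is nothing to compare against; your argument supplies exactly the natural justification and is correct. Once $I_\beta(\epsilon)$ is written as the toric integral (Corollary~\ref{col:sec-decomp} at the shifted parameters), the $\epsilon$-dependence sits entirely in the two factors $\psi_G^{\,\epsilon}$ and $(\psi_G/\Phi_G)^{h^1(G)\epsilon}$, so the only nontrivial step is the Fubini interchange, and your sector-by-sector estimate $\int_{[0,1]^n}\prod_j x_j^{a_j}\bigl(1+\sum_j(-\log x_j)\bigr)^N dx\le C_1C_2^N N!$ (valid since each $a_j+1=\tilde\omega_\gamma^\beta>0$ strictly) is the right tool and is proved correctly via the substitution $x_j=e^{-u_j}$. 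Your side remarks are also well taken: the paper writes $D^0+2\epsilon$, which would introduce signs $(-1)^{k_1+k_2}$ absent from the stated formula, so some normalisation of $\epsilon$ is indeed being made implicitly; and the $\Gamma(\omega_G)/\prod_e\Gamma(\lambda_e)$ prefactor present in the definition of $I_G$ does not appear in the theorem's right-hand side, so it is being suppressed as you say.
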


\begin{remark}
  Suppose the masses and momenta are generically euclidean and \emph{rational}. We can write the
  logarithmic powers appearing above as
  \begin{equation*}
    \log^{k}(h(x))=\int_{[0,1]^{k}}\prod_{i=1}^{k}\frac{h(x)-1}{(h(x)-1)t_{i}+1}\df t_{i}.
  \end{equation*}
  Inserting this relation into the above expression for $I_{\beta}$ shows that the coefficients of
  the $\epsilon$-expansion are then periods in the sense of Kontsevich-Zagier
  (\cite{Kontsevich_2001},\cite{MR3618276}), a fact that was first proven by Bogner and Weinzierl
  (\cite{Bogner_2009}) using sector decompositions.
\end{remark}

\appendix

\section{Polyhedral geometry}
\label{sec:appendix}
In this appendix, we recall some concepts from polyhedral geometry. We refer the reader to
\cite{Ziegler_1995} for further details and proofs.

Let $V$ be a finite-dimensional real vector space and $V^{*}$ its dual. A polyhedron is subset of
$V$, given as an intersection of finitely many affine halfspaces, i.e. it is a subset of the form
\begin{equation*}
  P = \{ p\in V \ \rvert\ \langle p, u_{i} \rangle \ge d_{i}, i=1,\ldots,s\},
\end{equation*}
where $u_{i}\in V^{*}$ and $d_{i}\in \mathbb{R}$. The \emph{affine span} $H_{P}$ of a polyhedron $P$
is the smallest affine subspace of $V$ containing $P$. Any weight vector $u\in V^{*}$ defines the
face
\begin{equation*}
  F_{u}P = \{ p\in P \ \rvert\ \langle p, u \rangle = \min_{\tilde p\in P}\langle \tilde p, u \rangle\}
\end{equation*}
We denote the set of dimension $k$ faces of $P$ by $P(k)$. A face $F=F_{u}P$ of codimension one is
called a \emph{facet} and a face of dimension $0$ is called a \emph{vertex}. Any polyhedron of
dimension $n$ then has an irredundant presentation
\begin{align*}
  P = \{ p\in H_{P} \ \rvert\ &\langle p, u_{F}\rangle \ge d_{F}, F\in P(n-1)\},
\end{align*}
where $u_{F}$ is the weight vector defining the facet $F$.

A \emph{ polyhedral cone} is a polyhedron of the form
\begin{equation*}
  \sigma = \{ p\in V \ \rvert\  \langle p,u_{i}  \rangle \ge 0\}.
\end{equation*}
Alternatively, we can describe $\sigma$ as the positive hull of finitely many vectors
$v_{1},\ldots,v_{r}\in V$:
\begin{equation*}
  \sigma = \pos(v_{1},\ldots,v_{r}) := \{p\in V \ \rvert\ p=\sum\lambda_{i}v_{i}, \lambda_{i}\ge 0\}.
\end{equation*}
A polyhedral cone is called \emph{strongly convex}, if it does not contain a subspace of positive
dimension, i.e. if $0\in\sigma$ is a vertex. Every polyhedral cone $\sigma\subset V$ defines a dual
cone
\begin{equation*}
  \sigma^{\vee} = \{u\in V^{*} \ \rvert\ \langle v, u \rangle\ge 0 \text{ for all } v\in \sigma\}.
\end{equation*}
A bounded polyhedron $P\subset V$ is called a \emph{polytope}. Every polytope can be described as
the convex hull of its vertices $v\in P(0)$, i.e.
\begin{equation*}
  P = \{ p\in V \ \rvert\ p=\sum_{v\in P(0)} \lambda_{v}v,\ \lambda_{v}\ge 0,\ \sum_{v\in P(0)}\lambda_{v} = 1\}.
\end{equation*}
For two polytopes $P,Q\subset V$, their Minkowski sum is defined as
\begin{equation*}
  P+Q = \{p\in V \ \rvert\  p=p_{1}+p_{2} \text{ for } p_{1}\in P, p_{2}\in Q\}.
\end{equation*}
For $r\in (0,\infty)$, we also define the scaled polytope
\begin{align*}
  rP &= \{rp\in V \ \rvert\ p\in P\}.
\end{align*}

\bibliographystyle{habbrv}
\bibliography{literature}{}
\end{document}